\DeclareFontFamily{U}{mathx}{\hyphenchar\font45}
\DeclareFontShape{U}{mathx}{m}{n}{
      <5> <6> <7> <8> <9> <10>
      <10.95> <12> <14.4> <17.28> <20.74> <24.88>
      mathx10
      }{}
\DeclareSymbolFont{mathx}{U}{mathx}{m}{n}
\DeclareMathSymbol{\bigtimes}{1}{mathx}{"91}
\definecolor{DarkRed}{rgb}{0.5,0.1,0.1}
\definecolor{DarkBlue}{rgb}{0.1,0.1,0.5}
\definecolor{ForestGreen}{rgb}{0.1333,0.5451,0.1333}
\definecolor{Red}{rgb}{0.9,0,0}
\crefname{property}{property}{Property}
\crefname{equation}{eq}{Eq}
\def\BState{\State\hskip-\ALG@thistlm}
\setlist[itemize]{leftmargin=20pt}
\setlist[enumerate]{leftmargin=20pt}
\newtheorem{theorem}{Theorem}
\newtheorem{lemma}{Lemma}[section]
\newtheorem{proposition}[lemma]{Proposition}
\newtheorem{corollary}[lemma]{Corollary}
\newtheorem{claim}[lemma]{Claim}
\newtheorem{fact}[lemma]{Fact}
\newtheorem{definition}[lemma]{Definition}
\newtheorem{problem}{Problem}
\newtheorem*{claim*}{Claim}
\newtheorem*{assumption*}{Assumption}
\newtheorem*{proposition*}{Proposition}
\newtheorem*{lemma*}{Lemma}
\newtheorem{observation}[lemma]{Observation}
\newtheorem*{theorem*}{Theorem}
\crefname{lemma}{Lemma}{Lemmas}
\crefname{claim}{claim}{claims}
\crefname{property}{Property}{Properties}
\crefname{invariant}{Invariant}{Invariants}
\newtheorem{mdresult}{Result}
\newenvironment{result}{\begin{mdframed}[backgroundcolor=lightgray!40,topline=false,rightline=false,leftline=false,bottomline=false,innertopmargin=2pt]\begin{mdresult}}{\end{mdresult}\end{mdframed}}
\theoremstyle{definition}
\newtheorem*{mdproblem*}{Problem}
\newenvironment{Problem*}{\begin{mdframed}[hidealllines=false,innerleftmargin=10pt,backgroundcolor=gray!10,innertopmargin=5pt,innerbottommargin=5pt,roundcorner=10pt]\begin{mdproblem*}}{\end{mdproblem*}\end{mdframed}}
\newtheorem{mddefinition}[lemma]{Definition}
\newenvironment{Definition}{\begin{mdframed}[hidealllines=false,innerleftmargin=10pt,backgroundcolor=white!10,innertopmargin=5pt,innerbottommargin=5pt,roundcorner=10pt]\begin{mddefinition}}{\end{mddefinition}\end{mdframed}}
\newtheorem*{mddefinition*}{Definition}
\newenvironment{Definition*}{\begin{mdframed}[hidealllines=false,innerleftmargin=10pt,backgroundcolor=white!10,innertopmargin=5pt,innerbottommargin=5pt,roundcorner=10pt]\begin{mddefinition*}}{\end{mddefinition*}\end{mdframed}}
\newtheorem{mdremark}{Remark}
\newenvironment{Remark}{\begin{mdframed}[hidealllines=false,innerleftmargin=10pt,backgroundcolor=gray!10,innertopmargin=5pt,innerbottommargin=5pt,roundcorner=10pt]\begin{mdremark}}{\end{mdremark}\end{mdframed}}
\newenvironment{ourbox}{\begin{mdframed}[hidealllines=false,innerleftmargin=10pt,backgroundcolor=white!10,innertopmargin=2pt,innerbottommargin=5pt,roundcorner=10pt]}{\end{mdframed}}
\newtheorem{mdalgorithm}{Algorithm}
\newenvironment{Algorithm}{\begin{ourbox}\begin{mdalgorithm}}{\end{mdalgorithm}\end{ourbox}}
\newtheoremstyle{restate}{}{}{\itshape}{}{\bfseries}{~(restated).}{.5em}{\thmnote{#3}}
\theoremstyle{restate}
\newtheorem*{restate}{}
\renewcommand{\qed}{\nobreak \ifvmode \relax \else
      \ifdim\lastskip<1.5em \hskip-\lastskip
      \hskip1.5em plus0em minus0.5em \fi \nobreak
      \vrule height0.75em width0.5em depth0.25em\fi}
\newcommand{\Qed}[1]{\rlap{\qed$_{\textnormal{~~\Cref{#1}}}$}}
\renewcommand{\leq}{\leqslant}
\renewcommand{\geq}{\geqslant}
\renewcommand{\ge}{\geq}
\newcommand{\rs}{{{Ruzsa-Szemerédi}}\xspace}
\newcommand{\Leq}[1]{\ensuremath{\underset{\textnormal{#1}}\leq}}
\newcommand{\Eq}[1]{\ensuremath{\underset{\textnormal{#1}}=}}
\newcommand{\tvd}[2]{\ensuremath{\norm{#1 - #2}_{\mathrm{tvd}}}}
\newcommand{\Ot}{\ensuremath{\widetilde{O}}}
\newcommand{\eps}{\ensuremath{\varepsilon}}
\newcommand{\Paren}[1]{\Big(#1\Big)}
\newcommand{\Bracket}[1]{\Big[#1\Big]}
\newcommand{\bracket}[1]{\left[#1\right]}
\newcommand{\paren}[1]{\ensuremath{\left(#1\right)}\xspace}
\newcommand{\card}[1]{\left\vert{#1}\right\vert}
\newcommand{\IR}{\ensuremath{\mathbb{R}}}
\newcommand{\IN}{\ensuremath{\mathbb{N}}}
\newcommand{\norm}[1]{\ensuremath{\|#1\|}}
\newcommand{\expect}[1]{\Exp\bracket{#1}}
\newcommand{\var}[1]{\textnormal{Var}\bracket{#1}}
\newcommand{\cov}[1]{\textnormal{Cov}\bracket{#1}}
\newcommand{\set}[1]{\ensuremath{\left\{ #1 \right\}}}
\newcommand{\poly}{\mbox{\rm poly}}
\newcommand{\polylog}{\mbox{\rm  polylog}}
\newcommand{\alg}{\ensuremath{\mathcal{A}}\xspace}
\DeclareMathOperator*{\Exp}{\ensuremath{{\mathbb{E}}}}
\DeclareMathOperator*{\Prob}{\ensuremath{\textnormal{Pr}}}
\renewcommand{\Pr}{\Prob}
\newcommand{\Ex}{\Exp}
\newenvironment{tbox}{\begin{tcolorbox}[
		enlarge top by=5pt,
		enlarge bottom by=5pt,
		 breakable,
		 boxsep=0pt,
                  left=4pt,
                  right=4pt,
                  top=10pt,
                  arc=0pt,
                  boxrule=1pt,toprule=1pt,
                  colback=white
                  ]%%
	}
{\end{tcolorbox}}
\newcommand{\event}{\ensuremath{\mathcal{E}}}
\newcommand{\rv}[1]{\ensuremath{{\mathsf{#1}}}\xspace}
\newcommand{\rA}{\rv{A}}
\newcommand{\rB}{\rv{B}}
\newcommand{\rC}{\rv{C}}
\newcommand{\rD}{\rv{D}}
\newcommand{\rX}{\rv{X}}
\newcommand{\rY}{\rv{Y}}
\newcommand{\rZ}{\rv{Z}}
\newcommand{\rProt}{\rv{\Prot}}
\newcommand{\rR}{\rv{R}}
\newcommand{\supp}[1]{\ensuremath{\textnormal{\text{supp}}(#1)}}
\newcommand{\distribution}[1]{\ensuremath{\textnormal{dist}(#1)}\xspace}
\newcommand{\kl}[2]{\ensuremath{\mathbb{D}(#1~||~#2)}}
\newcommand{\II}{\ensuremath{\mathbb{I}}}
\newcommand{\HH}{\ensuremath{\mathbb{H}}}
\newcommand{\mi}[2]{\ensuremath{\def\mione{#1}\def\mitwo{#2}\mireal}}
\newcommand{\mireal}[1][]{
  \ifx\relax#1\relax%
    \II(\mione \,; \mitwo)%
  \else%
    \II(\mione \,; \mitwo\mid #1)%
  \fi
}
\newcommand{\en}[1]{\ensuremath{\HH(#1)}}
\newcommand{\itfacts}[1]{\Cref{fact:it-facts}-(\ref{part:#1})\xspace}
\newcommand{\cc}[1]{\ensuremath{\textsc{CC}(#1)}}
\newcommand{\ic}[2]{\ensuremath{\textsc{IC}(#1,#2)}}
\newcommand{\PP}{\ensuremath{\mathbb{P}}}
\newcommand{\GG}{\ensuremath{\mathcal{G}}}
\newcommand{\cH}{\ensuremath{\mathcal{H}}}
\newcommand{\prot}{\ensuremath{\pi}}
\newcommand{\Prot}{\ensuremath{\Pi}}
\newcommand{\jstar}{\ensuremath{j^{\star}}}
\newcommand{\istar}{\ensuremath{i^{\star}}}
\newcommand{\kstar}{\ensuremath{k^{\star}}}
\newcommand{\cX}{\ensuremath{\mathcal{X}}}
\newcommand{\cY}{\ensuremath{\mathcal{Y}}}
\newcommand{\player}[1]{\ensuremath{{P}_{#1}}}
\newcommand{\AHM}{\ensuremath{\mathsf{AHM}}\xspace}
\newcommand{\ahm}{\ensuremath{\AHM}}
\newcommand{\suc}[1]{\ensuremath{\textnormal{suc}(#1)}}
\newcommand{\DD}{\ensuremath{\mathcal{D}}}
\newcommand{\Aa}[1]{\ensuremath{A}^{(#1)}}
\newcommand{\rAa}[1]{\ensuremath{\rv{A}}^{(#1)}}
\newcommand{\Bb}[1]{\ensuremath{B}^{(#1)}}
\newcommand{\rBb}[1]{\ensuremath{\rv{B}}^{(#1)}}
\newcommand{\Xx}{\ensuremath{X}}
\newcommand{\Yy}{\ensuremath{Y}}
\newcommand{\rRa}{\ensuremath{\rv{R}}_a}
\newcommand{\rRcomp}{\ensuremath{\rv{R}}_\text{comp}}
\newcommand{\sigmar}{\ensuremath{\sigma_{\textnormal{\textsc{r}}}}}
\newcommand{\sigmac}{\ensuremath{\sigma_{\textnormal{\textsc{c}}}}}
\newcommand{\rsigmar}{\ensuremath{\bm{\sigma}}_{\textnormal{\textsc{r}}}}
\newcommand{\rsigmac}{\ensuremath{\bm{\sigma}}_{\textnormal{\textsc{c}}}}
\newcommand{\edgeset}{\ensuremath{\textnormal{\textsc{edges}}}}
\newcommand{\Eadd}{\ensuremath{E_\textnormal{ins}}}
\newcommand{\Edel}{\ensuremath{E_\textnormal{del}}}
\newcommand{\allone}{\ensuremath{\bm{1}}}
\newcommand{\allzero}{\ensuremath{\bm{0}}}
\newcommand{\expo}[2]{\ensuremath{\paren{#1} \uparrow \paren{#2}}}
\newcommand{\jpyconst}{\ensuremath{c_{\textsc{jpy}}}}
\newcommand{\alicepart}[1]{\ensuremath{A^{(#1)}}}
\newcommand{\bobpart}[1]{\ensuremath{B^{(#1)}}}
\newcommand{\speckstar}[1]{\ensuremath{k^{\star}_{#1}}}
\newcommand{\Bcommon}{\ensuremath{B^{\textnormal{\textsc{off}}}}}
\newcommand{\Acommon}{\ensuremath{A^{\textnormal{\textsc{off}}}}}
\newcommand{\Brest}{\ensuremath{B^{\textnormal{\textsc{rest}}}}}
\newcommand{\advconst}{\ensuremath{c_{\textsc{adv}}}}
\newcommand{\protnum}[1]{\ensuremath{\prot^{(#1)}}}
\newcommand{\Aastar}{\ensuremath{A^{\star}}}
\newcommand{\Bbstar}{\ensuremath{B^{\star}}}
\newcommand{\Bspec}{\ensuremath{B^{\textnormal{\textsc{spec}}}}}
\newcommand{\Aspec}{\ensuremath{A^{\textnormal{\textsc{spec}}}}}
\newcommand{\rAstar}{\ensuremath{\rv{A}^{\star}}}
\newcommand{\rBstar}{\ensuremath{\rv{B}^{\star}}}
\newcommand{\rBcommon}{\rv{B}^{\textnormal{\textsc{off}}}}
\newcommand{\rAcommon}{\rv{A}^{\textnormal{\textsc{off}}}}
\newcommand{\rkstar}{\rv{k}^{\star}}
\newcommand{\rAspec}{\rv{A}^{\textnormal{\textsc{spec}}}}
\newcommand{\rBspec}{\rv{B}^{\textnormal{\textsc{spec}}}}
\newcommand{\rBrest}{\ensuremath{\rv{B}^{\textnormal{\textsc{rest}}}}}
\newcommand{\DDreal}{\ensuremath{\DD^{\textnormal{\textsc{real}}}}}
\newcommand{\DDfake}{\ensuremath{\DD^{\textnormal{\textsc{fake}}}}}
\newcommand{\rhor}{\ensuremath{\rho_\textnormal{\textsc{r}}}}
\newcommand{\rhoc}{\ensuremath{\rho_{\textnormal{\textsc{c}}}}}
\newcommand{\rrhor}{\ensuremath{\bm{\rho}_\textnormal{\textsc{r}}}}
\newcommand{\rrhoc}{\ensuremath{\bm{\rho}_\textnormal{\textsc{c}}}}
\newcommand{\Srow}{\ensuremath{T_{\textnormal{\textsc{row}}}}}
\newcommand{\Scol}{\ensuremath{T_{\textnormal{\textsc{col}}}}}
\newcommand{\rSrow}{\ensuremath{\rv{T}_{\textnormal{\textsc{row}}}}}
\newcommand{\rScol}{\ensuremath{\rv{T}_{\textnormal{\textsc{col}}}}}
\newcommand{\leftv}[2]{\ensuremath{\ell^{#2}_{#1}}}
\newcommand{\rightv}[2]{\ensuremath{r^{#2}_{#1}}}
\newcommand{\graphed}{\ensuremath{\textnormal{\textsc{graph}}}}
\newcommand{\apx}{\ensuremath{\beta}}
\newcommand{\Lspec}{\ensuremath{L^{\textnormal{\textsc{spec}}}}}
\newcommand{\Rspec}{\ensuremath{R^{\textnormal{\textsc{spec}}}}}
\newcommand{\Bbase}{\ensuremath{B^{\textnormal{\textsc{base}}}}}
\newcommand{\Abase}{\ensuremath{A^{\textnormal{\textsc{base}}}}}
\newcommand{\Lbase}{\ensuremath{L^{\textnormal{\textsc{base}}}}}
\newcommand{\Rbase}{\ensuremath{R^{\textnormal{\textsc{base}}}}}
\newcommand{\Hspec}{\ensuremath{H^{\textnormal{\textsc{spec}}}}}
\newcommand{\Hbase}{\ensuremath{H^{\textnormal{\textsc{base}}}}}
\renewcommand{\Abase}{\ensuremath{A^{*}}}
\renewcommand{\Lbase}{\ensuremath{L^{*}}}
\renewcommand{\Bbase}{\ensuremath{B^{*}}}
\renewcommand{\Rbase}{\ensuremath{R^{*}}}
\title{Settling the Pass Complexity of Approximate Matchings \\ in Dynamic Graph Streams} 
\author{Sepehr Assadi\footnote{(sepehr@assadi.info) Cheriton School of Computer Science, University of Waterloo. 
Supported in part by a  Sloan Research Fellowship, an NSERC
Discovery Grant, a University of Waterloo startup grant, and a Faculty of Math Research Chair grant. \smallskip} 
\\ {\small University of Waterloo} \and
Soheil Behnezhad\footnote{(s.behnezhad@northeastern.edu) Khoury College of Computer Sciences, Northeastern University. \smallskip}
 \\ {\small Northeastern University} \and 
Christian Konrad\footnote{(christian.konrad@bristol.ac.uk) School of Computer Science, University of Bristol, UK. 
Supported by EPSRC New Investigator Award EP/V010611/1. \smallskip} \\ {\small University of Bristol} \and
Kheeran K. Naidu\footnote{(kheeran.naidu@bristol.ac.uk) School of Computer Science, University of Bristol, UK.
Supported by EPSRC Doctoral Training Studentship EP/T517872/1. \smallskip} \\ {\small University of Bristol} \and
Janani Sundaresan\footnote{(jsundaresan@uwaterloo.ca) Cheriton School of Computer Science, University of Waterloo. Supported in part by a David R. Cheriton Scholarship from Cheriton School of Computer Science, Faculty of Math Graduate Research Excellence Award, and Sepehr Assadi's NSERC Discovery Grant. \smallskip} \\ {\small University of Waterloo}
}
\date{}
\begin{document}
\maketitle

% !TeX root = main.tex 
%!TEX root = main.tex

\begin{abstract}

\medskip

A semi-streaming algorithm in \textbf{dynamic graph streams} processes any $n$-vertex graph by making one or multiple passes over a stream of insertions and deletions to edges of the graph and
using $O(n \cdot \polylog{(n)})$ space. Semi-streaming algorithms for dynamic streams were first obtained 
in the seminal work of Ahn, Guha, and McGregor in 2012, alongside the introduction of the \emph{graph sketching} technique, 
which remains the de facto way of designing algorithms in this model and a highly popular technique for designing graph algorithms in general. 

\medskip

We settle the pass complexity of approximating \textbf{maximum matchings} in dynamic streams via semi-streaming algorithms by improving the state-of-the-art in \emph{both} upper and lower bounds: 

\begin{itemize}[leftmargin=10pt]
	\item We present a randomized sketching based semi-streaming algorithm for $O(1)$-approximation of maximum matching in dynamic streams using $O(\log\log{n})$ passes. 
	The approximation ratio of this algorithm can be improved to $(1+\eps)$ for any fixed $\eps > 0$ even on weighted graphs using standard techniques.

	This exponentially improves upon
	several $O(\log{n})$ pass algorithms developed for this problem since the introduction of the dynamic graph streaming model.

	\item We prove that any semi-streaming algorithm (not only sketching based) for $O(1)$-approximation of maximum matching in dynamic streams requires $\Omega(\log\log{n})$ passes. 
	
	This presents the first multi-pass lower bound for this problem, which is already also optimal, settling a longstanding open question in this area.  
\end{itemize}

\end{abstract}

\pagenumbering{roman}

\clearpage

\setcounter{tocdepth}{3}
\tableofcontents
\clearpage
\pagenumbering{arabic}
\setcounter{page}{1}

\clearpage

% !TeX root = main.tex 
%!TEX root = main.tex

\section{Introduction}

In the \textbf{dynamic graph streaming} model, we have a graph $G=(V,E)$ with vertices $V := [n]$. 
The edges in $E$ are defined by a sequence of insertions and deletions in a stream $\sigma := (\sigma_1,\ldots,\sigma_N)$ of length $N$ which
is often assumed to be some $\poly{(n)}$. Each entry $\sigma_i$ of the stream is either inserting a \emph{new} edge $(u_i,v_i)$ to $E$ or deleting an \emph{already inserted} edge from it. 
The goal is to make one or a few passes over the stream, use a limited memory---ideally, $O(n \cdot \poly\!\log{(n)})$ bits, referred to as \textbf{semi-streaming space}---and
compute the answer to a given problem on the graph $G$ at the end of the last pass. We focus on the \emph{maximum matching} problem in this model. 

Maximum matching is arguably the most studied problem in the graph streaming model at this point (including both dynamic and insertion-only streams); we refer the interested reader to~\cite{AssadiS23} that lists various lines of work on this problem. 
The history of this problem, focusing solely on $O(1)$-approximation algorithms and in \emph{dynamic} graph streams, is as follows: 

\begin{itemize}[itemsep=2pt, leftmargin=10pt]
	\item The first such algorithms for matchings were obtained in~\cite{AhnGM12a} alongside the introduction of the dynamic graph streaming model itself. 
	The authors in~\cite{AhnGM12a} observed that the prior techniques of~\cite{LattanziMSV11} (in the MapReduce/MPC model) also imply an $O(\log{n})$-pass semi-streaming algorithm for $2$-approximation of maximum matching in dynamic streams.% 
	\footnote{Technically, the algorithms of~\cite{LattanziMSV11,AhnGM12a}, and some subsequent ones, use $n^{1+1/p}$-space in $O(p)$ passes. This translates 
	to an $O(\frac{\log\!{(n)}}{\log\!\log\!{(n)}})$-pass algorithm in semi-streaming space. Yet, to keep the focus on the bigger picture, we 
	ignore this lower-order term improvement and refer to these algorithms as $O(\log{n})$ passes still.}
	\item In the same work,~\cite{AhnGM12a}, building on~\cite{AhnG11}, further improved the approximation ratio of the algorithm of~\cite{LattanziMSV11} to $(1+\eps)$-approximation for any \emph{fixed} $\eps > 0$ using $O(\log^2\!{(n)})$ passes
	on general graphs and $O(\log{n})$ passes on bipartite graphs. 
	\item The algorithms of~\cite{AhnGM12a} were subsequently improved in~\cite{AhnG15} to an $O(\log{n})$-pass algorithm for $(1+\eps)$-approximation even on weighted (general) graphs. 
	Very recently, this algorithm was simplified and slightly improved in~\cite{Assadi24}. Yet another algorithm with similar guarantees for unweighted bipartite graphs was obtained in~\cite{AssadiJJST22}. 
	
	\item In addition, some generic reductions from general to bipartite~\cite{McGregor05,Tirodkar18} or weighted to unweighted matchings~\cite{GamlathKMS19,BernsteinDL21} developed over the years 
	can be applied to the algorithms of~\cite{LattanziMSV11} to obtain $O(\log{n})$-pass algorithms for $(1+\eps)$-approximation for fixed $\eps > 0$.\footnote{These algorithms generally have a (much) worse dependence on the parameter $\eps$ 
	compared to the ones in the bullet point above, but for constant $\eps > 0$, their guarantees are still asymptotically the same.} 

	 \item In parallel to the line of work on multi-pass algorithms, a series of work studied single-pass algorithms for this problem~\cite{Konrad15,ChitnisCHM15,AssadiKLY16,ChitnisCEHMMV16,AssadiKL17,DarkK20,AssadiS22}. In particular,~\cite{AssadiKLY16}
	 proved that any $O(1)$-approximation of matchings via single-pass algorithms requires $n^{2-o(1)}$ space and~\cite{DarkK20} improved this to an optimal $\Omega(n^2)$ space lower bound.  
\end{itemize}
This constitutes the state-of-the-art for matchings in dynamic graph streams: 
\begin{quote}
\emph{For semi-streaming algorithms on dynamic streams, $O(1)$-approximation to maximum matching is possible in $\approx \log{n}$ passes and not possible in a single pass.} 
\end{quote}
Closing this huge gap between upper and lower bounds for dynamic streaming matchings has been a longstanding open question in the graph streaming literature. 
This is precisely the contribution of our work: we fully settle the pass complexity of $O(1)$-approximation of maximum matching in dynamic streams by improving \textbf{both} the upper and lower bounds for this problem.

\subsection{Our Contributions}

Our first main result shows that surprisingly---at least to the authors---the \emph{right} answer to the problem is \emph{not} even close to $\approx \log{n}$ passes:
%%\footnote{Among different reasons, one that suggested $\Theta(\log{n})$ might be the right
%%answer was the $\Omega(\log{n})$ round communication complexity lower bounds of~\cite{BravermanO17} (building on~\cite{DobzinskiNO14,AlonNRW15}): these results imply that \emph{if} the dynamic streaming 
%%algorithm maintains 
%%
one can exponentially improve the pass complexity of 
different algorithms developed for this problem in~\cite{LattanziMSV11,AhnGM12a,AhnG15,AssadiJJST22,Assadi24}. 

\begin{result}\label{res:upper}
	There is a randomized $O(\log\log{n})$-pass $O(1)$-approximation semi-streaming algorithm for the maximum matching problem in dynamic streams. 
	The result continues to hold even for $(1+\eps)$-approximation of weighted (general) matching for any constant $\eps > 0$. 
\end{result}

We find our main contribution in~\Cref{res:upper} to be the $O(1)$-approximation algorithm, which relies on different sets of techniques compared to the prior work on this problem.
The improvement to $(1+\eps)$-approximation and weighted graphs follows from this novel algorithm using the existing reductions developed in~\cite{McGregor05,GamlathKMS19}. 
We emphasize that previously, no better than $O(\frac{\log{n}}{\log\log{n}})$-pass dynamic semi-streaming algorithms were known even for $\poly\!\log\!{(n)}$-approximation of matchings in its simplest form, namely, for 
unweighted bipartite matching. 

%This is the reason we explicitly separate out the $O(1)$-approximation algorithm on its own in~\Cref{res:upper}. 

Prior to our work, the only other problem with a similar pass complexity in dynamic streams that we are aware of is the maximal independent set (MIS) problem, which also admits an $O(\log\log{n})$ pass algorithm~\cite{AhnCGMW15} (this 
result is related to ours as we will discuss in~\Cref{sec:techniques}). For the maximum matching problem itself, the best approximation ratio achievable by $O(\log\log{n})$-pass 
algorithms was $n^{O(1/\!\log\log{n})}$-approximation that follows from~\cite[Theorem 4.6]{DobzinskiNO14}. % (this is much larger than even any $\poly\log{(n)}$-approximation). 

Furthermore, in addition to our algorithmic improvement, we can also improve the single-pass lower bounds of~\cite{AssadiKLY16,DarkK20} all the way to $\Omega(\log\log{n})$ passes. 

\begin{result}\label{res:lower}
	Any randomized semi-streaming algorithm for $O(1)$-approximation of maximum matching in dynamic streams with constant probability of success 
	requires $\Omega(\log\log{n})$ passes. The lower bound holds even on (unweighted) bipartite graphs. 
\end{result}

The only other semi-streaming lower bounds of similar nature are the very recent  $\Omega(\log\log{n})$-pass and $\Omega(\log{n})$-pass lower bounds for, respectively, MIS in insertion-only streams~\cite{AssadiKNS24} 
and exact minimum spanning tree (MST) in dynamic streams~\cite{AssadiKZ24} (these work are related to ours technique-wise and we shall discuss them in~\Cref{sec:techniques}). 
For the maximum matching problem itself, we only knew $\Omega(\log{n})$-pass lower bounds for finding \emph{exact} maximum matchings~\cite{GuruswamiO13,ChenKPSSY21a,AssadiS23} and a conditional $\Omega(\log{(1/\eps)})$-pass lower bound for $(1+\eps)$-approximation for small constant $\eps \in (0,1)$~\cite{AssadiS23};
see also~\cite{KonradN21,Assadi22,KonradN24} for two-pass lower bounds for small approximation ratios (way) below $2$; all these lower bounds for matchings hold even for insertion-only streams\footnote{The focus of these results is qualitatively different than ours; in insertion-only stream, obtaining a $2$-approximation is trivial in a single pass, whereas in dynamic streams, 
the whole question is on obtaining \emph{some} $O(1)$-approximation.}. 

Proving multi-pass semi-streaming lower bounds has been generally a challenging question (compared to the wealth of single-pass lower bounds; see the short survey in~\cite{Assadi23} for some discussion of this topic). 
With a few notable exceptions~\cite{FeigenbaumKMSZ08,GuruswamiO13}, ``strong'' multi-pass semi-streaming lower bounds have only been obtained very recently for different problems, starting from two-pass algorithms~\cite{AssadiR20,ChenKPSSY21b,Assadi22,KonradN24} and now even for multi-pass ones~\cite{AssadiCK19a,ChenKPSSY21a,ChakrabartiGMV20,AssadiS23,AssadiGLMM24}. 
\Cref{res:lower} also contributes to this line of work and is among the very few \emph{optimal} lower bounds (together with~\cite{AssadiKNS24,AssadiKZ24}). 

In conclusion,~\Cref{res:upper} and~\Cref{res:lower} together establish that the optimal pass complexity of approximate matchings in dynamic graph streams 
is $\Theta(\log\log{n})$ passes.

\begin{Remark}\label{rem:mpc}
Before moving on from our results, a quick detour is in order.  Similar to \emph{all} other dynamic graph streaming algorithms, our algorithm is based on the \textbf{graph sketching} technique (see, e.g.~\cite{AhnGM12a} for the definition). 
	Our~\Cref{res:upper}, put differently, states that:
	\begin{quote}
	\emph{There is an adaptive sketching algorithm that in $O(\log\log{n})$-rounds and $\Ot(n)$-size sketches can recover an $O(1)$-approximate matching with high probability.}
	\end{quote} 
	Such a result is interesting on its own given the generality of graph sketching and its implications to other models as well. 
	
	For instance, this implies a \emph{Massively Parallel Computation (MPC)} algorithm for approximating matchings in $O(\log\log{n})$ rounds with
	machines of $\Ot(n)$ memory (even $O(n/\poly\log{(n)})$ memory) and only $\Ot(n)$ working memory. Prior work in~\cite{CzumajLMMOS18,GhaffariGKMR18,AssadiBBMS19,BehnezhadHH19} achieved MPC algorithms with similar guarantees 
	using various other techniques (and to our knowledge, all with $n^{1+\Omega(1)}$ working memory). Our result shows that graph sketching technique itself, which is one of the oldest techniques in the MPC model as well, 
	can achieve such bounds in a conceptually simpler way (and with the additional benefit of using a smaller working memory). As this is not the focus of the paper, we omit the definition and details of the model and instead
	refer the interested reader to the aforementioned papers for more details. 
\end{Remark}

\subsection{Our Techniques}\label{sec:techniques}

Our upper and lower bounds are intimately connected to each other by looking at matching through the lens of \emph{maximal independent sets}. Specifically:
\begin{itemize}[itemsep=2pt, leftmargin=10pt]
\item Our upper bound uses the $O(\log\log{n})$-pass semi-streaming algorithm of~\cite{AhnCGMW15} for MIS as a subroutine (in a non blackbox way) and borrows ideas and inspiration from the recent work of~\cite{Veldt24} that relates MIS to the vertex cover problem (the dual problem of maximum matching);
\item Our lower bound builds on and adapts the recent communication complexity techniques developed for proving an $\Omega(\log\log{n})$-pass semi-streaming lower bound for MIS in~\cite{AssadiKNS24}. 
\end{itemize}
We briefly discuss the techniques behind our work in this subsection and postpone a more elaborate discussion to our technical overview in~\Cref{sec:overview}. 

\paragraph{Upper bound.} The first main technical ingredient of our algorithm in~\Cref{res:upper} is a model-independent reduction from $O(1)$-approximate \emph{fractional} matchings to the \emph{randomized greedy MIS} algorithm\footnote{This is the algorithm that iterates over the vertices in a random order and greedily adds a vertex to the MIS as long as none of its neighbors that appear before it in the ordering are already chosen in the MIS.}. This reduction is inspired by the brilliant recent
work of~\cite{Veldt24} that showed that the complement of the randomized greedy MIS is a $2$-approximate vertex cover in expectation! 
On the other hand, we show that in every step of the randomized greedy MIS, we can assign a fractional value to the edges in the \emph{2-hop} neighborhood of vertices that join the MIS, to instead form
a large fractional matching in expectation. 
It is worth noting that~\cite{Veldt24}, similar to us, relies on a primal-dual analysis and exhibits a $2$-approximate fractional matching in the reduction; nevertheless, that fractional matching 
is only an analytical tool and in fact is a function of the randomized greedy MIS over \emph{all} possible ordering of vertices and cannot be found by an algorithm\footnote{We shall note that while~\cite{Veldt24} has been
an important source of inspiration for us---and in the first place suggested to us that randomized greedy MIS might also be relevant to approximate matchings---our specific reduction and the techniques in its analysis
are almost entirely disjoint from~\cite{Veldt24}; see~\Cref{sec:overview} for a more detailed comparison.}.

The second main technical ingredient of our algorithm is a \emph{partial} implementation of the above reduction in $O(\log\log{n})$ passes of dynamic streams. To do this, 
we rely on the semi-streaming implementation of the randomized greedy MIS in $O(\log\log{n})$ passes by~\cite{AhnCGMW15}. We show that we can run this algorithm and additionally collect enough auxiliary information 
to also be able to somewhat recover the associated fractional matching defined in the reduction as well. The challenge is that this fractional matching, quite crucially, works with edges in the
 \emph{2-hop} neighborhood of vertices that join the MIS; these edges however are not even visited by the randomized greedy MIS algorithm and its simulation in~\cite{AhnCGMW15}. As a result, we are only able to work
 with them through certain ``proxy'' edges that we can sample algorithmically, and then delegate some part of the computation of this fractional matching to the analysis instead (the algorithm itself only returns a maximum matching
 of the sampled edges). 
 
 \paragraph{Lower bound.} Our lower bound follows the recently-developed \emph{hierarchical embedding} technique of~\cite{AssadiKNS24} (inspired by~\cite{KonradN24}) that $(i)$ creates hard instances for $p$-pass streaming algorithms for a problem $P$
 by embedding \emph{many} $(p-1)$-pass hard instances of $P$ in a single graph $G$; and, $(ii)$ applies a \emph{generalized round elimination} argument to prove the lower bound (see~\cite{MiltersenNSW95} for the original round elimination
 and~\cite{AssadiKNS24} for its generalization). There are two main differences in implementing this strategy in our work compared to~\cite{AssadiKNS24} however. 
 
 The first key difference is in the \emph{combinatorial} construction of hard instances.~\cite{AssadiKNS24} designed a family of extremal graphs, based on a generalization of \rs (RS) graphs~\cite{RuzsaS78}, that pack \underline{many} \emph{induced} collections of vertex-disjoint 
 \emph{``small''} graphs inside a single \emph{``base''} graph. This allows them to embed the $(p-1)$-pass hard instances as small graphs inside a single $p$-pass hard instance as the base graph. The inducedness guarantee of the base 
 graph now ensures that these embedded $(p-1)$-pass hard instances do not interfere with each other (e.g., do not add edges between vertices of each other) and thus remain hard even inside a single graph. 
 Instead, we create our hard instances by exploiting the power of edge deletions following the approach of~\cite{DarkK20} for proving 
 \emph{single-pass} dynamic streaming lower bounds for approximate matchings. This way, our hard instances consists of a stream that inserts many $(p-1)$-pass hard instances together, not  necessarily with any induced subgraph collections (unlike~\cite{AssadiKNS24}), followed by deletions of edges in many of these instances so that effectively only \underline{one} large induced collection of $(p-1)$-pass instances remain. We show that in the context of $O(1)$-approximate matchings, 
 this is enough to force any algorithm for the $p$-pass instance to also solve many $(p-1)$-pass hard instances. 
 
 The second key difference is in the \emph{information-theoretic} arguments. The new round elimination argument established in~\cite{AssadiKNS24} crucially relies on
 the \emph{independence} of the inputs of players in the corresponding communication game used to establish the streaming lower bound. In contrast, such an independence cannot hold for us given that 
 we need to ensure the input of one player is only deleting edges already inserted by another player (otherwise, the stream may delete edges that have not been inserted). Addressing this issue requires
 a careful sharing of the input of players with each other to guarantee that no not-inserted edge gets deleted, while making sure there is also not too much correlation between their inputs (correlation generally makes the task of proving these lower bounds
 harder or even impossible). 
 This part borrows ideas from the recent work of~\cite{AssadiKZ24} in proving multi-pass dynamic streaming lower bounds for MST. Finally, proving the general round elimination argument 
 with these restrictions also requires a different \emph{direct-sum} result
 based on \emph{\underline{internal} information complexity}~\cite{BarakBCR10} and a corresponding \emph{message compression} argument for internal information~\cite{JainPY16} (in contrast to the \underline{external} information complexity direct sum~\cite{ChakrabartiSWY01} used in~\cite{AssadiKNS24} and its own message compression technique from~\cite{HarshaJMR07}).

\clearpage

% !TeX root = main.tex 
%!TEX root = main.tex

\section{Technical Overview}\label{sec:overview}

We use this section to unpack the main ideas behind our work and give a streamlined overview
of our approach. This section oversimplifies many details and the discussions
will be informal for the sake of intuition. Thus, while this section provides ample intuition and introduction to our approach, the rest of the paper is written in an independent way, and the reader 
can entirely skip this section and directly jump to technical arguments. Moreover, given the disjoint sets of techniques used in our upper bounds versus lower bounds, 
the following two subsections are entirely independent of each other and can be read in any order (this is also true of the rest of the paper). 

\subsection{Overview of Upper Bound}

The starting point of our algorithm is a recent reduction of~\cite{Veldt24} from \emph{vertex cover} to \emph{randomized greedy MIS}, and a decade-old result of~\cite{AhnCGMW15} that finds
the randomized greedy MIS in dynamic streams in $O(\log\log{n})$ passes. Let us start with a quick overview of these works.

\subsubsection{Prior Work in~\cite{AhnCGMW15,Veldt24}}

\paragraph{Dynamic-streaming MIS algorithm of~\cite{AhnCGMW15}} Recall that in the randomized greedy MIS, we go over vertices in a random order, pick the first vertex in the MIS $\mathcal{I}$, remove all its neighbors from consideration from now on, 
and continue this way until we have visited all vertices.  

The algorithm of~\cite{AhnCGMW15} is based on the following key observation: the \emph{effective} degree of vertices, their degrees to not-yet-removed vertices, drops quite rapidly as we go through 
the random ordering of the vertices (this is often referred to as the ``residual sparsity property'' of the greedy algorithm~\cite{AhnCGMW15,GhaffariGKMR18,Konrad18,AssadiOSS19}). Specifically, by the time we are processing
the $k$-th vertex, we expect the degree of each remaining vertex to be $\lesssim n/k$. Intuitively, this is because a high degree vertex has a high chance of becoming a neighbor to one of the first $k$ random
vertices in the beginning of this ordering and thus be removed itself (the actual argument is more nuanced because not all of the first $k$ vertices of the ordering actually join the MIS; see~\Cref{lem:residual}).  

\cite{AhnCGMW15} uses this property to simulate running the randomized greedy MIS in \textbf{batches}: pick a random ordering $\sigma$ of vertices and let $U_1$ be the first batch of $\simeq n^{1/2}$ vertices of this ordering. 
Store all edges between them in a single pass using sparse-recovery (see~\Cref{prop:sparse-recovery}) since this subgraph can only have $\simeq n$ edges. 
Using these edges, we can identify which vertices in $U_1$ will join the MIS in the algorithm, say, set $\mathcal{I}_1 \subseteq U_1$. Go over the stream one more time and this time mark each vertex 
that is neighbor to $\mathcal{I}_1$ as removed (this can be done by maintaining a counter for each vertex, to count the total number of insertions and deletions of its incident edges to $\mathcal{I}_1$). 

At this point, we have simulated the first $\simeq n^{1/2}$ iterations of the algorithm in $O(1)$ passes. The residual sparsity property implies that degree of remaining vertices is only $\lesssim n^{1/2}$. 
This means that we can now consider the next batch of $\simeq n^{3/4}$ vertices of the ordering as the set $U_2$ and with high probability still be able to store all their edges in $\simeq n$ space\footnote{Sampling $\simeq n^{3/4}$ vertices
randomly or alternatively sampling each vertex w.p. $\simeq n^{-1/4}$ (implied by the random ordering of $\sigma$) means each sampled vertex only has $\lesssim n^{-1/4} \cdot n^{1/2} = n^{1/4}$ neighbors in the sample.}, and compute the 
independent set $\mathcal{I}_2 \subseteq U_2$. 
We can thus continue like this with batches $U_2,U_3,\ldots,U_t$ and by the time $t \simeq \log\log{n}$, all vertices are processed. This leads to an $O(\log\log{n})$ pass semi-streaming algorithm that with high probability 
 simulates the randomized greedy MIS faithfully and outputs the same MIS. 

\paragraph{Model-independent reduction of~\cite{Veldt24}.} A straightforward fact about any independent set of any graph $G$ is that its complement must be a vertex cover. \cite{Veldt24} made a beautiful discovery that 
the complement of the randomized greedy MIS on any graph $G$ is in fact a \emph{$2$-approximate} vertex cover of $G$ in expectation! This allows for ``translating'' many of the nice properties of the randomized greedy MIS
for obtaining a $2$-approximation of vertex cover as well. 

The proof of this result is an elegantly simple application of LP duality. For a random order $\sigma$ of vertices, we say an edge $e \in E$ is \textbf{blocking} if one of its endpoints 
belongs to the MIS $\mathcal{I}$ and it is the first neighbor (in ordering $\sigma$) of the other endpoint that joins the vertex cover $\mathcal{C} = V \setminus \mathcal{I}$. I.e., $e$  is ``blocking'' this endpoint from joining $\mathcal{I}$ and places it in the vertex cover instead. 
Let $p_e$ denote the probability that $e$ is blocking where the probability is over the randomness of $\sigma$. We have: 
\begin{align}
	\begin{split}
	&\Exp\card{\mathcal{C}} = \sum_{e} p_e \\
	&\text{for all $v \in V$:} \quad \sum_{e \ni v} \frac{1}{2} \cdot p_e = \Pr\paren{v \in \mathcal{C}} \leq 1. \label{eq:Veldt24} 
	\end{split}
\end{align}
The first equation holds because for each vertex $v$ that joins $\mathcal{C}$, there is exactly one blocking edge incident on $v$. 
The second equation is more tricky and roughly holds because of the following: for any edge $e=(u,v)$, conditioned on $e$ being blocking, 
the probability that each of $u$ or $v$ belongs to the MIS is exactly half. This in turn implies that
\[
	\Pr\paren{v \in \mathcal{C}}  = \sum_{e \ni v} \Pr\paren{\text{$e$ is blocking} \wedge v \in \mathcal{C}} = \sum_{e \ni v} \frac{1}{2} \cdot p_e,
\]
where the first equality holds because these events are mutually exclusive (a vertex that joins $\mathcal{C}$ can only have one incident blocking edge; this is \emph{not} true of vertices that join $\mathcal{I}$). 
The upshot is that the assignment $p_e/2$ to every edge $e \in E$ is a \emph{fractional matching} of $G$ 
with total value exactly half the size of the vertex cover $\mathcal{C}$ of $G$. By duality of matching and vertex cover, these imply that $\mathcal{C}$ is a $2$-approximate vertex cover and $\set{p_e/2}_{e \in E}$ is a $2$-approximate fractional matching.

It is worth pointing out that a direct combination of the above two works implies a semi-streaming algorithm that finds  a $2$-approximate vertex cover in dynamic streams. We now discuss the challenges of extending these ideas to 
matching and how we address these challenges. 

\subsubsection{A Model-Independent Reduction from Matching to MIS}

\paragraph{A similar reduction as in~\cite{Veldt24} for matching?} While the reduction of~\cite{Veldt24} \emph{explicitly} finds a $2$-approximate vertex cover (in expectation), the fractional matching $\set{p_e/2}_{e\in E}$ introduced above 
is only an \emph{analysis} tool: the algorithm itself is not actually finding this fractional matching; while the algorithm can identify the set of blocking edges used in the definition of $p_e$'s for a \emph{single} run of the randomized greedy MIS, these
edges are quite far from being any matching (see~\Cref{fig:clique-alg}). 

\begin{figure}[h!]
	\centering
	 \begin{subfigure}[b]{0.3\textwidth}
	 \centering
	 \includegraphics[scale=0.35]{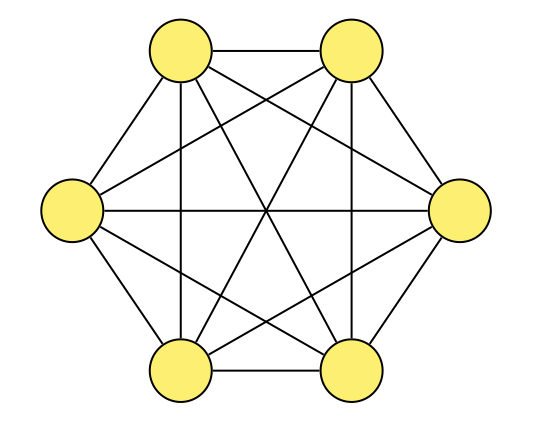}
	% \caption{first}\label{fig:clique1}
	\caption{}
	 \end{subfigure}
	 \begin{subfigure}[b]{0.3\textwidth}
	 \centering
	 \includegraphics[scale=0.35]{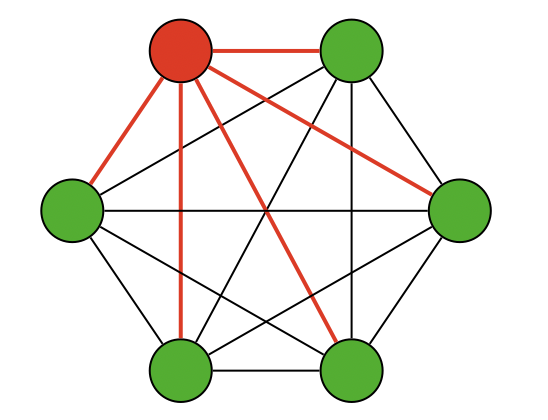}
%	 \caption{first}\label{fig:clique2}
	\caption{}
	 \end{subfigure}	
	 \begin{subfigure}[b]{0.3\textwidth}
	 \centering
	 \includegraphics[scale=0.35]{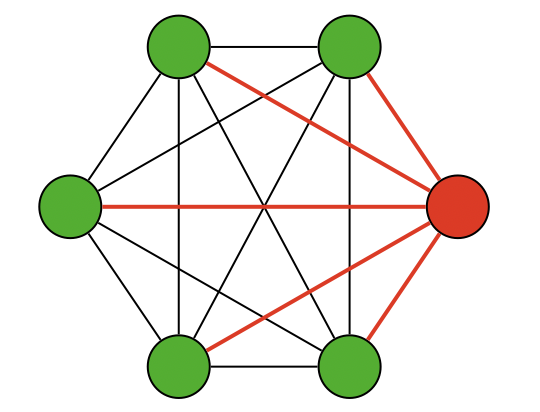}
	 \caption{}
%	 \caption{first}\label{fig:clique3}
	 \end{subfigure}
	 \caption{A clique in $(a)$ and two different sets of blocking edges in $(b)$ and $(c)$ based on different ordering of vertices in the randomized greedy MIS (the red vertex joins the MIS, green vertices join the vertex cover,
	 and red edges are blocking). While the \emph{probabilities} of edges becoming blocking are $2/n$ in a clique and can form a $2$-approximate fractional matching,  blocking edges in each \emph{single} run form stars and are very  far from matchings themselves.}\label{fig:clique-alg}
\end{figure}

%%On a technical note, this is because blocking edges are so \emph{positively correlated} that even though \emph{on average} they form a fractional matching using $p_e$-probabilities, no small number of their instantiations can lead to a large enough matching. 
%%

\paragraph{Our approach.} We design a new scheme for finding a fractional matching from a \emph{single} run of the randomized greedy MIS. We start with the following assignment $x \in \IR^E$ to the edges: 
\begin{itemize}
	\item Whenever a vertex $v$ joins the vertex cover $\mathcal{C}$ (i.e., becomes incident to the MIS $\mathcal{I}$ for the first time), assign
	a value of $1/\deg(v)$ to every edge $(v,w)$ with $\deg{(w)} \leq \deg{(v)}$. Here, the graph considered (including number of vertices or their degrees) is the one obtained
	by removing all vertices and their incident edges added in the previous iterations to $\mathcal{I}$ and $\mathcal{C}$. See~\Cref{fig:graph-alg}.
\end{itemize}

\begin{figure}[h!]
	\centering
	 \begin{subfigure}[b]{0.22\textwidth}
	 \centering
	 \includegraphics[scale=0.35]{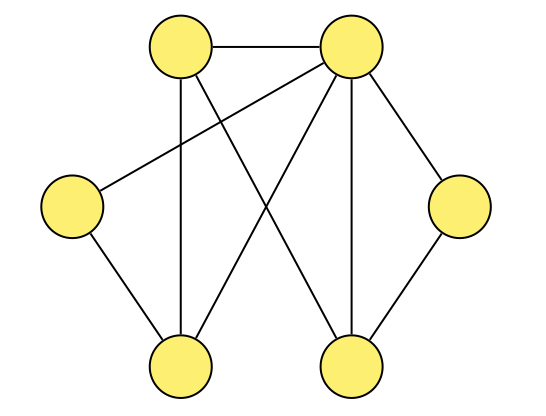}
	% \caption{first}\label{fig:clique1}
	\caption{}
	 \end{subfigure}
	 \begin{subfigure}[b]{0.22\textwidth}
	 \centering
	 \includegraphics[scale=0.35]{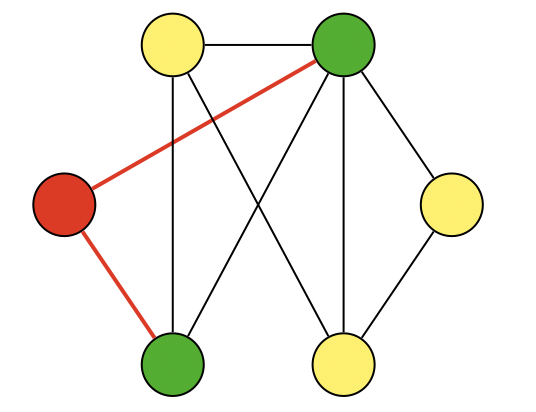}
%	 \caption{first}\label{fig:clique2}
	\caption{}
	 \end{subfigure}	
	 \begin{subfigure}[b]{0.22\textwidth}
	 \centering
	 \includegraphics[scale=0.35]{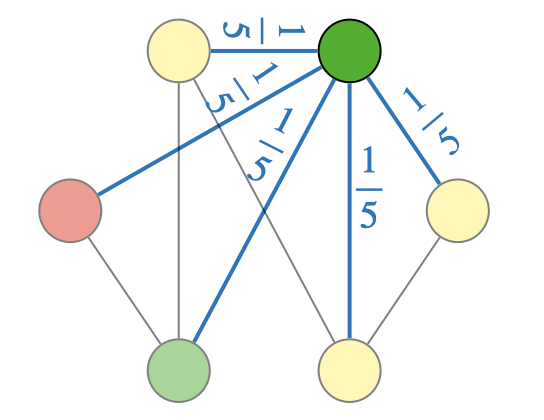}
	 \caption{}
%	 \caption{first}\label{fig:clique3}
	 \end{subfigure}
	 \begin{subfigure}[b]{0.22\textwidth}
	 \centering
	 \includegraphics[scale=0.35]{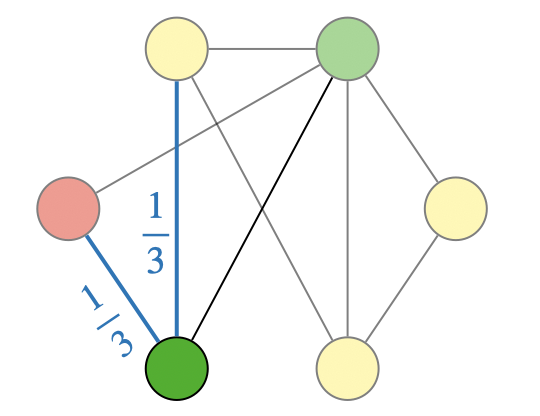}
	 \caption{}
%	 \caption{first}\label{fig:clique3}
	 \end{subfigure}
	 \caption{An illustration of our new fractional matching assignment $x$ in a single run of the randomized greedy MIS on the graph in $(a)$. Figure $(b)$ shows the blocking edges, plus the vertex that joins the MIS in this iteration (red), 
	 and its neighbors that join the vertex cover (green). In Figure $(c)$, we have the assignment of $1/5$ over all edges of a vertex that joins the vertex cover, but in Figure $(d)$, the assignment of $1/3$ misses one edge, since its other endpoint 
	 has a higher degree (unlike~\Cref{fig:clique-alg}, this figure shows a single iteration of the algorithm and not multiple runs).}\label{fig:graph-alg}
\end{figure}

For now, let us ignore the technical (but necessary) condition that we only assign a value to edges $(v,w)$ incident on $v$ with $\deg(w) \leq \deg(v)$ and instead assume we write $1/\deg(v)$ over all remaining edges of $v$ (we will  get back to this later). 
An easy observation now is that 
\begin{align}
	\sum_{e \in E} x_e = \card{\mathcal{C}}, \label{eq:tech-x-value}
\end{align}
 because whenever a vertex $v$ joins $\mathcal{C}$, we increase total $x$-values by $\deg(v) \cdot 1/\deg(v) = 1$. We want to establish that $x$ is also not ``far from'' a fractional matching, meaning
that the $x$-value incident on any one vertex is not (much) more than one (thus a rescaling turns it into a fractional matching).

Consider some vertex $w \in V$. There are two sources of $x$-value on edges of $w$: the ones that \emph{originate} from $w$ 
if and when it joins $\mathcal{C}$, and the ones \emph{borrowed} by $w$ from its neighbors $v$ that join $\mathcal{C}$ before $w$ is removed from the graph. The contribution of the first assignment is exactly $1$ as argued for~\Cref{eq:tech-x-value} and is thus bounded. As such, we only 
need to focus on the second assignment.

This is where our crucial observation lies: in each iteration of the algorithm, if the expected borrowed assignment by a vertex is ``high'', then this vertex also has an equally ``high'' chance of being removed from consideration after this iteration. 
%the expected borrowed assignment by a vertex in each iteration of the algorithm is upper bounded by the probability that this vertex is removed from consideration in this iteration. 
Let us formalize this. Consider any iteration wherein we pick some vertex $u$ in the random ordering to join $\mathcal{I}$ in the algorithm. Firstly, 
%For $w$ to receive a borrowed assignment, one or more of its neighbors in the current graph (i.e., after removing vertices incident on $\mathcal{I}$ already) to join $\mathcal{C}$. 
%Specifically, we have, 
\begin{align*}
	\Exp\bracket{\text{borrowed assignment of $w$ in this iteration}} &= \sum_{(v,w) \ni w} \Pr\paren{\text{$v$ joins $\mathcal{C}$ in this iteration}} \cdot \frac{1}{\deg(v)} \tag{as $v$ joining $\mathcal{C}$ results in assigning $1/\deg(v)$ value to one edge of $w$} \\
	&= \sum_{(v,w) \ni w} \frac{\deg(v)}{n} \cdot \frac{1}{\deg(v)} \tag{$v$ joins $\mathcal{C}$ if $u \in N(v)$ and by the choice of $\sigma$, vertex $u$ is chosen uniformly from remaining vertices} = \frac{\deg(w)}{n}.
\end{align*}
Secondly, 
\begin{align*}
	\Pr\paren{\text{$w$ is removed in this iteration}} &= \Pr\paren{\text{$w$ joins $\mathcal{I}$ or $\mathcal{C}$ in this iteration}} = \frac{\deg(w)+1}{n}, 
\end{align*}
where the second equality holds because $w$ joins $\mathcal{I}$ if $u=w$ and joins $\mathcal{C}$ if $u \in N(w)$. Combining the above two equations gives us that
for any vertex $w \in V$, 
\begin{align}
	\Exp\bracket{\text{borrowed assignment of $w$ in this iteration}} \leq \Pr\paren{\text{$w$ is removed in this iteration}} \label{eq:tech-borrowed}. 
\end{align}

This brings us to an interesting probabilistic question. Fix a vertex $w \in V$ and let $X^w_1,X^w_2,\ldots,X^w_n$ be $n$ random variables where $X^w_i$ is the borrowed assignment of $w$ in iteration $i$ of 
the randomized greedy MIS algorithm. We are interested in upper bounding: 
\begin{align}
\begin{split}
	&\expect{\sum_{i=1}^{n} X^w_i} \quad \text{and} \quad \var{\sum_{i=1}^n X^w_i}  \quad \text{subject to}\\
	\text{for all $i \in [n]$}: ~&\expect{X^w_i \mid X^w_1,\ldots,X^w_{i-1}} \leq \Pr\paren{X^w_{i+1}=\cdots=X^w_n = 0 \mid X^w_1,\ldots,X^w_{i-1}}.
\end{split}\label{eq:tech-prob-game1}
\end{align}
In words, we have a probabilistic experiment wherein the expected \emph{loss} incurred in each step, no matter the history, is upper bounded by the probability of terminating the experiment at this step, and
we want to upper bound the total loss of the experiment. 

Suppose we could bound the expectation in~\Cref{eq:tech-prob-game1} by $O(1)$ and the variance by an $O(1)$-factor of the expectation. Then, using a somewhat careful application 
of Chebyshev's inequality, we can bound the expected ``overflow'' of the assignment $x$ in total. In other words, we can say that $(i)$ if we remove a constant fraction (less than one) of the $x$-value from 
vertices that have ``too much'' $x$-value on their edges, and $(ii)$ further scale down $x$ by some constant factor, we will end up with a fractional matching\footnote{While this step is non-trivial, it is mostly a careful calculation
and  there is not much more illuminating information that can be provided about it in this overview; so, we postpone more details of it to the actual proof.}. In conclusion, we can obtain a true fractional matching by ``trimming down'' $x$ by a constant factor; combined with \Cref{eq:tech-x-value} and the duality of matchings and vertex covers,
this implies that the resulting fractional matching is an $O(1)$-approximation in expectation. 

Unfortunately however, we actually cannot achieve the desired bounds to the problem in~\Cref{eq:tech-prob-game1} since the variables $X^w_i$ can be unboundedly large. This is where our technical condition 
in the assignment of $x$ comes in handy. In the actual definition of $x$, we are only assigning a value of $1/\deg(v)$ to an edge $(v,w)$ of a vertex $v$ that joins $\mathcal{C}$ if $\deg(w) \leq \deg(v)$. Thus, for each vertex $w$, 
the total borrowed assignment in \emph{each iteration} will be at most 
\[
	\sum_{\substack{(v,w) \ni w \\ \deg(w) \leq \deg(v)}} \hspace{-10pt}\frac{1}{\deg(v)} \leq \sum_{\substack{(v,w) \ni w \\ \deg(w) \leq \deg(v)}} \hspace{-10pt}\frac{1}{\deg(w)} \leq \deg(w) \cdot \frac1{\deg(w)} = 1. 
\]
In other words, in the problem of~\Cref{eq:tech-prob-game1}, we additionally have that for every $i \in [n]$, $0 \leq X^w_i \leq 1$ holds deterministically. This extra condition is now enough to 
bound the expectation and the variance of this problem as desired, using a careful probabilistic analysis. 

Unfortunately, we now have one other problem. Under this actual definition of $x$,~\Cref{eq:tech-x-value} no longer 
holds since the $x$-value we assign to edges of a vertex joining $\mathcal{C}$ is no longer $1$. However, we can analyze this step more carefully, and obtain an approximate version of this equation in expectation, i.e., prove that
\[
	\expect{\sum_{e \in E} x_e} \geq \frac{1}{2} \cdot \Exp\card{\mathcal{C}}.
\]
This is sufficient to perform the above primal-dual analysis. 

In conclusion, we designed a ``light weight'' reduction that given a \emph{single} execution of the randomized greedy MIS, finds an $O(1)$-approximate fractional matching in expectation. 

\subsubsection{Our Dynamic Streaming Algorithm for Matchings}

The next step is to incorporate our reduction from fractional matchings to randomized greedy MIS in the semi-streaming implementation of~\cite{AhnCGMW15}. 
We will be running the algorithm of~\cite{AhnCGMW15} and then maintain enough auxiliary information along the way to be able to implement our own reduction
for finding a fractional matching. 
The issue here is to figure out which edges should be assigned a fractional matching, and with what value. Addressing this issue requires bypassing several challenges that we outline below, but we should right away note that we will \emph{not} be able 
to  implement this reduction in a black-box way, and need to settle for some relaxations.  

\paragraph{Challenge 1: Large support in fractional matchings.} An obvious but easy-to-address issue is that the support of the fractional matching $x \in \IR^E$ returned by our reduction can be quite large (e.g., on a clique, it involves all the edges). 
Hence, we simply cannot hope to recover it with a semi-streaming algorithm. However, given that our original goal was not to recover this particular fractional matching, but rather find \emph{some} large matching in the input graph, 
we can use a standard trick: we only need to sample each edge of the graph \emph{independently} with probability $\simeq x_e \cdot \ln{(n)}$. Then, one can use a standard analysis\footnote{We emphasize that the independence in sampling is crucial here and is the key difference between our fractional matching and the one used in the analysis of~\cite{Veldt24}. Running the randomized greedy MIS and picking the blocking edges does indeed sample each edge with probability proportional to some fractional matching; however, the choice of edges are \emph{positively correlated}, hence forming many stars instead of a large matching; see~\Cref{fig:clique-alg}.} to argue 
that the set of sampled edges contains an integral matching with size within a constant factor of the original fractional matching $x$. 

As such, our goal in implementing the reduction is to be able to determine the value of $x_e$ for each edge in the stream at the time when an update to this edge happens (either insertion or deletion); then, using 
standard sparse-recovery primitives, we will be able to perform the sampling step above and recover a large enough matching. 

\paragraph{Challenge 2: Determining neighbors of a vertex $v$ joining the vertex cover.} In the fractional matching $x \in \IR^E$, 
whenever a vertex $v$ joins the vertex cover $\mathcal{C}$, it will assign a value of $1/\deg(v)$ to (a subset of) its neighbors that are \emph{still} present in the graph. 
But this requires the semi-streaming implementation to be able to determine the neighborhood of every vertex at the time it joins the vertex cover, despite the fact that~\cite{AhnCGMW15} processes the input in large batches 
of vertices without looking at the entire graph. See~\Cref{fig:implementation-alg}. 

\begin{figure}[h!]
	\centering
	 \begin{subfigure}[b]{0.3\textwidth}
	 \centering
	 \includegraphics[scale=0.35]{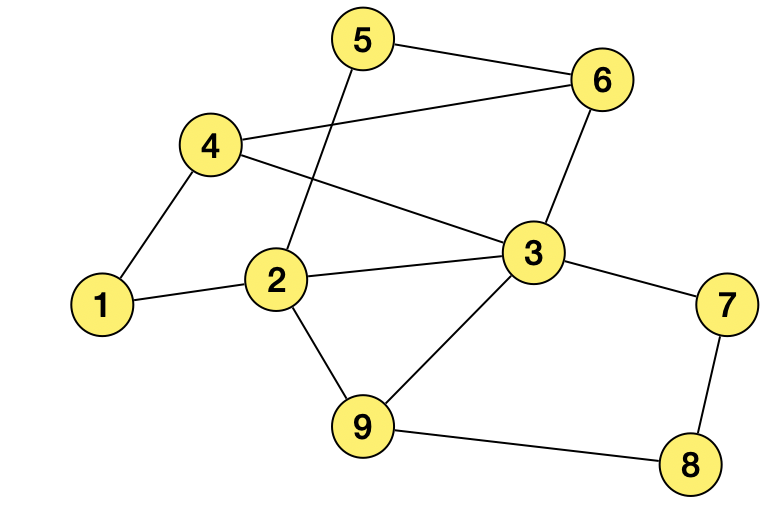}
	% \caption{first}\label{fig:clique1}
	\caption{}
	 \end{subfigure} 
	 \hspace{0.5cm}
	 \begin{subfigure}[b]{0.3\textwidth}
	 \centering
	 \includegraphics[scale=0.35]{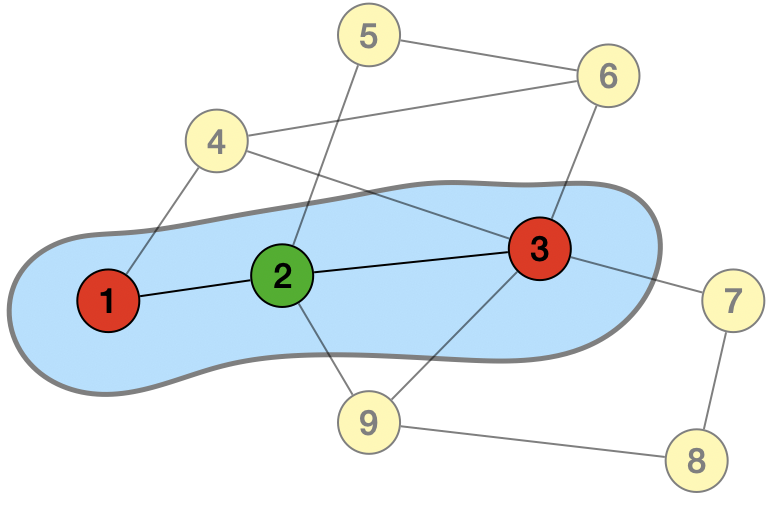}
%	 \caption{first}\label{fig:clique2}
	\caption{}
	 \end{subfigure}	
	 \hspace{0.5cm}
	 \begin{subfigure}[b]{0.3\textwidth}
	 \centering
	 \includegraphics[scale=0.35]{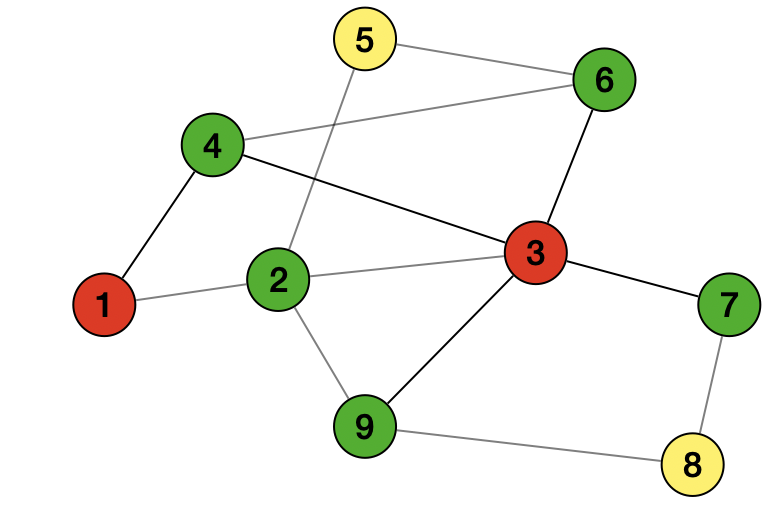}
	 \caption{}
%	 \caption{first}\label{fig:clique3}
	 \end{subfigure}
	 \caption{Consider running the randomized greedy MIS algorithm on the graph in $(a)$ using the specified ordering. The semi-streaming algorithm of~\cite{AhnCGMW15} processes vertices in large batches, say, a batch of $3$ vertices in part $(b)$. This allows the algorithm to determine which vertices in the batch are in the MIS (red) or the vertex cover (green) in a single pass. Then, in part $(c)$, 
	 using another pass, the algorithm identifies all the other vertices in the graph that also join the vertex cover (again, green vertices). However, for the fractional matching reduction, there are more considerations: for instance, the 
	 edge $(4,6)$ receives a fractional matching from vertex $4$ but not vertex $6$, as vertex $4$ is already removed by the time $6$ is added to the vertex cover (even though, the algorithm of~\cite{AhnCGMW15} treats both vertices
	 $4$ and $6$ the same way). In terms of \emph{time stamps}, time stamp of $4$ is $t(4)=1$ while for $6$ it is $t(6)=3$.}\label{fig:implementation-alg}
\end{figure}

To address this, we augment the semi-streaming algorithm to add a \textbf{time stamp} $t(v)$ to each vertex $v$ that joins the vertex cover: this is the index of the vertex $u \in N(v)$ in the random ordering whose choice
in the MIS led the algorithm to add $v$ to the vertex cover. If we compute these time stamps, then, for any vertex $v$, the set of its neighbors that are still present in the graph when $v$ joins the vertex cover is all vertices $w \in N(v)$ with $t(w) \geq t(v)$. 

But we now have another challenge: how can we compute these time stamps? As argued earlier, in~\cite{AhnCGMW15}, figuring out if a vertex $v$ needs to join the vertex cover after processing a batch $U_i$, amounts to finding 
if $v$ has \emph{any} neighbor to the newly added vertices $\mathcal{I}_i \subseteq U_i$ to the MIS. This can be handled using a simple counter. But, figuring out the time stamps of $v$ requires finding the \emph{smallest-index} neighbor of $v$ in $\mathcal{I}_i$.
In general, finding the minimum entry of an array of length $m$ (corresponding to vertices in $U_i$) undergoing insertion and deletions is not possible in $m^{o(1)}$ space and $o(\log{m})$ passes which is way above our budget 
(indeed, this fact is the basis of the $\Omega(\log{n})$-pass lower bound of~\cite{AssadiKZ24} for findings MSTs in dynamic streams). 

Fortunately, it turns out the minimum-entry problem we need to solve here has a special form. To see this, we once again rely on the residual sparsity property of the randomized greedy MIS.
This time, we further partition each batch $U_i$ into $O(\log{n})$ \textbf{sub-batches} of geometrically increasing sizes. We can then prove that, with high probability, for every vertex $v$, 
the first sub-batch that contains at least one neighbor of $v$ in $\mathcal{I}_i$, can also only have $\lesssim \log{n}$ neighbors of $v$ (in $\mathcal{I}_i$) in total. In other words, except for a negligible probability, 
it is not possible that an ``empty'' sub-batch with no neighbor of $v$ is followed by a ``full'' sub-batch with many neighbors of $v$. Using this property and standard sparse-recovery ideas, we show that 
we can run~\cite{AhnCGMW15} as is, and then spend $O(1)$ passes and $O(n \cdot \poly\log{(n)})$ space and recover the time stamps of all vertices as well. 

\paragraph{Challenge 3: Determining which edges receive fractional matchings.} Recall that the fractional matching $x \in \IR^E$ in our reduction, for each vertex $v$ joining the vertex cover,  adds $1/\deg(v)$ to each edge $(v,w)$
when $\deg(w) \leq \deg(v)$ and not all neighbors of $v$; this was the crucial technical modification we needed for the correctness of the reduction. What we described so far allows us to determine the value of $\deg(v)$ at the time $v$ joins the vertex cover. 
But to know which edges can receive this fractional value, we additionally need to compute $\deg(w)$ at the time $v$ joins the vertex cover and not $w$ itself. This is simply way too much information to even store, yet alone compute,
and we cannot hope to achieve this in the semi-streaming space at all. 

The solution to this challenge is to \emph{delegate} some computation of the fractional matching in the reduction to the analysis instead. Specifically, in our semi-streaming implementation, we are simply going to
compute the time stamps as described before. Then, for every vertex $v$, we can figure out the remaining neighbors of $v$ at the time $v$ joins the vertex cover and sample $\simeq \log{n}/\!\deg(v)$ fraction of remaining neighbors of $v$. 
The total number of edges sampled this way will be $\simeq n\log{n}$ edges still and we can find them using a sparse-recovery approach (albeit, for technical reasons, this part of the argument is more subtle and needs to work with limited-independence
hash functions). The rest of the algorithm is to simply return a maximum matching of these sampled edges. For the analysis \emph{only}, we perform a simple \emph{rejection sampling} idea to recover a sub-sampled version of the fractional matching $x$ 
in the reduction, using the already sampled set of edges, and use this to argue about existence of a large matching among the sampled edges. 

In conclusion, our dynamic semi-streaming algorithm involves: $(i)$ running the $O(\log\log{n})$-pass semi-streaming algorithm of~\cite{AhnCGMW15} for simulating the randomized greedy MIS; $(ii)$ spending $O(1)$ passes to 
recover the time stamps of all vertices and using them to sample $\simeq n\log{n}$ edges from the graph guided by our reduction of matching to greedy MIS; and finally $(iii)$ returning
a maximum matching of the sampled edges. This gives an $O(1)$-approximate semi-streaming algorithm for maximum matching in $O(\log\log{n})$ passes. As stated earlier, we can then use this algorithm in existing boosting frameworks 
for matchings to improve the approximation ratio to $(1+\eps)$ for any constant $\eps > 0$ within the same asymptotic space and number of passes.

% !TeX root = main.tex 
%!TEX root = main.tex

\subsection{Overview of Lower Bound}\label{sec:overview-lb}

Our lower bound uses the very recently developed technique of \emph{hierarchical embeddings} in~\cite{AssadiKNS24} and adapts it to proving {dynamic} streaming lower bounds, as opposed to {insertion-only} ones in~\cite{AssadiKNS24}. 
We start by providing a quick overview of this technique and the challenges along the way in adapting it for our purpose. We then describe our fixes for them which also involve borrowing ideas from 
prior work in~\cite{DarkK20}---for single-pass dynamic streaming matching---to construct hard instances of the problem and~\cite{AssadiKZ24}---for multi-pass dynamic streaming MST---as part of the information-theoretic 
analysis of these instances.  

%%We start with a quick overview of these work before switching to our own. 

Before starting the rest of this section, a quick comment in order. Our lower bounds (as well as prior work in~\cite{AssadiKNS24,DarkK20,AssadiKZ24} and almost all other streaming lower bounds) rely heavily on \emph{communication complexity} arguments. 
As such, we go back and forth freely between communication and streaming in our discussions. A reader unfamiliar with communication complexity may want to review~\Cref{sec:cc} for a basic introduction and its connection to streaming before proceeding. 

\subsubsection{The Hierarchical Embedding Technique of~\cite{AssadiKNS24}}

The $\Omega(\log\log{n})$ pass lower bound proven in~\cite{AssadiKNS24} for finding MIS in insertion-only streams is based on a new communication/streaming lower bound technique termed \emph{hierarchical embeddings}. 
At a (very) high level, this works as follows: A hard instance for $p$-pass 
streaming algorithms that solve MIS on $n_p$-vertices is constructed by `embedding' a collection of $a \times b$ hard instances 
\[
\GG := \set{G_{i,j}}_{i \in [a],j \in [b]}
\]
 for $(p-1)$-pass algorithms of MIS on smaller $n_{p-1}$-vertex graphs. All these smaller instances in $\GG$ are put together inside a \emph{single} graph $G$ in such a way that: $(i)$ finding MIS of $G$ requires solving a \emph{special subset} $\GG_{\istar} := \set{\GG_{\istar,j}}_{j \in [b]}$ of $b$ of these $(p-1)$-hard 
instances for some $\istar \in [a]$; but, $(ii)$ these specific instances are \emph{hidden} in the first pass of the algorithm, i.e., $\istar$ is not known to the algorithm, until effectively the end of its first pass. See~\Cref{fig:lower-AKNS24}
for a rough illustration. 

\begin{figure}[h!]
	\centering
	 \begin{subfigure}[b]{0.35\textwidth}
	 \centering
	 \includegraphics[scale=0.25]{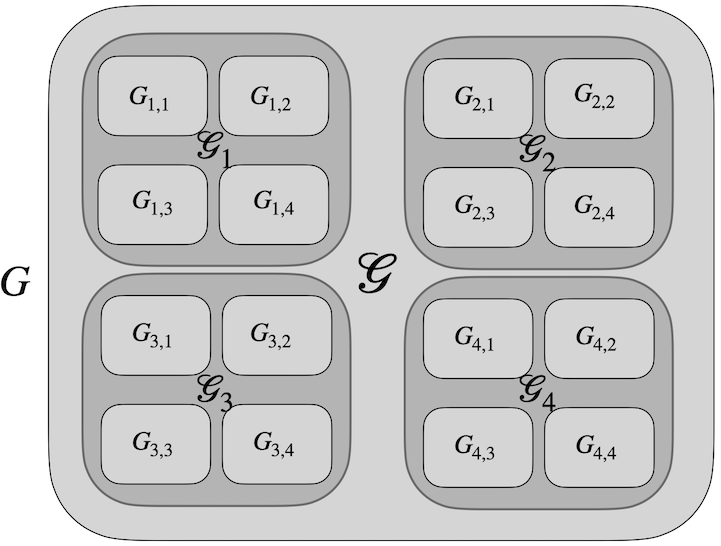}
	% \caption{first}\label{fig:clique1}
	\caption{First part of the stream is $G$.}
	 \end{subfigure} 
	 \begin{subfigure}[b]{0.6\textwidth}
	 \centering
	 \includegraphics[scale=0.25]{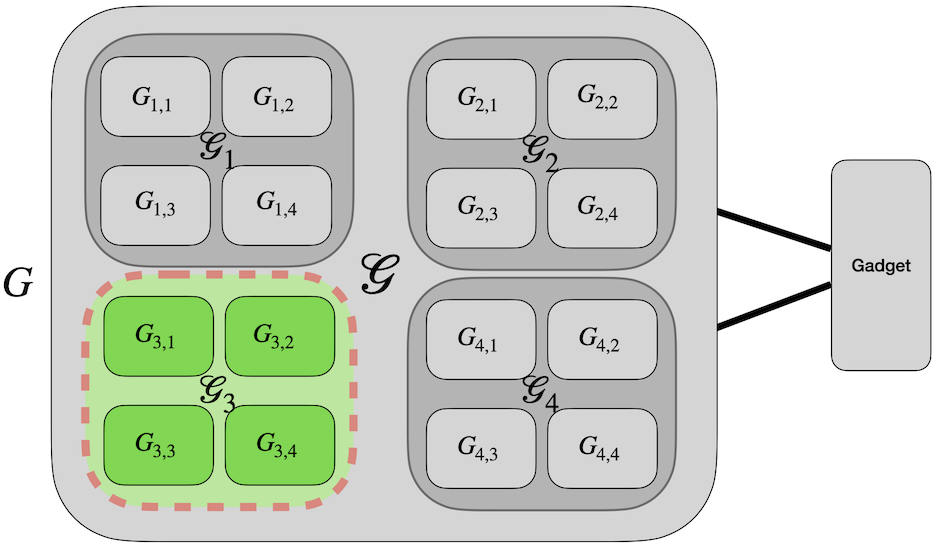}
%	 \caption{first}\label{fig:clique2}
	\caption{Second part of the stream is some ``gadget'' connected to $G$.}
	 \end{subfigure}	
	 \caption{An illustration of the hierarchical embedding with a collection of $(4 \times 4)$ many $(p-1)$-pass hard instances inside a single $p$-pass hard instance which is the graph $G$. The special hidden instances
	 here correspond to $\istar=3$ which need to be solved to solve the entire problem. We emphasize that, unlike this figure, in the actual construction the sub-instances for different $\GG_i$ and $\GG_j$ 
	 need to necessarily share some vertices so they all \emph{fit} in the same graph.}\label{fig:lower-AKNS24}
\end{figure}

\noindent
Based on this construction,~\cite{AssadiKNS24} proves that any algorithm now effectively has two choices: 
\begin{itemize}
	\item Either spends $\Omega(a \cdot b)$ space to learn some \emph{non-trivial} information about all the sub-instances in $\GG$ with the hope of learning something non-trivial about the $b$ special hidden ones; 
	\item Or, spends its first pass simply to learn $\istar$ only and then in the remaining $p-1$ passes, solve {all} the $b$ special $(p-1)$-hard instances in $\GG_{\istar}$ \emph{independently}, each on a different $n_{p-1}$-vertex graph. 
\end{itemize}
Let us use $s_p(n)$ to denote the minimum space needed for solving MIS on $p$-pass instances of this distribution on graphs with $n$ vertices. Then, the above two choices imply that 
\[
	s_p(n_p) \gtrsim \min\Paren{a \cdot b~,~b \cdot s_{p-1}(n_{p-1})}; 
\]
the second term is because in the second scenario above, the algorithm has effectively learned nothing non-trivial about the special hidden instances and now only has $(p-1)$ passes to solve $b$ independent $(p-1)$-pass hard instances; 
thus, by induction and a direct-sum type argument, we can expect this task to require $b \cdot s_{p-1}(n_{p-1})$ space. 

Working out the {right} parameters and using existing lower bound of $s_1(n) = \Omega(n^2)$ previously established~\cite{AssadiCK19a,CormodeDK19} for the induction base implies that 
\[
	s_p(n) \gtrsim n^{1+1/2^{\Omega(p)}},
\]
which in turn implies an $\Omega(\log\log{n})$-pass lower bound for semi-streaming algorithms. 

Despite the clean plan above, formalizing this idea in~\cite{AssadiKNS24} is quite challenging and involves handing the following two disjoint aspects of this approach: 

\paragraph{Combinatorial aspects.} We need to \emph{pack} many subgraphs of $\GG$ inside a \emph{single} graph $G$ while ensuring that they do not interfere with 
each other so as to not ``corrupt'' their $(p-1)$-pass hardness properties. At the same time, the graph $G$ should also force the algorithm---that is only tasked with solving MIS on a single graph $G$---to also solve MIS of all the special hidden instances in $\GG_{\istar}$, 
without revealing the identity of $\istar \in [a]$ in the first pass.  
This in particular requires the \emph{induced} subgraph\footnote{This is actually only true about the graph appearing in the first half of the stream (which is oblivious to the choice of $\istar$) and not the entire graph; 
while this is a crucial aspect in~\cite{AssadiKNS24}, we skip it in this informal discussion.}  of $G$ on \emph{each} of the sets $\GG_i$ for $i \in [a]$ to consists of solely the edges of subgraphs in $\GG_i$. This is a quite stringent combinatorial requirement and 
is handled in~\cite{AssadiKNS24} by the introduction of a new family of extremal graphs that generalize \rs (RS) graphs~\cite{RuzsaS78}; as this will be too much of a detour, we refer the reader to~\cite{AssadiKNS24} for more details here.

\paragraph{Information-theoretic aspects.} While the two scenarios for different types of semi-streaming algorithms outlined above provide a good intuition about  natural strategies, 
in reality, algorithms are not forced to follow such strategies. For instance, they may decide to \emph{correlate} the $b$ special instances in the remaining $(p-1)$ passes 
instead of treating them independently. Analyzing arbitrary algorithms requires maintaining different (conditional) independence properties between different parts of the inputs and then using various information theory tools. 
This in particular includes \emph{information complexity direct sum} arguments~\cite{ChakrabartiSWY01,BarakBCR10} and \emph{message compression} techniques~\cite{HarshaJMR07} on one hand, 
and generalizing standard \emph{round elimination} arguments in~\cite{MiltersenNSW95} on the other. We again refer the reader to~\cite{AssadiKNS24} for more details. 

We now mention the additional challenges of applying this technique for our purpose of proving a dynamic streaming lower bound for $O(1)$-approximation of matchings. 

\subsubsection{Challenges of Applying Hierarchical Embeddings for Approximate Matchings}

To apply the hierarchical embedding technique of~\cite{AssadiKNS24} for our purpose, we need to change both the above aspects of this technique entirely, due to the following challenges. 

\paragraph{Challenge 1: Combinatorial aspects of $O(1)$-approximate matchings.} In the language of~\cite{AssadiKNS24}, to force a $p$-pass \emph{$O(1)$-approximation algorithm for matchings} on a graph $G$ to have to solve
the $(p-1)$-pass instances of the matchings in the special hidden instances $\GG_{\istar}$, we need the majority of edges in \emph{all} large matchings in $G$ to come solely from instances $\GG_{\istar}$. This means that except for the edges of $\GG_{\istar}$, all other edges of $G$ should admit a vertex cover with $o(n)$ vertices. If we attempt to create our hard instances in the same way as~\cite{AssadiKNS24}, 
this, at the very least, requires every $\GG_i$ for $i \in [a]$ to contain an \emph{induced} matching of size $n-o(n)$ in the graph. This is way too stringent of a combinatorial requirement, and it is easy to argue that any graph admitting such a 
property can only contain $O(n)$ edges in total (see~\cite[Theorem 1.2]{FoxHS17}). Thus, the semi-streaming algorithm can simply store this graph in its memory and solve the problem exactly. 

Of course, it is no surprise that this approach does \emph{not} work for matchings, since it is tailored toward  insertion-only streams (which admit a simple $2$-approximation single-pass algorithm). 
For our purpose, we instead use an idea due to~\cite{DarkK20} for proving single-pass $O(1)$-approximation lower bounds for matchings in \emph{dynamic} streams. This effectively allows us let go of the \emph{group-structure} 
of sub-instances (i.e., their partitioning into $\GG_1,\ldots,\GG_a$) and create an instance that involves only one \emph{induced} set of $(p-1)$-pass instances (instead of many {groups} of them but then picking one group to be special as in~\cite{AssadiKNS24}). 
We discuss our fix here in~\Cref{sec:lower-challenge-1}. 

\paragraph{Challenge 2: Information-theoretic aspects of $O(1)$-approximate matchings.} Addressing the previous challenge using edge deletions creates a significant hurdle in applying the information-theoretic arguments of~\cite{AssadiKNS24}. 
On one hand, to ensure that we are working with \emph{correct} dynamic streams, we need to ensure that the edges deleted in the second half of the stream already have appeared in its first half. 
But this means that the inputs in the second half and the first half of the stream have to be highly correlated, which breaks the independence properties 
used in~\cite{AssadiKNS24} for their information-theoretic arguments. 

To address this challenge, we further borrow ideas from arguments of~\cite{DarkK20} for their single-pass lower bounds using \emph{augmented Index} communication problem -- roughly speaking, augmented Index
can be seen as the \emph{base case} of the ``augmented'' round elimination technique of~\cite{MiltersenNSW95} (which~\cite{AssadiKNS24} generalizes in the ``non-augmented'' case). Fortunately for us,~\cite{AssadiKZ24} 
have recently showed a way to properly augment the input of players even for multi round/pass algorithms on dynamic streams, namely, which parts of the inputs of players to share, and which to keep hidden, 
in a way that generalizes the augmented Index the ``right way''. Thus, to obtain our result, we need to generalize the augmented round elimination lemma of~\cite{MiltersenNSW95} and its dynamic streaming version in~\cite{AssadiKZ24}, 
 in a similar manner that~\cite{AssadiKNS24} did for the non-augmented round elimination in~\cite{MiltersenNSW95}. We discuss our approach in~\Cref{sec:lower-challenge-2}. 

\subsubsection{Addressing Challenge 1: Our Combinatorial Construction}\label{sec:lower-challenge-1}

We now discuss our fix to challenge 1 above, starting with an overview of the prior work of~\cite{DarkK20}. 

\subsubsection*{Single-pass lower bound construction of~\cite{DarkK20}} 

The authors in \cite{DarkK20} proved an $\Omega(n^2)$-space lower bound for $O(1)$-approximation of matchings in single-pass dynamic streams. Prior to~\cite{DarkK20}, all semi-streaming lower bounds for approximate matchings 
in insertion-only streams~\cite{GoelKK12,Kapralov13,Kapralov21} or dynamic streams~\cite{Konrad15,AssadiKLY16,AssadiKL17} relied on various combinatorially complex constructions 
based on RS graphs. On the other hand~\cite{DarkK20} designed an elegant method of replacing these combinatorial constructions in dynamic streams with much simpler graphs by crucially exploiting the edge \emph{deletions} in the stream. 

Consider the following communication problem. 
\begin{itemize}
\item Alice receives an $n \times n$ matrix $A$ encoding the bipartite adjacency matrix of a \emph{random} bipartite graph $G$ between two sets of vertices $L$ and $R$, each of size $n$. 
Each entry of $A$ is independently chosen to be $0$ or $1$ with probability half. 

We think of the input to Alice as the insertions of edges $\Eadd$ at the beginning of the stream. 
\item The input to Bob is random permutations $\sigma_r$ and $\sigma_c$ of $[n]$ 
that reorder rows and columns of $A$, respectively. Bob is also given the following submatrix $B$ of $A$: reorder the rows and columns of $A$ according to $\sigma_r$ and $\sigma_c$; then, 
for some previously-fixed $\istar = n-o(n)$, consider the $(\istar \times \istar)$ submatrix of $A$ at the bottom right corner of $A$ (after the reordering). Provide this sub-matrix except for its \emph{diagonal} entries as $B$, and let 
diagonal entries of $B$ be $0$'s instead. 

We think of the input to Bob as the deletion of edges $\Edel$ in the second half of the stream (based on the bipartite adjacency matrix). Since $B$ is a submatrix of $A$, all edges deleted by Bob are already inserted by Alice. 
The input graph $G$ at the end of the stream is then the bipartite graph on bipartition $L$ and $R$ and edges $\Eadd \setminus \Edel$. 
\end{itemize}

It is not hard to see that the graph $G$ created this way has an induced matching $M$ of size $n/2-o(n)$ with high probability (inside the submatrix $B$ of $A$ with edges $A - B$). 
Moreover, the graph $G \setminus M$ has a vertex cover of size $o(n)$ (by picking the $o(n)$ vertices not part of $B$). This implies that any $O(1)$-approximate matching algorithm
on the graph $\Eadd \setminus \Edel$ needs to recover many edges of $M$. See~\Cref{fig:lower-DK20} for an illustration. 

\begin{figure}[h!]
	\centering
	 \begin{subfigure}[b]{0.45\textwidth}
	 \centering
	 \includegraphics[scale=0.3]{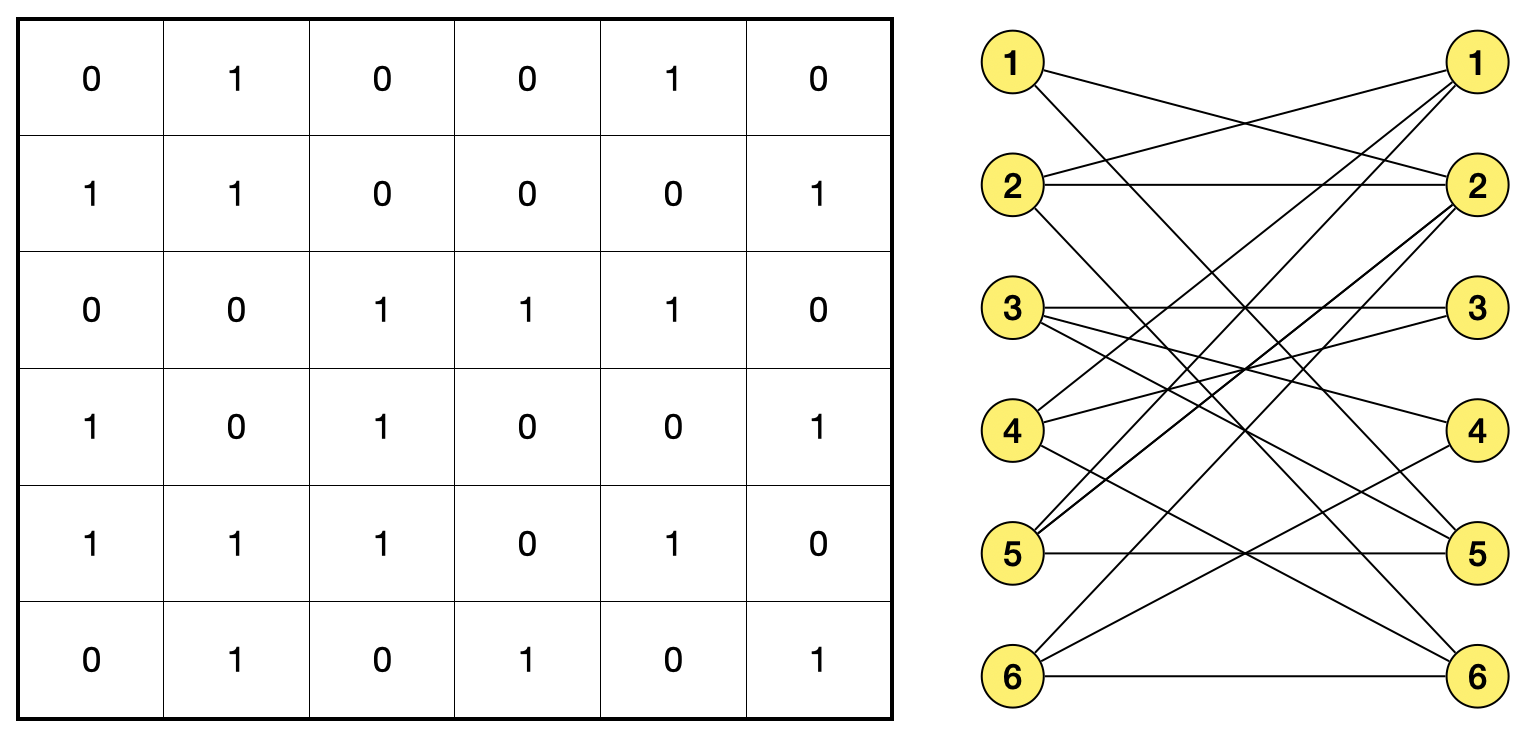}
	% \caption{first}\label{fig:clique1}
	\caption{Alice's input $A$ is the bipartite adjacency matrix of the graph $\Eadd$. \\ ~ \\ ~}
	 \end{subfigure} 
	 \hspace{25pt}
	 \begin{subfigure}[b]{0.45\textwidth}
	 \centering
	 \includegraphics[scale=0.3]{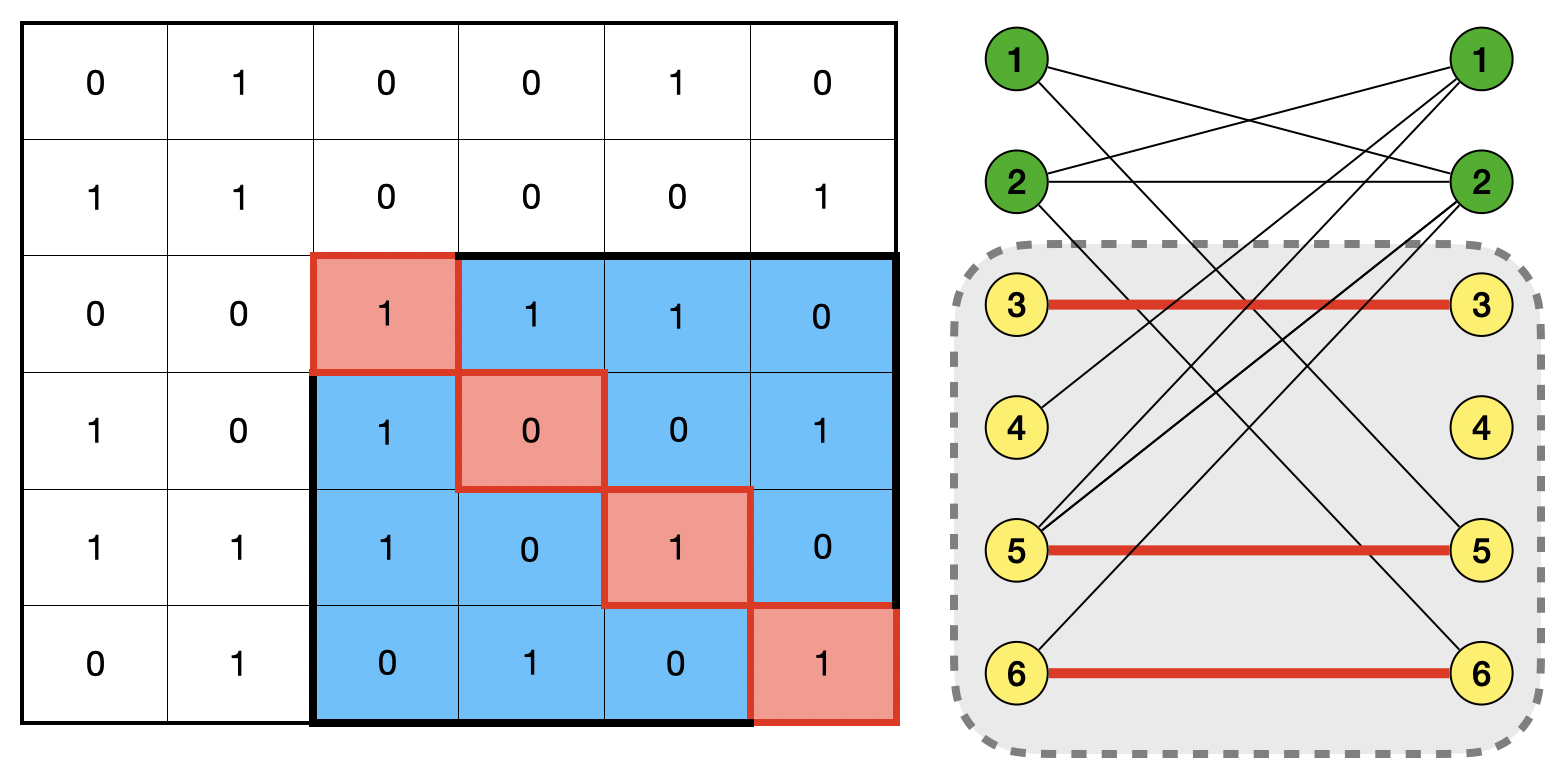}
%	 \caption{first}\label{fig:clique2}
	\caption{Bob's input $B$ is the sub-matrix with blue entries, with diagonal (red) entries replaced with $0$. 
	The graph here is $\Eadd \setminus \Edel$ with red edges being the induced matching $M$ and green vertices being vertex cover of $G \setminus M$.}
	 \end{subfigure}	
	 \caption{An illustration of the inputs $A$ and $B$ to Alice and Bob in the lower bound of~\cite{DarkK20}. Here,  $n=6$ and $\istar=4$ and both $\sigma_r$ and $\sigma_c$ are identity permutations. In general, the submatrix of $B$ corresponds to a combinatorial rectangle in $A$, the induced matching is of size $n/2-o(n)$ with high probability, and the vertex cover is of size $o(n)$.}\label{fig:lower-DK20}
\end{figure}

In~\cite{DarkK20}, it was shown that given only $A$ and without the knowledge of $\sigma_r$ and $\sigma_c$, majority of edges in $\Eadd$ have an equal chance of appearing as the edges of the induced matching $M$. 
Thus, from Alice's perspective, most edges in $\Eadd$ can become important in finding a large matching in $\Eadd \setminus \Edel$. Hence, Alice needs to communicate $\Omega(n^2)$ bits before Bob can output
an $O(1)$-approximate matching in the graph (this argument misses the crucial information-theoretic aspect that Bob has the knowledge of $\Edel$ and thus knows ``a lot'' about $A$ already; but we postpone this part to the discussion in~\Cref{sec:lower-challenge-2}). 

\subsubsection*{Our multi-pass lower bound construction} 

We are now finally ready to state how we create our hard instances. We define a new two-player communication problem, called \textbf{Augmented Hidden Matrices (AHM)}, 
inspired by the construction of~\cite{DarkK20} described earlier. We then use the standard connection between streaming and communication complexity (\Cref{prop:cc-stream}) to turn $r$-round communication lower bounds 
for AHM into $\Theta(r)$-pass lower bounds for streaming algorithms. 

An $r$-round instance of AHM consists of two matrices $A$ and $B$ and is denoted by AHM$_r(A,B)$. 
As before, matrix $A$ (resp. $B$) corresponds to bipartite adjacency matrix of a bipartite graph on $n$ vertices (resp. $n-o(n)$ vertices) on each side. 
Moreover, $A$ will determine edges $\Eadd$ to be added to the graph and $B$ will determine $\Edel \subseteq \Eadd$ to be deleted\footnote{\label{footnote:add-before-delete} This is actually only true when $r$ is an odd number but we will ignore this subtlety for this discussion; however, we 
note that in our streaming lower bounds, $r$ will always be chosen to be odd.}. These edges are  such that any $O(1)$-approximation of matchings on the graph $\Eadd \setminus \Edel$ 
allows for solving the instance AHM$_r(A,B)$. Crucially different from~\cite{DarkK20} however, it is \emph{not} the case that $A$ is entirely an input to Alice and $B$ is input to Bob (but rather both players receive different parts of both matrices). 

Following the hierarchical embedding technique, an instance of AHM$_r(A,B)$ is created from \emph{many} instances of AHM$_{r-1}$ as follows (see~\Cref{fig:lower-ours} for an illustration): 
\begin{itemize}
	\item Let $n_r$ and $n_{r-1}$ denote, respectively, the number of vertices in instances AHM$_r$ and AHM$_{r-1}$, and $t_r := n_r/n_{r-1}$. 
	Start with a $(t_r \times t_r)$ \emph{matrix} of $(r-1)$-hard instances 
	\[
		\mathcal{I} := \set{(\alicepart{r-1}_{i,j}, \bobpart{r-1}_{i,j})}_{i \in [t_r], j \in [t_r]}, 
	\]
	where each $(\alicepart{r-1}_{i,j},\bobpart{r-1}_{i,j})$ is an instance of AHM$_{r-1}$ on $n_{r-1}$ vertices. 
	
	\item Matrix $A$ is an $(n_r \times n_r)$ matrix 
	defined as follows. Partition the rows and columns of $A$ into $t_r$ \textbf{groups} of size $n_{r-1}$ (say, the first $n_{r-1}$ are in group one, the second in group two, and so on and so forth). 
	Then, matrix $A$ is obtained by plugging matrix $\bobpart{r-1}_{i,j}$ between group $i$ of rows and group $j$ of columns.\footnote{It is worth making two remarks. Firstly, in a typical round elimination argument, we often remove the first round of 
	a protocol that Alice speaks first to obtain an $(r-1)$-round protocol wherein Bob speaks first. But, this means that for the underlying instances, the role of Alice and Bob keeps switching, hence, here the $B_{i,j}$-matrices (which one typically associated 
	with Bob's input) are instead defining $A$ (which again is typically associated with Alice's input). Secondly, even though the matrices $\alicepart{r-1}_{i,j}$ are not used at the moment, some information about them actually will be revealed to Alice and Bob 
	which is crucial for the lower bound; this will be discussed later on.}
	
	In terms of the underlying graph, we think of each $\bobpart{r-1}_{i,j}$ inside $A$ as defining edges between two distinct sets of $n_{r-1}$ vertices in the bipartite graph $G$ (over the same grouping of vertices as rows and columns of $A$). 
%	In other words, $\bobpart{r-1}_{i,j}$'s form bipartite adjacency matrices of some subgraphs $G_{i,j}$ of $G$ (over the same grouping of vertices as rows and columns of $A$). 
	The set $\Eadd$ of edges to be inserted to the graph are defined this way by the matrix $A$. 
	
%%	However, we emphasize that not 
%%	all entries of these matrices end up creating an edge inside the graph $G$ (as in, some parts of the bipartite adjacency matrix of $G$ will \emph{deviate} from these matrices). 
%%	
%%	Note that this part of the construction 
%%	only involves insertions and ignore deletions of instances in $\mathcal{I}$. 
%%	
	\item Matrix $B$ is an $(n_r - o(n_r) \times n_r - o(n_r))$ matrix defined as follows. We pick two random permutations $\sigma_r$ and $\sigma_c$ of $[t_r]$; these correspond
	to reordering the row-groups and column-groups of the matrix $A$, or equivalently rows and columns of the instance-matrix $\mathcal{I}$.
	For some previously-fixed $\istar = t_r - o(t_r)$, define the instance-matrix $\mathcal{I}_{induced} \subseteq \mathcal{I}$ as the $(\istar \times \istar)$ submatrix of $\mathcal{I}$ 
	 in the bottom right corner \emph{after} we apply the reorderings $\sigma_r$ and $\sigma_c$. 
	 Similarly, define $\mathcal{I}_{special}$ as the collection of $\istar$ instances in the diagonals of $\mathcal{I}_{induced}$. Matrix $B$ consists of submatrices $\bobpart{r-1}_{i,j}$ on 
	 the off-diagonal entries of $\mathcal{I}_{induced}$, i.e., instances in $\mathcal{I}_{induced} \setminus \mathcal{I}_{special}$ and submatrices $\alicepart{r-1}_{i,j}$ on the diagonal entries, i.e., instances in $\mathcal{I}_{special}$. 
	
	In terms of the underlying graph, the sub-matrices in $B$ are again mapped to the edges of the graph $G$ in a similar way as matrix $A$. However, this time edges in $B$ are being deleted from the graph $G$ 
	and thus define the set $\Edel$ instead. 
	 	
	%We think of edges inserted to the graph in the stream as the collection of edges that would have been inserted for each instance AHM$_{r-1}(\alicepart{r-1}_{i,j},\bobpart{r-1}_{i,j})$ for $i,j \in [t_r]$.  
\end{itemize}

The conclusion of this construction is that in the graph $G$, \emph{all} edges inserted by $A$ corresponding to instances in $\mathcal{I}_{induced} \setminus \mathcal{I}_{special}$ are being deleted by $B$, 
while for the instances in $\mathcal{I}_{special}$, we will end up having a graph corresponding to an instance of AHM$_{r-1}$ (their $B$-side is inserted and their $A$-side is deleted here). This way, the graph $G$ consists of 
a collection of $\istar = t_r - o(t_r)$ \emph{induced} hard $(r-1)$-round instances in $\mathcal{I}_{induced}$ on a set of $n_r - o(n_r)$ vertices; if we ignore edges of these instances, then the entire graph 
has a vertex cover of size $o(n_r)$.

\begin{figure}[h!]
	\centering
	 \begin{subfigure}[b]{0.45\textwidth}
	 \centering
	 \includegraphics[scale=0.25]{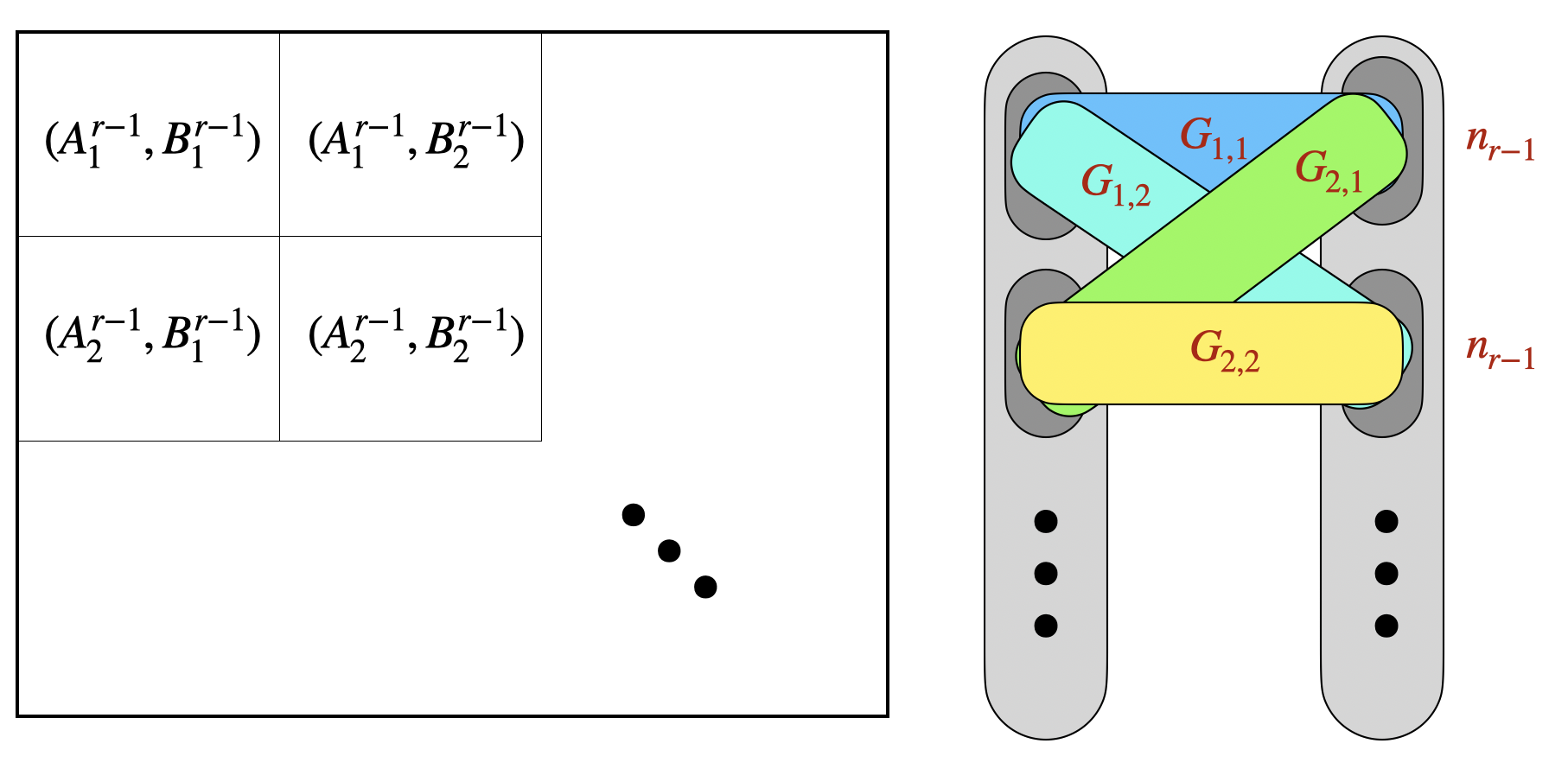}
	% \caption{first}\label{fig:clique1}
	\caption{Matrix $A$ is the bipartite adjacency matrix of the graph $\Eadd$. It is constructed from the $(t_r \times t_r)$ instance-matrix $\mathcal{I}$
	whose entries are hard instances of AHM$_{r-1}$. Each of those instances creates a bipartite subgraph on $n_{r-1}$ vertices on each side inside $G$.}
	 \end{subfigure} 
	 \hspace{25pt}
	 \begin{subfigure}[b]{0.45\textwidth}
	 \centering
	 \includegraphics[scale=0.25]{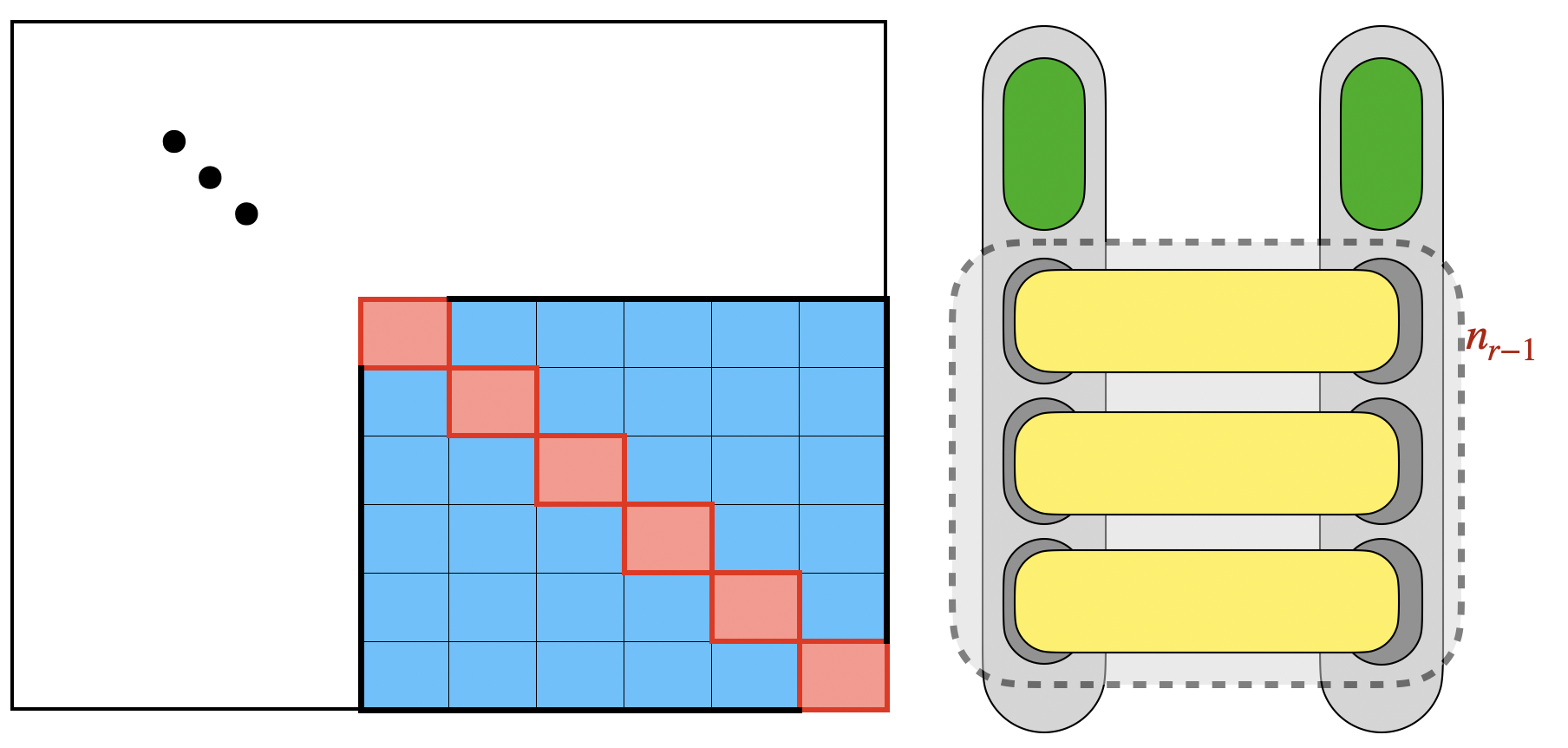}
%	 \caption{first}\label{fig:clique2}
	\caption{Matrix $B$ is the same as $A$ on the off-diagonal entries (blue) and consists of $\alicepart{r-1}$-matrices on the diagonal entries (red). 
	The graph here is $\Eadd \setminus \Edel$ with yellow subgraphs being special induced instances and green vertices being vertex cover of $G$ minus edges of these instances.}
	 \end{subfigure}	
	 \caption{An illustration of the matrices $A$ and $B$ in our lower bound construction for AHM$_r$. While graph $G$ consists of many instances of AHM$_{r-1}$, the only subgraphs that correspond
	 to \emph{complete} (and thus hard) instances are the special induced ones in $\mathcal{I}_{special}$ (yellow subgraphs in $(b)$). 
	 }\label{fig:lower-ours}
\end{figure}

To see the intuition behind why such instances should be hard consider the following. Graph $G$ created this way contains a very large induced subgraph consisting of vertex-disjoint copies of hard $(r-1)$-round instances, namely, 
the ones in $\mathcal{I}_{special}$. Moreover, all other edges of the graph can only form matchings of size $o(n_r)$ given they are incident on $o(n_r)$ vertices (i.e., the vertex cover argument above). 
This means that any $O(1)$-approximate matching algorithm needs to find \emph{large} matchings from many of the AHM$_{r-1}$ instances in $\mathcal{I}_{special}$; in other words, it needs to ``solve'' these instances as well. 
In the language of hierarchical embeddings, we reduced AHM$_r$ to $t_r$ instances of AHM$_{r-1}$ (albeit with some loss in the parameters, which will be handled carefully in the proof). 

\paragraph{What about the parameters?} Let us again consider the two natural choices of the algorithms: 
\begin{itemize}
	\item Either spend $\Omega(t_r^2)$ communication/space to learn something non-trivial about the $t_r$ hidden instances in $\mathcal{I}_{special}$; the reason behind this ``discounting'' of information by a factor of $t_r$ 
	is that, similar to the construction of~\cite{DarkK20},  it is the case that the $t_r$ instances in $\mathcal{I}_{special}$ are effectively uniform (although only \emph{marginally}) among the majority of the $t_r^2$ instances in $\mathcal{I}$; 
	\item Or, spend communication/space proportional to what is needed to solve (most of) the $t_r$ instances in $\mathcal{I}_{special}$ in only $(r-1)$-rounds and independently. 
\end{itemize}
By using $s_{r}(n_r)$ to denote the communication/space needed for solving AHM$_{r}$ on $n_r$-size instances, the above suggests that 
\begin{align}
	s_r(n_r) \gtrsim \min\Paren{t_r^2~,~t_r \cdot s_{r-1}(n_{r-1})} = \min\Paren{\paren{\frac{n_r}{n_{r-1}}}^2~,~\paren{\frac{n_r}{n_{r-1}}} \cdot s_{r-1}(n_{r-1})}. \label{eq:tech-proof}
\end{align}
Using $s_1(n_1) = \Omega(n_1^2)$ as the base case by the lower bound of~\cite{DarkK20}, we obtain 
\[
	s_2(n_2) \gtrsim \min\Paren{\paren{\frac{n_2}{n_{1}}}^2~,~n_2 \cdot n_{1}} = n_2^{4/3}, 
\]
by letting $n_2 = ({n_1})^3$. Repeating this inductively using~\Cref{eq:tech-proof} allows us to prove that 
\[
	s_r(n) \gtrsim n^{1+1/2^{\Omega(r)}}.
\]
This implies that $\Omega(\log\log{n})$ rounds are needed by $\Ot(n)$-communication protocols for AHM$_r$ and in turn $\Omega(\log\log{n})$ passes are needed by semi-streaming algorithms for $O(1)$-approximation of matchings. 

%%Before that however, we need to specify one more important part of this construction. 

\paragraph{How is the input partitioned?} One thing that we have neglected so far is specifying what the input to Alice and Bob precisely are in the AHM$_r(A,B)$ problem. 
Naturally, Alice receives the matrix $A$ and Bob receives the submatrix $B$. But since these matrices have quite a lot of overlap with each other on one hand, and the recursive nature of the instances, on the other hand,
we need to specify this more accurately (for instance, in the special induced instances in $\mathcal{I}_{special}$, Alice again receives some parts of $B$, and recursively like this). 
Specifying full details of this partitioning recursively is beyond the scope of this overview, but roughly speaking this is how the partitioning of inputs looks like (see~\Cref{fig:input-partition}): \\~
\vspace{-20pt}
\begin{wrapfigure}[20]{r}{0.45\textwidth}
    \centering
\includegraphics[scale=0.45]{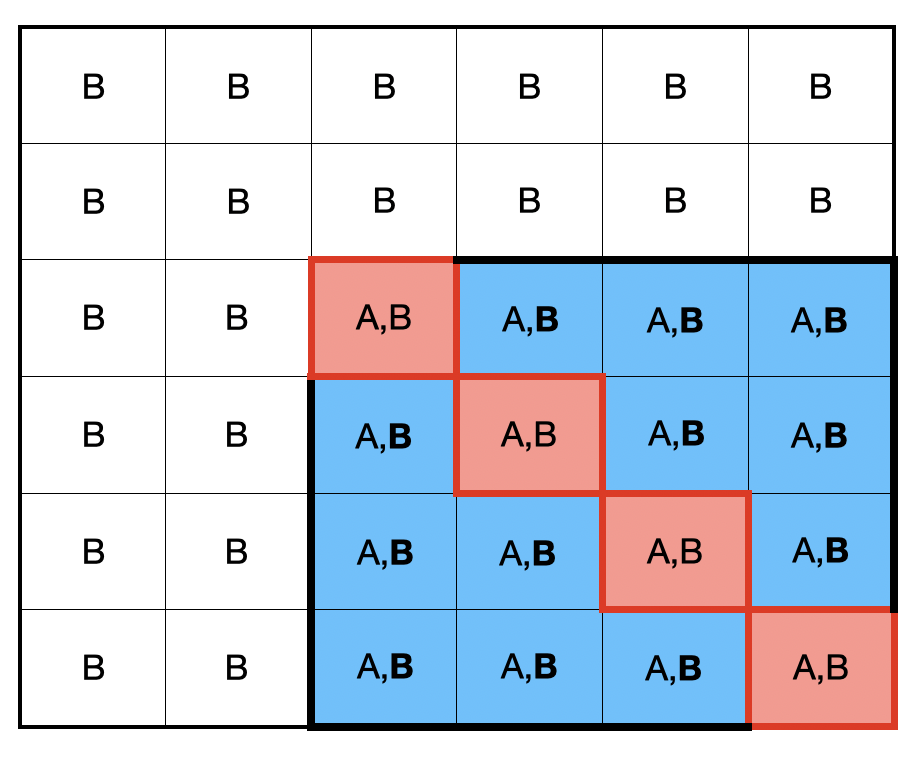}
\caption{An illustration of the input partitioning of Alice and Bob in an instance of AHM$_r$ with the sub-matrix of Bob shown (in blue). 
Here, $A$ refers to some $\alicepart{r-1}_{i,j} \in \mathcal{I}$ given to $(r-1)$-round Alice and $B$ similarly is for some $\bobpart{r-1}_{i,j} \in \mathcal{I}$
given to $(r-1)$-round Bob. In the AHM$_r$ instance, Alice receives all the $B$'s and Bob receives all the $A$'s, and additionally the $B$'s on the off-diagonal sub-matrix (shown in bold).}
\label{fig:input-partition}
\end{wrapfigure}
\begin{itemize}[leftmargin=10pt]
	\item The $r$-round Alice in AHM$_r(A,B)$ receives \emph{all} the inputs that $(r-1)$-round Bob receives (recursively) in all instances of $\mathcal{I}$.

	\item The $r$-round Bob in AHM$_{r}(A,B)$ receives: 
	\begin{itemize}[leftmargin=10pt]
	\item all the inputs that \emph{both} $(r-1)$-round Alice and $(r-1)$-round Bob (recursively defined) in all non-special (= off-diagonal) induced instances, i.e., in $\mathcal{I}_{induced} \setminus \mathcal{I}_{special}$; 
	\item and, all the inputs of only $(r-1)$-round Alice (recursively defined) in the special (= diagonal) induced instances, i.e., in $\mathcal{I}_{special}$. 
	\end{itemize}
\end{itemize}

With this input partitioning, Alice can insert all edges in $B$ and Bob is able to delete all edges inserted by Alice in the off-diagonal instances to create the ``inducedness property'', 
and delete just enough edges from the diagonal instances to make them truly a hard $(r-1)$-round instance (we  caution the reader that this view is only true for \emph{odd} values of $r$; see~\Cref{footnote:add-before-delete}).

It is also worth mentioning that edges entirely out of the sub-matrix $B$ of Bob do not form any real instance of AHM$_r$, given that the other parts of their inputs is never added to the graph. 
However, these are the edges that are incident on the $o(n_{r})$ size vertex cover of the graph $G$ minus edges of special instances in $\mathcal{I}_{special}$; our lower bound construction does not require any property
from these edges beside that they do not form a large matching which is guaranteed by the vertex cover argument. As such, not having real instances here is not a problem from the \emph{combinatorial} aspect. At the same, 
\emph{information-theoretically} it is quite crucial that Bob has no knowledge of this part of the graph, and we shall use this crucially in the next part. We remark that this way of partitioning the input in our work
has also appeared recently in the multi-pass dynamic streaming lower bound of~\cite{AssadiKZ24} for MSTs to facilitate their information-theoretic arguments. 

The remaining and the most technical part of the argument is to formalize the intuitions in this subsection for the hardness of our instances and prove~\Cref{eq:tech-proof} information-theoretically. 

%%consider a
%%$\Theta(n)$-size sub-matrix at the bottom right corner of $A$
%%
%% crucially uses edge \emph{deletions} in the stream to bypass 
%%the combinatorially complex construction of RS graphs used prior to that  

%%The \emph{inducedness} property mentioned above in establishing the lower bound of~\cite{AssadiKNS24} is a key factor in many other streaming lower bounds also. Starting from the pioneering work of~\cite{GoelKK12} (for proving $3/2$-approximation lower bounds for single-pass approximate matching in insertion-only streams), this has been typically achieved using different RS-graph type constructions mentioned above 
%%(see, e.g.~\cite{GoelKK12,Kapralov13,Kapralov21,AssadiR20}). This also includes the single-pass $n^{2-o(1)}$-space dynamic streaming lower bound of~\cite{AssadiKLY16} for $O(1)$-approximate matchings. 
%%However,~\cite{DarkK20} showed an alternative way of achieving such a property tailored to \emph{dynamic} streams by exploiting the edge deletions crucially. Using this, they proved an $\Omega(n^2)$-space 

\subsubsection{Addressing Challenge II: Information-Theoretic Arguments}\label{sec:lower-challenge-2}

We now discuss the proof ideas in the communication lower bound in~\Cref{eq:tech-proof}. Let us go back to the one-round lower bound of~\cite{DarkK20} and the part we explicitly left out for here. Already in the one-round problem, 
why it should be the case that Alice's $o(n^2)$-size message cannot reveal much information about diagonal entries Bob needs to output, despite Bob knowing a great deal about Alice's matrix $A$, i.e., 
all the off-diagonal entries of $B$, which form the vast majority of $A$. 

The proof in~\cite{DarkK20} actually showed a weaker guarantee than what we advertised before (and need in our proofs). Specifically,~\cite{DarkK20} only proves that the entry $A[\sigma_r(\istar)][\sigma_c(\istar)]$ (the top left entry of the diagonal) 
cannot be revealed by a small message from Alice even given Bob's matrix $B$. They then used this fact in a separate randomized reduction to extend this lower bound to $O(1)$-approximation of matchings. 
This proof however does not work in the  plan we outlined above, which requires the lower bound for solving \emph{many} lower-round instances instead of just one. 

To handle this, we are going to provide a different proof here compared to~\cite{DarkK20} which is in fact a technical contribution of our work. In particular, this already shows that none of the diagonal entries of $B$ (to be more accurate $A$ on rows and columns of $B$) can be revealed by Alice's message (this is in a \emph{marginal} sense, i.e., the information revealed about \emph{any one} entry on the diagonal is $o(1)$ bits, even conditioned on all the other diagonal values).   

\paragraph{A new one-round lower bound.} Consider the input permutations $\sigma_r$ and $\sigma_c$ of $[n]$ given to Bob and suppose our goal is to return the $j$-th diagonal entry, i.e., 
return $A[\sigma_r(n-\istar+j)][\sigma_c(n-\istar+j)]$. Define the following two sets of rows and columns (see~\Cref{fig:tech-chainrule} for an illustration):
\begin{itemize}
	\item $T_{row}$: the rows in the \emph{entire} $A$ that cover rows of $B$ except for the $j$-th one, i.e., 
	\[
		T_{row} := \set{\sigma_r(n),\sigma_r(n-1),\ldots,\sigma_r(n-\istar)} \setminus \set{\sigma_r(n-\istar+j)}.
	\]
	\item $T_{col}$: the columns in the \emph{entire} $A$ that cover columns of  $B$ except for the $j$-th one, i.e., 
	\[
		T_{col} := \set{\sigma_c(n),\sigma_r(n-1),\ldots,\sigma_c(n-\istar)} \setminus \set{\sigma_c(n-\istar+j)}.
	\] 
\end{itemize}

\begin{figure}[h!]
	\centering
	 \begin{subfigure}[b]{0.45\textwidth}
	 \centering
	 \includegraphics[scale=0.3]{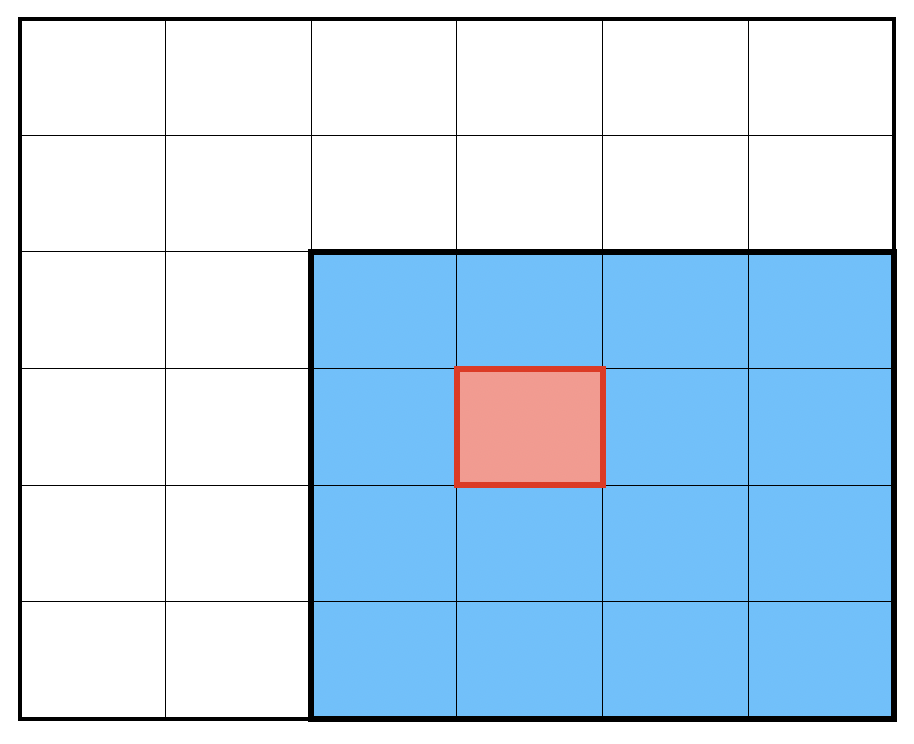}
	% \caption{first}\label{fig:clique1}
	\caption{}
	 \end{subfigure} 
	 \hspace{25pt}
	 \begin{subfigure}[b]{0.45\textwidth}
	 \centering
	 \includegraphics[scale=0.3]{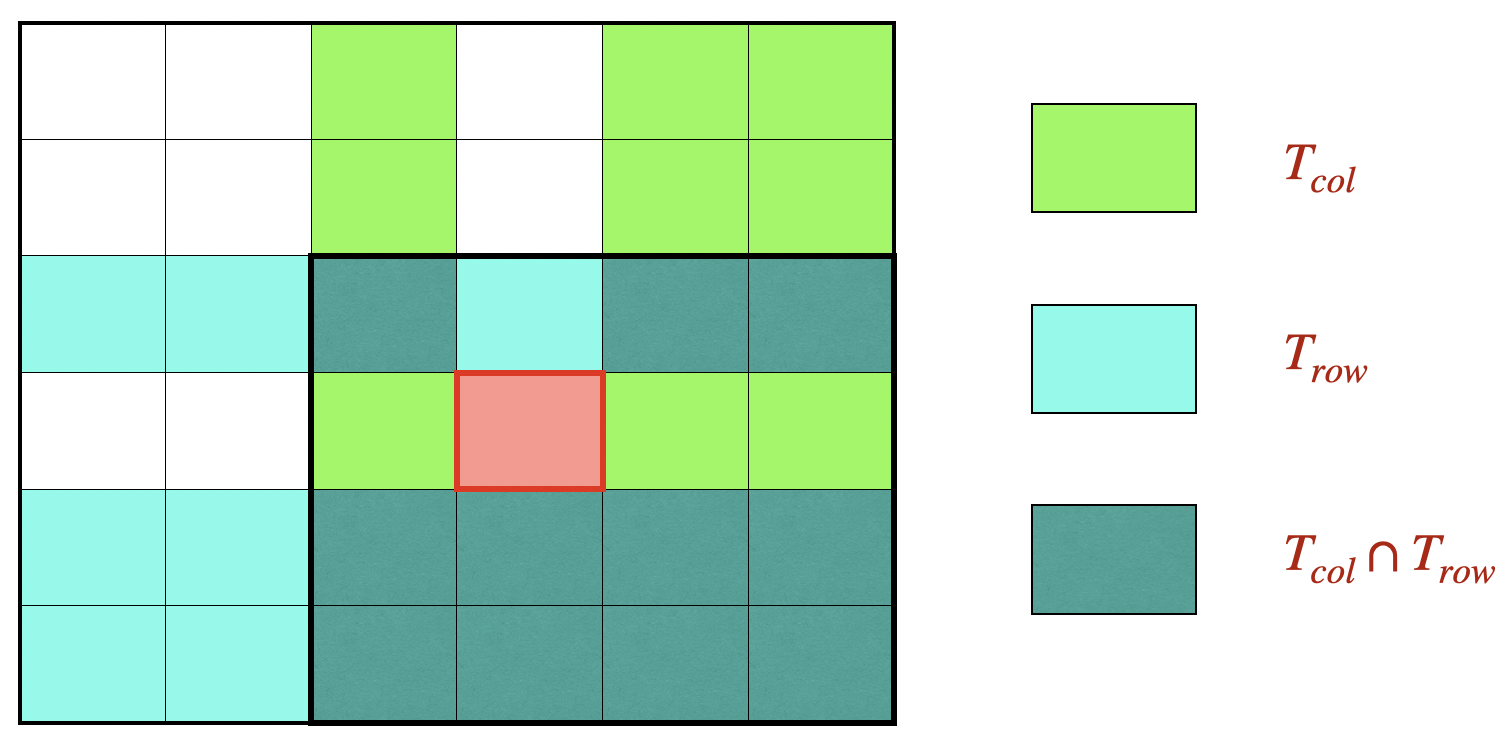}
%	 \caption{first}\label{fig:clique2}
	\caption{}
	 \end{subfigure}	
	 \caption{An illustration of $T_{row}$ and $T_{col}$ and the chain rule argument for the new one-round lower bound proof. Here, the matrices are shown after applying 
	 the reorderings by $\sigma_r$ and $\sigma_c$ (alternatively, think of them as being identity permutations). Part $(a)$ shows the submatrix $B$ and the special entry we are interested in recovering. 
	 Part $(b)$ shows the fixing of $T_{row}$ and $T_{col}$ (with different colors for each of $T_{row}$, $T_{col}$, and $T_{row} \cap T_{col}$) that fixes all of $B$ (and possibly more of $A$) and leaves the special entry on the diagonal of $B$ unfixed. 
	 }\label{fig:tech-chainrule}
\end{figure}

It is easy to see that if we provide Bob with $A[T_{row}][T_{col}]$ (namely, $A$ on the entries in $T_{row}$ and $T_{col}$) we have provided him with more information than his original input in the off-diagonal entries of $B$. 
On the other hand, a critical observation is that conditioned on the choices of $T_{row},T_{col}$  (but crucially not $\sigma_r$ and $\sigma_c$), the special entry $A[\sigma_r(n-\istar+j)][\sigma_c(n-\istar+j)]$
is chosen uniformly from all \emph{unfixed} entries of $A$, namely, $A \setminus A[T_{row}][T_{col}]$. This is sufficient (using a simple application of chain rule of mutual information) for proving that a 
message of size $\ll (n-\card{T_{row}}) \cdot (n-\card{T_{col}})$ cannot reveal more than $o(1)$ bits about the special entry Bob needs to output, even conditioned on $A[T_{row}][T_{col}]$ (which fixes off-diagonal entries of $B$). 
This proves the one-round lower bound. 

We note that these types of chain rule arguments (by conditioning on partial parts of permutations) have also appeared in somewhat similar contexts elsewhere, e.g., in~\cite{BravermanRWY13,NelsonY19,KunKoW20}. 

\subsubsection*{Our multi-round lower bound}
 At last, we are ready to review our approach for proving our multi-round lower bound for the AHM$_r$ problem introduced in~\Cref{sec:lower-challenge-1}, namely, prove~\Cref{eq:tech-proof}.  
This follows the three-step approach put forward by~\cite{AssadiKNS24} in proving their general round elimination argument, but with a different argument for each step, as we outline below. 

\paragraph{Step 1.} Turn an $r$-round protocol $\prot_r$ for AHM$_r$ with communication cost $s$ to an $r$-round protocol $\protnum{1}_r$ for solving AHM$_{r-1}$ with communication cost $s$ but \emph{information cost} (see~\Cref{sec:ic}) $\lesssim s/t_r$. 
	
	The intuition behind this step is that any protocol for AHM$_r$ needs to solve $\istar = t_r-o(t_r)$ \emph{independent} instances of AHM$_{r-1}$. As such, we expect that the information it reveals about a {random} such instance
	to be $1/t_r$ times the total information cost of the protocol which is upper bounded by its communication cost. This is basically a \emph{direct sum} argument. 
	
	The difference between our approach and~\cite{AssadiKNS24} is in the nature of this direct sum argument. Specifically since our input distributions are \emph{not} product, 
	we use \emph{internal} information cost to implement this step (as opposed to external information in~\cite{AssadiKNS24}). 
	A quick description of this step for a reader already
	familiar with internal information cost direct sum arguments (e.g.~\cite[Lemma 3.1]{Weinstein15}) is the following\footnote{This is not the exact way we sample the input distribution and we have been slightly flexible in the overview.}: $(i)$ Using \underline{public} randomness, Alice and Bob sample $j \in [\istar]$ and all off-diagonal induced instances, 
	$(ii)$ embed the given instance of AHM$_{r-1}$ at the $j$-th diagonal entry of the the special induced instances, and $(iii)$ for every remaining instances (diagonal or entirely out of Bob's submatrix), 
	sample one part of the input \underline{publicly} and the other \underline{privately} (this way, we can sample from non-product distributions correctly). The exact choice of this public-private sampling is a function of the input sharing of the players
	and is similar in nature to~\cite{AssadiKZ24}. % for facilitating the information-theoretic arguments. 
	
%%	We note that one benefit of this new way of implementing this step is that we can establish the lower bound already for \emph{two} players instead of~\cite{AssadiKNS24} that needs to work with $\Theta(r)$ players (to maintain the product
%%	nature of their input distribution\footnote{It is worth emphasizing that since insertions and deletions of edges cannot be given to the same player in our constructions (as it makes those edges irrelevant), changing the number
%%	of players for us does not change the non-product nature of our distributions.}). Besides being a stronger result on its own, this makes the next step a blackbox application of existing results. 
%%	
\paragraph{Step 2.} Turn the $r$-round protocol $\protnum{1}_r$ for AHM$_{r-1}$ with information cost $s/t_r$ to an $r$-round protocol $\protnum{2}_r$ for solving AHM$_{r-1}$ with communication cost $\lesssim s/t_r \cdot \poly(r)$. 
	
	This step is based on existing \emph{message compression} arguments that allow for compressing the communication of a \emph{limited} round protocol, down to its information cost. 
	The difference in this step between our approach and~\cite{AssadiKNS24} is that we use the compression technique of~\cite{JainPY16} for internal information cost whereas~\cite{AssadiKNS24} uses~\cite{HarshaJMR07} 
	for external information. %%Given we are already working with two players here, this step is completely a blackbox application of~\cite{JainPY16}, and does involve designing a multi-party compression scheme as in~\cite{AssadiKNS24}.  
	
\paragraph{Step 3.} Turn the $r$-round protocol $\protnum{2}_r$ for AHM$_{r-1}$ with communication cost $s/t_r \cdot \poly(r)$ to an $(r-1)$-round protocol $\prot_{r-1}$ for solving AHM$_{r-1}$ with the same communication, 
while incuring an additional $o(1)$ additive factor on error probability \emph{as long as} $s \ll t_r^2$. 
	
	This step is the \emph{real} round elimination step, wherein we finally obtain an $(r-1)$-round protocol. This part of the argument is quite problem-specific and is entirely disjoint from~\cite{AssadiKNS24}. Specifically, 	
	the additive factor in the error probability comes from a similar argument as our one-round lower bound outlined above. The choice of a 
	random special instance in $\mathcal{I}_{special}$ among the $t_r^2$ instances in $\mathcal{I}$ used in creating an instance of AHM$_r$, is random 
	among $\simeq t_r^2$ instances even conditioned on the input of Bob (namely, the off-diagonal entries in $\mathcal{I}_{induced} \setminus \mathcal{I}_{special}$). 
	As argued earlier, Alice's \emph{first} message should only be able to reveal $o(s/t_r^2)$ information about a random diagonal AHM$_{r-1}$ instance $\prot_{r-1}$ is solving even conditioned on Bob's input in the entire AHM$_r$ instance (and not only AHM$_{r-1}$). Thus, even if we ignore the first message of $\protnum{2}_r$ and instead run it from its second run onwards, as long as $s \ll t_r^2$, we should expect 
	a very similar outcome for the underlying instance AHM$_{r-1}$ as if we run the whole protocol. 
	
	Given the technical nature of this step, we postpone more details of this step to the actual proof, and only mention that this step, is the heart
	of our information theoretic arguments.

\paragraph{Concluding the proof.} 

After these steps, we obtain that as long as $s \ll t_r^2$, the resulting $(r-1)$-round protocol $\prot_{r-1}$ for AHM$_{r-1}$ succeeds with a non-trivial probability. But, given that communication cost of $\prot_{r-1}$ is $s/t_r \cdot \poly(r)$, 
we obtain that 
\[
	\frac{s}{t_r} \cdot \poly(r) \gtrsim s_{r-1}(n_{r-1}), 
\]
where $s_{r-1}(n_{r-1})$ is the communication cost of protocols for AHM$_{r-1}$ on $n_{r-1}$-size inputs we have inductively established. Stated differently, this implies that either 
\[
	s \gtrsim t_r^2 \qquad \text{or} \qquad s \gtrsim \frac{1}{\poly(r)} \cdot t_r \cdot s_{r-1}(n_{r-1}),
\]
which establishes~\Cref{eq:tech-proof} (the $\poly(r)$-term is negligible for our purpose as $r=O(\log\log{n})$ always). 

\bigskip

In conclusion, this way we can establish a lower bound of $n^{1+1/2^{\Omega(r)}}$ communication for $r$-round protocols for AHM$_r$. By our reduction to the matching problem, this in turn
implies that any $p$-pass streaming algorithm for $O(1)$-approximation of matchings requires $n^{1+1/2^{\Omega(p)}}$ space; in particular, semi-streaming algorithms require $\Omega(\log\log{n})$ passes as desired.

\clearpage

% !TeX root = main.tex 
%!TEX root = main.tex

\section{Preliminaries}

\paragraph{Notation.} Throughout, for any integer $n \geq 1$, we define $[n] := \set{1,\ldots,n}$. 
For a vector $x \in \IR^n$, we use $\norm{x}_0$ to denote the number of non-zero entries in $x$. 
%%For a vector $x \in \IR^n$ and $p > 0$, we use $\norm{x}_p := \paren{\sum_{i=1}^{n} \card{x_i}^p}^{1/p}$ to denote 
%%the $p$-th norm of $x$, and for $p=0$, use $\norm{x}_0$ to denote the number of non-zero entries in $x$. 
We use $S_n$ to denote the set of all permutations over $[n]$ for any $n \geq 1$. 
For any permutation $\sigma \in S_n$, we use $\sigma(i)$ to denote the element in $[n]$ that $\sigma$ maps $i $ to for each $i \in [n]$. 

For any tuple $(x_1,\ldots,x_n)$ and $i \in [n]$,  
we define $x_{<i} := (x_1,\ldots,x_{i-1})$. We define $x_{>i}$ and $x_{-i}$, analogously. For a set of tuples $\set{(x,y) \mid x \in X, y \in Y}$ for some 
sets $X$ and $Y$, and any $x \in X$, we define $(x,*) := \set{(x,y) \mid y \in Y}$. %; we define $(*,y)$ for $y \in Y$ analogously. 

For any matrix $Z$ of dimensions $m \times n$ for integers $m,n\geq 1$, we use $Z[i,j]$ to denote the value at row $i$ and column $j $ for $i \in [m], j \in [n]$. 

For a graph $G=(V,E)$ and any vertex $v \in V$, we use $N(v)$ to denote the neighbors of $v$ and $\deg{v} := \card{N(v)}$ to denote its degree.  For any $U \subseteq V$, we use $G[U]$ to denote the subgraph of $G$ induced on vertices in $U$. 
We further use $\mu(G)$ to denote the maximum matching size in $G$. 

\paragraph{Random variables and information theory notation.} When there is room for confusion, we use sans-serif letters for random
variables (e.g. $\rA$) and normal letters for their realizations (e.g. $A$). We use $\distribution{\rA}$ and $\supp{\rA}$ to denote the distribution and support of $\rA$, respectively. 

For random variables $\rA,\rB$, we use $\en{\rA}$ to denote the \emph{Shannon entropy}
and $\mi{\rA}{\rB}$ to denote the \emph{mutual information}. For two distributions $\mu$ and $\nu$ on the same support, $\tvd{\mu}{\nu}$ denotes their \emph{total variation distance} 
and $\kl{\mu}{\nu}$ is their \emph{KL-divergence}.~\Cref{app:info} contains the definitions of these notions and standard
information theory facts that we use in this paper. 

\paragraph{Fractional matchings.}  A \emph{fractional} matching in a graph $G=(V,E)$ is any assignment $x \in \IR^E$ to the edges $E$ of $G$ with the following properties: 
\begin{align*}
	&\text{for all $e \in E$:} \quad 0 \leq x_e \leq 1; \quad \text{and} \quad \text{for all $v \in V$:} \quad x_v := \sum_{e \ni v} x_e \leq 1. 
\end{align*}
We use $\card{x} := \sum_{e \in E} x_e$ to denote the \emph{size} of the fractional matching $x$. 

It is easy to see that incidence vector of a matching $M$ is also a fractional matching of size $\card{M}$. The following standard fact (see, e.g.~\cite{LovaszP09}) provides the other direction as well. 

\begin{fact}\label{fact:fractional-to-integral-matching}
	For any graph $G=(V,E)$, and any fractional matching $x \in \IR^E$, we have $\mu(G) \geq 2/3 \cdot \card{x}$. Moreover, if $G$ is bipartite, then $\mu(G) \geq \card{x}$. 
\end{fact}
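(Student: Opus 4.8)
The plan is to handle the bipartite case and the general case separately, in both cases reducing to the vertex structure of the fractional matching polytope $\cP(G) := \set{x \in \IR^E : x_e \geq 0 \text{ for all } e, \ x_v \leq 1 \text{ for all } v}$.

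For the bipartite claim, I would use that when $G$ is bipartite the constraint matrix defining $\cP(G)$ is (up to the trivial non-negativity rows) the vertex-edge incidence matrix of a bipartite graph, which is totally unimodular; hence every vertex of $\cP(G)$ is integral, i.e.\ the indicator vector $\mathbf{1}_M$ of a matching $M$ of $G$. Any fractional matching $x$ is then a convex combination $x = \sum_i \lambda_i \mathbf{1}_{M_i}$ with $\lambda_i \geq 0$ and $\sum_i \lambda_i = 1$, and since $\card{\cdot}$ is linear, $\card{x} = \sum_i \lambda_i \card{M_i} \leq \max_i \card{M_i} \leq \mu(G)$. (Alternatively one can invoke LP duality together with K\"onig's theorem: the maximum fractional matching equals the minimum fractional vertex cover, which in a bipartite graph equals the minimum integral vertex cover, which equals $\mu(G)$.)

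For the general claim I would invoke the half-integrality theorem for the fractional matching polytope: every vertex $x$ of $\cP(G)$ has $x_e \in \set{0,\tfrac12,1}$ for all $e$, and the edges with $x_e = \tfrac12$ form a vertex-disjoint union of odd cycles $C_1,\ldots,C_t$, all vertex-disjoint from the matching $M := \set{e : x_e = 1}$. (If a self-contained argument is preferred to citing this, the structure follows by taking a maximum fractional matching minimizing the number of strictly-fractional edges and perturbing $x_e \mapsto x_e \pm \varepsilon$ alternately along any path or even cycle formed by strictly-fractional edges, contradicting minimality; that perturbation step is the only nontrivial piece.) Given such a vertex $x$ with cycles $C_j$ of odd length $2\ell_j+1$ (so $\ell_j \geq 1$), taking a maximum matching of size $\ell_j$ inside each $C_j$ together with $M$ yields a matching of $G$ of size $\card{M} + \sum_j \ell_j$, whereas $\card{x} = \card{M} + \sum_j \tfrac{2\ell_j+1}{2}$. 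A direct calculation gives
\[
\Big(\card{M} + \sum_j \ell_j\Big) - \tfrac{2}{3}\Big(\card{M} + \sum_j \tfrac{2\ell_j+1}{2}\Big) = \tfrac13\Big(\card{M} + \sum_j \ell_j - t\Big) \geq \tfrac13\card{M} \geq 0,
\]
using $\ell_j \geq 1$ hence $\sum_j \ell_j \geq t$. Thus $\mu(G) \geq \tfrac23 \card{x}$ for every vertex $x$ of $\cP(G)$, and writing an arbitrary fractional matching as a convex combination of vertices and again using linearity of $\card{\cdot}$ yields $\card{x} \leq \max_i \card{x^{(i)}} \leq \tfrac32\mu(G)$.

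There is essentially no genuine obstacle here, as the statement is classical; the only mildly delicate point, should one want to avoid quoting the half-integrality theorem outright, is proving that the strictly-fractional part of a vertex of $\cP(G)$ decomposes into vertex-disjoint odd cycles via the perturbation argument above.
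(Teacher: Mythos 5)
Your proof is correct. Note that the paper does not prove this statement at all---it records it as a standard fact and cites Lov\'asz--Plummer---so there is no argument of the paper's to compare against; your write-up is a valid self-contained proof of the classical result. Both halves check out: total unimodularity (or LP duality plus K\H{o}nig) handles the bipartite case, and for general graphs the half-integrality of the vertices of the fractional matching polytope, with the $\tfrac12$-edges decomposing into vertex-disjoint odd cycles disjoint from the $1$-edges, gives exactly the computation you perform; the bound $\sum_j \ell_j \geq t$ from $\ell_j \geq 1$ is the right place where the factor $2/3$ enters, and it is tight for a triangle. The one point you correctly flag as the only real work---establishing the half-integral odd-cycle structure of the vertices---is indeed the nontrivial ingredient, and the perturbation argument you sketch (alternately perturbing along paths or even cycles in the strictly-fractional support to contradict extremality) is the standard way to prove it.
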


\subsection{Concentration Inequalities}\label{sec:concentration}

We use the following standard form of Chernoff bound and its extension to negatively correlated variables in~\cite{PanconesiS97}. 

\begin{proposition}[Chernoff Bound; cf.~\cite{DubhashiP09}]\label{prop:chernoff}
	Let $X_1,\ldots,X_n$ be $n$ independent random variables in $[0,1]$ and $X := \sum_{i=1}^{n} X_i$. For any $\delta > 0$ and $\mu_{min} \leq \expect{X} \leq \mu_{max}$, 
	\begin{align*}
		&\Pr\paren{X \geq (1+\delta) \cdot \mu_{max}} \leq \exp\paren{-\frac{\delta^2 \cdot \mu_{max}}{2+\delta}} \quad , \quad \Pr\paren{X \leq (1-\delta) \cdot \mu_{min}} \leq \exp\paren{-\frac{\delta^2 \cdot \mu_{min}}{2}}.
	\end{align*} 	
	Moreover, the upper tail bound continues to hold as long $X_1,\ldots,X_n$ are negatively correlated, i.e., for every $S \subseteq [n]$, 
	\[
		\expect{\prod_{i \in S} X_i} \leq \prod_{i \in S} \expect{X_i}. 
	\]
\end{proposition}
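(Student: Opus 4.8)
The plan is to run the standard exponential-moment (Cramér--Chernoff) argument, handling the two tails symmetrically and then isolating the single place where independence enters so that the negative-correlation extension costs almost nothing.

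For the \emph{upper tail}, fix $t > 0$ and apply Markov's inequality to the nonnegative random variable $e^{tX}$, giving $\Pr\paren{X \ge (1+\delta)\mu_{max}} \le e^{-t(1+\delta)\mu_{max}} \cdot \expect{e^{tX}}$. By independence $\expect{e^{tX}} = \prod_{i=1}^{n} \expect{e^{tX_i}}$, and since each $X_i \in [0,1]$, convexity of $x \mapsto e^{tx}$ on $[0,1]$ gives the pointwise bound $e^{tX_i} \le 1 + (e^{t}-1)X_i$; taking expectations and using $1+y \le e^{y}$ yields $\expect{e^{tX_i}} \le \exp\paren{(e^{t}-1)\expect{X_i}}$. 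Multiplying over $i$ and using $\expect{X} \le \mu_{max}$ together with $e^{t}-1 > 0$ gives $\expect{e^{tX}} \le \exp\paren{(e^{t}-1)\mu_{max}}$. Substituting and choosing $t = \ln(1+\delta)$ produces the classical bound $\exp\paren{\mu_{max}\paren{\delta - (1+\delta)\ln(1+\delta)}}$, after which the elementary inequality $(1+\delta)\ln(1+\delta) - \delta \ge \delta^{2}/(2+\delta)$, valid for all $\delta > 0$ and proved by one differentiation, gives the stated form. For the \emph{lower tail} the argument is the mirror image: apply Markov to $e^{-tX}$ with $t > 0$, use $e^{-tX_i} \le 1 + (e^{-t}-1)X_i$, and note that now $e^{-t}-1 < 0$ while $\expect{X} \ge \mu_{min}$, so replacing $\expect{X}$ by $\mu_{min}$ again goes in the favorable direction; optimizing at $t = -\ln(1-\delta)$ when $\delta \in (0,1)$ (the claim is trivial for $\delta \ge 1$ since $X \ge 0$) and finishing with $(1-\delta)\ln(1-\delta) \ge \delta^{2}/2 - \delta$ yields the second inequality.

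For the \emph{negative-correlation} extension I would revisit only the one step that used independence, namely the bound $\expect{e^{tX}} \le \exp\paren{(e^{t}-1)\mu_{max}}$. Writing $c := e^{t}-1 > 0$, the pointwise bound $e^{tX_i} \le 1 + cX_i$ and nonnegativity of all factors give $\expect{e^{tX}} \le \expect{\prod_{i=1}^{n}(1+cX_i)} = \sum_{S \subseteq [n]} c^{|S|}\,\expect{\prod_{i \in S} X_i}$; the negative-correlation hypothesis bounds each summand by $c^{|S|}\prod_{i \in S}\expect{X_i}$, and re-summing gives $\prod_{i=1}^{n}(1+c\,\expect{X_i}) \le \exp\paren{c\,\expect{X}} \le \exp\paren{c\,\mu_{max}}$, exactly as in the independent case. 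The remainder of the upper-tail derivation then applies verbatim.

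The only mildly delicate points are the two scalar inequalities on $(1\pm\delta)\ln(1\pm\delta)$, which are routine one-variable calculus, and keeping track of the direction of each inequality when $\expect{X}$ is replaced by $\mu_{max}$ (upper tail) or $\mu_{min}$ (lower tail) --- this is precisely why the two-sided hypothesis $\mu_{min} \le \expect{X} \le \mu_{max}$ is assumed. I do not expect a genuine obstacle: the result is classical, and the negative-correlation case differs from the independent case only in the polynomial-expansion step above.
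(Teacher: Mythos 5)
Your proof is correct. The paper does not prove this proposition at all --- it is stated as a known result with citations to Dubhashi--Panconesi for the standard bounds and to Panconesi--Srinivasan for the negatively correlated extension --- and your argument is exactly the standard one from those sources: the Cram\'er--Chernoff MGF derivation with the convexity bound $e^{tx}\le 1+(e^t-1)x$ on $[0,1]$, followed by the multilinear expansion of $\prod_i(1+cX_i)$ so that negative correlation can be applied termwise, which is precisely the Panconesi--Srinivasan trick. The only cosmetic caveat is the lower tail at $\delta=1$ exactly (where $(1-\delta)\mu_{min}=0$ rather than negative), which is handled by letting $t\to\infty$ in the same MGF bound; this does not affect correctness.
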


We also need the extension of Chernoff-Hoeffding bounds to limited independence random variables. We define limited independence hash functions first. 

\begin{definition}[Limited Independence Hash Functions]\label{def:k-wise-ind-hash}
For integers $r, t, k \geq 1$, a family $\cH$ of hash functions from $[r]$ to $[t]$ is called a $k$-wise independent hash function iff for any two $k$-subsets $a_1, a_2, \ldots, a_k \subseteq [r]$ and $b_1, b_2, \ldots, b_k \subseteq [t]$, 
\[
	\underset{h \sim \cH}{\Pr}(h(a_1) = b_1, h(a_2) = b_2, \ldots, h(a_k)  = b_k) = \frac1{t^k}.
\]
\end{definition}

$k$-wise independent hash functions behave like random functions, as long as sets of size at most $k$ are considered. 
We know that we can store and access these functions in limited space. 

\begin{proposition}[\!\!\cite{MotwaniR95}]\label{prop:storing-sampling-hash}
	For any integers $r, t, k \geq 1$, there is a $k$-wise independent hash function family $\cH = \set{h : [r] \rightarrow [t]}$ such that sampling and storing a function $h \in \cH$ takes $O(k \cdot (\log (r\cdot t)))$ space. 
\end{proposition}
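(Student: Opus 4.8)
The plan is to use the classical polynomial construction over a finite field. First I would fix a field of convenient size: by Bertrand's postulate there is a prime $q$ with $\max(r,t) \le q \le 2\max(r,t)$, so $q = O(rt)$ and $\log q = O(\log(rt))$; fix any injection of $[r]$ into $\mathbb{F}_q$ and regard elements of $[r]$ as field elements. Let the family $\cH$ consist of the maps $h_a(x) = \sum_{i=0}^{k-1} a_i x^i$, evaluated over $\mathbb{F}_q$, indexed by coefficient vectors $a = (a_0,\dots,a_{k-1}) \in \mathbb{F}_q^{\,k}$ sampled uniformly at random. Sampling such an $h_a$ means sampling the $k$ field elements $a_0,\dots,a_{k-1}$, each of bit length $O(\log q) = O(\log(rt))$, and storing $h_a$ means storing these $k$ elements; evaluation at a point is $O(k)$ field operations. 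This already yields the claimed $O(k\log(rt))$ space bound, so the remaining work is purely to verify the independence property with the exact range $[t]$.

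Next I would prove $k$-wise independence of the maps $h_a \colon \mathbb{F}_q \to \mathbb{F}_q$. Fix $k$ distinct points $x_1,\dots,x_k \in \mathbb{F}_q$ and arbitrary targets $y_1,\dots,y_k \in \mathbb{F}_q$. The set of degree-$<k$ polynomials $h$ with $h(x_j) = y_j$ for all $j$ corresponds to the solution set of a linear system in the unknowns $a_0,\dots,a_{k-1}$ whose coefficient matrix is the $k \times k$ Vandermonde matrix $\bigl(x_j^{\,i}\bigr)_{j \in [k],\, 0 \le i < k}$, with determinant $\prod_{1 \le j < j' \le k}(x_{j'} - x_j) \neq 0$ since the $x_j$ are pairwise distinct; equivalently, this is Lagrange interpolation. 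Hence there is exactly one such polynomial, and since $a$ is uniform over $\mathbb{F}_q^{\,k}$ we get $\Pr_{h \sim \cH}\bigl(h(x_1) = y_1,\dots,h(x_k) = y_k\bigr) = 1/q^k$. This is precisely $k$-wise independence for a family with range of size $q$.

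It remains to force the range to be exactly $[t]$ with the exact probability $1/t^k$ required by \Cref{def:k-wise-ind-hash}. The clean route is to choose $q$ to be a prime power that is a power of the prime dividing $t$ (assume first $t$ is a prime power, the standard setting; otherwise round $t$ up to the next power of $2$, losing a constant factor that every application in the paper can absorb). Taking $q = t^c$ for the least $c \ge 1$ with $t^c \ge \max(r,t)$ — still $q < tr$, so the space bound is unchanged — makes $\mathbb{F}_t$ a subfield of $\mathbb{F}_q$, and the field trace $\mathrm{Tr}\colon \mathbb{F}_q \to \mathbb{F}_t$ is $\mathbb{F}_t$-linear and surjective with all fibers of equal size $q/t$. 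Composing coordinatewise, $\bigl(\mathrm{Tr}(h(x_1)),\dots,\mathrm{Tr}(h(x_k))\bigr)$ is uniform on $\mathbb{F}_t^{\,k}$ because $\bigl(h(x_1),\dots,h(x_k)\bigr)$ is uniform on $\mathbb{F}_q^{\,k}$; thus $\mathrm{Tr}\circ h_a$ is a $k$-wise independent hash function from $[r]$ to $[t]$, and it is still storable in $O(k\log q) = O(k\log(rt))$ space. The main (and genuinely mild) obstacle is exactly this last bookkeeping — reconciling an arbitrary domain size $r$ with an arbitrary range size $t$ inside a single prime-power field while keeping the output distribution exactly uniform on $[t]$; the heart of the argument, the Vandermonde/Lagrange step, is routine, and one may alternatively simply invoke \cite{MotwaniR95} for the bottom line.
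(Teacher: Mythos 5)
Your construction is correct and is exactly the standard polynomial/Vandermonde argument underlying the cited reference \cite{MotwaniR95}; the paper itself offers no proof of this proposition and simply invokes that citation, so there is nothing to diverge from. The one caveat — which you already flag — is that the exact $1/t^k$ guarantee of \Cref{def:k-wise-ind-hash} comes out of your trace-map step only when $t$ is a prime power (so that $\mathbb{F}_t$ embeds in $\mathbb{F}_q$ and all trace fibers have size $q/t$); for general $t$ you must round the range to a nearby prime power, which is harmless for every use of the proposition in this paper since the range there (e.g., the discretized $\set{1,\ldots,\poly(n)}$ in the edge-sampling step) can be chosen freely.
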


We can now state the extension of Chernoff bounds for limited independence hash functions.

\begin{proposition}[\!\!\cite{SchmidtSS95}]\label{prop:chernoff-hash}
	Let $X_1, X_2, \ldots, X_n$ be $k$-wise independent random variables in $[0,1]$ and $X := \sum_{i=1}^{n} X_i$. Then, for any $\delta \geq 1$ and $\expect{X} \leq \mu_{max}$, we have,
	\begin{align*}
				&\Pr(X \geq (1+\delta) \cdot \mu_{max}) \leq \exp \paren{-\min  \left\{\frac{k}2, \frac{\delta}3 \cdot\mu_{max}\right\}  }
			\end{align*}.
		We also have the following lower tail bound for any $\delta'\leq 1$ and $\mu_{min} \leq \expect{X}$.
	\begin{align*}
				&\Pr(X \leq (1-\delta')\cdot \mu_{min}) \leq \exp\paren{-\min  \left\{\frac{k}2, \frac{(\delta')^2}3 \cdot\mu_{min}\right\}  }.
	\end{align*}
\end{proposition}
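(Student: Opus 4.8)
The plan is to reprove this restatement of the Chernoff--Hoeffding bounds under limited independence from~\cite{SchmidtSS95} via the moment method; for a self-contained write-up one could of course just cite that work. The only obstacle compared with the fully independent setting (\Cref{prop:chernoff}) is that the exponential-moment ("Bernstein") argument couples all $n$ variables simultaneously and hence genuinely requires full independence. The idea is to replace the exponential moment by a sufficiently high \emph{even integer} moment of order at most $k$: expanding $(X - \expect{X})^{\ell}$ produces a sum of monomials, each involving at most $\ell \le k$ distinct indices, and by $k$-wise independence the expectation of each such monomial factorizes over its distinct indices exactly as if the $X_i$ were independent. Consequently, every integer moment of $X$ of order at most $k$ agrees with the corresponding moment of a sum of genuinely independent $[0,1]$-valued variables with the same means.

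For the upper tail, first fix an even integer $\ell \le k$ to be optimized, and use $\expect{X} \le \mu_{max}$ to observe that
\begin{align*}
	\set{X \ge (1+\delta)\mu_{max}} \subseteq \set{X - \expect{X} \ge \delta\mu_{max}} \subseteq \set{(X - \expect{X})^{\ell} \ge (\delta\mu_{max})^{\ell}},
\end{align*}
where the last inclusion uses that $\ell$ is even. Markov's inequality then bounds the probability by $\expect{(X - \expect{X})^{\ell}}/(\delta\mu_{max})^{\ell}$. By the observation above, $\expect{(X - \expect{X})^{\ell}}$ equals the $\ell$-th central moment of a sum of independent $[0,1]$ variables with mean $\expect{X} \le \mu_{max}$ and hence variance at most $\expect{X}$ (since $X_i^2 \le X_i$); for such sums one has a bound of the form $\left(c\cdot\ell\cdot\max\set{\ell,\expect{X}}\right)^{\ell/2}$ for an absolute constant $c$, the dominant contribution coming from pairing up the $\ell$ factors, which contributes $\approx (\var{X})^{\ell/2}$. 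Substituting and taking $\ell$ to be the largest even integer at most $\min\set{k,\delta\mu_{max}}$ (here is where $\delta \ge 1$ is used) drives the ratio down to $\exp(-\ell/2)$, which after absorbing the constant $c$ into the choice of $\ell$ gives the claimed $\exp\paren{-\min\set{k/2,\ \delta\mu_{max}/3}}$. The lower tail is handled identically: since $\expect{X} \ge \mu_{min}$, the event $\set{X \le (1-\delta')\mu_{min}}$ is contained in $\set{\abs{X - \expect{X}} \ge \delta'\mu_{min}}$, so the same even-moment computation applies; now $\delta' \le 1$, so the deviation $\delta'\mu_{min}$ is small relative to the standard deviation, the pairing term dominates, and optimizing $\ell$ over even integers at most $\min\set{k,(\delta')^2\mu_{min}}$ yields $\exp\paren{-\min\set{k/2,\ (\delta')^2\mu_{min}/3}}$.

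The step that is genuinely delicate — and the reason one would normally just cite~\cite{SchmidtSS95} — is the explicit central-moment estimate for sums of independent bounded variables with exactly these constants (the $1/3$ inside the exponent and the sharp $k/2$ cutoff): proving it cleanly requires a careful enumeration of the partition types of the $\ell$ factors, i.e.\ a Rosenthal / Marcinkiewicz--Zygmund-type inequality with explicit constants. Everything else — reducing to a moment of order at most $k$, applying Markov, and optimizing over the even parameter $\ell$ — is routine bookkeeping.
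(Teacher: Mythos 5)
The paper does not prove this proposition at all: it is imported verbatim as a black-box tool from~\cite{SchmidtSS95}, so there is no in-paper argument to compare against. Your moment-method sketch (replace the exponential moment by an even moment of order $\ell\le k$, use that such moments factorize identically under $k$-wise independence, apply Markov, and optimize $\ell$ against $\min\{k,\delta\mu_{max}\}$ resp.\ $\min\{k,(\delta')^2\mu_{min}\}$) is precisely the standard route taken in that reference, and the reduction steps you describe are sound; the only part you defer --- the explicit central-moment estimate with constants sharp enough to yield the $1/3$ and the $k/2$ cutoff --- is exactly the technical content of~\cite{SchmidtSS95}, so citing it there, as you suggest, is the appropriate resolution.
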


\subsection{Sketching and Streaming Toolkit}\label{sec:sketch-toolkit}

\subsubsection*{Sparse-Recovery Algorithms} 

%We use some standard tools from the sketching literature for our algorithms. 

We use standard sparse-recovery algorithms combined with a simple randomized test to ensure it can also detect non-sparse inputs; see, e.g.~\cite[Propositions 3.6 and 3.7]{AssadiKM23} that construct this using Vandermonde matrices and an equality test 
(see also~\cite[Proposition A.16]{AshvinkumarADGW23} that explicitly shows how to use a PRGs for degree-2 polynomials in~\cite{Lovett09,BogdanovV10} to ensure that the latter algorithm does not need to store many random bits either). 

\begin{proposition}[Sparse Recovery; cf.~\cite{DasS13,AssadiKM23}]\label{prop:sparse-recovery}
	There exists a single-pass deterministic algorithm that given integers $q,n,m \geq 1$ and a dynamic stream defining a vector $\phi \in [m]^n$, 
	uses $O(q \cdot \log{(m \cdot n)})$ bits of space and recovers $\phi$ as long as vector $\phi$ is $q$-sparse (meaning $\norm{\phi}_0 \leq q$). 
	
	Moreover, there is a single-pass randomized algorithm that given $\delta > 0$ can test if the vector $\phi \in \IN^n$ is $q$-sparse with probability at least $1-\delta$ 
	using $O(q \cdot \log{(m \cdot n)} + \log{(n/\delta)})$ bits of space. 
\end{proposition}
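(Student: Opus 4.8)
The plan is to present both algorithms as \emph{linear sketches} over a prime field, so that each stream token (an increment or decrement of one coordinate of $\phi$) triggers an update costing $O(1)$ field operations, and then to do all the actual recovery as post-processing that needs no further pass.

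For the deterministic part I would fix a prime $p$ with $m < p \le \poly(m,n)$, identify the $n$ coordinates with distinct nonzero elements $\omega_1,\dots,\omega_n \in \mathbb{F}_p$, and maintain the $2q$ power-sum syndromes $s_j := \sum_{i=1}^n \phi_i\,\omega_i^{\,j} \bmod p$ for $j = 0,1,\dots,2q-1$. Each $s_j$ is linear in $\phi$, so it updates in $O(1)$ arithmetic operations per token, and storing all of them costs $O(q\log(mn))$ bits. At the end, if $\phi$ has support $S$ with $\card{S}\le q$, the $s_j$'s are exactly the power sums of the generalized Prony system $\sum_{i\in S}\phi_i\,\omega_i^{\,j}$; I would recover $S$ by computing the error-locator polynomial $\Lambda(X)=\prod_{i\in S}(X-\omega_i)$ from the $s_j$'s (solving the $q\times q$ Hankel system in the $s_j$'s, equivalently running Berlekamp--Massey on $s_0,\dots,s_{2q-1}$), reading off $S$ as the $\omega_i$'s among $\omega_1,\dots,\omega_n$ that are roots of $\Lambda$, and then solving the resulting $\card{S}\times\card{S}$ Vandermonde system for the values $\{\phi_i\}_{i\in S}$. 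Correctness follows from uniqueness: two distinct $\le q$-sparse vectors differ by a nonzero $\le 2q$-sparse vector, which cannot lie in the kernel of the $2q\times n$ Vandermonde matrix with distinct nodes (any $2q$ of its columns are independent); and $p>m$ ensures the recovered residues equal the true integer values.

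For the randomized tester I would run the deterministic recovery to obtain a $q$-sparse candidate $\widehat\phi$, and in parallel maintain a single polynomial fingerprint $h_a(\phi) := \sum_{i=1}^n \phi_i\, a^{\,i} \bmod p'$ for a uniformly random $a\in\mathbb{F}_{p'}$, with $p'$ a prime of size $\Theta(n/\delta + m)$; at the end accept iff $h_a(\phi)=h_a(\widehat\phi)$, computing $h_a(\widehat\phi)$ directly from the $\le q$ nonzero entries of $\widehat\phi$. If $\phi$ is $q$-sparse then $\widehat\phi=\phi$ and the test always accepts; if $\phi$ is not $q$-sparse then $\widehat\phi\neq\phi$ (since $\widehat\phi$ is $q$-sparse), and $h_a(\phi)=h_a(\widehat\phi)$ forces $a$ to be a root of the nonzero polynomial $\sum_i(\phi_i-\widehat\phi_i)X^i$ of degree $<n$, an event of probability at most $n/p' \le \delta$. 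The only randomness used is the $O(\log(n p'))$ bits for $a$; I would cut this down by observing that the accept/reject indicator is a fixed low-degree polynomial in the bits of the seed, hence fooled by an explicit pseudorandom generator for degree-$2$ polynomials over $\mathbb{F}_2$ (as used in~\cite{Lovett09,BogdanovV10,AshvinkumarADGW23}) with seed length $O(\log(n/\delta))$, giving total space $O(q\log(mn)+\log(n/\delta))$.

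The main obstacle I expect is not the sketch itself but the bookkeeping around the field arithmetic and the derandomization: first, choosing $p$ large enough that the mod-$p$ Prony recovery faithfully returns the \emph{integer} values (and, when the dynamic stream can drive coordinates negative, interpreting residues in a centered range) while keeping $\log p = O(\log(mn))$; and second, making the derandomization airtight --- one has to check that with the powers $a^i$ generated on the fly the final comparison is genuinely small-space computable and that the whole test predicate really does have the bounded seed-degree required by the chosen PRG. A minor point that also needs a line of care is that Berlekamp--Massey degrades gracefully when the true sparsity is below $q$, simply returning a lower-degree locator polynomial.
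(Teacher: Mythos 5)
The paper does not actually prove this proposition; it cites \cite{AssadiKM23} for a construction ``using Vandermonde matrices and an equality test'' together with the degree-$2$ PRG of \cite{Lovett09,BogdanovV10,AshvinkumarADGW23}, and your power-sum/Prony syndrome sketch with Berlekamp--Massey decoding is precisely that Vandermonde construction while your polynomial fingerprint against the recovered candidate is precisely that equality test, so your proposal is correct and takes essentially the same route. The caveats you flag (field size large enough to read off integer values, and rejecting when the decoder fails to produce a valid $q$-sparse candidate) are the right ones and are routine.
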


We note that the use of sparse-recovery algorithms in graph sketching already dates back to the seminal work of~\cite{AhnGM12a}. Moreover, many dynamic streaming algorithms 
use a particular application of sparse-recovery in form of \emph{$\ell_0$-samplers}~\cite{JowhariST11} that allows for sampling a single element of a given vector $\phi$ in a dynamic stream
in $\poly\!\log{(n)}$ space (instead of recovering the entire vector only if it is sparse). However, for our applications, working with sparse-recovery algorithms directly is more convenient and thus 
we opted to skip using $\ell_0$-samplers altogether. 

%%The other primitive we use is a family of algorithms for sampling elements of a not-necessarily-sparse vector efficiently (with different rules for sampling). 
%%
%%\begin{proposition}[$\ell_0$- and $\ell_1$-Samplers~\cite{JowhariST11}]\label{prop:lp-sampler}
%%	There are single-pass randomized algorithms that given integers $n,m \geq 1$, $p \in \set{0,1}$, $\eps,\delta \in (0,1)$,  and a dynamic stream defining a vector $\phi \in [m]^n$, 
%%	with probability at most $\delta$ outputs FAIL and otherwise outputs an index $i \geq 1$ such that for $j \in [n]$, 
%%	\[ 
%%		\Pr\paren{i=j} \in \Bracket{(1-\eps) \cdot \frac{\card{\phi_j}^p}{\norm{\phi}_p^p} ~,~ (1+\eps) \cdot \frac{\card{\phi_j}^p}{\norm{\phi}_p^p}},
%%	\]
%%	The space complexity of the algorithm is: 
%%	\begin{itemize}
%%		\item for $p=0$: $O(\log^2\!{(m\cdot n)} \cdot \log{(1/\delta)})$ bits (this means one can set $\eps = 0$ in this case); 
%%%		\item for $p \in (0,1)$: $O(1/\eps \cdot \log^2\!{(m\cdot n)} \cdot \log{(1/\delta)})$ bits; 
%%		\item for $p=1$: $O(\log{(1/\eps)}/\eps \cdot \log^2\!{(m\cdot n)} \cdot \log{(1/\delta)})$ bits; 
%%	%	\item for $p \in (1,2)$: $O(1/\eps^p \cdot \log^2\!{(m\cdot n)} \cdot \log{(1/\delta)})$ bits. 
%%	\end{itemize}
%%\end{proposition}
%%
%%The dependence on parameter $\eps$ above can be removed for  the $p =1$ case also, owing to a work of~\cite{JayaramW21} by adding an almost negligible additive factor; however, as we only need to set $\eps = 0.1$ for our purpose, we avoid this step, which allows us to work with the simpler algorithms of~\cite{JowhariST11}.  

\subsubsection*{Prior Sketching and Streaming Tools for Matchings} 

We use a vertex sampling approach due to~\cite{AssadiKLY16}  that allows for reducing the number of vertices in a graph in an \emph{oblivious} manner, while preserving its largest matching approximately (see also~\cite{ChitnisCEHMMV16} for a related but slightly different result). This is a key subroutine used for finding matchings in dynamic streams in a single pass also. We use the following formulation from~\cite[Lemma 3.8]{AssadiKL16} that presents this result explicitly. 

\begin{proposition}[\!\!{\cite[Lemma 3.8]{AssadiKL16}}]\label{prop:vertex-sampling}
	Let $G=(V,E)$ be any graph with maximum matching size $\mu(G) = \omega(\log{n})$. For any $\eps \in (0,1)$, suppose we partition the vertices of $G$ randomly into $t \geq 8\mu(G)/\eps$ groups $U_1,\ldots,U_t$ 
	by sending each vertex to one group chosen independently and uniformly at random. Let $E_U \subseteq E$ be a subset of edges such that for any pair of groups $U_i,U_j$ for $i \neq j \in [t]$, we pick one arbitrary edge
	$(x,y)$ with $x \in U_i$ and $y \in U_j$ (if at least one such edge exists). Then, with high probability, $\mu(E_U) \geq (1-\eps) \cdot \mu(G)$. 
\end{proposition}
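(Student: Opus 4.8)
The plan is to run the standard ``surviving matching'' argument. Fix a maximum matching $M$ of $G$ of size $\mu := \mu(G)$, and let $g(v) \in [t]$ denote the (random) group index assigned to $v$. I will isolate a subset of the edges of $M$ whose members are \emph{guaranteed} to induce a matching of the same size inside $E_U$, and then show that all but an $\eps$-fraction of $M$ lies in this subset with high probability over the random partition; by \Cref{fact:fractional-to-integral-matching} we do not even need the fractional relaxation here, an integral matching suffices.

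\textbf{Step 1 (combinatorial core).} Call an edge $e=(a,b) \in M$ \emph{happy} if (i) $g(a) \neq g(b)$, and (ii) neither of the groups $U_{g(a)}$ and $U_{g(b)}$ contains any endpoint of $M$ other than $a$, resp.\ $b$. I claim the happy edges witness $\mu(E_U) \ge \#\{\text{happy edges}\}$. For each happy $e=(a,b)$, the ordered pair of groups $(U_{g(a)},U_{g(b)})$ has at least one edge between them (namely $e$), so by definition $E_U$ contains some representative edge $e'_e$ with one endpoint in $U_{g(a)}$ and one in $U_{g(b)}$. For two distinct happy edges $e=(a,b)$ and $f=(c,d)$ one checks $\{g(a),g(b)\}\cap\{g(c),g(d)\}=\emptyset$: if say $g(a)=g(c)$ then, since $M$ is a matching and $e\neq f$ force $a\neq c$, the group $U_{g(a)}$ would contain the two distinct $M$-endpoints $a$ and $c$, contradicting happiness of $e$; the other three cases are identical. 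Hence the representative edges of distinct happy edges live in pairwise disjoint pairs of groups and are therefore vertex-disjoint, so $\mu(E_U) \ge \#\{\text{happy edges}\} = \mu - \#\{\text{unhappy edges}\}$.

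\textbf{Step 2 (expectation) and Step 3 (concentration).} Fix $e=(a,b)\in M$. Since the $2\mu$ group labels of $M$-endpoints are i.i.d.\ uniform on $[t]$, a union bound gives $\Pr[e\text{ unhappy}] \le \Pr[g(a)=g(b)] + \Pr[\text{some other $M$-endpoint in }U_{g(a)}] + \Pr[\text{some other $M$-endpoint in }U_{g(b)}] \le \tfrac1t + 2\cdot\tfrac{2\mu-2}{t} \le \tfrac{4\mu}{t} \le \tfrac{\eps}{2}$, using $t \ge 8\mu/\eps$; summing over $M$, $\Exp[\#\{\text{unhappy}\}] \le \eps\mu/2$. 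For concentration, note $f := \#\{\text{unhappy edges}\}$ is a function of the $2\mu$ independent labels $(g(v))_{v\in V(M)}$, and changing one label $g(v)$ flips the status of at most three edges of $M$: the edge of $M$ through $v$, at most one further edge meeting $v$'s old group (any group holding $\ge 2$ $M$-endpoints besides $v$ already makes all its edges unhappy regardless of $v$), and at most one for $v$'s new group. So $f$ has bounded differences $\le 3$, and a standard bounded-differences (McDiarmid/Azuma) inequality gives $\Pr[f \ge \Exp[f] + \eps\mu/2] \le \exp(-2(\eps\mu/2)^2/(9\cdot 2\mu)) = \exp(-\eps^2\mu/36)$. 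Since $\mu = \omega(\log n)$ this is $n^{-\omega(1)}$, so with high probability $\#\{\text{unhappy}\} < \eps\mu$, and combining with Step 1 yields $\mu(E_U) \ge (1-\eps)\mu$.

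\textbf{Main obstacle.} The only genuinely delicate point is the design of the ``happy'' predicate in Step 1: it must be \emph{exactly} strong enough that the arbitrarily-chosen representatives $e'_e$ of distinct happy edges cannot collide at a vertex, which is precisely why condition (ii) demands that the \emph{entire} groups of both endpoints be free of other $M$-endpoints rather than merely $g(a)\neq g(b)$. Everything else is routine: Step 2 is a one-line union bound tuned to the hypothesis $t\ge 8\mu/\eps$, and Step 3 is a textbook application of bounded differences, with the sole role of the assumption $\mu(G)=\omega(\log n)$ being to drive the McDiarmid tail below $n^{-c}$ for every constant $c$ (for the bipartite strengthening of \Cref{fact:fractional-to-integral-matching} no change is needed, as the matching we produced in $E_U$ is already integral).
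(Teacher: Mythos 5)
Your proof is correct. Note first that the paper does not prove this proposition at all: it is imported verbatim from \cite[Lemma 3.8]{AssadiKL16}, so there is no in-paper argument to compare against; what follows is an assessment of your argument on its own terms. The structure is the standard one and each step checks out. In Step 1, the key design choice --- requiring that \emph{both} groups of a happy edge contain no $M$-endpoint besides the edge's own endpoint --- is exactly what is needed so that the group-pairs of distinct happy edges are disjoint, which in turn makes the \emph{arbitrarily} chosen representatives in $E_U$ vertex-disjoint; this correctly handles the adversarial selection of $E_U$. The union bound in Step 2 is tight against the hypothesis $t \geq 8\mu/\eps$ (giving $(4\mu-3)/t \leq \eps/2$). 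The bounded-differences argument in Step 3 is the only delicate point, and your case analysis is right: relabeling $v$ can only flip the edge through $v$, plus at most one edge per affected group, since a group containing two or more $M$-endpoints besides $v$ renders all its incident matching edges unhappy irrespective of $v$'s label. The one caveat worth flagging is that your tail bound $\exp(-\eps^2\mu/36)$ is $n^{-\omega(1)}$ only when $\eps^2 \mu = \omega(\log n)$, e.g.\ when $\eps$ is a constant; this is implicit in the statement as well and is consistent with how the paper invokes the proposition (with $\eps = 1/2$ and $\mu \geq \log^2 n$), so it is not a gap, but it would be worth stating the dependence of the failure probability on $\eps$ explicitly.
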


We also use the standard framework of boosting $O(1)$-approximation algorithms for maximum matching to $(1+\eps)$-approximation algorithms in dynamic streams for any $\eps > 0$. 
The original version of this framework is due to~\cite{McGregor05} which was de-randomized in~\cite{Tirodkar18} and extended to weighted graphs in~\cite{GamlathKMS19}; for bipartite graphs, more efficient reductions are shown in~\cite{AhnG11,AssadiLT21}. 

%\sepehr{will be completely changed later} 

\begin{proposition}[\!\!\cite{McGregor05,AhnG11,GamlathKMS19,AssadiLT21}]\label{prop:boosting-approximation}
	For any $\eps \in (0,1)$ and any integer $n \geq 1$, suppose we have an $O(1)$-approximation algorithm for finding a maximum matching on $n$-vertex \underline{unweighted} graphs in dynamic streams using $s(n)$-space and $p(n)$-passes with high probability. 
	Then, there is a dynamic streaming algorithm with $O(f(\eps) \cdot s(n))$-space and $O(f(\eps) \cdot p(n))$-passes that with high probability finds a $(1+\eps)$-approximation to maximum matching even in \underline{weighted} graphs, for some function $f(\eps)$ that depends only on $\eps$. 
	
	Furthermore, for general weighted graphs, we can set $f(\eps) = (1/\eps)^{\Theta(1/\eps^2)}$, for general unweighted graphs, set $f(\eps) = (1/\eps)^{\Theta(1/\eps)}$, and for (un)weighted bipartite graphs, set $f(\eps) = \poly(1/\eps)$. 
\end{proposition}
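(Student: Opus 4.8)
The plan is to obtain this statement by composing two reductions from the literature and then checking that each composes cleanly with a \emph{dynamic-streaming} matching oracle; no new idea is needed beyond this bookkeeping.

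\textbf{Step 1 (weighted $\to$ unweighted).} I would first invoke the reduction of~\cite{GamlathKMS19}: from any algorithm that returns a $(1+\eps')$-approximate matching on an $n$-vertex \emph{unweighted} graph presented as a dynamic stream using $s(n)$ space and $p(n)$ passes, one gets a $(1+\eps)$-approximate matching on any $n$-vertex \emph{weighted} graph by running the unweighted algorithm as a black box $g(\eps)$ times with $\eps' = \Theta(\eps^2)$ and $g(\eps) = (1/\eps)^{O(1/\eps^2)}$; for bipartite graphs the sharper reductions of~\cite{AhnG11,AssadiLT21} instead give $g(\eps) = \poly(1/\eps)$. The only streaming-specific point is that each of these $g(\eps)$ invocations runs on a graph whose dynamic stream can be produced \emph{on the fly} from the stream of $G$: the intermediate graphs are threshold/rescaling subgraphs of $G$ together with contractions of previously-found matchings, so each update $(u,v,w)$ of the input stream is simply forwarded -- or dropped -- after relabeling $u,v$ according to $O(n\,\polylog n)$ state kept in memory (and self-loops created by contractions are discarded).

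\textbf{Step 2 ($O(1)$- $\to$ $(1+\eps)$-approximation, unweighted).} Here I would use the augmenting-path boosting of~\cite{McGregor05} (derandomized in~\cite{Tirodkar18}, with the dynamic-streaming implementation detailed in~\cite{AhnG15,Assadi24}). Its backbone is the classical fact that a matching $M$ with no augmenting path of length at most $2k+1$ has $|M| \ge \frac{k}{k+1}\,\mu(G)$, so it suffices to set $k = \Theta(1/\eps)$ and repeatedly eliminate augmenting paths of length $\le 2k+1$. Maintain a current matching $M$ (initialized by one call to the given $O(1)$-approximation oracle, and stored explicitly in $O(n\log n)$ bits); in each phase, search for a large vertex-disjoint collection of augmenting paths of length $\le 2k+1$ with respect to $M$ and augment along all of them. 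A phase is implemented by $O(k)$ calls to a matching oracle on layered auxiliary graphs obtained by contracting the matched edges of $M$ -- each such graph has $O(kn)$ vertices -- and the crucial point, established in the cited works, is that an \emph{$O(1)$-approximate} oracle suffices in place of an exact (or maximal-matching) one, at the cost of only more phases. A standard potential argument bounds the total number of oracle calls by $h(\eps) = (1/\eps)^{O(1/\eps)}$. As in Step 1, the auxiliary graphs are simulable from the input stream on the fly by relabeling endpoints via the current matching and dropping self-loops.

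Composing the two steps with $f(\eps) := g(\eps)\cdot h(\eps)$ yields $O(f(\eps)\cdot p(n))$ passes and $O(f(\eps)\cdot s(n))$ space -- the $O(n\,\polylog n)$ bookkeeping is absorbed into $s(n)$, since any nontrivial matching algorithm already uses $s(n) = \Omega(n\log n)$ -- and tracking the exponents gives $(1/\eps)^{\Theta(1/\eps^2)}$ for general weighted, $(1/\eps)^{\Theta(1/\eps)}$ for general unweighted (Step 1 omitted), and $\poly(1/\eps)$ for bipartite; the high-probability guarantee survives a union bound over the $f(\eps) = O_{\eps}(1)$ oracle calls. The part that genuinely needs unpacking -- and the only real obstacle -- is confirming that Step 1's reduction of~\cite{GamlathKMS19}, which rests on a nontrivial ``weighted augmenting structure'' analysis, only ever feeds the unweighted oracle graphs that are legitimate restrictions/relabelings of the input stream (rather than graphs whose construction requires random access to intermediate matchings in a way incompatible with one-way passes); this holds because every intermediate graph is obtained from $G$ under $\polylog$-bits-per-vertex state. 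For the use of this proposition in the present paper only $\eps = \Theta(1)$ is needed, so $f(\eps) = O(1)$ and the whole framework amounts to a constant number of invocations of the $O(\log\log n)$-pass algorithm of~\Cref{res:upper}.
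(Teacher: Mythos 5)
The paper does not prove this proposition at all: it is imported as a black box from the cited works, with the surrounding text (\Cref{sec:sketch-toolkit}) merely recording that the boosting framework is due to~\cite{McGregor05}, its weighted extension to~\cite{GamlathKMS19}, and the sharper bipartite reductions to~\cite{AhnG11,AssadiLT21}. Your two-step decomposition (augmenting-path boosting of the $O(1)$-approximation to $(1+\eps')$ on unweighted graphs, composed with the weighted-to-unweighted reduction, each simulated in dynamic streams by on-the-fly relabeling under $O(n\,\polylog n)$ state) is exactly the composition the paper's citations intend, and your exponent bookkeeping matches the stated forms of $f(\eps)$; there is nothing to compare beyond noting that your reconstruction is consistent with how the paper uses the result.
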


\subsection{Two-Party Communication Complexity}\label{sec:cc}

We work in the standard two party communication model; we provide some basic definitions here and refer the interested reader to the excellent textbooks~\cite{KushilevitzN97,RaoY20} for more details. 

There are two players Alice and Bob who receive input from $\cX$ and $\cY$ respectively. The players jointly compute some function $f$ with domain $\cX\times \cY$. The players follows some protocol $\prot$ to compute $f$.
They have access to a shared tape of randomness, referred to as \emph{public randomness}, in addition to their own \emph{private randomness}.

Alice first sends a message to Bob, followed by a message from Bob to Alice and so on. The last player who receives a message has to output the answer. The total number of rounds is the total number of messages passed between Alice and Bob. 
Moreover, the message sent by any player only depends on their private input, the communicated messages so far, the public randomness, and the private randomness.

%In each round, Alice writes a message to the board, followed by Bob writing a message to the board. 

%In the last round, Bob sees a message from Alice on the board and has to output the answer to the board. The total number of rounds is the number of messages written by either Bob or Alice on the board.

\begin{Definition}\label{def:cc}
	For any protocol $\prot$, the \textbf{communication cost} of $\prot$, denoted by $\cc{\prot}$, is defined as the worst-case (maximum) total length of messages, measured in bits, communicated by the players
	on any input. We assume that all \textbf{transcripts}, i.e., the set of all messages sent by any player, in $\prot$ have the same worst-case length (by padding).
\end{Definition}

The following standard result relates communication protocols and streaming algorithms. 

\begin{proposition}[cf.~\cite{AlonMS96}]\label{prop:cc-stream}
	For any $p \geq 1, s \geq 1$, and $\delta \in (0,1)$, suppose there is a $p$-pass $s$-space streaming algorithm $A$ for some problem $\PP$ that succeeds with probability at least $\delta$. Then, there also exists a two-party protocol $\prot$ with $2p-1$ rounds, communication cost $\cc{\prot} = O(p \cdot s)$, and success probability at least $\delta$ for the same problem $\PP$. 
\end{proposition}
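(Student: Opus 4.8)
The plan is to run the standard ``two players simulate a streaming pass jointly'' reduction, with Alice holding the first half of the stream and Bob the second. Concretely, I would split $\sigma = (\sigma_1,\ldots,\sigma_N)$ at its midpoint, giving Alice the prefix $\sigma^A := (\sigma_1,\ldots,\sigma_{\lfloor N/2\rfloor})$ and Bob the suffix $\sigma^B := (\sigma_{\lfloor N/2\rfloor+1},\ldots,\sigma_N)$. Since $A$ is randomized, the first step is to have Alice and Bob use their shared \emph{public} randomness to sample the random coins of $A$ (public randomness is free in two-party communication, so this costs nothing no matter how many coins $A$ reads); after this, the entire execution of $A$ on $\sigma$ is a deterministic function of the two inputs and the public coins.

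Next I would describe the protocol $\prot$ pass by pass, maintaining the invariant that at the start of each pass the player about to speak knows the memory contents of $A$ at that point. In pass $1$, Alice runs $A$ on $\sigma^A$ from its initial state, obtains the intermediate $s$-bit memory configuration, and sends it to Bob; Bob continues running $A$ on $\sigma^B$ from that configuration, reaching the end-of-pass-$1$ state. For each later pass $k\in\{2,\ldots,p\}$, Bob first forwards the end-of-pass-$(k-1)$ state to Alice, Alice runs $A$ on $\sigma^A$ and sends the resulting intermediate state to Bob, and Bob runs $A$ on $\sigma^B$ to reach the end-of-pass-$k$ state. After pass $p$, Bob holds the final memory configuration of $A$ and outputs $A$'s answer. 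Correctness is then immediate: $\prot$ reproduces the exact run of $A$ on $\sigma$ under the sampled coins, so it outputs whatever $A$ outputs and hence succeeds with probability at least $\delta$ over the public randomness. For the resource bounds, every message is a memory configuration of $A$ and so has length $O(s)$; pass $1$ uses one message and each of the remaining $p-1$ passes uses two, for a total of $1+2(p-1)=2p-1$ rounds and $\cc{\prot}=(2p-1)\cdot O(s)=O(p\cdot s)$.

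I expect no real obstacle here — the argument is entirely routine. The one point that needs a moment's care is the round count: one must observe that after pass $p$ no final message from Bob back to Alice is required, because Bob, holding the last state, can compute the output himself; this is precisely what reduces the naive count of $2p$ to the claimed $2p-1$. (A minor secondary point is to route $A$'s randomness through public rather than private coins so that both players simulate the same execution; with private coins one would have to communicate the coins, which is wasteful.)
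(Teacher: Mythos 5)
Your proposal is correct and follows essentially the same simulation as the paper: alternate the $s$-bit memory state between the players once per pass, and save the final message by letting Bob output the answer himself, giving $2p-1$ rounds and $O(p\cdot s)$ communication. The one detail to adjust is the split of the stream: it should not be at the midpoint of $\sigma$ but rather according to the input partition of the communication problem $\PP$ itself, i.e., $\sigma = \sigma_A \circ \sigma_B$ where $\sigma_A$ is the part of the stream determined by Alice's input and $\sigma_B$ by Bob's (otherwise neither player could generate their half of the stream locally); with that substitution your argument is exactly the paper's.
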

\begin{proof}
	Consider the stream $\sigma = \sigma_A \circ \sigma_B$ where $\sigma_A$ (resp. $\sigma_B$) is the input to Alice (resp. Bob) in $\prot$ (ordered arbitrarily in the stream). 
	Alice runs $A$ on $\sigma_A$ and sends the memory content to Bob, which allows Bob to continue running $A$ on $\sigma_B$, and send
	the memory content back to Alice to continue running the next pass. This allows the players to run one pass of $A$ using communication cost at most $O(s)$ and $2$ rounds of communication.  
	The players can continue this, faithfully simulating the $p$ passes of the algorithm, and at the end of the last pass, Bob can output the answer of $A$, instead of replying back to Alice. This 
	requires $2p-1$ rounds of communication and $O(p \cdot s)$ communication, and has the same success probability as the algorithm $A$. 
\end{proof}
\Cref{prop:cc-stream} allows us to translate communication lower bounds into streaming ones.

\subsection{Information Cost and Message Compression}\label{sec:ic}
We also work with the notion of \emph{information cost} of protocols that originated in~\cite{ChakrabartiSWY01} and has since
found numerous applications (see, e.g., \cite{Weinstein15} for an excellent survey of this topic). There are various definitions of information cost that have been considered depending on the application.  
The following definition due to~\cite{BarakBCR10} is best suited for our purpose. 
\begin{Definition}\label{def:int-info}
	For any two-party protocol $\prot$ whose inputs are distributed according to some distribution $\mu$, the \textbf{(internal) information cost} is defined as: 
	\[
		\ic{\prot}{\mu} := \mi{\rX}{\rProt \mid \rR, \rY} + \mi{\rY}{\rProt \mid \rR, \rX},
	\]
	where $\rX, \rY$ denote the random variable for the inputs $X, Y$ sampled from $\mu$, $\rProt$ denotes the random variable corresponding to the communicated messages, and $\rR$ is the public randomness. 
\end{Definition}

Since a single bit of communication can only carry one bit of information, we can upper bound information cost by the communication cost. 
 \begin{proposition}[cf.~\cite{BarakBCR10}]\label{prop:ic-cc}
 	For any protocol $\prot$ on any distribution $\mu$, 
 	$
 		\ic{\prot}{\mu} \leq \cc{\prot}. 
 	$
 \end{proposition}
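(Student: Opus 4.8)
The plan is to decompose the transcript into its individual messages and bound the information each one can reveal, using the chain rule of mutual information together with the fact that each message depends only on the input of the player who sends it. Concretely, write $\rProt = (M_1,\ldots,M_k)$ for the decomposition of the transcript into messages, and assume without loss of generality that Alice sends $M_j$ for odd $j$ and Bob sends $M_j$ for even $j$ (the other case being symmetric). By the padding convention of \Cref{def:cc}, we may take each $M_j$ to range over $\set{0,1}^{\ell_j}$ for fixed lengths $\ell_j$ with $\sum_{j=1}^{k} \ell_j = \cc{\prot}$.

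First I would bound the term $\mi{\rX}{\rProt \mid \rR, \rY}$. Applying the chain rule over the messages gives $\mi{\rX}{\rProt \mid \rR, \rY} = \sum_{j=1}^{k} \mi{\rX}{M_j \mid \rR, \rY, M_{<j}}$. For even $j$, the message $M_j$ is a deterministic function of $\rY$, the public randomness $\rR$, Bob's private randomness, and the prefix $M_{<j}$; by the rectangle property of communication protocols---for any fixed public randomness and transcript prefix, the set of input-and-private-coins pairs consistent with it is a product set over Alice's side and Bob's side---conditioning on $(\rR, \rY, M_{<j})$ leaves Bob's private randomness, and hence $M_j$, independent of $\rX$, so $\mi{\rX}{M_j \mid \rR, \rY, M_{<j}} = 0$. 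For odd $j$, I would use only the trivial bound $\mi{\rX}{M_j \mid \rR, \rY, M_{<j}} \le \en{M_j \mid \rR, \rY, M_{<j}} \le \en{M_j} \le \ell_j$. Summing over $j$ yields $\mi{\rX}{\rProt \mid \rR, \rY} \le \sum_{j \text{ odd}} \ell_j$, and the symmetric argument (swapping the roles of Alice and Bob) gives $\mi{\rY}{\rProt \mid \rR, \rX} \le \sum_{j \text{ even}} \ell_j$. Adding these two inequalities and recalling \Cref{def:int-info} gives $\ic{\prot}{\mu} \le \sum_{j=1}^{k} \ell_j = \cc{\prot}$, as desired.

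The step I expect to require the most care is the vanishing of the cross terms---that $M_j$ is independent of $\rX$ given $(\rR, \rY, M_{<j})$ when $M_j$ is one of Bob's messages. Here one must be precise about the model: the private coins of the two players are independent of the inputs and of each other, and conditioning on a transcript prefix (which is itself a function of inputs and coins) does not introduce spurious correlation between Alice's input and Bob's private randomness, precisely because of this product/rectangle structure. It is worth emphasizing that this argument uses \emph{no} product-structure assumption on the input distribution $\mu$: whatever correlation $\mu$ places between $\rX$ and $\rY$ is simply inherited inside the conditioning and is irrelevant to the (in)dependence being exploited---which is exactly why the internal information cost, rather than the external one, is the notion that pairs cleanly with this bound.
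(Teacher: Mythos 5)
Your proof is correct; the paper does not prove this proposition itself (it cites~\cite{BarakBCR10} with exactly the intuition you formalize, that each communicated bit carries at most one bit of information), and your argument is the standard one. The chain-rule decomposition over messages, the vanishing of the receiver's terms via the rectangle property conditioned on $(\rR,\rY,M_{<j})$, and the entropy bound $\leq \ell_j$ for the sender's terms are all sound, and your closing remark that no product structure on $\mu$ is needed is precisely why the bound holds for internal information cost.
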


We also use \emph{message compression} to reduce communication cost of limited-round protocols close to their information cost. 
The following is a simplified version of~\cite[Theorem 3.4]{JainPY16}.   

\begin{proposition}[\!{\cite[Theorem 3.4]{JainPY16}}]\label{prop:msg-compress}
For $r \geq 1$, any $\eps \in (0,1)$, and input distribution $\mu$, any $r$-round protocol $\prot$ can be simulated with error at most $\eps$ in $r$-rounds by a protocol $\prot'$ with communication at most
\[
\jpyconst \cdot \paren{r/\eps \cdot \ic{\prot}{\mu} + r^2/\eps}
\]
for some absolute constant $\jpyconst \geq 1$. The simulation by $\prot'$ uses a source of public randomness that is independent from any randomness used in $\prot$.
\end{proposition}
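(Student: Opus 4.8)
The plan is to compress $\prot$ one round at a time: reduce the statement to a single-message (``one-shot'') compression primitive, and then stitch the $r$ rounds together with a hybrid argument. Consider a round $i$ in which, say, Alice speaks (the other case is symmetric), and write $\pi_{<i}$ for the prefix of earlier messages, $M$ for the round-$i$ message, and $R$ for the public randomness of $\prot$. In $\prot$, Alice draws $M$ from $P := \distribution{M \mid \rX, \pi_{<i}, R}$, while Bob can compute the ``prior'' $Q := \distribution{M \mid \rY, \pi_{<i}, R}$; since Alice's message depends only on her input, the transcript so far, and $R$, we have the Markov chain $\rY - (\rX,\pi_{<i},R) - M$, so the round-$i$ contribution to $\mi{\rX}{\rProt \mid \rR, \rY}$ from~\Cref{def:int-info} is exactly $\expect{\kl{P}{Q}}$, the expectation taken over the distribution of $(\rX,\rY,\pi_{<i},R)$ induced by $\prot$. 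Summing over all rounds (the other term of the information cost handling Bob's rounds symmetrically), $\ic{\prot}{\mu} = \sum_{i=1}^{r}\expect{\kl{P}{Q}}$. Hence it suffices to prove a one-shot statement: if Alice holds $P$, Bob holds $Q$ over a common universe, the pair is drawn from a prior with $\expect{\kl{P}{Q}} = I$, and $\delta \in (0,1)$, then using public randomness independent of $\prot$ there is a one-way protocol (Alice to Bob) after which Bob outputs a sample whose law is, on average over the prior, within $\delta$ of $P$, with communication $O\Paren{I/\delta + (1/\delta)\log(1/\delta)}$ bits. Taking $\delta = \eps/r$ and summing over the $r$ rounds then yields the error and communication bounds asserted.

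For the one-shot primitive I would use a rejection-sampling (``correlated sampling'') protocol in the style of~\cite{HarshaJMR07}. Using the fresh public randomness, Alice and Bob generate an i.i.d.\ stream of candidate pairs $(z_1,u_1),(z_2,u_2),\ldots$ with $z_j$ uniform over the message universe and $u_j$ uniform in $[0,1]$; Alice, who knows $P$, scans for the first index $J$ at which the candidate is ``accepted'' (i.e.\ $u_J$ falls below an appropriately normalized value of $P(z_J)$), and sends $J$ to Bob, who outputs $z_J$. The difficulty is that the likelihood ratio $P/Q$ can be unbounded and Alice does not even know $Q$; this is handled by truncation. Since $\prot'$ is allowed to depend on $\prot$ and $\mu$, the per-round value of $I$ is known in advance, so one can hard-code a threshold $\tau \asymp I/\delta$ and have Alice accept only candidates $z$ with $\log\!\big(P(z)/Q(z)\big) \le \tau$. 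By Markov's inequality applied to the nonnegative random variable $\log\!\big(P(z)/Q(z)\big)$ for $z \sim P$, whose mean is $\kl{P}{Q}$, the law Bob recovers is within $\delta$ of $P$ on average over the prior. A standard analysis of the expected value of $\log J$ then bounds the expected message length by $O\!\big(\kl{P}{Q} + \tau + \log(1/\delta)\big)$ bits, which averages to $O(I/\delta + \log(1/\delta))$; finally, cutting the message off at a constant times $1/\delta$ its expected length and absorbing the extra $\delta$ failure probability into the error gives the worst-case bound. (An alternative avoiding the hard-coded $\tau$ is the ``doubling'' variant, where Alice tries $\tau = 1,2,4,\dots$ in turn and spends $O(\log \tau)$ further bits to announce the successful stage, whose expectation is again governed by $I$.)

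To assemble the pieces, define hybrids $H_0 = \prot, H_1, \ldots, H_r = \prot'$, where $H_k$ runs the compressed simulation in rounds $1,\ldots,k$ and the original $\prot$ in rounds $k+1,\ldots,r$; one should apply the chain-rule decomposition of $\ic{\prot}{\mu}$ to the \emph{original} $\prot$ and transport the per-round guarantees along the hybrids, since compressing an early round perturbs the priors $P,Q$ seen in later rounds. Conditioned on the first $k-1$ rounds agreeing, the one-shot guarantee gives that $H_{k-1}$ and $H_k$ are $\eps/r$-close in total variation distance on average over $\mu$; by the triangle inequality, the transcript distributions of $\prot$ and $\prot'$ are within $\eps$, and since the output is a function of the transcript, $\prot'$ errs by at most $\eps$ more than $\prot$. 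Summing the per-round communication, with $\delta = \eps/r$ and $\sum_i \expect{\kl{P}{Q}} = \ic{\prot}{\mu}$, gives total communication $O\Paren{(r/\eps)\cdot\ic{\prot}{\mu} + r^2/\eps}$, matching the claimed bound up to the choice of the constant $\jpyconst$ and the careful handling of lower-order logarithmic terms done in~\cite{JainPY16}. The step I expect to be the main obstacle is the one-shot compression: the truncation must be chosen so that the error it introduces is $O(\eps/r)$ \emph{on average over $\mu$} — which is precisely what forces $\tau$, and hence the message length, to scale like $(r/\eps)$ times the round's information contribution rather than merely like that contribution — while the index length and the bits naming the truncation stage still telescope to the target. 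A secondary subtlety is that the one-shot error guarantee is only an average (not worst-case) one, matching the average nature of internal information cost, so the hybrid argument must be carried out at the level of the joint law of $(\text{input}, \text{transcript})$ rather than pointwise.
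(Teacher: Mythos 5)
First, note that the paper does not prove this proposition at all: it is imported verbatim as ``a simplified version of [Theorem 3.4]'' of~\cite{JainPY16}, so there is no in-paper proof to compare against and your sketch has to stand on its own. Your high-level architecture --- decomposing $\ic{\prot}{\mu}$ by the chain rule into per-round terms $\expect{\kl{P}{Q}}$ with $P = \distribution{\rv{M}\mid \rX,\pi_{<i},R}$ and $Q = \distribution{\rv{M}\mid \rY,\pi_{<i},R}$, compressing each round by a one-shot sampling primitive with $\delta = \eps/r$, and stitching the rounds together with a hybrid/triangle-inequality argument at the level of the joint law of inputs and transcript --- is the correct skeleton and is essentially how bounded-round internal compression is done in the literature.

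The genuine gap is inside your one-shot primitive. You correctly observe that Alice does not know $Q$ (it depends on Bob's input $\rY$), and then immediately propose an acceptance rule in which Alice ``accept[s] only candidates $z$ with $\log(P(z)/Q(z)) \le \tau$'' --- a test Alice cannot perform. This is not a cosmetic slip: it is exactly the obstruction that separates internal from external compression. If you replace $Q$ by the externally-computable prior $\distribution{\rv{M}\mid \pi_{<i},R}$ so that Alice can run the test, your protocol becomes the~\cite{HarshaJMR07}-style scheme and proves the proposition only with \emph{external} information cost in place of $\ic{\prot}{\mu}$; that weaker statement would break the paper's use of the proposition, since Step~1 of the round elimination only controls the internal information cost of $\protnum{1}_r$ (the inputs there are heavily correlated through the shared sub-instances, so the external cost can be much larger). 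The standard repair is to move the likelihood-ratio filtering to Bob's side: Bob, who knows $Q$, restricts attention to candidate indices $j$ with $u_j \le 2^{\tau}Q(z_j)$, and Alice --- who only needs $P$ to locate her first accepted index $J$ --- sends enough (public-randomness) hash bits of $J$ for Bob to identify it among his candidates, with the number of hash bits governed by $\kl{P}{Q}+\tau$ plus a union bound at error level $\delta$; this is where the $1/\delta$ blow-up actually comes from. Until the acceptance rule is restated so that each party only uses quantities it can compute, the proof does not go through. (A secondary, minor point: your per-round $O((1/\delta)\log(1/\delta))$ additive term sums to $(r^2/\eps)\log(r/\eps)$, slightly overshooting the stated $r^2/\eps$; this is bookkeeping, not a conceptual issue.)
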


% !TeX root = main.tex 
%!TEX root = main.tex

\newcommand{\VC}{\ensuremath{V_{\textnormal{\tiny \textsc{cover}}}}}
\newcommand{\MIS}{\ensuremath{V_{\textnormal{\tiny \textsc{mis}}}}}

\newcommand{\Gt}{\ensuremath{G^{(t)}}}
\newcommand{\Gat}[1]{\ensuremath{G^{(#1)}}}
\newcommand{\nt}{\ensuremath{n^{(t)}}}
\newcommand{\nat}[1]{\ensuremath{n^{(#1)}}}
\newcommand{\mt}{\ensuremath{m^{(t)}}}
\newcommand{\mat}[1]{\ensuremath{m^{(#1)}}}
\newcommand{\ut}{\ensuremath{u^{(t)}}}
\newcommand{\uat}[1]{\ensuremath{u^{(#1)}}}
\newcommand{\degt}[1]{\ensuremath{\deg^{(t)}(#1)}}
\newcommand{\degat}[1]{\ensuremath{\deg^{(#1)}}}
\newcommand{\Nt}[1]{\ensuremath{N^{(t)}(#1)}}
\newcommand{\Nat}[1]{\ensuremath{N^{(#1)}}}
\newcommand{\VCt}{\ensuremath{\VC^{(t)}}}
\newcommand{\VCat}[1]{\ensuremath{\VC^{(#1)}}}
\newcommand{\xt}[1]{\ensuremath{x^{(t)}_{#1}}}
\newcommand{\xat}[2]{\ensuremath{x^{(#1)}_{#2}}}
\newcommand{\outt}[1]{\ensuremath{out^{(t)}_{#1}}}
\newcommand{\outat}[2]{\ensuremath{out^{(#1)}_{#2}}}
\newcommand{\intt}[1]{\ensuremath{in^{(t)}_{#1}}}
\newcommand{\intat}[2]{\ensuremath{in^{(#1)}_{#2}}}
\newcommand{\Rt}{\ensuremath{R^{<t}}}
\newcommand{\Rat}[1]{\ensuremath{R^{<#1}}}

\newcommand{\extra}[1]{\ensuremath{\textnormal{extra}{(#1)}}}
\clearpage

\section{The Upper Bound} 

We present our algorithms in this section. 
We start with a randomized greedy algorithm for finding an approximate fractional matching by growing a random maximal independent set (MIS). %inspired by a brilliant recent work of~\cite{Veldt24} on approximating vertex cover via 
This algorithm is the power horse of our approach, and, given its general nature, we provide it in an entirely model-independent manner. 
We then show how this algorithm---in a non black-box way---can be implemented in dynamic graph streams as an $O(\log\log{n})$-pass semi-streaming algorithm.  
Finally, we show how to improve this algorithm to yield an approximation factor of $1+\eps$ and prove the following theorem, which formalizes~\Cref{res:upper}. 

\begin{theorem}\label{thm:upper}
	There is a randomized algorithm that, given any $n$-vertex graph $G=(V,E)$ presented in a dynamic stream, finds an $O(1)$-approximate maximum matching in $G$ using $O(n \cdot \polylog{(n)})$ space
	and $O(\log\log{n})$ passes with high probability. 
	
	Moreover, for any $\eps > 0$, the algorithm can be improved to finding a $(1+\eps)$-approximation even in weighted graphs by increasing the space and number of passes with some $f(\eps)$ factor,
	for some function $f$ that only depends on $\eps$ (and takes different forms depending on whether or not the graph is bipartite or weighted). 
\end{theorem}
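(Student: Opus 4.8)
The plan is to establish \Cref{thm:upper} in three stages, mirroring \Cref{sec:overview}. First, I would prove a model-independent reduction that, from a \emph{single} execution of the randomized greedy MIS on any graph $G$, produces an $O(1)$-approximate fractional matching in expectation. Second, I would implement a relaxation of this reduction on top of the $O(\log\log n)$-pass dynamic-streaming simulation of randomized greedy MIS from \cite{AhnCGMW15}, obtaining an $O(\log\log n)$-pass, $O(n\polylog(n))$-space $O(1)$-approximation; amplification to high probability comes from running $O(\log n)$ independent copies in parallel (each with its own random permutation) and outputting the largest matching found, which costs only an $O(\log n)$ factor in space and no extra passes, since one copy succeeds with constant probability by a reverse-Markov argument on the deficit $\mu(G)-|M|$. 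Third, I would feed this $O(1)$-approximation into the boosting framework of \Cref{prop:boosting-approximation} to obtain the $(1+\eps)$ guarantee, including weighted graphs, with the space and pass count multiplied by a function of $\eps$ alone.

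For the reduction, fix a uniformly random permutation of $V$ and run randomized greedy MIS, letting $G^{(t)}$ be the residual graph after the first $t$ steps (each step adds one vertex to the MIS $\mathcal{I}$ and all of its surviving neighbors to the vertex cover $\mathcal{C} := V\setminus\mathcal{I}$, and deletes them). When a vertex $v$ enters $\mathcal{C}$ at step $t$, for every residual edge $e=(v,w)$ with $\deg^{(t)}(w)\le\deg^{(t)}(v)$ I increment $x_e$ by $1/\deg^{(t)}(v)$. The analysis has two halves. The \emph{size} half refines \Cref{eq:tech-x-value}: even with the degree-ordering restriction one shows $\Exp[\sum_e x_e] \ge \tfrac12\Exp|\mathcal{C}|$, and since any vertex cover has size at least $\mu(G)$ this gives $\Exp|x| \ge \tfrac12\mu(G)$ while $|x|\le\tfrac32\mu(G)$ deterministically. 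The \emph{feasibility} half controls the overflow at each vertex $w$: writing $X^w_i$ for the mass borrowed by $w$ at step $i$, the degree restriction makes $0\le X^w_i\le 1$ hold deterministically, while the residual-degree identity gives the drift bound $\Exp[X^w_i\mid X^w_{<i}] \le \Pr[X^w_{\ge i+1}\equiv 0 \mid X^w_{<i}]$ of \Cref{eq:tech-prob-game1}. I would prove a self-contained lemma that any nonnegative, $[0,1]$-bounded process obeying this drift has $\Exp[\sum_i X^w_i]=O(1)$ and $\var[\sum_i X^w_i]=O(\Exp[\sum_i X^w_i])$ — the mechanism being that each additional unit of expected borrowed mass is charged against a comparable probability that $w$ has already been deleted, so the total behaves like a geometric sum. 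A Chebyshev argument over all $n$ vertices then shows that deleting a constant fraction of the incident $x$-mass from every overloaded vertex and rescaling globally by a constant turns $x$ into a genuine fractional matching of size $\Omega(\Exp|\mathcal{C}|)=\Omega(\mu(G))$ in expectation; \Cref{fact:fractional-to-integral-matching} converts it into an integral $O(1)$-approximation.

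For the streaming implementation, I would run the batched MIS simulation of \cite{AhnCGMW15} unchanged to learn the MIS/vertex-cover status of all vertices in $O(\log\log n)$ passes and $O(n\polylog(n))$ space, augmented as follows. (i) Subdivide each batch into $O(\log n)$ sub-batches of geometrically increasing size; using residual sparsity of greedy (in the spirit of \Cref{lem:residual}) one shows that with high probability, for every residual vertex $v$, the \emph{first} sub-batch containing an MIS-neighbor of $v$ contains only $O(\log n)$ of them, so a sparse-recovery sketch (\Cref{prop:sparse-recovery}) on incidence counters recovers the smallest-index MIS-neighbor of $v$, i.e.\ its time stamp $t(v)$. (ii) With time stamps in hand, a vertex $v$ entering the cover knows $\deg^{(t)}(v)$ and its residual neighborhood $\{w : t(w)\ge t(v)\}$; I sample each such edge independently with probability $\simeq \log n/\deg^{(t)}(v)$, keeping $\simeq n\log n$ edges in expectation and hence recoverable by sparse recovery — here I would use $\polylog(n)$-wise independent hash functions (\Cref{prop:storing-sampling-hash}, \Cref{prop:chernoff-hash}) so the per-vertex sampled degrees concentrate while the hash families fit in $O(n\polylog(n))$ space. (iii) Output a maximum matching of the recovered edge set. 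For the \emph{analysis only}, I would run rejection sampling on the recovered edges, accepting $e=(v,w)$ with exactly the residual probability that compensates for the quantity $\deg^{(t)}(w)$ the algorithm never computed, so that each edge of the reduction's fractional matching survives \emph{independently} with probability $\simeq x_e\log n$; a Chernoff / negative-correlation computation (\Cref{prop:chernoff}) then shows the surviving edges contain an integral matching of size $\Omega(|x|)$ with high probability. Combined, where convenient, with the vertex-sampling primitive of \Cref{prop:vertex-sampling}, this keeps the space at $O(n\polylog(n))$ on all inputs, and \Cref{prop:boosting-approximation} delivers the final $(1+\eps)$ and weighted statement.

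The step I expect to be the main obstacle is the feasibility half of the reduction's analysis: proving the $O(1)$ expectation and $O(1)$-relative-variance bounds for the drift-constrained bounded process of \Cref{eq:tech-prob-game1}, and then turning them, via a Chebyshev/union argument across all $n$ vertices, into a certified fractional matching while losing only a constant factor in size. By contrast, the streaming side is largely bookkeeping once the time-stamp recovery and the rejection-sampling bridge between the uncomputable $\deg^{(t)}(w)$ condition and the algorithmically sampled edges are in place.
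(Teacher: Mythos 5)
Your proposal is correct and follows essentially the same route as the paper: the same degree-restricted assignment from a single run of randomized greedy MIS, the same drift-constrained bounded-process analysis (expectation via telescoping against the removal probability, variance via the conditional bound, Chebyshev to control overflow), the same time-stamp recovery via geometric sub-batches and sparse recovery, the same $\kappa$-wise-independent sampling with a rejection-sampling bridge in the analysis, and the same black-box boosting via \Cref{prop:boosting-approximation}. The only cosmetic differences are that the paper scales the assignment by a parameter $\beta<1/8$ up front rather than rescaling at the end, and amplifies the success probability through the vertex-sampling contraction of \Cref{prop:vertex-sampling} before applying the reverse-Markov argument, whereas you apply reverse Markov directly to $\mu(G)-|M|$ (which also works since $|M|\le\mu(G)$ deterministically).
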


The rest of this section is dedicated to the proof of this theorem. 

%\section{An Adaptive Sketching Algorithm for Maximum Matching}\label{sec:alg}

\subsection{A Random-Order Greedy Algorithm for Matching}\label{sec:random-order} 

We present a model-independent algorithm for computing an approximate fractional matching. 
%%
%%The approximation ratio of this algorithm is multiplicative and is governed by an extra parameter $\beta \in (0,1)$ 
%%provided to the algorithm. 
%%
The algorithm follows the strategy of~\cite{Veldt24} by computing a random order greedy MIS and letting its complement be a vertex cover. In addition to this, whenever the algorithm inserts a new vertex into the vertex cover, it also puts a certain fractional mass distributed uniformly on some \emph{subset} of the not-yet-covered edges incident to this vertex. At the end, the algorithm further ``trims down'' these fractional values to turn them into a fractional matching, by reducing the mass on every vertex to become at most one explicitly. Formally, the algorithm is as follows. 

%, inspired by the brilliant work of~\cite{Veldt24} on approximating vertex cover via random-order MIS. 

\begin{Algorithm}\label{alg:random-order} An input graph $G=(V,E)$ and an approximation parameter\footnote{We note that $\beta$ is not exactly the approximation ratio that this algorithm achieves, but rather a quantity that governs this ratio.} $\beta \in (0,1/8)$. 

\vspace{-10pt}
	\begin{enumerate}[leftmargin=15pt]
		\item Let $\VC = \emptyset$ and $x_e = 0$ for all $e \in E$. 
		\item While $G$ is not an empty graph: 
		\begin{enumerate}
			\item\label{line:choose} Pick a vertex $u$ uniformly at random from $G$. 
			\item Add all neighbors $v \in N(u)$ to $\VC$.  
			\item\label{line:w-increase} For any $v \in N(u)$ and $e=(v,w)$ with $\deg(w) \leq \deg(v)$ set $x_e = \beta/\deg(v)$. 
			\item Remove $u \cup N(u)$ and all their edges from $G$. 
		\end{enumerate}
		\item\label{line:update} Create $y \in \IR^E$ initialized with $y = x$. Then, for any vertex $v \in V$ with $y_v := \sum_{e \ni v} y_e > 1$, reduce $y_v$ to $1$ by arbitrarily reducing the fractional values on incident edges of $v$. 
		\item Return $\VC$ as a vertex cover, $x$ as an intermediate solution, and $y$ as a fractional matching. 
	\end{enumerate}
\end{Algorithm}

The following theorem captures the main properties of this algorithm that are important for our purposes. Given how we use this result in the subsequent sections, we provide separate guarantees for the assignments $x$ and $y$. 

\begin{theorem}\label{thm:random-order}
	Given any graph $G=(V,E)$ and parameter $\beta < 1/8$, \Cref{alg:random-order} outputs an integral vertex cover $\VC$, an intermediate solution $x \in \IR^E$, and a fractional matching $y \in \IR^E$ such that 
	\begin{align}
		&\expect{\sum_{e \in E} x_e} \geq \frac{\beta}{2} \cdot \Exp\card{\VC}, \label{eq:random-order-1} \\
		&\expect{\sum_{e \in E} y_e} \geq \paren{\frac{1-8\beta}{1-2\beta}} \cdot \expect{\sum_{e \in E} x_e}. \label{eq:random-order-2}
	\end{align}
\end{theorem}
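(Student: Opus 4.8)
The statement has two parts: a lower bound on $\Exp[\sum_e x_e]$ in terms of $\Exp|\VC|$, and a lower bound on $\Exp[\sum_e y_e]$ in terms of $\Exp[\sum_e x_e]$. I would attack them separately, since the first is about the greedy process itself and the second is a purely deterministic/combinatorial trimming argument once the concentration/overflow control is in place.

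For \Cref{eq:random-order-1}, the plan is to track, iteration by iteration, the $x$-mass added when a vertex $v$ joins $\VC$. When $v$ joins in some iteration (on the residual graph $G^{(t)}$ with degrees $\deg^{(t)}$), the edges $e=(v,w)$ receiving mass $\beta/\deg^{(t)}(v)$ are exactly those with $\deg^{(t)}(w)\le \deg^{(t)}(v)$. So the total $x$-mass originating at $v$ in that iteration equals $\beta \cdot |\{w\in N^{(t)}(v): \deg^{(t)}(w)\le\deg^{(t)}(v)\}|/\deg^{(t)}(v)$. The key combinatorial claim I would prove is that for \emph{any} graph $H$, $\sum_{v}|\{w\in N_H(v):\deg_H(w)\le\deg_H(v)\}|/\deg_H(v) \ge |E(H)|$ — this is essentially orienting each edge toward its higher-degree endpoint (breaking ties consistently), so each edge contributes at least $1/\deg(v)$ to the endpoint $v$ it points away from, and summing gives $\ge |E(H)|$ after being slightly careful with ties. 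Actually the cleaner statement: each edge $(v,w)$ with $\deg(w)\le\deg(v)$ (orienting by a fixed tie-break) is counted in the term of $v$, contributing $1/\deg(v)\ge 1/\max(\deg(v),\deg(w))$; summing over all edges gives exactly $|E(H)|$ terms each of value at least... hmm, this needs the orientation to be a genuine function of edges. I would instead argue: for each edge assign it to its higher-degree endpoint; then $v$'s term $\ge (\text{in-degree of }v)/\deg(v)$, and $\sum_v (\text{in-deg}(v)) = |E(H)|$, but the per-vertex ratio can be small. The honest route is the ``half'' in the statement: one shows $\sum_v \lceil \deg_H(v)/2\rceil / \deg_H(v) \cdot (\text{something})$... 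I expect the intended argument is that at least half the edges point "down" in expectation or that the term is at least $|E(H)|/2$ by a counting argument pairing high/low degree vertices. This gives $\Exp[\Delta(\sum_e x_e) \mid \text{iteration, }v\text{ joins}] \ge \beta/2$ per vertex joining $\VC$, hence summing over all iterations and using linearity yields \Cref{eq:random-order-1}. \textbf{This degree-counting lemma is the first real obstacle} — getting the constant $1/2$ cleanly.

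For \Cref{eq:random-order-2}, I would use the probabilistic control sketched in the overview. Fix a vertex $w$. Write $X^w = \sum_e x_e \cdot \mathbb{1}[e\ni w \text{ and } e \text{ borrowed by } w]$ for the total borrowed $x$-mass on $w$ (the mass placed on $w$'s edges by neighbors $v$ joining $\VC$, as opposed to originating at $w$). The overflow at $w$ is $\max(0, x_w - 1)$ and when we trim to form $y$ we lose at most $\sum_w \max(0,x_w-1)$ total mass; since the originating mass at $w$ is at most $\beta \le 1$, the overflow is bounded by $\max(0, X^w + \beta - 1) \le X^w$ when $X^w \ge 1-\beta$, so $\sum_e(x_e - y_e) \le \sum_w \max(0, X^w - (1-\beta))$. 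I would then bound $\Exp[\max(0,X^w-(1-\beta))]$: from the per-iteration inequalities in the overview — $\Exp[\text{borrowed}_i \mid \text{history}] \le \Pr[w\text{ removed at step }i\mid\text{history}]$ combined with the deterministic bound $0 \le \text{borrowed}_i \le 1$ coming precisely from the $\deg(w)\le\deg(v)$ restriction — a martingale/optional-stopping style argument shows $\Exp[X^w] \le 1$ and $\mathrm{Var}[X^w] = O(\Exp[X^w]) = O(1)$ (each increment contributes variance at most its mean since it is $[0,1]$-valued, and the total mean is at most $1$). Then a Chebyshev-type tail bound on $X^w$ gives $\Exp[\max(0,X^w-(1-\beta))] = O(\beta)$ — more precisely I would aim for $\le$ (const)$\cdot\beta \cdot \Exp[X^w]$ or an absolute $O(\beta)$ bound, summed over $w$ to compare against $\Exp[\sum_e x_e]$. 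Chaining $\Exp[\sum_e y_e] \ge \Exp[\sum_e x_e] - \sum_w \Exp[\max(0,X^w-(1-\beta))] \ge (1 - \tfrac{6\beta}{1-2\beta})\Exp[\sum_e x_e]$ or similar, with the exact bookkeeping of the stated constant $\frac{1-8\beta}{1-2\beta}$, finishes it.

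\textbf{The main obstacle} is the second part: establishing that the borrowed-mass random variables $X^w$ have expectation $O(1)$ and, crucially, variance only $O(1)$, using \emph{only} the conditional inequality $\Exp[\text{increment}_i\mid\mathcal{F}_{i-1}] \le \Pr[\text{stop}\mid\mathcal{F}_{i-1}]$ together with $\text{increment}_i\in[0,1]$ — the variables are far from independent, so this needs a careful coupling of the "loss" process against the "survival" process (e.g. bounding $\Exp[X^w \cdot \mathbb{1}[\text{still alive at }i]]$ telescopically, or comparing $X^w$ stochastically to $\sum$ of a geometric number of $[0,1]$ variables). The translation of the resulting tail bound into the precise constant $\frac{1-8\beta}{1-2\beta}$ is then a routine but fiddly calculation that I would not grind through here.
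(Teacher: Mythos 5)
Your overall architecture matches the paper's --- a per-iteration comparison for \Cref{eq:random-order-1}, and an originating-versus-borrowed decomposition plus a Chebyshev tail bound on the overflow for \Cref{eq:random-order-2} --- but both halves have a genuine quantitative gap, and in both cases it is the gap you yourself flagged as an ``obstacle.''

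For \Cref{eq:random-order-1}: the combinatorial lemma you propose, $\sum_{v}\card{\{w\in N_H(v):\deg_H(w)\le\deg_H(v)\}}/\deg_H(v)\ge \card{E(H)}$ (or even $\ge\card{E(H)}/2$), is false: on a star $K_{1,n}$ the left-hand side equals $1$ while $\card{E}=n$. Likewise your intended per-vertex conclusion, ``expected mass $\ge\beta/2$ per vertex joining $\VC$,'' fails, since a leaf of a star that joins $\VC$ contributes $0$. What you are missing is that a vertex $v$ joins $\VC$ in a given iteration with probability $\deg^{(t)}(v)/n^{(t)}$, and this degree weight exactly cancels the $1/\deg^{(t)}(v)$ in the assignment. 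The expected mass added in iteration $t$ is therefore $\frac{\beta}{n^{(t)}}\sum_v\card{\{w\in N^{(t)}(v):\deg^{(t)}(w)\le\deg^{(t)}(v)\}}\ge \beta\, m^{(t)}/n^{(t)}$ (each edge is counted from at least one of its endpoints), while $\expect{\card{\VC^{(t)}}}=2m^{(t)}/n^{(t)}$. The comparison is an aggregate one over the random choice of $u^{(t)}$, not a per-vertex one; the star is exactly tight (with probability $n/(n+1)$ the center joins and contributes $\beta$ for one $\VC$-vertex, with probability $1/(n+1)$ all $n$ leaves join and contribute $0$).

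For \Cref{eq:random-order-2}: the bounds you state --- $\expect{X^w}\le 1$, increments in $[0,1]$, $\var{X^w}=O(1)$ --- are too weak to conclude $\expect{\max(0,X^w-(1-\beta))}=O(\beta)$. A variable with mean $1$ and variance $1$ (say, equal to $2$ with probability $1/2$ and $0$ otherwise) has expected overflow above $1-\beta$ of order a constant, and Chebyshev centered at a mean of $1$ says nothing about the tail at threshold $1-\beta<1$. You have imported the overview's normalization, where the assignment is $1/\deg(v)$; in \Cref{alg:random-order} it is $\beta/\deg(v)$, so the borrowed increments are bounded by $\beta$ and satisfy $\expect{\text{increment}_i\mid\mathcal{F}_{i-1}}\le\beta\cdot\Pr(\text{removed at step }i\mid\mathcal{F}_{i-1})$, whence $\expect{X^w}\le\beta$ and $\var{X^w}\le 3\beta\cdot\expect{X^w}$. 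Only at this scale is the threshold $1-\beta$ far above both the mean and the standard deviation, and Chebyshev plus integrating the tail gives $\expect{\max(0,X^w-(1-\beta))}\le\frac{3\beta}{1-2\beta}\cdot\expect{X^w}$; summing over $w$ and using $\sum_w\expect{X^w}\le 2\cdot\expect{\sum_e x_e}$ produces exactly the factor $\frac{1-8\beta}{1-2\beta}$. The rest of your plan for this half (the telescoping expectation bound and the covariance bound via the conditional inequality) is sound once the extra factor of $\beta$ is restored throughout.
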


Notice that since $\VC$ is always a vertex cover, the inequality $\card{\VC} \geq \mu(G)$ always holds, and, thus, combining~\Cref{eq:random-order-1} and~\Cref{eq:random-order-2} in~\Cref{thm:random-order} also directly implies that 
\begin{equation}\label{eq:final-apx-ye}
	\expect{\sum_{e \in E} y_e} \geq \paren{\frac{1-8\beta}{1-2\beta}} \cdot \frac{\beta}2 \cdot \mu(G).
\end{equation}
In other words, the fractional matching $y$ returned by the algorithm is a multiplicative approximation in terms of $\beta$. However, 
we need the separate guarantees of~\Cref{thm:random-order} for the analysis of our dynamic streaming algorithm. 

We now start the proof of~\Cref{thm:random-order}. 
Let $T$ denote the number of iterations of the while-loop in~\Cref{alg:random-order} (which is a random variable). For each iteration $t \in [T]$ of the while-loop, we define: 
\begin{align}
\begin{split}
	\Gt &: \text{the graph $G$ in this iteration}; \\
	\nt &: \text{the number of vertices in \Gt}; \\ % in this iteration}; \\
	\mt &: \text{the number of edges in \Gt}; \\ % $G$ in this iteration}; \\
	\ut &: \text{the vertex $u$ chosen from $G$ in this iteration in Line~\eqref{line:choose}}; \\
	\VCt &: \text{vertices added to $\VC$ in this iteration;} \\
	\Nt{v} &: \text{for each $v \in \Gt$,  the neighbors of $v$ in $\Gt$}; \\
	\degt{v} &: \text{for each $v \in \Gt$,  the degree of $v$ in $\Gt$}; \\
	\xt{e} &: \text{for each $e \in \Gt$,  the value {added} to $x_e$ in this iteration}; \\
	\Rt &: \text{the entire random choices made in iterations $1,\ldots,t-1$ by the algorithm}. 
\end{split}\label{eq:all-parameters-t}
\end{align}

The first lemma relates the expected size of the vertex cover computed by~\Cref{alg:random-order} and 
the intermediate solution $x$ that it computes, thus proving~\Cref{eq:random-order-1} in~\Cref{thm:random-order}. 

\begin{lemma}\label{lem:x-vc}
	For the intermediate solution $x \in \IR^E$ computed in~\Cref{alg:random-order}, we have
	\[
		\expect{\sum_{e \in E} x_e} \geq \frac{\beta}{2} \cdot \Exp\card{\VC}.
	\]
\end{lemma}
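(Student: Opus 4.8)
The plan is to establish the inequality one iteration of the while-loop at a time and then sum. Concretely, I would fix an iteration $t$, condition on the random choices $\Rt$ made in the previous iterations (which pin down the current graph $\Gt$, its vertex count $\nt$, and its edge count $\mt$), and compare the two quantities of interest restricted to iteration $t$: the $x$-mass $\sum_{e} \xt{e}$ deposited in Line~\eqref{line:w-increase}, and the number $\card{\VCt}$ of vertices newly added to $\VC$. All expectations are over the internal randomness of \Cref{alg:random-order}.

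The first, easy step is to compute $\expect{\card{\VCt}\mid\Rt} = \tfrac1{\nt}\sum_{v\in V(\Gt)}\degt{v} = 2\mt/\nt$, using that $\ut$ is uniform over $V(\Gt)$ and that $\card{\VCt}=\degt{\ut}$ (every neighbor of $\ut$ in $\Gt$ is genuinely new, since any previously covered vertex was already deleted). The second step is to lower bound the expected $x$-mass, and here I would argue edge by edge. For $e=(v,w)\in\Gt$, let $z_e$ be an endpoint of maximum degree in $\Gt$ (ties broken arbitrarily). Inspecting Line~\eqref{line:w-increase}, an edge only ever receives a value from an endpoint whose current degree is at least that of the other endpoint; hence $e$ receives a value — necessarily equal to $\beta/\degt{z_e}$ — precisely when $z_e$, or an endpoint of the same degree, lands in $\VCt = \Nt{\ut}$. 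Since $\ut$ is uniform on the $\nt$ vertices of $\Gt$, this occurs with probability at least $\degt{z_e}/\nt$, so $\expect{\xt{e}\mid\Rt}\geq \beta/\nt$. Summing over the $\mt$ edges of $\Gt$ yields $\expect{\sum_e\xt{e}\mid\Rt}\geq \beta\mt/\nt = \tfrac\beta2\cdot 2\mt/\nt = \tfrac\beta2\,\expect{\card{\VCt}\mid\Rt}$.

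The final step is to pass from this per-iteration statement to the global bound. I would set $Y_t:=\sum_e\xt{e}$ and $Z_t:=\card{\VCt}$, with the convention that both vanish for $t > T$ (consistent, because iteration $t$ takes place iff $\Gt$ is nonempty, an $\Rt$-measurable event). The per-iteration bound then reads $\expect{Y_t\mid\Rt}\geq\tfrac\beta2\expect{Z_t\mid\Rt}$ for every $t\geq 1$; taking unconditional expectations gives $\expect{Y_t}\geq\tfrac\beta2\expect{Z_t}$, and summing over $t$ — legitimate since $T\leq n$ always and all terms are nonnegative — yields $\expect{\sum_e x_e}=\sum_{t\geq1}\expect{Y_t}\geq\tfrac\beta2\sum_{t\geq1}\expect{Z_t}=\tfrac\beta2\expect{\card{\VC}}$, as claimed.

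I do not anticipate a serious obstacle. The one point that needs care is the bookkeeping around the random stopping time $T$ together with the fact that Line~\eqref{line:w-increase} \emph{sets} rather than \emph{adds} a value (so in principle an edge could be touched from both of its endpoints in a single iteration). Both issues are dissolved by the maximum-degree-endpoint observation, which identifies exactly when an edge receives mass and what that mass is (any endpoint allowed to write on $e$ has degree exactly $\degt{z_e}$, so all writes agree); everything else is the routine handshake-lemma computation above.
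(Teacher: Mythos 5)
Your proof is correct and follows essentially the same route as the paper's: condition on $\Rt$, compute $\expect{\card{\VCt}\mid\Rt}=2\mt/\nt$, lower bound the per-iteration $x$-mass by $\beta\mt/\nt$ using the observation that every edge has at least one endpoint satisfying the degree test, and sum over iterations via the law of total expectation. The only (cosmetic) difference is that you organize the second step edge-by-edge via the maximum-degree endpoint, whereas the paper sums over the choice of $\ut$ and its neighbors; your version is in fact slightly more careful about the ``set vs.\ add'' semantics of Line~\eqref{line:w-increase}.
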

\begin{proof}
	Fix any iteration $t \geq 1$ in the algorithm and condition on all randomness $\Rt$ up until this iteration. This fixes the graph $\Gt$. 
	Over the randomness of iteration $t$, we have, 
	\[
		\expect{\card{\VCt} \mid \Rt} = \sum_{v \in \Gt} \Pr\paren{v \in \VCt \mid \Rt} = \sum_{v \in \Gt} \frac{\degt{v}}{\nt} = \frac{2\mt}{\nt},
	\]
	where the first equality is by the linearity of expectation, the second is because $v$ joins $\VCt$ if $\ut$ is chosen from $\Nt{v}$, and the last is by the handshaking lemma. 
	
	On the other hand, we have, 
	\begin{align*}
		&\expect{\sum_{e \in G} \xt{e} \mid \Rt} \\
		&=  \hspace{-4pt}\sum_{u \in \Gt} \hspace{-4pt} \left( \Pr\paren{\ut = u \mid \Rt} \cdot \hspace{-4pt}\sum_{v \in \Nt{u}}\hspace{-4pt} \card{\set{w \in \Nt{v}: \degt{w} \leq \degt{v}}} \cdot \frac{\beta}{\degt{v}} \right) \tag{by the definition of the algorithm} \\
		&=  \frac{1}{\nt} \sum_{v \in \Gt} \sum_{u \in \Nt{v}} \card{\set{w \in \Nt{v}: \degt{w} \leq \degt{v}}} \cdot \frac{\beta}{\degt{v}} \tag{as $\ut$ is chosen uniformly from $\Gt$ and by re-ordering the sum for each edge} \\
		&= \frac{1}{\nt} \sum_{v \in \Gt} \sum_{w \in \Nt{v}} \mathbb{I}\bracket{\degt{w} \leq \degt{v}} \cdot \beta \tag{as $\degt{v} = \card{\Nt{v}}$ by definition and $\mathbb{I}\bracket{\cdot}$ is the indicator function} \\  
		&\geq \beta \cdot \frac{\mt}{\nt},
	\end{align*}
	since for each $(v,w)$ at least one of $\mathbb{I}\bracket{\degt{w} \leq \degt{v}}$ or $\mathbb{I}\bracket{\degt{v} \leq \degt{w}}$ is true. 
	
	Combining the above two equations implies that, for any choice of $\Rt$, 
	\[
		\expect{\sum_{e \in G} \xt{e} \mid \Rt} \geq \frac{\beta}{2} \cdot \expect{\card{\VCt} \mid \Rt}, 
	\]
	which, by the law of total expectation, implies that 
	\[
		\expect{\sum_{e \in G}\xt{e}} \geq \frac{\beta}{2} \cdot \Exp{\card{\VCt}}.
	\]
	Noting that 
	\begin{align*}
		\sum_{e \in G} x_e &= \sum_{t \geq 1} \sum_{e \in G} \xt{e} \qquad \text{and }\qquad \card{\VC} = \sum_{t \geq 1} \card{\VCt}, 
	\end{align*}
	and linearity of expectation concludes the proof. 
\end{proof}

While the size of $x$ by~\Cref{lem:x-vc} is sufficiently large, unfortunately, $x$ 
is not necessarily a fractional matching. As such, we do need to run the update in Line~\eqref{line:update} to ``trim down'' $x$ into $y$, which becomes a fractional matching. 
The main step of the proof is to show that this step is not going to reduce the size of $x$ dramatically, proving~\Cref{eq:random-order-2} in~\Cref{thm:random-order}.  

\begin{lemma}\label{lem:x-update}
	For the intermediate solution $x \in \IR^E$ and fractional matching $y \in \IR^E$ in~\Cref{alg:random-order}, 
	\[
		\expect{\sum_{e \in E} y_e} \geq \paren{\frac{1-8\beta}{1-2\beta}} \cdot \expect{\sum_{e \in E} x_e}. 
	\]
\end{lemma}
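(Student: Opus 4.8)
The plan is to show that the trimming in Line~\eqref{line:update} destroys only an $O(\beta)$-fraction of the mass of $x$ in expectation, so that the ratio between $y$ and $x$ is $1-O(\beta)$, which for $\beta<1/8$ is comfortably above $\tfrac{1-8\beta}{1-2\beta}$. The starting point is a purely deterministic bound: regardless of the order in which Line~\eqref{line:update} processes the overflowing vertices, the total mass removed satisfies $\sum_{e\in E}(x_e-y_e)\leq\sum_{v\in V}(x_v-1)^+$, where $(a)^+:=\max(a,0)$. Indeed, each time the update visits a vertex $v$ whose current value exceeds $1$ it deletes $y_v-1\leq x_v-1$ units of mass from edges incident to $v$; every deletion event is incident to a unique vertex and contributes to a single edge; and later deletions only decrease the $y$-values of other vertices. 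Summing over vertices bounds the total removal by $\sum_v(x_v-1)^+$, and since $1-\tfrac{1-8\beta}{1-2\beta}=\tfrac{6\beta}{1-2\beta}$, it suffices to prove $\expect{\sum_{v\in V}(x_v-1)^+}\leq\tfrac{6\beta}{1-2\beta}\cdot\expect{\sum_{e\in E}x_e}$.

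To control $x_v$, I would split it as $x_v=\mathrm{own}_v+B_v$: fixing a tie-breaking rule, every edge $e=(v,w)$ with $x_e>0$ has a unique endpoint that ``assigned'' its value, and $\mathrm{own}_v$ collects the edges assigned by $v$ while $B_v$ collects those assigned by a neighbor of $v$. Since $v$ assigns value $\beta/\degt{v}$ to at most $\degt{v}$ incident edges in the single iteration $t$ it joins $\VC$, we get $0\leq\mathrm{own}_v\leq\beta$ deterministically, and hence $(x_v-1)^+\leq(B_v-(1-\beta))^+$. Writing $B_v=\sum_t X^v_t$, where $X^v_t$ is the mass borrowed by $v$ in iteration $t$, the heart of the argument is a ``drift'' estimate: conditioned on the history $\Rt$ (which fixes $\Gt$), for every $v\in\Gt$ we have (i) $0\leq X^v_t\leq\beta$, since $X^v_t$ is a sum over $w\in\Nt{v}$ of terms $\beta/\degt{w}$ with $\degt{w}\geq\degt{v}$, of which there are at most $\degt{v}$; and (ii)
\[
	\expect{X^v_t\mid\Rt}\;\leq\;\frac{\beta\cdot\degt{v}}{\nt}\;\leq\;\beta\cdot\Pr\paren{\text{$v$ is removed from $G$ in iteration $t$}\mid\Rt},
\]
the last step because $v$ is removed exactly when $\ut\in\Nt{v}\cup\set{v}$, an event of probability $(\degt{v}+1)/\nt$; the first step is a computation of the same flavor as in the proof of~\Cref{lem:x-vc}.

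Setting $R_t:=\mathbb{I}[v\in\Gt]$, the right-hand side of (ii) equals $\beta\cdot\expect{R_t-R_{t+1}\mid\Rt}$, and since every vertex is removed exactly once, $\sum_{t\geq1}(R_t-R_{t+1})=R_1=1$. Taking expectations of (ii), summing over $t$, and telescoping then gives $\expect{B_v}=\sum_t\expect{X^v_t}\leq\beta$; running the same telescoping from iteration $s+1$ onward gives $\expect{\sum_{t>s}X^v_t\mid\Rat{s+1}}\leq\beta$ for every $s$, and combined with $(X^v_t)^2\leq\beta X^v_t$ this bounds the second moment:
\[
	\expect{B_v^2}=\sum_t\expect{(X^v_t)^2}+2\sum_{s}\expect{X^v_s\cdot\!\!\sum_{t>s}X^v_t}\;\leq\;\beta\,\expect{B_v}+2\beta\,\expect{B_v}\;=\;3\beta\,\expect{B_v}.
\]
Finally, the elementary inequality $(b-c)^+\leq b^2/c$ (for $b\geq0$, $c>0$) with $c=1-\beta$ yields $\expect{(B_v-(1-\beta))^+}\leq\expect{B_v^2}/(1-\beta)\leq 3\beta\,\expect{B_v}/(1-\beta)$; summing over $v\in V$ and using $\sum_v\expect{B_v}=\expect{\sum_e x_e}$ (each edge is borrowed by exactly one endpoint) gives $\expect{\sum_v(x_v-1)^+}\leq\tfrac{3\beta}{1-\beta}\expect{\sum_e x_e}\leq\tfrac{6\beta}{1-2\beta}\expect{\sum_e x_e}$ for $\beta\in(0,1/8)$, completing the proof. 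I expect the main obstacle to be making the drift argument and the second-moment bound rigorous despite the heavy dependence among the $X^v_t$ across iterations: the conditioning has to be on the full history $\Rt$ rather than on $X^v_{<t}$ alone, and it is essential that the second moment comes out as $O(\beta)\cdot\expect{B_v}$ rather than merely $O(\beta^2)$ — only the former makes the sum over vertices collapse to $\expect{\sum_e x_e}$ instead of growing linearly in $n$.
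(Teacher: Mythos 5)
Your proof is correct and follows essentially the same route as the paper's: the same decomposition of $x_v$ into its own contribution (at most $\beta$) and the borrowed mass $B_v=\sum_t X^v_t$, the same drift estimate $\expect{X^v_t\mid \Rt}\leq \beta\cdot\Pr(\text{$v$ removed at $t$}\mid\Rt)$ with telescoping to get $\expect{\sum_{t> s}X^v_t\mid \Rat{s+1}}\leq\beta$, and the same cross-term computation bounding the second moment by $3\beta\,\expect{B_v}$. The only (harmless) deviations are cosmetic: you replace the paper's Chebyshev-plus-tail-integration step with the elementary inequality $(b-c)^+\leq b^2/c$, and you use a tie-breaking rule to get $\sum_v\expect{B_v}=\expect{\sum_e x_e}$ exactly where the paper settles for a factor-$2$ bound — both yield the same final constant $\tfrac{1-8\beta}{1-2\beta}$.
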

\noindent
To continue, we need a couple of more notation. For any iteration $t \geq 1$, define: 
\begin{align*}
	\xt{v} &: \text{for each $v \in \Gt$, $\xt{v} := \sum_{e \ni v} \xt{e}$}; \\
	\outt{v} &: \text{for each $v \in \Nt{\ut}$, the value {added} to $\xt{v}$ by $v$ itself}; \\
	\intt{v} &: \text{for each $v \in \Gt$, the value {added} to $\xt{v}$ by any vertex other than $v$}.
\end{align*}
This way, for every $v \in V$ and $t \geq 1$, we have, 
\[
	\xt{v} = \outt{v} + \intt{v}. 
\]
Notice that $\outt{v}$ is non-zero in at most one iteration, wherein $v$ joins $\VC$. For that iteration $t$, 
\[
	\xt{v} = \card{\set{w \in \Nt{v}: \degt{w} \leq \degt{v}}} \cdot \frac{\beta}{\degt{v}} \leq \beta. 
\]
This implies that 
\begin{align}
	x_v := \sum_{t \geq 1} \xt{v} \leq \beta + \sum_{t \geq 1} \intt{v}. \label{eq:beta-intt}
\end{align}
Thus, we only need to focus on bounding $\intt{v}$ across the iterations. We do this in the following claims. Note that in~\Cref{alg:random-order}, 
the vertices $w$ in Line~\eqref{line:w-increase} are the ones that have their $\intt{w}$ increase in this iteration; thus, to avoid confusion, we use $w$ 
in place of $v$ in the following. 

We first show that the total sum that can be assigned to $w$ across the iterations is upper bounded by $\beta$ in expectation. For technical reasons that will become clear shortly, 
we prove this bound in a more general form (in the following claim, the statement above corresponds to setting $t' = 0$). 
\begin{claim}\label{clm:xw-intt-expectation}
	For any iteration $t' \geq 1$, any $w \in V$, and choice of all the random bits $\Rat{t'}$ before iteration $t'$, we have,
	\[
		\expect{\sum_{t \geq t'} \intt{w} \mid \Rat{t'}} \leq \beta. 
	\]
\end{claim}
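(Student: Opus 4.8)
The plan is to extract the single per‑iteration estimate foreshadowed in the technical overview — that the expected mass \emph{borrowed} by $w$ in any one iteration is at most $\beta$ times the probability that $w$ leaves the graph in that iteration — and then sum it over all iterations $t \ge t'$ against the trivial fact that $w$ can be removed at most once. First I would dispose of the degenerate case: if the fixed realization $\Rat{t'}$ is one in which $w$ has already been removed (or the while‑loop has already terminated) by iteration $t'$, then no edge incident to $w$ survives, so $\intt{w}=0$ for every $t \ge t'$ and the bound holds vacuously; hence we may assume $w \in \Gat{t'}$.

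The core step is the following inequality: for every $t \ge t'$ and every realization $\Rt$ of the random choices made in iterations $1,\dots,t-1$,
\[
	\expect{\intt{w} \mid \Rt} \;\le\; \beta \cdot \Pr\!\paren{\text{$w$ is removed in iteration $t$} \mid \Rt}.
\]
If $w \notin \Gat{t}$ both sides are $0$. Otherwise, $\intt{w}$ can only be increased in Line~\eqref{line:w-increase}, when some $v \in \Nt{w}$ with $\degt{w} \le \degt{v}$ is added to $\VC$ — which happens exactly when $\ut \in \Nt{v}$, an event of probability $\degt{v}/\nt$ since $\ut$ is uniform over the $\nt$ vertices of $\Gat{t}$ — contributing at most $\beta/\degt{v}$ to the edge $(v,w)$. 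Summing over these $v$,
\[
	\expect{\intt{w} \mid \Rt} \;\le\; \sum_{v \in \Nt{w}\,:\,\degt{w}\le\degt{v}} \frac{\degt{v}}{\nt}\cdot\frac{\beta}{\degt{v}} \;\le\; \frac{\beta\cdot\degt{w}}{\nt},
\]
while $w$ is removed in iteration $t$ precisely when $\ut \in \set{w}\cup\Nt{w}$, which has probability $(\degt{w}+1)/\nt \ge \degt{w}/\nt$; combining the two gives the displayed inequality. This is exactly the balance between ``borrowed mass'' and ``removal probability'' from the overview, now carrying the $\beta$ factor and the degree cutoff of Line~\eqref{line:w-increase}.

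Finally I would take total expectations. The terms $\intt{w}$ are nonnegative and the while‑loop terminates after finitely many iterations (one vertex is removed in each iteration), so the sum and the expectation may be exchanged; and since the random choices $\Rat{t'}$ form a prefix of $\Rt$ for all $t \ge t'$, the tower rule yields
\[
	\expect{\sum_{t\ge t'}\intt{w} \;\Big|\; \Rat{t'}}
	= \sum_{t\ge t'}\expect{\expect{\intt{w}\mid\Rt}\;\Big|\;\Rat{t'}}
	\le \beta\sum_{t\ge t'}\Pr\!\paren{\text{$w$ removed in iteration $t$} \mid \Rat{t'}}
	\le \beta,
\]
the last step because the events ``$w$ removed in iteration $t$'' over $t \ge t'$ are mutually exclusive, so their indicators sum to at most $1$ surely. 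This establishes the claim. I do not expect a genuine obstacle: the only points needing care are the bookkeeping of $\expect{\intt{w}\mid\Rt}$ under the condition $\degt{w}\le\degt{v}$ (so each incident edge of $w$ is counted at most once and the total is $\le \beta\degt{w}/\nt$), and the routine justification for swapping the iteration‑sum with the expectation.
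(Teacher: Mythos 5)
Your proof is correct and follows essentially the same route as the paper's: the same per-iteration inequality $\expect{\intt{w}\mid\Rt}\le\beta\cdot\Pr(\text{$w$ removed in iteration $t$}\mid\Rt)$ obtained by comparing the borrowed mass $\beta\degt{w}/\nt$ with the removal probability $(\degt{w}+1)/\nt$, followed by summing over $t\ge t'$ and using that the removal events are mutually exclusive. The only cosmetic difference is that the paper notes the removal probabilities sum to exactly $1$ (since $w$ is eventually removed), whereas you bound the sum by $1$; both suffice.
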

\begin{proof}
	We know that the given choice of random bits before iteration $t'$, $\Rat{t'}$ fixes the graph $\Gat{t'}$. If $w$ is no longer present in $\Gat{t'}$, $\intat{t'}{w} = 0$, and it remains zero for every iteration after $t'$ also. Hence, proving the statement is trivial. Thus, we assume that the vertex $w$ is present in $\Gat{t'}$.

For any iteration $t \geq t'$, and any choice of random bits $\Rt$ for all preceding iterations,
	\begin{align*}
		\expect{\intt{w} \mid \Rt} &= \hspace{-4pt}\sum_{v \in \Nt{w}} \hspace{-4pt}\mathbb{I}\bracket{\degt{v} \geq \degt{w}} \cdot \Pr\paren{v \in \VCt \mid \Rt} \cdot \frac{\beta}{\degt{v}} \tag{by the definition of the algorithm} \\
		&\leq \degt{w} \cdot \frac{\degt{v}}{\nt} \cdot \frac{\beta}{\degt{v}} \tag{as $v$ belongs to $\VCt$ if $\ut$ belongs to $\Nt{v}$} \\
		&= \beta \cdot \frac{\degt{w}}{\nt}. 
	\end{align*}
	On the other hand, for every vertex $w \in \Gt$ define the event: 
	\begin{align*}
		\event_w(t): \text{vertex $w$ is removed from $\Gt$ by being picked as $\ut$ or joining $\VCt$}. 
	\end{align*}
	We have, 
	\[
		\Pr\paren{\event_w(t) \mid \Rt} = \Pr\paren{\ut \in \set{w} \cup \Nt{w}} = \frac{\degt{w}+1}{\nt}. 
	\]
	Thus, we have that 
	\begin{align}
		\expect{\intt{w} \mid \Rt} \leq \beta \cdot \Pr\paren{\event_w(t) \mid \Rt}. \label{eq:cost<pr}
	\end{align}
	Intuitively, this means that if $\intt{w}$ is expected to be ``large'', then there is also a ``good chance'' that $w$ is removed from $\Gt$ at this iteration and thus no longer receives $\intt{w}$ in the subsequent iterations. 
	We formalize this as follows: 
	\begin{align*}
		&\expect{\sum_{t \geq t'} \intt{w} \mid \Rat{t'}}  \\
	 &= \sum_{t \geq t'} \sum_{\Rt} \mathbb{I}\bracket{\text{$w$ belongs to $\Gt$ after $\Rt \mid\Rat{t'}$}} \cdot \Pr\paren{\Rt \mid \Rat{t'}} \cdot \expect{\intt{w} \mid \Rt, \Rat{t'}} \tag{by the law of total expectation} \\
		&\leq  \sum_{t \geq t'} \sum_{\Rt} \mathbb{I}\bracket{\text{$w$ belongs to $\Gt$ after $\Rt \mid \Rat{t'}$}} \cdot \Pr\paren{\Rt \mid \Rat{t'}} \cdot \beta \cdot \Pr\paren{\event_w(t) \mid \Rt, \Rat{t'}} \tag{by~\Cref{eq:cost<pr}} \\
		&= \sum_{t \geq t'} \Pr\paren{\text{$w$ belongs to $G^{(t')},\ldots,G^{(t)}$ but not $G^{(t+1)} \mid \Rat{t'}$}} \cdot \beta  \tag{by the definition of $\event_w(t)$ and the probability in the inner sum} \\ 
		&= \Pr\paren{\text{$w$ is removed from some $\Gt$ with $t \geq t'$}} \cdot \beta \tag{as the events in the sum are mutually exclusive and partition the space} \\
		&= \beta, \tag{as $w$ will be removed at some iteration after $t'$} 
	\end{align*}
	concluding the proof. 
\end{proof}

We can also observe that the value of $\intt{w}$ in any iteration $t$ is upper bounded by $\beta$. 

\begin{observation}\label{obs:xw-intt-ub}
	For every $t \geq 1$, regardless of the choice of $\Rt$, $\intt{w} \leq \beta$. 
\end{observation}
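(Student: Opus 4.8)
The plan is to read $\intt{w}$ directly off the assignment rule in Line~\eqref{line:w-increase} of \Cref{alg:random-order} and to bound the resulting finite sum term by term. First I would fix $t \ge 1$ together with the random bits $\Rt$ of the preceding iterations; this determines the graph $\Gt$, the chosen vertex $\ut$, and the set $\VCt = \Nt{\ut}$ of vertices added to $\VC$ in iteration $t$. If $w \notin \Gt$ there is nothing to prove, since then $\intt{w}=0$, so assume $w \in \Gt$.

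By the definition of the algorithm, an edge incident to $w$ receives a fractional value \emph{from a vertex other than $w$} in iteration $t$ exactly when it is of the form $(v,w)$ with $v \in \Nt{w}$, $v \in \VCt$ (i.e.\ $v \in \Nt{\ut}$), and $\degt{w}\le\degt{v}$, in which case it receives $\beta/\degt{v}$. Hence
\[
	\intt{w} \;=\; \sum_{\substack{v\,\in\,\Nt{w}\,\cap\,\Nt{\ut}\\ \degt{w}\,\le\,\degt{v}}} \frac{\beta}{\degt{v}}
	\;\le\; \sum_{\substack{v\,\in\,\Nt{w}\\ \degt{w}\,\le\,\degt{v}}} \frac{\beta}{\degt{w}}
	\;\le\; \degt{w}\cdot\frac{\beta}{\degt{w}} \;=\; \beta,
\]
where the first inequality both drops the restriction $v \in \Nt{\ut}$ and uses $\degt{v}\ge\degt{w}$ to bound $1/\degt{v}\le 1/\degt{w}$ in each surviving term, while the second uses $\card{\Nt{w}}=\degt{w}$.

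There is essentially no obstacle here: this is simply the deterministic, per-iteration counterpart of the expected bound established in \Cref{clm:xw-intt-expectation}, and the only ingredient it relies on is the degree restriction $\degt{w}\le\degt{v}$ that Line~\eqref{line:w-increase} deliberately imposes, which is exactly what forces each summand to be at most $\beta/\degt{w}$. The one minor point to be careful about is the tie case $\degt{v}=\degt{w}$ (and the degenerate case $w=\ut$, in which \emph{all} neighbours of $w$ join $\VCt$); neither affects the argument, since in the display we are already summing over \emph{every} neighbour $v$ of $w$ meeting the degree condition, so any reasonable convention for attributing an edge's value to $\outt{\cdot}$ versus $\intt{\cdot}$ only removes terms from the sum.
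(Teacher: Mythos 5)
Your proof is correct and follows essentially the same route as the paper's: bound $\intt{w}$ by the sum over all neighbours $v$ of $w$ with $\degt{v}\ge\degt{w}$ of $\beta/\degt{v}$, replace each term by $\beta/\degt{w}$, and count at most $\degt{w}$ terms. The extra care you take with the $v\in\Nt{\ut}$ restriction and the tie/degenerate cases is fine but not needed beyond what the paper's one-line chain of inequalities already absorbs.
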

\begin{proof}
	We have,
		\[
	\intt{w} \leq \sum_{v \in \Nt{w}} \mathbb{I}\bracket{\degt{v} \geq \degt{w}} \cdot \frac{\beta}{\degt{v}} \leq \degt{w} \cdot \frac{\beta}{\degt{w}} = \beta. \qedhere
	\]
\end{proof}

Lastly, using \Cref{obs:xw-intt-ub} and \Cref{clm:xw-intt-expectation}, we can show that the variance of the total value assigned to $w$ is also low, enabling us to prove the required concentration bounds later. 

\begin{claim}\label{clm:xw-intt-variance}
	For every $w \in V$, 
	\[
	\var{\sum_{t \geq 1} \intt{w}} \leq 3\beta\cdot \expect{\sum_{t \geq 1} \intt{w}} .  
	\]
\end{claim}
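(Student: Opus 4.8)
The plan is to bound the variance by the raw second moment, $\var{\sum_{t\geq 1}\intt{w}} \leq \expect{(\sum_{t\geq 1}\intt{w})^2}$, and then split the square into diagonal and cross terms. Writing $S := \sum_{t\geq 1}\intt{w}$, I would use
\[
  \expect{S^2} = \sum_{t\geq 1}\expect{(\intt{w})^2} + 2\sum_{t\geq 1}\expect{\intt{w}\cdot\sum_{t'>t}\intat{t'}{w}}.
\]
For the diagonal terms, \Cref{obs:xw-intt-ub} gives $0\leq\intt{w}\leq\beta$ deterministically, hence $(\intt{w})^2\leq\beta\cdot\intt{w}$ and $\sum_t\expect{(\intt{w})^2}\leq\beta\cdot\expect{S}$.

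For the cross terms, I would fix $t$ and condition on $\Rat{t+1}$, the random choices of iterations $1,\dots,t$; this determines $\Gat{t}$, the pick $\ut$, and therefore the value $\intt{w}$, so $\intt{w}$ is a constant under this conditioning. Applying \Cref{clm:xw-intt-expectation} with starting iteration $t+1$ then yields $\expect{\sum_{t'\geq t+1}\intat{t'}{w} \mid \Rat{t+1}}\leq\beta$, and the tower rule gives
\[
  \expect{\intt{w}\cdot\sum_{t'>t}\intat{t'}{w}} = \expect{\intt{w}\cdot\expect{\sum_{t'\geq t+1}\intat{t'}{w} \mid \Rat{t+1}}} \leq \beta\cdot\expect{\intt{w}}.
\]
Summing over $t$ bounds the total cross-term contribution by $2\beta\cdot\expect{S}$, so $\expect{S^2}\leq 3\beta\cdot\expect{S}$, which proves the claim.

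I do not expect a genuine obstacle here; the only care needed is the measurability bookkeeping — confirming from the definitions in \Cref{eq:all-parameters-t} that $\intt{w}$ is $\Rat{t+1}$-measurable so that it can be pulled out of the inner conditional expectation, and invoking \Cref{clm:xw-intt-expectation} at the right index $t'=t+1$ (using $\intt{w}\geq 0$ so the inequality points in the right direction).
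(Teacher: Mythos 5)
Your proposal is correct and matches the paper's argument in all essential respects: the paper decomposes $\var{\sum_t \intt{w}}$ into variances plus covariances and bounds each by the corresponding raw moments, which yields exactly the same two sums you obtain from expanding $\expect{S^2}$, and then handles the diagonal terms via \Cref{obs:xw-intt-ub} and the cross terms by conditioning on $\Rat{t_1+1}$ (which fixes $\intat{t_1}{w}$) and invoking \Cref{clm:xw-intt-expectation} at the shifted index, precisely as you describe. No gap.
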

\begin{proof}
	We can write the variance as,
	\begin{align*}
		\var{\sum_{t \geq 1} \intt{w}} &= \sum_{t \geq 1} \var{\intt{w}} + 2  \cdot \sum_{t_1 < t_2 }  \cov{\intat{t_1}{w}, \intat{t_2}{w}}  \tag{as $\var{\sum_{i}X_i} = \sum_{i} \var{X_i} + \sum_{i \neq j} \cov{X_i,X_j}$ for any random variables $X_1,X_2,\cdots$}\\
		&\leq \sum_{t \geq 1} \expect{(\intt{w})^2} + 2 \cdot \sum_{t_1 < t_2} \expect{\intat{t_1}{w} \cdot \intat{t_2}{w}}  \tag{as $\var{X} \leq \expect{X^2}$ for any random variable $X$} \\
		&\leq \beta \cdot \sum_{t \geq 1} \expect{\intt{w}} + 2 \cdot \sum_{t_1 < t_2} \expect{\intat{t_1}{w} \cdot \intat{t_2}{w}}  \tag{since by \Cref{obs:xw-intt-ub}, we have $\intt{w} \leq \beta $}
	\end{align*}
We will bound the second term by $2 \beta \cdot \expect{\sum_{t \geq 1} \intt{w}}$ to complete the proof. 
\begin{align*}
 \sum_{t_1 < t_2} \expect{\intat{t_1}{w} \cdot \intat{t_2}{w}} &= \sum_{t_1} \sum_{t_2 \geq t_1+1} \expect{\intat{t_1}{w} \cdot \intat{t_2}{w}} \\
 &= \sum_{t_1} \sum_{\Rat{t_1+1}} \Pr\paren{\Rat{t_1+1}} \cdot  \paren{\sum_{t_2 \geq t_1+1} \expect{\intat{t_1}{w} \cdot \intat{t_2}{w} \mid \Rat{t_1+1}}} \tag{by the law of total expectation} \\
  &= \sum_{t_1} \sum_{\Rat{t_1+1}} \Pr\paren{\Rat{t_1+1}} \cdot \expect{\intat{t_1}{w} \mid \Rat{t_1+1}} \cdot \paren{ \sum_{t_2 \geq t_1+1} \expect{\intat{t_2}{w} \mid \Rat{t_1+1}}} \tag{as the value of $\intat{t_1}{w}$ is fixed when conditioned on $\Rat{t_1+1}$} \\
  &\leq  \sum_{t_1} \sum_{\Rat{t_1+1}} \Pr\paren{\Rat{t_1+1}} \cdot \expect{\intat{t_1}{w} \mid \Rat{t_1 +1}} \cdot \beta \tag{by \Cref{clm:xw-intt-expectation}} \\
  &= \beta \cdot \sum_{t_1} \expect{\intat{t_1}{w}} \tag{by the law of total expectation} \\
  &= \beta \cdot \expect{\sum_{t \geq 1} \intt{w}}, \tag{by the linearity of expectation}
\end{align*}
finishing the proof. 
\end{proof}

We use~\Cref{clm:xw-intt-variance} to bound the probability that $\intt{w}$ reaches a ``too large'' value. 

\begin{claim}\label{clm:xw-intt-concentration}
	For every $w \in V$ and every $\theta \geq 1-\beta$, 
	\[
		\Pr\paren{\sum_{t \geq 1} \intt{w} > \theta} \leq \frac{3\beta}{(\theta-\beta)^2} \cdot \expect{\sum_{t \geq 1} \intt{w}}. 
	\]
\end{claim}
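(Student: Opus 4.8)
The plan is to derive the bound by a direct application of Chebyshev's inequality, feeding it the two facts about the random variable $\sum_{t \geq 1} \intt{w}$ that have already been established. Write $\mu := \expect{\sum_{t \geq 1} \intt{w}}$. Applying \Cref{clm:xw-intt-expectation} with $t' = 1$ — for which the conditioning on $\Rat{1}$ is vacuous, since the algorithm has made no random choices before its first iteration — gives $\mu \leq \beta$. Applying \Cref{clm:xw-intt-variance} gives $\var{\sum_{t \geq 1} \intt{w}} \leq 3\beta \cdot \mu$. These are exactly the ingredients Chebyshev needs.

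First I would check that $\theta$ lies strictly above the mean, so that Chebyshev is applicable with a positive deviation $\theta - \mu$: since $\beta < 1/8$ we have $1 - \beta > \beta \geq \mu$, and hence $\theta \geq 1 - \beta > \mu$. Then Chebyshev's inequality yields
\[
	\Pr\paren{\sum_{t \geq 1} \intt{w} > \theta} \leq \Pr\paren{\abs{\sum_{t \geq 1} \intt{w} - \mu} \geq \theta - \mu} \leq \frac{\var{\sum_{t \geq 1} \intt{w}}}{(\theta - \mu)^2} \leq \frac{3\beta \cdot \mu}{(\theta - \mu)^2}.
\]
Finally, I would invoke $\mu \leq \beta$ once more, this time in the denominator: $\theta - \mu \geq \theta - \beta > 0$, so $(\theta - \mu)^2 \geq (\theta - \beta)^2$, which upgrades the last bound to $\frac{3\beta \cdot \mu}{(\theta - \beta)^2} = \frac{3\beta}{(\theta-\beta)^2} \cdot \expect{\sum_{t \geq 1} \intt{w}}$, as claimed.

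There is essentially no obstacle here: the statement is a routine second-moment concentration bound, and the only point requiring care is confirming that $\theta$ exceeds the mean, so that both the Chebyshev deviation and the subtracted-$\beta$ denominator are positive — which is immediate from the hypothesis $\theta \geq 1 - \beta$ together with $\beta < 1/8$. All of the substantive work lives in \Cref{clm:xw-intt-expectation} and \Cref{clm:xw-intt-variance}, and this claim merely combines them.
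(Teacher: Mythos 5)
Your proof is correct and follows essentially the same route as the paper's: Chebyshev's inequality applied to $\sum_{t\geq 1}\intt{w}$, with the variance bounded via \Cref{clm:xw-intt-variance} and the mean bounded by $\beta$ via \Cref{clm:xw-intt-expectation} to replace $(\theta-\mu)^2$ with $(\theta-\beta)^2$. The explicit check that $\theta$ exceeds the mean is a small point the paper handles implicitly in the same way.
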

\begin{proof}
Let $X := \sum_{t \geq 1} \intt{w}$. Note that $X$ is a sum of at most $n$ random variables as there are at most $n$ iterations in \Cref{alg:random-order} (in each iteration, at least one vertex is removed). 
%	Let $X_i $ be the random variable corresponding to $\intat{i}{w}$ for $i \in [n]$, and let $X = \sum_{i \in [n]} X_i$.  There are at most $n$ iterations in \Cref{alg:random-order}, as in each iteration at least one vertex is removed. 
We have,
\begin{align*}
		\Pr\paren{\sum_{t \geq 1} \intt{w} > \theta} &= \Pr\paren{X> \theta} \leq \frac{\var{X}}{(\theta-\expect{X})^2}  \tag{by Chebyshev's inequality}\\
		&\leq \frac{3\beta \cdot \expect{X}}{(\theta-\expect{X})^2} \tag{by \Cref{clm:xw-intt-variance}} \\
		&\leq  \frac{3\beta}{(\theta-\beta)^2} \cdot \expect{\sum_{t \geq 1} \intt{w}}. \tag{as $\expect{X} = \expect{\sum_{t \geq 1} \intt{w}} \leq \beta < \theta$, by \Cref{clm:xw-intt-expectation} and since $\beta < 1/8$}
	\end{align*}
This proves the claim. 
%\begin{align*}
%		&\leq  (1+2\beta) \cdot 
%	&\leq \exp\paren{-\frac{(\frac{\theta}{\beta}-1)^2}{2+(\frac{\theta}{\beta}-1)}} \\
%	&\leq \exp\paren{-\frac{(\frac{\theta}{\beta}-1)}{2}} \tag{by using $\theta \geq 1-\beta$ and $\beta < 1/4$} \\
%	&\leq \exp\paren{-\frac{\theta}{4\beta}} \tag{again, by using $\theta \geq 1-\beta$ and $\beta < 1/4$}. 
%\end{align*}
\end{proof}
We are now ready to prove~\Cref{lem:x-update}. 

\begin{proof}[Proof of~\Cref{lem:x-update}]
For every vertex $v \in V$, define 
\[
	\extra{v} := \min\paren{\sum_{t \geq 1} \intt{v} - (1-\beta), 0}. 
\]
%%Given that by~\Cref{eq:beta-intt}
%%\[
%%	x_v \leq \beta + \sum_{t \geq 1} \intt{v},
%%\]
We first have
\begin{align*}
	\expect{\extra{v}} &= \sum_{\theta = (1-\beta)}^{\infty} \Pr\paren{\sum_{t \geq 1} \intt{v} = \theta} \cdot (\theta-(1-\beta)) \\
	&\leq \sum_{\theta = (1-\beta)}^{\infty} \Pr\paren{\sum_{t \geq 1} \intt{v} = \theta} \cdot \theta \tag{as $\beta < 1/8$ and thus removing $-(1-\beta)$ can only increase the sum}\\
	&= \int_{\theta=(1-\beta)}^{\infty} \Pr\paren{\sum_{t \geq 1} \intt{v} \geq \theta} \cdot d\theta \tag{as for any non-negative variable $X$, $\expect{X} = \int_{0}^{\infty} (1-F_X(x))dx$ where $F_X(\cdot)$ is the CDF function}\\
	&\leq \int_{\theta=(1-\beta)}^{\infty}\frac{3\beta}{(\theta-\beta)^2} \cdot \expect{\sum_{t \geq 1} \intt{v}} \cdot d\theta \tag{by~\Cref{clm:xw-intt-concentration}} \\
	&=3\beta \cdot \expect{\sum_{t \geq 1} \intt{v}} \cdot \int_{\theta=(1-\beta)}^{\infty} \frac1{(\theta-\beta)^2} d\theta \\
	&= 3\beta \cdot \expect{\sum_{t \geq 1} \intt{v}} \cdot \frac1{1-2\beta}.
	  \tag{as $\int 1/x^2 dx = -1/x +$ constant}
\end{align*}
Finally, we have, 
\begin{align*}
	\expect{\sum_{e \in E} y_e} &\geq \expect{\sum_{e \in E} x_e}  - \sum_{v \in V} \expect{\extra{v}} \tag{by the update in Line~\eqref{line:update} and \Cref{eq:beta-intt}}. \\
	&\geq \expect{\sum_{e \in E} x_e} - \frac{3\beta}{(1-2\beta)} \cdot \sum_{v \in V} \expect{\sum_{t \geq 1} \intt{v}} \\
	&\geq  \paren{\frac{1-8\beta}{1-2\beta}} \cdot \expect{\sum_{e \in E} x_e} \tag{as $\sum_{v \in V}\expect{\sum_{t \geq 1} \intt{v}} \leq 2\cdot \expect{\sum_{e \in E} x_e}$}
\end{align*}
concluding the proof. 
\end{proof}

The proof of~\Cref{thm:random-order} now follows from~\Cref{lem:x-vc} for~\Cref{eq:random-order-1} and~\Cref{lem:x-update} for~\Cref{eq:random-order-2}. 

\clearpage

%%\begin{proof}[Proof of~\Cref{thm:random-order}]
%%	For $x',x \in \IR^E$ defined in~\Cref{lem:x-update}, we have, 
%%\begin{align*}
%%	\expect{\sum_{e \in E} x'_e} &\geq \expect{\sum_{e \in E} x_e} -  4\beta \cdot \exp\paren{-\frac{1}{8\beta}} \cdot n, \tag{by~\Cref{lem:x-update}} \\
%%	&\geq \frac{\beta}{2} \cdot \Exp\card{\VC} - 4\beta \cdot \exp\paren{-\frac{1}{8\beta}} \cdot n  \tag{by~\Cref{lem:x-vc}}, 
%%\end{align*}   
%%finalizing the proof. 
%%\end{proof}

% !TeX root = main.tex 
%!TEX root = main.tex

\newcommand{\Gbef}[1]{\ensuremath{G^{\textnormal{\textsc{bef}}}_{#1}}}
\newcommand{\Gduring}[1]{\ensuremath{G_{#1}}}
\newcommand{\nbef}[1]{\ensuremath{n_{ #1}}}
\newcommand{\Ubef}[1]{\ensuremath{U_{\leq #1}}}
\newcommand{\Nbef}[1]{\ensuremath{N_{#1}}}
\newcommand{\Nduring}[1]{\ensuremath{N_{#1}}}
\newcommand{\Mstar}{\ensuremath{M^{\star}}}
\newcommand{\verte}{\ensuremath{\textnormal{\textsc{vert}}}}
\newcommand{\ystar}{\ensuremath{y^*}}
\newcommand{\wfix}{\ensuremath{y^{\textnormal{fix}}}}
\newcommand{\wstar}{\ensuremath{y^*}}

\subsection{The Dynamic Streaming Implementation of~\Cref{alg:random-order}}\label{sec:dynamic-alg}

We now show how to implement~\Cref{alg:random-order} in dynamic streams. For this, we follow the approach of~\cite{AhnCGMW15} for implementing the randomized greedy MIS algorithm, 
which also forms the backbone of~\Cref{alg:random-order}. The main new step here is to find the assignments $x$ and $y$ to the edges of the graph in the algorithm (although we will not be able to explicitly find these, but rather a ``proxy'' to them). 
To do this, we need a procedure that can determine the \emph{exact} iteration each vertex is being removed from the graph. This is done via~\Cref{alg:time-stamp} that we design. 

%Similar to~\cite{AhnCGMW15}, we implement~\Cref{alg:random-order} in $O(\log\log{n})$ \emph{batches} of vertices, with size $n^{1-1/2^i}$ for the $i$-the batch.  
\Cref{alg:time-stamp} finds the set $\VC$ in~\Cref{alg:random-order}, and assigns a \textbf{time stamp} to each
vertex that indicates in which iteration of the while-loop this vertex was removed, namely, was \textbf{settled}. This algorithm, similar to~\cite{AhnCGMW15}, processes the 
graph in $O(\log\log{n})$ \textbf{batches} of vertices with growing sizes. A key new subroutine allows us to determine the time stamp of \emph{all} vertices. 

\begin{Algorithm}\label{alg:time-stamp} An input graph $G=(V,E)$ in a dynamic stream. 
	
	\vspace{-10pt}
	\begin{enumerate}
		\item Set the \textbf{time} $t=0$ and let $\VC = \emptyset$ and $\MIS=\emptyset$. Let $\sigma$ be a random permutation of $V$. 
		\item For $i=1$ to $\log\log{n}$ \textbf{batches}: 
		\begin{enumerate}
			\item Let $k_i := 2 \cdot (n^{1-1/2^{i}} - n^{1-1/2^{i-1}})$ and $U_i := (u_1,\ldots,u_{k_i})$ be the next $k_i$ vertices in $\sigma$ to be processed\footnote{If the remaining graph has $<k_i$ vertices, we let $U_i$ be all remaining vertices.}.  
			\item\label{line:space-bound} In a \textbf{single pass} over the stream, store $G[U_i \setminus (\VC \cup \MIS)]$ using~\Cref{clm:store-subgraph} below. 
			\item\label{line:for-loop-2} At the end of the pass: \textbf{for} $j=1$ to $k_i$ \textbf{do} the following: 
			\begin{enumerate}[leftmargin=5pt]
				\item If $u_j$ is \textbf{settled} already, move to the next vertex of the for-loop in Line~\eqref{line:for-loop-2}. 
				\item\label{line:time-stamp1} Else, increase $t \leftarrow t+1$ and add $u_j$ to $\MIS$ with $t(u_j) = t$ and mark neighbors of $u_j$ in $U_i \setminus (\VC \cup \MIS)$ as settled. 
			\end{enumerate}
			\item\label{line:time-stamp2} In a \textbf{single pass} over the stream, for every $v \in V \setminus \MIS$, find the vertex $u \in N(v) \cap \MIS$ with the smallest value of $t(u)$ 
			using~\Cref{alg:clm-find-time} (to be defined in~\Cref{sec:find-time-stamps}); if such a vertex $u$ is found for $v$, add $v$ to $\VC$, set $t(v) = t(u)$, and mark $v$ as {settled}. 
		\end{enumerate}
	\end{enumerate}
\end{Algorithm}

We start by arguing that this algorithm faithfully simulates~\Cref{alg:random-order}. 

\begin{observation}\label{obs:time-simulates-random}
	For any graph $G=(V,E)$, the set $\set{(v,t(v)) \mid v \in V}$ in~\Cref{alg:time-stamp} is sampled from the same distribution as in~\Cref{alg:random-order} where $t(v)$ in the latter refers 
	to the iteration of the while-loop wherein $v$ joins $\VC$ or is picked as the vertex $u$ (i.e., is removed from the graph). 
\end{observation}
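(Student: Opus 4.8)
The plan is to show that \Cref{alg:time-stamp} and \Cref{alg:random-order} define the same random process, coupled through the same random permutation $\sigma$, so that the joint distribution of $\set{(v, t(v))}_{v \in V}$ is identical. First I would set up the coupling: both algorithms draw a uniformly random permutation $\sigma$ of $V$ (in \Cref{alg:random-order}, the sequence of uniformly-random choices of $u$ among the remaining vertices is distributionally equivalent to fixing $\sigma$ up front and always picking the $\sigma$-earliest surviving vertex — this is the standard equivalence between ``random greedy'' and ``greedy on a random permutation''). So it suffices to prove that, conditioned on a fixed $\sigma$, the two algorithms produce identical outputs: the same set $\VC$, the same set $\MIS$, and the same time stamp $t(v)$ for every vertex.

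The core of the argument is an induction on the vertices in $\sigma$-order (equivalently, on the value of the counter $t$). I would prove the invariant that after \Cref{alg:time-stamp} has finished processing the first $m$ vertices of $\sigma$ (across however many batches that spans), the partial assignment of $(\MIS, \VC, t(\cdot))$ agrees exactly with the state of \Cref{alg:random-order} after its first $m$ while-loop iterations. The inductive step splits into two observations. (i) When \Cref{alg:time-stamp} reaches an unsettled vertex $u_j$ in Line~\eqref{line:time-stamp1}, by the induction hypothesis $u_j$ is exactly the $\sigma$-earliest vertex not yet removed, so \Cref{alg:random-order} would also pick it as its next $u$; it gets time stamp equal to the current counter value, matching the iteration index. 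The vertices marked settled \emph{within the current batch's stored subgraph} $G[U_i \setminus (\VC \cup \MIS)]$ are precisely $u_j$'s surviving neighbors that also lie in $U_i$. (ii) The subtlety is that $u_j$ may have surviving neighbors \emph{outside} $U_i$ (in later batches), which \Cref{alg:random-order} would immediately move to $\VC$ but \Cref{alg:time-stamp} cannot yet see; these are handled in the post-pass of Line~\eqref{line:time-stamp2}, where every $v \notin \MIS$ is assigned to $\VC$ with $t(v)$ equal to the smallest $t(u)$ over $u \in N(v) \cap \MIS$. I would argue that this ``smallest time stamp'' is exactly the while-loop iteration in which $v$ first becomes adjacent to $\MIS$ in \Cref{alg:random-order}, hence exactly the iteration $v$ is removed and put in $\VC$ — because a vertex enters $\VC$ the moment the first of its neighbors (in $\sigma$-order among the $\MIS$ vertices) is selected.

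One point to be careful about: \Cref{alg:time-stamp} only stores $G[U_i \setminus (\VC \cup \MIS)]$, so when running the inner for-loop of Line~\eqref{line:for-loop-2} it can correctly detect which vertices of $U_i$ get settled by a choice within $U_i$, but it relies on the batch structure of \cite{AhnCGMW15} ensuring that no vertex of $U_i$ is settled by a vertex appearing \emph{earlier} in $\sigma$ but in a previous batch — such vertices are already marked settled before batch $i$ begins (by Line~\eqref{line:time-stamp2} of the earlier batch), so the ``if $u_j$ is settled already'' check correctly skips them. I would also note that correctness here is purely structural and holds for \emph{every} realization of $\sigma$; the residual-sparsity / space considerations (via \Cref{clm:store-subgraph} and \Cref{alg:clm-find-time}) are about \emph{feasibility} in small space, not about the distribution, and are therefore irrelevant to this particular observation — so I would explicitly defer them.

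The main obstacle I anticipate is bookkeeping the two-phase nature of each batch cleanly: within a batch, Line~\eqref{line:for-loop-2} simulates several consecutive while-loop iterations of \Cref{alg:random-order} at once, and only afterwards does Line~\eqref{line:time-stamp2} propagate $\VC$-membership to the rest of the graph. I need the induction hypothesis to be stated at batch boundaries (not mid-batch), and I need to verify that interleaving ``all MIS choices in this batch, then all induced $\VC$ additions across $V$'' yields the same final $(\MIS, \VC, t(\cdot))$ as doing them one iteration at a time — which is true because a vertex's removal iteration depends only on the $\sigma$-earliest neighbor that joins $\MIS$, and the set of $\MIS$ vertices is processed in $\sigma$-order regardless of batching. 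Once that is pinned down, \Cref{obs:time-simulates-random} follows, since identical outputs for every fixed $\sigma$ plus identical distributions of $\sigma$ gives identical joint distributions of $\set{(v, t(v))}_{v \in V}$.
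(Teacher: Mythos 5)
Your proposal is correct and follows essentially the same route as the paper: couple both algorithms through a single random permutation $\sigma$ (random greedy $=$ greedy on a random order) and then verify they perform the identical computation, a step the paper dismisses as ``easy to verify'' and you spell out via the batch-boundary induction. The only detail the paper checks that you omit is the arithmetic $\sum_{i=1}^{\log\log n} k_i = n$, which guarantees the $\log\log n$ batches exhaust all of $V$ so that every vertex is actually processed and receives a time stamp.
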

\begin{proof}
	In~\Cref{alg:random-order}, we can think of sampling a vertex $u$ uniformly from the remaining graph in each iteration of the while-loop, as first sampling a random permutation $\sigma$, and 
	then picking the remaining vertex of $G$ with the smallest value of $\sigma$ in each iteration of the while-loop. Moreover, we have that,
	\[
		\sum_{i=1}^{\log \log {n}} k_{i} = 2 \cdot n^{1-1/2^{\log\log{n}}} = \frac{2n}{n^{1/\log{n}}} = n,
	\]
	and thus any remaining vertex will be sampled in the last iteration of the algorithm and will be processed. Then, it is easy to verify that the two algorithms are performing the same 
	exact computation, finalizing the proof.  
\end{proof}

An important remark is in order here. Given the greedy nature of~\Cref{alg:random-order}, for the purpose of the analysis (and by using~\Cref{obs:time-simulates-random}), we can also consider the choice of vertices $\VC$ right after 
we process a vertex $u$ in~\Cref{alg:time-stamp} (to be added to $\MIS$), even though in reality, vertices in $\VC$ are only added after the batch is fully processed. In other words, in the analysis, we can add neighbors of $u$ to $\VC$ 
right at that point even though these vertices will be added to $\VC$ at the end of the batch. 

Let us define some useful notation about the random variables in \Cref{alg:time-stamp} before we proceed. 
\begin{align*}
	\Gduring{i} &: \text{the graph $G[U_i \setminus (\VC \cup \MIS)]$ that is stored in batch $i$}; \\
	\nbef{i} &:= 2n^{1-1/2^i} = \sum_{j=1}^i k_j,\text{ which is the number of vertices in all of $U_1 \cup U_2 \cup \ldots \cup U_i$}; \\
	\Ubef{i} &:  \text{the set of $\nbef{i}$ vertices in $U_1 \cup U_2 \cup \ldots \cup U_i$};  \\
	\Nduring{i}(w) &: \text{for any vertex $w \in \Gduring{i}$, this is the neighborhood of $w$ in graph $\Gduring{i}$}. 
\end{align*}
(We emphasize that in the above notation, $n_i$ is \emph{not} the number of vertices in $G_i$ and is larger.) 

We prove a helper lemma that allows us to bound the space complexity of this algorithm in Line~\eqref{line:space-bound}. This is a standard result at this point---originally due to~\cite{AhnCGMW15}---and is often
referred to as the ``residual sparsity property'' of the greedy (MIS) algorithm~\cite{GhaffariGKMR18,Konrad18,AssadiOSS19}. We thus provide the proof only for completeness. 

%\janani{Change this proof to only take in the list of vertices till $U_i$ - the proof right now has an issue in the induction}
\begin{lemma}[cf.~\cite{AhnCGMW15}]\label{lem:residual}
	For $1 < i \leq \log \log n$, let $\Delta(\Gduring{i})$ denote the maximum degree of $\Gduring{i}$. Then, with high probability, 
	\[
		\Delta(\Gduring{i}) \leq 100 \cdot (\nbef{i}/\nbef{i-1}) \cdot \ln{n}. 
	\]
\end{lemma}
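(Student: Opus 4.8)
The statement is the residual sparsity property of randomized greedy: after processing the first $\nbef{i-1}$ vertices of the random permutation $\sigma$, every surviving vertex has degree $O((\nbef{i}/\nbef{i-1}) \cdot \ln n)$ in the subgraph induced on the next batch $U_i$ (restricted to unsettled vertices). The plan is to fix a vertex $w$ and a target degree threshold, and argue that if $w$ had many unsettled neighbors among $\Ubef{i} \setminus \Ubef{i-1}$, then with overwhelming probability one of those neighbors — or a neighbor of $w$ among the first $\nbef{i-1}$ vertices — would have been picked into $\MIS$ early enough to settle $w$, contradicting $w \in \Gduring{i}$. Then union bound over all $n$ vertices $w$.

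\medskip

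\noindent\textbf{Key steps.} First I would set up the right view of the process: by \Cref{obs:time-simulates-random}, think of $\sigma$ as a uniformly random permutation of $V$ and of the greedy process as scanning $\sigma$ left to right, where $w$ is ``still alive'' entering batch $i$ iff none of the vertices appearing before position $\nbef{i-1}$ in $\sigma$ that lie in $\{w\} \cup N(w)$ was itself alive and chosen into $\MIS$. Fix $w$ with $\deg_G(w) \geq d := 100 (\nbef{i}/\nbef{i-1}) \ln n$ (if $\deg_G(w) < d$ there is nothing to prove, since $\Delta(\Gduring i) \le \deg_G(w)$). Second, I would reveal $\sigma$ restricted to $\{w\}\cup N(w)$ one position at a time and track, for each neighbor $u$ of $w$ in the order they appear, whether $u$ is alive when scanned; the standard argument is that conditioned on the history, each newly scanned neighbor of $w$ that is still alive is chosen into $\MIS$ (and hence settles $w$) with probability at least... actually the cleaner route: condition on $w$ surviving into batch $i$, which forces every one of $w$'s neighbors appearing in the first $\nbef{i-1}$ positions of $\sigma$ to have been settled (not picked) before being scanned. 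Third, and this is the crux, I would bound the probability that $w$ has $\geq d$ neighbors in $U_i \setminus (\VC\cup\MIS)$: conditioned on surviving, the relative order within $\sigma$ of $w$'s neighbors is still uniform, and each of $w$'s $\deg_G(w) \geq d$ neighbors independently lands among the first $\nbef{i-1}$ positions with probability roughly $\nbef{i-1}/n$; so the expected number of $w$-neighbors landing in the ``prefix'' is $\geq d \cdot \nbef{i-1}/n = 100 \nbef i \ln n / n \ge 100 \ln n$ (using $\nbef i \le n$ only loosely — one has to be careful that $\nbef i$ could be close to $n$, in which case the bound $d$ is $\Theta(\ln n)$ anyway and the claim is about a constant-size neighborhood). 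If at least one such prefix-neighbor $u$ had been alive when scanned, $u$ would have entered $\MIS$ and settled $w$; so surviving means all prefix-neighbors were pre-settled — but then I instead bound directly the count of neighbors landing \emph{after} position $\nbef{i-1}$ but \emph{inside} $U_i$, which by the same hypergeometric/balls-in-bins computation concentrates around $\deg_G(w) \cdot k_i / n$. The subtlety is that $\Gduring i$ only contains unsettled vertices, so I really want: $|N_G(w) \cap (U_i\setminus(\VC\cup\MIS))| \le |N_G(w)\cap U_i|$, and this last quantity is a sum of negatively-correlated (or exactly hypergeometric) indicators with mean $\deg_G(w)\cdot k_i/n \le \deg_G(w)\cdot \nbef i/n$, to which I apply the Chernoff bound \Cref{prop:chernoff} (negative-correlation form) to get a failure probability $\le \exp(-\Theta(\ln n)) = n^{-\Theta(1)}$, small enough for a union bound over $w$ after tuning the constant $100$.

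\medskip

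\noindent\textbf{Main obstacle.} The delicate point is handling the regime where $\nbef i / \nbef{i-1}$ is large relative to $n$, i.e. the last batch $i = \log\log n$ where $\nbef i = n$: there the claimed bound $\Delta(\Gduring i) \le 100 \ln n$ says the residual graph is essentially empty, and this has to come from the fact that almost all vertices are already settled by then, not from a balls-in-bins estimate on $U_i$ itself — I'd need to feed in that $\nbef{i-1} = 2n^{1-1/2^{i-1}}$ is already $n/\mathrm{polylog}(n)$, so each vertex's prefix-neighbor count is $\Omega(\deg_G(w))$, forcing survival probability down unless $\deg_G(w) = O(\ln n)$. Getting this clean for all $i$ simultaneously — ideally with one inequality $\nbef{i-1} \ge \nbef i / (\text{something}) $ that makes the mean $\deg_G(w)\cdot k_i/n$ comfortably below $d$ while $\deg_G(w)\cdot \nbef{i-1}/n$ is comfortably above $\ln n$ — is the one calculation I'd actually need to carry out carefully; everything else is the standard residual-sparsity argument of \cite{AhnCGMW15} which the paper explicitly only includes for completeness.
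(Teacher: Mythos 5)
There is a genuine gap, and it sits exactly at the point you flag as "the crux." Your final plan is to bound $\card{N(w) \cap U_i}$ by a hypergeometric/Chernoff argument around its mean $\deg(w)\cdot k_i/n$. But the lemma is about the degree of $w$ in $\Gduring{i} = G[U_i \setminus (\VC\cup\MIS)]$, i.e.\ only the \emph{unsettled} neighbors in $U_i$, and the relaxation to all of $N(w)\cap U_i$ discards precisely what makes the statement true. A surviving vertex $w$ can have $\deg(w) = \Theta(n)$ (e.g.\ all of its neighbors are settled early by some third vertex joining $\MIS$), in which case $\card{N(w)\cap U_i}$ concentrates around $\Theta(k_i)$, far above the claimed threshold $100(\nbef{i}/\nbef{i-1})\ln n$ once $\nbef{i-1} \gg \ln n$ — yet $w$ legitimately survives. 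For the same reason, the survival half of your argument, run against the \emph{original} degree $\deg(w)$ (or against "prefix-neighbors"), cannot force $\deg(w)$ to be small for survivors: you correctly observe that prefix-neighbors may be pre-settled and hence never enter $\MIS$, but then you pivot to the balls-in-bins count instead of fixing the survival argument, and neither piece (nor their combination in your "main obstacle" paragraph) closes the loop. A secondary issue is that "conditioned on surviving, the relative order of $w$'s neighbors is still uniform" is not quite right — conditioning on survival biases the permutation against live neighbors of $w$ appearing early.

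The missing idea is to run the survival argument against the \emph{residual} neighbors rather than all neighbors: bound the joint event "$w \in \Gduring{i}$ \textbf{and} $w$ has $\geq d$ neighbors in $\Gduring{i}$" directly. The $d$ neighbors of $w$ that end up unsettled in $\Gduring{i}$ are, by monotonicity of the settled set, alive (and unchosen) at \emph{every} earlier step; so at each of the first $\nbef{i-1}$ draws of the permutation the chosen vertex hits one of them with probability at least $d/\nbef{i}$, and any such hit puts that neighbor into $\MIS$ and settles $w$. This gives
\[
\Pr\Paren{w \in \Gduring{i} \wedge \deg_{\Gduring{i}}(w) \geq d} \leq \paren{1 - \frac{d}{\nbef{i}}}^{\nbef{i-1}} \leq \exp\paren{-100\ln n},
\]
with no conditioning on survival needed, and then a union bound over $w$ finishes. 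This is how the paper's proof proceeds (with $d_j(v) \geq d(v)$ playing the role of "residual neighbors stay alive throughout"); without this substitution of residual degree for original degree, the argument as you have written it does not establish the lemma.
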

\begin{proof}
	We prove the statement for $i+1$ with $i \geq 1$.
	Fix the $\nbef{i+1}$ vertices in $\Ubef{i+1}$, but \emph{not} their order in the permutation $\sigma$. 
	%Consider the graph $\Gbef{i}$ at the beginning of the previous batch and let $\nbef{i}$ denote its number of vertices. 
	We can think of sampling the set $\Ubef{i}$ as picking $\nbef{i}$ vertices $u_1,\ldots,u_{\nbef{i}}$ uniformly at random from the set $\Ubef{i+1}$ one at a time (without replacement). Now, consider a vertex $v$ in $\Ubef{i+1}$ 
	and for $j \in [\nbef{i}]$, let $d_j(v)$ denote the degree of $v$ in $G[\Ubef{i+1} \setminus (\VC^j \cup \MIS^j)]$ where $\VC^j \cup \MIS^j$ includes the part of $\VC$ and $\MIS$ that will be added due to the 
	choices of vertices $u_1,\ldots,u_{j-1}$ and their neighbors. Finally, let $d(v)$ 
	denote the degree of $v$ in $G_{i+1}$. 
	We have,
	\begin{align*}
		\Pr\paren{v \in \Gduring{i+1} \wedge d(v) \geq 100 \cdot \frac{\nbef{i+1}}{\nbef{i}} \cdot \ln{n}} &\leq \prod_{j=1}^{\nbef{i}} \Pr\paren{\text{$u_j$ is not in current neighbors of $v$} \mid u_{1},\ldots,u_{j-1}} \tag{as $v$ remaining in $G_{i+1}$ means
		none of its neighbors are sampled in $\MIS$} \\
		&\leq \prod_{j=1}^{\nbef{i}} \paren{1-\frac{d_j(v)}{\nbef{i+1}}} \tag{as each $u_j$ is chosen without replacement from $\nbef{i+1}-1-(j-1)$ vertices at this point} \\
		&\leq \paren{1-\frac{100 \cdot (\nbef{i+1}/\nbef{i}) \cdot \ln{n}}{\nbef{i+1}}}^{\nbef{i}} \tag{as $d_j(v) \geq d(v)$ since the degrees drop monotonically as $\VC$ and $\MIS$ grow} \\
		&\leq \exp\paren{-\frac{\nbef{i} \cdot 100  \cdot \ln{n}}{\nbef{i}}} \tag{as $1-x \leq e^{-x}$} = n^{-100}. 
	\end{align*}
	 A union bound over all the vertices concludes the proof. 
\end{proof}

%Using~\Cref{lem:residual}, we show how to implement Line~\eqref{line:space-bound} of the algorithm efficiently. 

\begin{claim}\label{clm:store-subgraph}
	Line~\eqref{line:space-bound} of~\Cref{alg:time-stamp} can be implemented in $O(n\log^2{n})$ space with high probability. 
\end{claim}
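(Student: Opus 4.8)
The plan is to observe that the subgraph $\Gduring{i} = G[U_i \setminus (\VC \cup \MIS)]$ that Line~\eqref{line:space-bound} must store is sparse --- it contains only $O(n\log{n})$ edges with high probability --- so a single deterministic sparse-recovery sketch over the edge slots lying inside $U_i \setminus (\VC \cup \MIS)$ suffices to recover it in one pass.

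First I would bound the number of edges in $\Gduring{i}$. The number of its vertices is at most $|U_i| = k_i \le 2 n^{1-1/2^{i}}$. For its maximum degree I would invoke~\Cref{lem:residual}: for $1 < i \le \log\log{n}$, with high probability $\Delta(\Gduring{i}) \le 100 \cdot (\nbef{i}/\nbef{i-1}) \cdot \ln{n}$, and since $\nbef{i}/\nbef{i-1} = n^{(1-1/2^{i}) - (1-1/2^{i-1})} = n^{1/2^{i}}$, this is at most $100\, n^{1/2^{i}}\ln{n}$; for the base case $i=1$ (not covered by~\Cref{lem:residual}) we trivially have $\Delta(\Gduring{1}) \le |U_1| = k_1 \le 2n^{1/2}$. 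In either case the handshake lemma gives that $\Gduring{i}$ has at most $\tfrac12 \cdot 2n^{1-1/2^{i}} \cdot 100\, n^{1/2^{i}}\ln{n} = O(n\log{n})$ edges, and I will call this bound $q := O(n\log{n})$.

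Next I would set up the recovery. At the moment Line~\eqref{line:space-bound} starts its pass, the algorithm already knows $U_i$ together with the sets $\VC$ and $\MIS$ accumulated in the previous batches, hence it knows the vertex set $W := U_i \setminus (\VC \cup \MIS)$. I would therefore run the deterministic single-pass sparse-recovery algorithm of~\Cref{prop:sparse-recovery} on the edge-indicator vector $\phi$ of $G$ restricted to the (at most $\binom{|W|}{2} \le \binom{n}{2}$) pairs inside $W$ --- processing a stream update only when both its endpoints lie in $W$ --- with sparsity parameter $q$. Since the length of $\phi$ is $O(n^2)$ and its entries are bounded by the $\poly(n)$ stream length, the space used by~\Cref{prop:sparse-recovery} is $O(q \cdot \log{n}) = O(n\log^2{n})$ bits. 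Conditioning on the high-probability event (bounded in the previous paragraph) that $\phi$ is $q$-sparse, the recovery outputs $\Gduring{i} = G[W]$ exactly; one may additionally run the randomized $q$-sparsity test of~\Cref{prop:sparse-recovery} within the same space budget to flag the rare failure. A union bound over the $\log\log{n}$ batches preserves a high overall success probability.

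The only mildly delicate point --- hence the main obstacle --- is the edge-count estimate: one must verify that the residual-degree bound of~\Cref{lem:residual} telescopes against $|U_i|$ so that their product is $O(n\log{n})$ uniformly in $i$, and separately dispatch the $i=1$ batch, which~\Cref{lem:residual} does not cover. Everything else is a routine invocation of the sparse-recovery primitive together with the fact that $W$ is known to the algorithm before the pass begins.
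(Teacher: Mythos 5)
Your proof is correct and follows essentially the same route as the paper: bound the edge count of $\Gduring{i}$ by combining $|U_i| \le \nbef{i}$ with the residual-sparsity degree bound of~\Cref{lem:residual} to get $O(n\log n)$ edges, then recover $G[U_i \setminus (\VC \cup \MIS)]$ with the deterministic sparse-recovery sketch of~\Cref{prop:sparse-recovery} using $O(n\log^2 n)$ bits. The only cosmetic difference is the $i=1$ case, which the paper dispatches by storing explicit counters for all $\binom{k_1}{2} = O(n)$ pairs rather than via the trivial degree bound, but both are equivalent.
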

\begin{proof}
	The statement holds vacuously for $i = 1$, because the total number of vertices in $G_1$ is $k_1 = n_1 = 2\sqrt{n}$; hence, we can store a counter between all pairs of vertices in $G_1$ 
	during the stream and recover all the edges in $O(n \log n)$ space trivially. 
	
	Consider each batch $1 <i \leq \log\log{n}$ of the algorithm. By~\Cref{lem:residual}, with high probability, we can bound the maximum degree of the graph $\Gduring{i}$. We 
	We also know that the total number of vertices stored in batch $i$ is $k_i \leq \nbef{i} =  2n^{1-1/2^i}$. Thus, the total number of edges in $\Gduring{i}$ is at most,
	\begin{equation}\label{eq:dummy-2}
		\nbef{i} \cdot \Delta(\Gduring{i}) \leq 100 \cdot (\nbef{i}^2/\nbef{i-1})\cdot \ln{n}  = 100 \cdot (4 n^{2-2/2^i} / 2 n^{1-2^{i-1}}) \cdot \ln{n} = 200 \cdot n \cdot \ln{n},
	\end{equation}
	where the first inequality is by~\Cref{lem:residual} and the second equality is by the choice of $\nbef{i}$ and $\nbef{i-1}$.

	We now run a sparse recovery algorithm to recover all edges of $G[U_i \setminus (\VC \cup \MIS)]$ in Line~\eqref{line:space-bound}. Specifically, define $\phi \in \set{0,1}^m$ to be 
	the indicator vector of edges of this subgraph. Given that the algorithm explicitly stores $U_i, \VC,$ and $\MIS$, 
	we can define $\phi$ on the fly when seeing the updates to the edges of $G$. 
	
	The total number of non-zero elements of $\phi$ is $O(n \log n)$, by \Cref{eq:dummy-2}. Thus, we can run the deterministic sparse-recovery of \Cref{prop:sparse-recovery} with a single pass 
	over $\phi$, and the choice of $q = O(n\log{n})$ using $O(n\log^2{n})$ bits of space. This concludes implementation of Line~\eqref{line:space-bound}. 
\end{proof}

\subsubsection{Finding Time Stamps in~\Cref{alg:time-stamp}}\label{sec:find-time-stamps}

We now show how to find the time stamps in Line~\eqref{line:time-stamp2} of~\Cref{alg:time-stamp}. We devise the following algorithm for each batch $i \in [\log\log{n}]$. 
Note that at this point, the algorithm has computed $\MIS$ and their time stamp for the current batch $i$ but have not done so for $\VC$ and that is the task of the following algorithm.

	\begin{Algorithm}\label{alg:clm-find-time} 
	For implementing Line~\eqref{line:time-stamp2} of~\Cref{alg:time-stamp} in each batch $i \in [\log\log{n}]$. 
	
	\begin{enumerate}
		\item Partition the set $U_i$ into $b = \log{k_i}$ groups based on geometrically increasing sizes, 
		namely, for every $j \in [b]$, $U_{i,j}$ contains the next $2^j$ elements of $U_i$ in the order of permutation $\sigma$:\footnote{The last group $b$ may contain less than $2^b$ elements, as it only has all the remaining elements of $U_i$.}
		\[
			U_{i,j} = \set{u_{2^j-1}, u_{2^i}, \ldots, u_{2^{j+1}-2}}.
		\]
		\item For every $v \in V$ and $j \in [b]$, define the vector $\phi(v,j) \in \set{0,1}^V$ as the indicator vector of $N(v) \cap U_{i,j} \cap \MIS$. 
		Note that by the end of Line~\eqref{line:for-loop-2} of~\Cref{alg:time-stamp}, the set $\MIS \cap U_{i,j}$ is known and each update $(u,v)$ to the dynamic stream for edges of the input graph $G$ can be used to also update the vectors $\phi(u,j)$ and $\phi(v,j)$ for all $j \in [b]$. 
		
		\item Let $q := 200\ln{n}$ and for every $v \in V$ and $j \in [b]$, run a randomized $q$-sparse recovery algorithm of~\Cref{prop:sparse-recovery} on $\phi(v,j)$ and test if $\phi(v,j)$ is $q$-sparse or not, using $\delta = 1/n^{200}$. 
		
		\item Let $j \in [b]$ be the smallest index such that~\Cref{prop:sparse-recovery} declares $0 < \norm{\phi(v,j)}_0 \leq q$ and returns $\phi(v,j)$; let $u \in \MIS$ be the vertex with the minimum $t(u)$ in the support of $\phi(v,j)$. 
		
		Return $(u,t(u))$ as the choice for the vertex $v$ in Line~\eqref{line:time-stamp2} of~\Cref{alg:time-stamp} (if no such $j$ is found for $v$, return no $u$ exists for $v$). 
	\end{enumerate}
\end{Algorithm}

\begin{lemma}\label{lem:find-time}
	With high probability, \Cref{alg:clm-find-time} finds the time-stamps in 
	Line~\eqref{line:time-stamp2} of~\Cref{alg:time-stamp} and can be implemented in $O(n\log^3{n})$ space. 
\end{lemma}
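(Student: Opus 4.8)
The plan is to prove two things: that with high probability \Cref{alg:clm-find-time} returns, for every vertex $v$ not yet settled at the start of batch $i$, the vertex $u\in N(v)\cap\MIS$ with the smallest value of $t(u)$ (together with $t(u)$), as Line~\eqref{line:time-stamp2} of \Cref{alg:time-stamp} requires; and that it fits in $O(n\log^3 n)$ bits. The first part rests on a \emph{group-level residual sparsity} statement: writing $j^\star(v)$ for the smallest index $j\in[b]$ with $N(v)\cap U_{i,j}\cap\MIS\neq\emptyset$, with high probability $\|\phi(v,j^\star(v))\|_0=|N(v)\cap U_{i,j^\star(v)}\cap\MIS|\le q=200\ln n$ simultaneously for every vertex $v$ and every batch $i$. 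Granting this, everything else in the first part is mechanical. For $j<j^\star(v)$ the vector $\phi(v,j)$ is identically zero, so \Cref{prop:sparse-recovery} returns the zero vector and index $j$ is skipped; at $j=j^\star(v)$ the vector is nonzero and $q$-sparse, so except with probability $1/n^{200}$ its sparsity test accepts and the recovery of \Cref{prop:sparse-recovery} returns $N(v)\cap U_{i,j^\star(v)}\cap\MIS$ exactly. Moreover, the global counter $t$ in Line~\eqref{line:time-stamp1} is incremented in the order of $\sigma$ within each batch and the sub-batches are consecutive blocks of $\sigma$ restricted to $U_i$, so the vertex of $N(v)\cap U_{i,j^\star(v)}\cap\MIS$ with the smallest $t$ returned by \Cref{alg:clm-find-time} is exactly the $\sigma$-earliest $\MIS$-neighbor of $v$ in all of $U_i$; and an induction over batches — a vertex that has an $\MIS$-neighbor from an earlier batch is added to $\VC$ and settled in that batch's Line~\eqref{line:time-stamp2}, hence is no longer processed — shows this is also the minimizer of $t(\cdot)$ over all of $N(v)\cap\MIS$, which is precisely what Line~\eqref{line:time-stamp2} asks for.

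The main obstacle, and the only genuinely new ingredient, is the group-level residual sparsity statement; I would derive it by the same exposure argument that proves \Cref{lem:residual}. Fix $v$, a batch $i$, and $j\in[b]$, and let $m^\star:=\nbef{i-1}+2^{j}-2$ be the number of vertices of $\sigma$ preceding the sub-batch $U_{i,j}$ (a vertex that is still being processed at batch $i$ lies beyond the first $n_i\ge m^\star$ positions of $\sigma$, so it is not among these $m^\star$ vertices). If $2^{j}\le q$ the claim is immediate since $|N(v)\cap U_{i,j}\cap\MIS|\le|U_{i,j}|\le 2^{j}$, so assume $2^{j}>q$, which makes $m^\star$ large. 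Conditioning on the set (but not the internal order) of the first $m^\star$ vertices of $\sigma$, the argument of \Cref{lem:residual} gives that $v$ is still unsettled at the start of $U_{i,j}$ and has residual degree at least $D:=10n\ln n/m^\star$ with probability at most $n^{-10}$, since $v$ survives a step only if the uniformly random vertex picked in that step is not one of $v$'s current neighbors. Given the degree bound, $U_{i,j}$ is a uniformly random subset of size $|U_{i,j}|$ of the $\ge n-m^\star$ surviving vertices, so the expected number of $v$'s residual neighbors landing in $U_{i,j}$ is at most $D\cdot |U_{i,j}|/(n-m^\star)$; using $|U_{i,j}|\le 2^{j}\le 2m^\star$ and $|U_{i,j}|\le n-m^\star$ one checks $\tfrac{n\,|U_{i,j}|}{m^\star(n-m^\star)}=O(1)$ for all $i$ and $j$, so this expectation is $O(\ln n)$. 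Since these indicators are negatively correlated (sampling without replacement) and $N(v)\cap U_{i,j}\cap\MIS$ is contained in $v$'s residual neighbors landing in $U_{i,j}$, \Cref{prop:chernoff} bounds the probability the count exceeds $q=200\ln n$ by $n^{-\Omega(1)}$. A union bound over the at most $n\log^2 n$ triples $(v,i,j)$ proves the statement, and intersecting also with the $1/n^{200}$-error events of the $O(n\log n)$ sparsity testers keeps the total failure probability $o(1)$.

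The space bound is routine. \Cref{alg:clm-find-time} maintains, for each $v\in V$ and each $j\in[b]$ with $b=\log k_i\le\log n$, one $q$-sparse-recovery sketch and one randomized $q$-sparsity tester on $\phi(v,j)\in\{0,1\}^V$; by \Cref{prop:sparse-recovery} with $q=200\ln n$ and $\delta=1/n^{200}$ each of these occupies $O(q\log n+\log(n/\delta))=O(\log^2 n)$ bits, and every stream update can be applied to all of them once we store $\MIS$, the permutation $\sigma$, and the predetermined partition of each $U_i$ into sub-batches, which together take $O(n\log n)$ bits. Summing $O(\log^2 n)$ over the $O(n\log n)$ pairs $(v,j)$ yields $O(n\log^3 n)$ bits, and on the high-probability event identified above the output is correct, which completes the proof.
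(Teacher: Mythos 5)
Your proof is correct, and its overall architecture mirrors the paper's: reduce everything to a sub-batch-level residual sparsity statement, then observe that the rest (skipping the all-zero prefixes, recovering the first non-empty $\phi(v,j)$, taking the $\sigma$-minimum, and the $O(n\log^3 n)$ accounting) is mechanical. The one place where you genuinely diverge is in how that sparsity statement is formulated and proved. The paper's \Cref{clm:finding-label-min} is a \emph{conditional, consecutive-sub-batch} claim --- ``if $\|\phi(v,j-1)\|_0=0$ then $\|\phi(v,j)\|_0\le q$'' --- proved by a case analysis on $v$'s degree \emph{within the stored batch} $G_{i,j}$ relative to the sampling rate $p_j$: either that degree is small and Chernoff bounds the count in $U_{i,j}$, or it is large and then whp some neighbor already landed in $U_{i,j-1}$, contradicting the hypothesis. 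You instead prove the unconditional statement that whenever $v$ survives to the start of $U_{i,j}$ its \emph{full} residual degree is at most $\approx n\log n/m^\star$ (the exposure argument of \Cref{lem:residual} pushed down to sub-batch granularity), and then bound the count in $U_{i,j}$ by negative-correlation Chernoff. Your version is slightly stronger (every surviving sub-batch is sparse, not just the first non-empty one) and avoids the two-case analysis, at the cost of having to verify that $n\,|U_{i,j}|/(m^\star(n-m^\star))=O(1)$ across all scales, which you do. Both routes rest on the same machinery (\Cref{prop:chernoff} for sampling without replacement), so I would call this a clean variant rather than a new idea.

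One small imprecision worth fixing: you cannot run the survival/degree exposure argument \emph{after} conditioning on the set of the first $m^\star$ vertices of $\sigma$ --- for an adversarial such set (e.g., one disjoint from $N(v)$) the probability that $v$ survives with large residual degree is not small, since each step then picks uniformly from the remaining part of that set rather than from all unplaced vertices. The fix is standard and implicit in what you wrote: bound the event ``$v$ survives to $U_{i,j}$ with residual degree $\ge D$'' \emph{unconditionally} by exposing the first $m^\star$ positions one at a time (each uniform over all not-yet-placed vertices, and every unsettled neighbor of $v$ is still unplaced, so each step settles $v$ with probability at least $D/n$), and separately, for the Chernoff step, condition on the full ordered history of the first $m^\star$ positions (under which $U_{i,j}$ is indeed a uniform $2^j$-subset of the remaining $n-m^\star$ vertices) restricted to histories where $v$ survives with degree below $D$. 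With the two conditionings kept separate, the union bound over the $O(n\log^2 n)$ triples goes through exactly as you state.
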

\begin{proof}
We will show that for any vertex $v \in V \setminus \MIS$, \Cref{alg:clm-find-time} finds the required time-stamp $t(v)$ with high probability.
	
 Consider the smallest index $\jstar \in [b]$ where $\norm{\phi(v,\jstar)}_0 > 0$; if no such $\jstar$ exists then $v$ is not a neighbor to any vertex in $\MIS$ and thus will not be added to $\VC$. Otherwise, 
	the neighbor $u$ of $v$ in $\MIS$ with the smallest value of $t(u)$ belongs to $U_{i,\jstar}$. Thus, returning such $u$ is the correct answer. 

Let us see how the algorithm performs for any $j \in [b]$, and enumerate the sources of error.
	\begin{enumerate}[label=$(\roman*)$]
	 \item For any $j < \jstar$, we know that $\norm{\phi(v, j)}_0 = 0$, but the randomized sparse-recovery may return that $\norm{\phi(v,j)}_0 > q$ with probability at most $\delta$. 
	 \item For $j = \jstar$, it may be the case that $\norm{\phi(v, j)}_0 > q$ or if $\norm{\phi(v,j)}_0 \leq q$, but randomized sparse-recovery returns otherwise. 
	\end{enumerate}
	
	Firstly, let us condition on the fact that randomized sparse-recovery performs correctly and retrieves $\phi(v,j)$ for all $v \in V \setminus \MIS$ and $j \in [b]$. 
	Using union bound over all vertices and $j \in [b]$, we get that this event happens with probability at least $1-n^{-198}$ for the choice of $\delta = n^{-200}$.

	Now, we only need to bound the probability that the case $\norm{\phi(v, \jstar)_0} > q$ happens.  We do so using the following intermediate claim.

	%Finally, using the randomized test of the sparse recovery in~\Cref{prop:sparse-recovery}, we ensure that with high probability, smaller values of $j' < j$ cannot lead to a wrong answer. 
\begin{claim}\label{clm:finding-label-min}
		For any $j > 1$ and any vertex $v \in V \setminus \MIS$, if $\norm{\phi(v,j-1)}_0 = 0$, then $\norm{\phi(v,j)}_0 \leq q$ with high probability. 
\end{claim}
\begin{proof}
	 Let us fix the vertices $U_i$ which are picked to be in $\sigma$, but not the order in which they are processed, i.e., we know the $k_i$ vertices that belong to $U_i$ but not the vertex groupings of $U_{i,j}$ for $j \in [b]$.
	We define some useful notation to prove this claim. 
	\begin{align*}
	G_{i,j} &: \text{the graph $G[U_i \setminus (\VC \cup \MIS)]$ at the beginning of sampling $U_{i,j}$, where, $\MIS \cup \VC$ } \\
	&\text{\hspace{4mm}include all the vertices in $U_i$ settled by  all the vertices in $\sigma$ till $U_{i,j-1}$};\\
	d_{j}(v)&:\text{the degree of $v$ in $G_{i,j}$}; \\
	k_{i,j}&:k_i - (2^1 + 2^2 + \ldots + 2^{j-1}), \text{ the number of vertices left in $U_i$ after}\\
	&\text{\hspace{4mm}$U_{i,1}, U_{i,2}, \ldots, U_{i,j-1}$ are sampled} 
	\end{align*}

 For any $j < b$ with $2^j \leq k_{i,j}$ (group $j$ is not the last group), we can think of picking $U_{i,j}$ as sampling $2^j$ vertices from the $k_{i,j}$ vertices of $U_i \setminus (U_{i,1} \cup U_{i,2} \cup \ldots \cup U_{i,j-1})$ without replacement. Each vertex  is picked in $U_{i,j}$ with probability $p_j$, where,
	\begin{equation}\label{eq:dummy-pj}
	p_j := 2^j/k_{i,j}.
	\end{equation}

	Suppose $d_j(v) < 100\ln{n}/p_j$. Then, in expectation, we know that there are at most $100 \ln n$ neighbors of $v$ in $G \setminus (\MIS \cup \VC)$. We use Chernoff bound for negatively correlated random variables in \Cref{prop:chernoff} (because of sampling without replacement) to argue that 
	with high probability, we cannot have more than $200\ln{n} = q$ neighbors of $v$ in $G \setminus (\MIS \cup \VC)$ that are sampled in $U_{i,j}$; even if all of those neighbors join $\MIS$, 
	we will still have that sparsity of $\phi(v,j)$ is at most $q$ as desired. 
	
	Now we need to argue the case when $d_j(v) \geq 100\ln{n}/p_j$. This implies that, 
	\begin{align*}
	d_{j}(v)& \geq 100\ln{n}/p_j  \tag{as the set $\MIS \cup \VC$ only grows} \\
	 &= 100\ln{n} \cdot \frac{k_{i,j}}{2^j} \tag{by value of $p_j$ in \Cref{eq:dummy-pj}}\\
	 &= 50 \ln{n} \cdot \frac{k_{i,j}}{2^{j-1}} \\
	 &= 50 \ln{n} \cdot \frac1{p_{j-1}} \cdot \frac{k_{i,j}}{k_{i,j-1}}. \tag{given that $p_{j-1} = 2^{j-1}/k_{i,j-1}$} \\
	 &= 50 \ln{n} \cdot \frac1{p_{j-1}} \cdot \frac{k_{i,j-1}-2^{j-1}}{k_{i,j-1}} \tag{by definition of $k_{i,j}$} \\
	 &= 50 \ln{n} \cdot \frac1{p_{j-1}} \cdot (1-p_{j-1}) \\
	 &\geq 25 \ln{n} \cdot (1/p_{j-1}). \tag{$k_{i,j-1} \geq 2^j + 2^{j-1} \geq 2 \cdot 2^{j-1}$, we get $p_{j-1} \leq 1/2$}
	\end{align*}
When group $j$ is the last group with $2^j \geq k_{i,j}$ (here $p_j = 1$), we have,
\[
	d_{j}(v) \geq 100 \ln{n} \geq 25 \ln {n} \cdot (1/p_{j-1}),
\]
 where we have used that $k_{i,j-1} < 2^{j-1} + 2^j < 4 \cdot 2^{j-1}$, and  $p_{j-1} \geq 1/4$.

We have argued that $d_{j}(v) \geq 25 \ln{n}/p_{j-1}$ in all cases. Among these $d_j(v)$ vertices, none of them are sampled in $U_{i,j-1}$ as $v$ is not settled in $U_{i,j}$. However, in expectation, $25 \ln {n}$ of these vertices must have been sampled in $U_{i,j-1}$. 
We use the same argument as in~\Cref{lem:residual} to say that at least one of these $d_j(v)$ vertices will be sampled in $U_{i,j-1}$ and $v$ will have a neighbor in $\MIS$ already inside $U_{i,j-1}$, with high probability. This contradicts our assumption that $\norm{\phi(v,j-1)}_0 = 0$.

	Thus, in this case $j$ cannot be the first index with non-empty support in $\phi(v,j)$. \Qed{clm:finding-label-min}
	
\end{proof}
	
	To finalize the proof, we have that if $\jstar = 1$, $U_{i,1}$ has only 2 vertices and $\phi(v, 1)$ is $2$-sparse; for any $\jstar>1$, we use \Cref{clm:finding-label-min} to argue that for any vertex $v$, $\phi(v, \jstar)$ is $q$-sparse with high probability. 
	This proves the correctness of the algorithm as argued earlier. 

We can now bound the space of the algorithm. For every $v \in V$, we are maintaining $O(\log{n})$ randomized sparse-recovery algorithms, each for recovering an $O(\log{n})$-sparse vector from a domain of size $\leq n$ and error probability $\delta=1/\poly(n)$; thus, 
by~\Cref{prop:sparse-recovery}, this needs $O(\log^3{n})$ space per vertex, and $O(n\log^3{n})$ space in total, concluding the proof of~\Cref{lem:find-time}. 
\end{proof}

\subsubsection{Finding a Large Matching from~\Cref{alg:time-stamp} via the Reduction of~\Cref{alg:random-order}}

\Cref{alg:time-stamp} allows us to recover the time stamps of all vertices in a single run of the randomized greedy MIS algorithm in $O(\log\log{n})$ passes over a dynamic stream. 
We now use this information alongside our reduction of approximate matching to randomized greedy MIS in~\Cref{alg:random-order}, to recover a large matching from the input graph. 

Let us first recall some notation about some variables in \Cref{alg:random-order}: % from~\Cref{eq:all-parameters-t}:
\begin{align*}
	\Gt &: \text{the graph $G$ at iteration $t$}; \\
	\degt{v} &: \text{for each $v \in \Gt$, denotes the degree of $v$ in $\Gt$}; \\
	x_e &:\text{the value given to edge $e$ in $x \in \IR^{E}$ from \Cref{eq:random-order-1}} \\
	y_e &: \text{the value given to edge $e$ in fractional matching $y \in [0,1]^{E}$ from \Cref{eq:final-apx-ye}}.
\end{align*}
As we have stated earlier, given the support of the fractional matching $y$ can be too large, we cannot hope to recover it explicitly. Instead, 
our goal is to \emph{sample} the edges of the graph with probabilities proportional to their $y$-values, and then use the sampled edges to find a large matching. 
Specifically, for every $e \in E$, define: 
\begin{equation}\label{eq:pe-values}
	p_e:= \min(1, y_e \cdot 200 \log n).
\end{equation}
We would like to sample each edge of the graph with probability $p_e$. The challenge however is that we will not be able to actually recover the values of $y$ (or even $x$) explicitly and learn $p_e$'s, and thus need to use a ``proxy'' for them algorithmically.

%Ideally, for each edge $e = (u,v)$, we would like to find the value of $x_e$ and $y_e$ in turn, so we can sample from $p_e$. 
To start with, we have the time stamps of all the vertices. Let us show that this also gives us the degrees of all the vertices at the time in which they are settled. 
\begin{observation}\label{obs:deg-at-removal}
	Given the time stamps $t(v)$ for all $v \in V$, using one pass and $O(n\log{n})$ space, we can find $\degat{t(v)}(v)$ for all vertices $v \in V$. 
\end{observation}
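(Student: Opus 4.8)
The plan is to reduce the task to maintaining a single integer counter per vertex during one pass, after first expressing $\degat{t(v)}(v)$ combinatorially in terms of the time stamps alone.

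First I would establish the structural fact that, by \Cref{obs:time-simulates-random} and the definition of time stamps, a vertex $v$ belongs to $\Gat{t}$ (the graph $G$ at iteration $t$ of the while-loop of \Cref{alg:random-order}) if and only if $t \leq t(v)$: the time stamp $t(v)$ is precisely the iteration at which $v$ is removed from the graph, either as the picked vertex $u$ (when $v \in \MIS$) or by joining $\VC$ (when a neighbor of $v$ is picked, using the remark following \Cref{obs:time-simulates-random} that for the analysis we may add a $\VC$-vertex as soon as the responsible $\MIS$-vertex is processed), and once removed a vertex never returns. In particular $v \in \Gat{t(v)}$ since $v$ is removed during, not before, iteration $t(v)$. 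Since $\Gat{t}$ is the subgraph of $G$ induced on the vertices still present at iteration $t$, an edge $(v,w)\in E$ survives into $\Gat{t(v)}$ exactly when $w$ is still present at iteration $t(v)$, i.e.\ when $t(w) \geq t(v)$. Hence
\[
	\degat{t(v)}(v) \;=\; \bigl| \set{ w : (v,w) \in E,\ t(w) \geq t(v) } \bigr|.
\]

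Next I would give the one-pass algorithm. Assuming the time stamps $t(v)$ for $v \in V$ are stored (which costs $O(n\log n)$ bits since each lies in $[n]$), initialize a counter $c_v = 0$ for every $v \in V$. On each stream update to an edge $(u,w)$: if $t(w) \geq t(u)$ then add $+1$ for an insertion or $-1$ for a deletion to $c_u$, and symmetrically if $t(u) \geq t(w)$ update $c_w$ in the same way; when $t(u)=t(w)$ both $c_u$ and $c_w$ are updated. The net contribution of a vertex pair $(v,w)$ to $c_v$ is $1$ if $(v,w)$ is an edge of the final graph $G$ and $0$ otherwise, so at the end of the pass $c_v = |\set{ w : (v,w)\in E,\ t(w)\geq t(v)}| = \degat{t(v)}(v)$. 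Each $c_v$ stays in $[0,n-1]$ and thus fits in $O(\log n)$ bits, for $O(n\log n)$ space overall in a single pass.

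There is no substantial obstacle here; the only points needing care are the tie case $t(u)=t(w)$ — which occurs exactly when $u\in\MIS$ and $w$ is a neighbor added to $\VC$ because of $u$ (or vice versa), in which case the edge $(u,w)$ is genuinely present in $\Gat{t(u)} = \Gat{t(w)}$ and must be credited to both endpoints — and the routine fact that a counter correctly tracks the net multiplicity of an edge under arbitrary interleavings of its insertions and deletions in a dynamic stream.
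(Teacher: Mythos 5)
Your proof is correct and follows the same approach as the paper: reduce $\degat{t(v)}(v)$ to counting neighbors $w$ with $t(w)\geq t(v)$ and maintain one counter per vertex over the pass. (One immaterial slip: ties $t(u)=t(w)$ also arise when both $u$ and $w$ join $\VC$ in the same iteration, not only when one of them is the $\MIS$ vertex, but your handling of ties is correct either way.)
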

\begin{proof}
	For any vertex $v$, $\degat{t(v)}(v)$ only includes the edges of $v$ to vertices $w \in V$ which have $t(w) \geq t(v)$. We know the labels of all vertices explicitly, so it is easy to count the total number of edges from $v$ to vertices with labels after $v$
	using a counter per vertex. 
\end{proof}

Equipped with~\Cref{obs:deg-at-removal}, we can assume we also have the remaining degree of every vertex $v \in V$ at the time it is settled. This fixes the value $x$ will assign to the edges of $v$. 
The problem however is that we still do not have sufficient information to perform the check in Line~(\ref{line:w-increase}) of~\Cref{alg:random-order} to know which incident edges of $v$ receive
a non-zero fractional matching, i.e., which edges $(v,w)$ satisfy $\deg(w) \leq \deg(v)$; this is because for this check, we need to know the degree of $w$ at the time $v$ is being settled not $w$ itself. 

We side step this issue in the following by defining an intermediate assignment $z \in \IR^E$ which we can explicitly find for any pair of vertices. 
To formally define the vector $z$, we describe the notion of assigning a vertex pair to one of the vertices. 

\begin{Definition}[Assignment of vertex pairs to vertices]\label{def:assign-edges} For any pair of vertices $u,v$,  we say the pair $u,v$ is assigned to vertex $u$, denoted by $\verte(u,v)$, iff:
\begin{enumerate}[label=$(\roman*)$]
\item \label{line:assign-1}  $u$ is settled before vertex $v$, i.e., $t(u) < t(v)$, or else,
\item  \label{line:assign-2} $u, v$ are removed at the same time with $t(u) = t(v)$ and $u \in \VC$ while $v \in \MIS$, or else,
\item  \label{line:assign-3} $t(u) = t(v)$, $u, v \in \VC$, and $\degat{t(u)}(u) \geq \degat{t(v)}(v)$ (breaking the ties between $u$ and $v$ in case of the equality consistently, say, based on whichever appear in $\sigma$ first).
\end{enumerate}
\end{Definition}

Note that by~\Cref{def:assign-edges}, every pair $u,v$ of vertices, regardless of whether or not is an edge, is assigned to one of its endpoints. 
We then define $z_{uv}$ for a pair $u,v\in V$ as, 
\begin{equation}\label{eq:final-z}
	z_{uv} := \frac{200 \log n}{\degat{t(\verte(u,v))}(\verte(u,v))};
\end{equation}
that is $z_{uv}$ is proportional to the inverse of the degree of its assigned vertex, at the time this assigned vertex was settled. 
When vertices $u,v$ do contain an edge $e = (u,v)$ in the input graph, we use $z_e$ and $\verte(e)$ also to denote $z_{uv}$ and $\verte(u,v)$ interchangeably. 

We observe that for any vertex pair $u,v$, the value of $z_{uv}$ can be found easily.
\begin{observation}\label{obs:find-z}
	For any vertex pair $u,v$, the value of $z_{uv}$ can be determined given the information collected by the algorithm.  
\end{observation}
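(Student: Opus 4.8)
The plan is to unwind \Cref{def:assign-edges} and \eqref{eq:final-z} and simply check that every ingredient they reference is among the data the algorithm has already stored. Recall that after running \Cref{alg:time-stamp} together with its time-stamp subroutine \Cref{alg:clm-find-time}, the algorithm explicitly knows, for every vertex $w \in V$: its time stamp $t(w)$, the membership flag recording whether $w \in \MIS$ or $w \in \VC$, and (since it sampled $\sigma$ itself) the position of $w$ in $\sigma$. First I would invoke \Cref{obs:deg-at-removal}, which says that with one extra pass and $O(n\log n)$ space the algorithm additionally stores $\degat{t(w)}(w)$ for every $w$, i.e., the degree of $w$ in the graph $\Gt$ at the iteration $t = t(w)$ in which $w$ was settled.

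Next I would argue that $\verte(u,v)$ is a function of this stored data, via a direct case analysis over \Cref{def:assign-edges}: compare $t(u)$ and $t(v)$; if they differ, the pair is assigned to the endpoint with the smaller time stamp (case~\ref{line:assign-1}); if $t(u) = t(v)$ and exactly one of $u,v$ lies in $\MIS$, the pair is assigned to the other one (case~\ref{line:assign-2}); and if $t(u) = t(v)$ with $u,v \in \VC$, the pair is assigned to whichever of $u$ and $v$ has the larger degree at its own settle time, i.e., the larger value among $\degat{t(u)}(u)$ and $\degat{t(v)}(v)$, with ties broken by the order in $\sigma$ (case~\ref{line:assign-3}). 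Every comparison here uses only the stored time stamps, membership flags, settle-time degrees, and $\sigma$, so $\verte(u,v)$ is determined. Finally, once $\verte(u,v)$ is known --- say it equals $u$ --- \eqref{eq:final-z} gives $z_{uv} = 200\log n / \degat{t(u)}(u)$, and the denominator is precisely one of the stored quantities; hence $z_{uv}$ is determined.

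I do not expect any real obstacle: the statement is pure bookkeeping once \Cref{obs:deg-at-removal} is in hand. The only point that deserves an explicit word is that \Cref{def:assign-edges} does assign each pair to exactly one of its endpoints, so that $z_{uv}$ is well-defined --- but this is immediate from the case analysis above, since the three cases are exhaustive and the tie-breaking rule in case~\ref{line:assign-3} is deterministic.
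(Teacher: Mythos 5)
Your proof is correct and follows the same route as the paper: both simply observe that $\verte(u,v)$ depends only on the time stamps, the sets $\VC$ and $\MIS$, the settle-time degrees from \Cref{obs:deg-at-removal}, and the permutation $\sigma$, all of which are stored explicitly, after which \eqref{eq:final-z} determines $z_{uv}$. Your explicit case analysis over \Cref{def:assign-edges} just spells out what the paper leaves implicit.
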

\begin{proof}
	The only information required to find $\verte(e)$ is the time stamps, the sets $\VC$ and $\MIS$, and the degrees at the time of removal, and lastly the permutation $\sigma$, all of which is stored in the memory explicitly (using~\Cref{lem:find-time}
	and~\Cref{obs:deg-at-removal}). 
	Thus, we can find for each vertex pair what $\verte(u,v)$ is, and then, determining $z_{uv}$ is trivial.
\end{proof}

The following claims shows the relevance of $z$-values for us. 

\begin{claim}\label{clm:ye-ze-ub}
	For every edge $e \in E$, $p_e \leq z_e$ where $p_e \in [0,1] $ is from \Cref{eq:pe-values}.
\end{claim}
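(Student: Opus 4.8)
The plan is to reduce the claim to a single identity about the intermediate assignment $x$ produced by \Cref{alg:random-order}: for every edge $e\in E$,
\[
x_e \;\le\; \frac{\beta}{\degat{t(\verte(e))}(\verte(e))} \;=\; \frac{\beta}{200\log{n}}\cdot z_e ,
\]
where the equality is just the definition of $z_e$ in \Cref{eq:final-z}. Granting this, the claim is immediate: the trimming in Line~\eqref{line:update} of \Cref{alg:random-order} only decreases edge values, so $0\le y_e\le x_e$; and since $\min(1,a)\le a$ for every $a\ge 0$,
\[
p_e = \min\!\big(1,\ y_e\cdot 200\log{n}\big) \;\le\; y_e\cdot 200\log{n} \;\le\; x_e\cdot 200\log{n} \;\le\; \beta\, z_e \;\le\; z_e ,
\]
using $\beta < 1/8 < 1$ and $z_e\ge 0$ in the last step. (Here $z_e$ is well defined and positive because for an edge $e$ at most one endpoint can lie in $\MIS$, and \Cref{def:assign-edges} never selects that endpoint, so $\verte(e)\in\VC$; moreover $\verte(e)$ still has $e$ incident at the iteration in which it is settled, so its recorded degree is at least $1$.)

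It therefore remains to prove the displayed inequality for $x$. If $x_e=0$ this is trivial, so assume $x_e>0$ and write $e=(a,b)$. First I would pinpoint the moment $x_e$ is created: let $t:=\min(t(a),t(b))$, the iteration of the while-loop at which $e$ is deleted from the working graph. This is the only iteration at which Line~\eqref{line:w-increase} can modify $x_e$: the value is written only when an endpoint of $e$ joins $\VC$, which happens at that endpoint's time stamp, hence at some iteration $t'\in\{t(a),t(b)\}$, so $t'\ge t$; and it can only be written while $e$ is still present, so $t'\le t$; thus $t'=t$. Then I would run a short case analysis on how $a,b$ leave the graph at iteration $t$, aligning each case with a clause of \Cref{def:assign-edges}: \textit{(i)} exactly one of them, say $a$, joins $\VC$ at iteration $t$ while $t(b)>t$ — then Line~\eqref{line:w-increase}, applied with $v=a$, sets $x_e=\beta/\degt{a}$ precisely when $\degt{b}\le\degt{a}$, and clause~\ref{line:assign-1} makes $\verte(e)=a$; \textit{(ii)} one of them, say $a$, is the vertex picked into $\MIS$ (so $t(a)=t(b)=t$ and $b\in N(a)\subseteq\VC$) — then the value, if nonzero, is $x_e=\beta/\degt{b}$ (assigned while processing $b$), and it is nonzero exactly when $\degt{a}\le\degt{b}$, which is the content of clause~\ref{line:assign-2}, so $\verte(e)=b$; \textit{(iii)} both of them join $\VC$ at iteration $t$ (so $t(a)=t(b)=t$) — then $x_e=\beta/\max(\degt{a},\degt{b})$, and clause~\ref{line:assign-3} selects $\verte(e)$ to be the endpoint attaining that maximum (equal-degree ties being broken consistently, in which case both $z_e$ and $x_e$ are unambiguous). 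In every case $v^{\star}:=\verte(e)$ is the endpoint the algorithm still sees at iteration $t$ with the (weakly) larger current degree, and since $v^{\star}$ is settled at time $t$ we have $\degt{v^{\star}}=\degat{t(v^{\star})}(v^{\star})$; hence $x_e=\beta/\degat{t(v^{\star})}(v^{\star})$, as required.

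The main obstacle is purely the bookkeeping in this case analysis. Two points need care: first, confirming that $x_e$ is written exactly once, at iteration $t$, and is never overwritten afterwards (the edge is removed immediately after iteration $t$); and second, verifying that the endpoint the edge is "charged" to is always the one still visible to the algorithm at iteration $t$ — the other endpoint may record a strictly smaller degree at a later time — and, in particular, that an edge incident to an $\MIS$ vertex receives a nonzero $x$-value only when the $\VC$-endpoint has the larger current degree, which is exactly what clause~\ref{line:assign-2} of \Cref{def:assign-edges} was designed to encode. Once these correspondences are pinned down, the identity $x_e\cdot 200\log{n}=\beta z_e$ drops out and the claim follows from the inequality chains above; I expect this to be at most a page of careful but routine verification.
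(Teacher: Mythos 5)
Your proposal is correct and follows essentially the same route as the paper's proof: reduce to the bound $x_e \leq \beta/\degat{t(\verte(e))}(\verte(e))$ via $p_e \leq y_e \cdot 200\log n \leq x_e \cdot 200\log n$, and establish that bound by a case analysis on how the two endpoints are settled, matching each case to the corresponding clause of \Cref{def:assign-edges}. The extra bookkeeping you flag (that $x_e$ is written at most once, and that $\verte(e)$ lands in $\VC$ with positive recorded degree) is sound and consistent with the paper's argument.
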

\begin{proof}
	It is sufficient to show that $x_e \cdot 200\log n \leq z_e$ given that $y_e \leq x_e$ always ($y$ is obtained from $x$ by reducing some of its values) and by the definition of $p_e$ based on $y_e$ in~\Cref{eq:pe-values}. 
	
	For any edge $e = (u,v) \in E$, the value of $x_e $ is updated in at most one iteration of the loop in Line~\eqref{line:for-loop-2} of \Cref{alg:random-order}, i.e., the iteration where either of $u$ or $v$ are settled. We will argue that
	\[
	x_e \leq \frac{\beta}{\degat{t(\verte(e))}(\verte(e))}.
	\]
	We have multiple cases to consider:
	\begin{itemize}
		\item When $v$ is settled (strictly) before $u$, that is $t(v) < t(u)$, we have that $\verte(e) = v$ from Line~\ref{line:assign-1} of~\Cref{def:assign-edges}. We know that $x_e$ is set as $\beta/\degat{t(v)}(v)$ if $\degat{t(v)}(u) \leq \degat{t(v)}(v)$,  otherwise $x_e$ is set as zero. In either case, $x_e $ is less than $\beta/\degat{t(\verte(e))}(\verte(e))$. This is similarly true when $u$ is settled (strictly) before $v$. 
		\item When $v$ and $u$ are settled at the same time, with $t(u) = t(v)$, and if $u$ goes to $\MIS$ at this iteration, then $v \in \VC$, and $x_e$ is again updated as $\beta/\degat{t(v)}(v)$. We have that $\verte(e) = v$ by Line~\ref{line:assign-2} of~\Cref{def:assign-edges} and can apply the above reasoning.
		\item When both $u,v $ go to $\VC$ and have the same timestamps, the value of $x_e$ is set as $\beta/\degat{t(v)}(v)$ if $v$ is the higher degree vertex among $u,v$ at time $t(u) = t(v)$. Again, we have that $\verte(e) = v$ by Line~\ref{line:assign-3} of~\Cref{def:assign-edges} and can apply the above reasoning.
	\end{itemize}
	The claim follows by observing that, 
	\[
	x_e \cdot 200 \cdot \log n \leq \frac{\beta \cdot 200 \cdot \log n}{\degat{t(\verte(e))}(\verte(e))} \leq \frac{200 \cdot \log n}{\degat{t(\verte(e))}(\verte(e))} =  z_e,
	\]
	where we have used that $\beta < 1$.
\end{proof}

Given~\Cref{clm:ye-ze-ub}, for algorithmic purposes, we can instead sample the edges with probability proportional to $z$-values, which we \emph{can} compute. 
We now need to check this can be done efficiently in dynamic streams (and that the size of the sampled edges are not too large, given $z$ is  \emph{not} a fractional matching).
Our sampling is done in two steps: we first sample \emph{all} pairs $u,v$ based on their value $z_{u,v}$ to obtain a vector $\phi_z \in \IR^{{{V}\choose{2}}}$ and then maintain a sparse-recovery algorithm over $\phi_z$ to 
recover the actual edges inside this sample. Furthermore, we need to sample $\phi_z$ itself with \emph{limited independence hash functions} (defined in~\Cref{sec:concentration}) 
and work with $\phi_z$ \emph{implicitly} given that its size is larger than the allowed space. We now formalize this: 

\paragraph{Sampling process.} Let $\kappa := 200\log{n}$ and sample a $\kappa$-wise independent hash function from \Cref{def:k-wise-ind-hash} with domain of the set of vertex pairs and the range of $[0,1]$%
\footnote{\label{footnote:hash} For the simplicity of exposition, we let the range of the hash function to be the interval $[0,1]$. For algorithmic purposes, one needs to further discretize this range to integers. Specifically, use the range as $\{1, 2, \ldots, \poly(n)\}$ for some large polynomial in $n$ and consider a mapping of $[0,1]$ to this range in a standard way. The distribution when sampling from the hash family with this large discrete domain varies from the range $[0,1]$ 
by at most $1/\poly(n)$ in total variation distance, and this negligible difference can be added to the probability of error of the algorithm.}:
\begin{equation}\label{eq:hash-family-sample}
	h : \binom{n}2 \rightarrow [0,1]. 
\end{equation}
Let $\Phi_z \subseteq {{V}\choose{2}}$ be the set of pairs such that $h(u,v) \leq \min(1,z_{u,v})$ and $\phi_z \in \set{0,1}^{\Phi_z}$ be the incidence vector of $E_z:= E \cap \Phi_z$. This way, the elements in the support of $\phi_z$ 
are obtained from $E$ by sampling each edge $e \in E$, using a $\kappa$-wise independent hash function, with probability $z_e$. 

\medskip

We now analyze this sampling process. The size of $\Phi_z$ may still be too large for us but we show that the total number of \emph{actual edges} sampled, i.e., the support of $\phi_z$ will not be large
and can be found and stored explicitly by the algorithm. 

\begin{claim}\label{clm:edges-storage}
	The set $E_z$ can be found and stored in $O(n \log^2 n)$ space explicitly with high probability. 
\end{claim}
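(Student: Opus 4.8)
The plan is to show that with high probability the sampled edge set $E_z$ has size $O(n\log n)$, and then recover $\phi_z$ (equivalently $E_z$) in one extra pass using the deterministic sparse-recovery primitive of~\Cref{prop:sparse-recovery}; since all the auxiliary data needed is small, the stated space bound will follow.

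To bound $\Exp\card{E_z}$ I would first fix the randomness of the MIS simulation — this fixes all time stamps, the sets $\VC,\MIS$, the removal-time degrees, and hence all values $z_{uv}$ — and regroup the edges of $E$ according to their assigned vertex $\verte(\cdot)$. For a fixed vertex $v$, inspecting the three cases of~\Cref{def:assign-edges} shows that every edge $e=(v,w)$ with $\verte(e)=v$ has $w$ still adjacent to $v$ in $\Gat{t(v)}$ (in case $(i)$ because $t(w)>t(v)$, and in cases $(ii)$--$(iii)$ because $v$ and $w$ are settled in the same iteration), so there are at most $\degat{t(v)}(v)$ such edges, each carrying $z$-value $200\log n / \degat{t(v)}(v)$. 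Hence each vertex contributes at most $200\log n$ to $\sum_e z_e$, so $\sum_{e\in E} z_e \le 200\,n\log n$, and since each edge lands in $E_z$ with probability $\min(1,z_e)\le z_e$, we get $\Exp\card{E_z}\le 200\,n\log n$. This is precisely the step where the degree condition built into~\Cref{def:assign-edges} earns its keep.

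For concentration I would write $\card{E_z}=\sum_{e\in E}\mathbb{I}[h(e)\le \min(1,z_e)]$, where each indicator depends on a distinct coordinate of $h$; since $h$ is $\kappa$-wise independent with $\kappa=200\log n$ and is drawn independently of the MIS randomness, these indicators remain $\kappa$-wise independent even after conditioning on the fixed MIS run. Then~\Cref{prop:chernoff-hash} with $\delta=1$ and $\mu_{max}=200\,n\log n$ gives
\[
\Pr\paren{\card{E_z}\ge 400\,n\log n}\;\le\; \exp\paren{-\min\set{\tfrac{\kappa}{2},\;\tfrac{\mu_{max}}{3}}}\;=\;\exp(-100\log n)\;=\;n^{-100},
\]
which is unconditional since it holds for every realization of the MIS run; hence $\card{E_z}=O(n\log n)$ with high probability.

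Finally, for the implementation and space accounting: we store the hash function $h$, which by~\Cref{prop:storing-sampling-hash} (with a standard discretization of its range) costs $O(\kappa\cdot\log n)=O(\log^2 n)$ space, and by~\Cref{obs:find-z} membership of any pair $(u,v)$ in $\Phi_z$ can be decided from $h$ together with the already-stored time stamps, $\VC$, $\MIS$, removal-time degrees, and $\sigma$. We then make one pass over the stream maintaining $\phi_z\in\set{0,1}^{\binom{V}{2}}$, applying the update for an edge $(u,v)$ only when $h(u,v)\le\min(1,z_{uv})$, while running the deterministic $q$-sparse recovery of~\Cref{prop:sparse-recovery} with $q=\Theta(n\log n)$ (together with its randomized sparsity test to catch the $n^{-100}$-probability overflow event), using $O(q\log n)=O(n\log^2 n)$ space; on the high-probability event $\card{E_z}\le q$ this returns $\phi_z$, hence $E_z$. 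I do not expect a genuine obstacle here: the only non-routine ingredient is the regrouping argument bounding $\sum_e z_e$, and one must just be mildly careful that $\kappa=\Theta(\log n)$ is chosen large enough for~\Cref{prop:chernoff-hash} to yield a polynomially small failure probability while still fitting within the $O(\log^2 n)$-per-hash-function storage budget.
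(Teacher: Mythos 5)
Your proof is correct and follows essentially the same route as the paper's: the key step in both is grouping the edges by their assigned vertex $\verte(\cdot)$, observing that at most $\degat{t(v)}(v)$ edges are assigned to $v$, each carrying $z$-value $200\log n/\degat{t(v)}(v)$, and then recovering $E_z$ via the stored $\kappa$-wise independent hash function together with sparse recovery. The only (immaterial) difference is that the paper applies \Cref{prop:chernoff-hash} separately to each $\card{E_z \cap A(v)}$ and union-bounds over vertices, whereas you apply it once to the total $\card{E_z}$; both bounds are governed by the $\kappa/2$ term and give the same $n^{-100}$ failure probability.
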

\begin{proof}
	We will prove that size of $E_z$ is $O(n\log{n})$ with high probability (over the randomness of $h$ alone). 
	This will be sufficient to prove the claim since we can implicitly maintain the set $\Phi_z$ using the hash function $h$, which itself can be stored in $O(\kappa \cdot \log{n}) = O(\log^2{n})$ space using~\Cref{prop:storing-sampling-hash} (see also~\Cref{footnote:hash}); then, we will store a randomized sparse-recovery of~\Cref{prop:sparse-recovery} for the vector $\phi_z$ and each relevant update on the graph $G$ can be passed to this vector given the implicit access to $\Phi_z$. This way, 
	we can recover  $E_z$, the support of $\phi_z$, in $O(n\log^2{n})$ space with high probability. 
	
	It thus suffices to prove the bound on the size of $E_z$.  For any vertex $v \in V$, let $A(v)$ be the set of edges $e \in E$ that are assigned to $v$, i.e., $\verte(e) = v$. 
	We claim that $A(v) \cap E_z$ is of size $O(\log{n})$ with high probability which implies the bound on $E_z$ immediately since the set $\set{A(v) \mid v \in V}$ covers all edges of the graph. 
	
	Let us fix a vertex $v \in V$. Let $d_v = \degat{t(v)}(v)$ be the degree of $v$ at the time it is settled. 
	We argue that $\card{A(v)} \leq d_v$. This is because any edge $e = (v,w)$ assigned to $v$ satisfies $t(w) \geq t(v)$ and thus belongs to the graph at the time $v$ is settled. 
	If $d_v \leq 200\log{n}$, we are already done, so in the following we assume $d_v > 200\log{n}$. 
	
	For $e \in A(v)$, define the indicator random variable $X_e \in \set{0,1}$ which is $1$ iff $e$ is sampled in $A_z$. By~\Cref{eq:final-z}, we have $\expect{X_e} = 200\log{n}/d_v$
	and thus by the argument above
	\[
		\Exp\card{E_z \cap A(v)} = \card{A(v)} \cdot \frac{200\log{n}}{d_v} \leq d_v \cdot \frac{200\log{n}}{d_v} = 200\log{n}. 
	\]
	Moreover, $\set{Z_e}_{e \in A(v)}$ are chosen via a $\kappa$-wise independent hash function. Thus, by~\Cref{prop:chernoff-hash}, 
	\[
		\Pr\paren{\card{E_z \cap A(v)} \geq 400\log{n}} \leq \exp\paren{-\min\paren{\frac{\kappa}{2}, \frac{2}{3} \cdot 200\log{n}}} \leq n^{-100},
	\]
	by the choice of $\kappa = 200\log{n}$. A union bound over all the vertices concludes the proof.
\end{proof}

We have proved that our sampling process does not violate the space constraint and we can obtain the set $E_z$
as well. We now finalize the argument by showing that $E_z$ contains a large matching in expectation by relating it to the fractional matching $y$ of~\Cref{alg:random-order}.  
This is a combination of the standard fact that sampling a fractional matching leads to a large matching as well (slightly modified to work even with limited-independence sampling), and a rejection sampling argument 
in the analysis to relate sampling with probability $z_e$ to the one with $p_e$ needed for the fractional matching sampling. 

\begin{lemma}\label{lem:good-matching}
	There exists a matching $M$ in the sampled edges $E_z$ with 
	\[
		\Exp\card{M} \geq \paren{\frac{1-8\beta}{1-2\beta}} \cdot \frac{\beta}{10} \cdot \mu(G). 
	\]
\end{lemma}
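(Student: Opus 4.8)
# Proof Proposal for Lemma \ref{lem:good-matching}

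The plan is to produce the matching $M$ inside $E_z$ by first recovering, \emph{in the analysis only}, a sub-sampled copy of the fractional matching $y$ from \Cref{alg:random-order}, and then invoking the standard ``sampling a fractional matching gives an integral matching of comparable size'' argument. The key conceptual point is that the algorithm samples each edge $e$ with probability $z_e$, whereas we only want to keep each edge with probability $p_e = \min(1, y_e \cdot 200\log n) \le z_e$ (by \Cref{clm:ye-ze-ub}). So I would first perform a \textbf{rejection sampling} step: conditioned on $e$ being in $E_z$, retain it independently (using fresh randomness, which is fine since this is purely an analytic device) with probability $p_e / z_e$; call the resulting set $E_p \subseteq E_z$. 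Then each $e\in E$ lands in $E_p$ with probability exactly $p_e$, and moreover the retention decisions inherit the $\kappa$-wise independence of the hash function $h$ (a product of a $\kappa$-wise independent family with independent coins is still $\kappa$-wise independent). Any matching found in $E_p$ is also a matching in $E_z$, so it suffices to lower bound $\Exp|M|$ for a maximum matching $M$ of $E_p$.

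Next I would show $E_p$ contains a large matching in expectation. This is the usual argument: for each vertex $v$, let $Y_v := \sum_{e \ni v} \mathbb{I}[e \in E_p]$ be the number of sampled edges incident to $v$. Since $\sum_{e \ni v} p_e \le 200\log n \cdot \sum_{e \ni v} y_e \le 200 \log n$ (as $y$ is a fractional matching, so $y_v \le 1$), and since the summands are $\kappa$-wise independent with $\kappa = 200\log n$, \Cref{prop:chernoff-hash} gives $\Pr(Y_v \ge c\log n) \le n^{-100}$ for a suitable constant $c$; union bounding, with high probability the graph $(V, E_p)$ has maximum degree $O(\log n)$. On this high-probability event, greedily 2-coloring / Vizing-type reasoning is not even needed — a simple greedy matching argument shows $\mu(E_p) \ge |E_p| / O(\log n)$, because each sampled edge ``blocks'' at most $O(\log n)$ others. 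Meanwhile $\Exp|E_p| = \sum_{e\in E} p_e \ge \sum_{e\in E} \min(1, y_e \cdot 200\log n) \ge 200\log n \cdot \sum_{e \in E} y_e$ when all $y_e \cdot 200\log n \le 1$; in general one has to be slightly careful with edges where $p_e = 1$, but those only help. Combining, $\Exp[\mu(E_p)] \gtrsim \sum_{e\in E} y_e$ up to the $O(\log n)$ factors canceling, i.e., $\Exp[\mu(E_p)] \ge \Omega(1) \cdot \Exp[\sum_e y_e]$. A clean way to handle the conditioning on the max-degree event without it corrupting the expectation is the standard trick: discard all edges at any vertex of degree $> c\log n$ in $E_p$; this removes only an $n^{-99}$ fraction of the expected mass, then greedily match what remains.

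Finally, I would chain this with \Cref{eq:final-apx-ye} (equivalently \Cref{thm:random-order}), which gives $\Exp[\sum_{e\in E} y_e] \ge \big(\frac{1-8\beta}{1-2\beta}\big)\cdot \frac{\beta}{2}\cdot \mu(G)$. Tracking the constant factors — the $\frac{1}{200\log n}$ from the degree bound against the $200\log n$ gain from the sampling, plus the constant loss from the greedy matching bound ($\frac12$) and the rejection-sampling/discarding slack — yields $\Exp|M| \ge \big(\frac{1-8\beta}{1-2\beta}\big)\cdot\frac{\beta}{10}\cdot\mu(G)$, as claimed, with the final constant $\frac{1}{10}$ being a comfortably loose bookkeeping choice.

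The main obstacle I expect is the interaction between the \emph{limited independence} of the hash function and the concentration/greedy-matching step: with full independence one could use a cleaner argument (e.g., that a fractional matching sampled at rate $\Theta(\log n)$ contains a matching of size $\Omega(|y|)$ w.h.p. via a direct Chernoff + alteration argument), but here we must be content with controlling only the \emph{expected} size of the matching and only the per-vertex degree via \Cref{prop:chernoff-hash} (which needs $\delta \ge 1$ and gives the $\min\{\kappa/2, \cdot\}$ form). This is precisely why the lemma statement is about $\Exp|M|$ rather than a high-probability bound, and why $\kappa = 200\log n$ is chosen to match the $\log n$ sampling blow-up. The second delicate point is making sure the rejection-sampling coins are genuinely independent of everything relevant (they are, since they are introduced only in the analysis and the algorithm's behavior does not depend on them), so that the final sample has the right marginal $p_e$ while the matching it certifies still lives in the algorithm's actual output $E_z$.
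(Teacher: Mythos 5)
Your overall skeleton matches the paper's: pass (in the analysis only) from $E_z$ to a set $E_p\subseteq E_z$ in which each edge appears with probability $p_e$, lower bound $\expect{\mu(E_p)}$ in terms of $\card{y}$, and chain with \Cref{thm:random-order}. However, the step that extracts an integral matching from $E_p$ has a genuine gap. You bound $\mu(E_p)\geq \card{E_p}/O(\log n)$ via the certified max-degree bound and then claim $\expect{\card{E_p}}\gtrsim 200\log n\cdot\sum_e y_e$, dismissing the edges with $p_e=1$ with ``those only help.'' They do not: an edge with $y_e\geq 1/(200\log n)$ contributes only $1$ to $\expect{\card{E_p}}$ but $200\log n\cdot y_e\geq 1$ to the target, so on such edges there is no $200\log n$ amplification left to cancel the $O(\log n)$ degree divisor. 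Concretely, if $G$ is itself a perfect matching, \Cref{alg:random-order} assigns $y_e=\beta$ to every edge, so every $p_e=1$, $E_p=E$, and $\card{E_p}=n/2$, while the only degree bound your proof certifies is $O(\log n)$; the resulting bound $n/O(\log n)$ is a $\Theta(\log n)$ factor short of the $\Omega(n)$ the lemma demands. Such heavy edges genuinely occur (any vertex settled with residual degree $1$ puts $x_e=\beta$ on its edge), so this is not a corner case.

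The paper avoids this by replacing the count-over-degree argument with an explicit fractional matching on $E_p$: set $y^*_e=y_e$ for sampled edges with $y_e\geq 1/(200\log n)$ and $y^*_e=y_e/p_e=1/(200\log n)$ for sampled light edges, so that $\expect{y^*_e}=y_e$; show via \Cref{prop:chernoff-hash} that each vertex is incident to at most $400\log n$ sampled light edges w.h.p., whence $\sum_{e\ni v}y^*_e\leq 3$ for all $v$ and $y^*/3$ is a fractional matching; then apply the $2/3$ integrality gap of \Cref{fact:fractional-to-integral-matching}. The heavy edges are thus handled through the feasibility of $y$ itself rather than through edge counting. Your argument could be repaired by splitting the edges into the heavy and light classes and treating the heavy class with the integrality gap directly, but as written it does not close. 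Your rejection-sampling device with fresh coins is fine and is equivalent to the paper's cleaner choice of thresholding the same hash value at $p_e$ instead of $z_e$.
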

\begin{proof}
	Fix a choice of random ordering $\sigma$ in~\Cref{alg:random-order} and its simulation in~\Cref{alg:time-stamp} and let $y \in \IR^E$ be the fractional matching defined by this process. 
	Let $R$ denote the entire randomness of these algorithms, which we condition on in the following proof. Thus, the only randomness remained at this point is in the sampling process of $E_z$.

	Consider any choice of the hash function $h$ in the sampling process from earlier. Define the following alternative sampling process for obtaining a set $E_p$ as opposed to $E_z$: 
	add any edge $e \in E$ to $E_p$ iff $h(e) \leq p_e$ (recall $p_e = \min(1,200\log{n} \cdot y_e)$ by~\Cref{eq:pe-values}). Since $p_e \leq z_e$ by~\Cref{clm:ye-ze-ub}, for any choice of $h$, the set $E_p \subseteq E_z$. We can thus think of $E_p$ as 
	obtained from $E_z$ by rejecting the edges with $p_e < h(e) \leq z_e$. We emphasize that we are not algorithmically finding $E_p$ but rather only define it for the analysis. Thus, picking the random hash function $h$ 
	defines a sampling process for picking $E_p$ as well.

	We now argue that 
	\begin{align}
		\expect{\mu(E_p) \mid R} \geq \frac{1}{5} \card{y}, \label{eq:last-step-of-the-proof}
	\end{align}
	where $\card{y} = \sum_{e \in E} y_e$. This is sufficient to conclude the proof since, 
	\begin{align*}
		\expect{\mu(E_z)} &= \Exp_R \Bracket{\expect{\mu(E_z) \mid R}} \geq \Exp_R \Bracket{\expect{\mu(E_p) \mid R}} \tag{by the law of total expectation and since $\mu(E_z) \geq \mu(E_p)$ given $E_p \subseteq E_z$} \\
		&\geq \frac{1}{5} \cdot \Exp_R \Bracket{\expect{\card{y} \mid R}} = \frac{1}{2} \cdot \Exp\card{y} \tag{by~\Cref{eq:last-step-of-the-proof} and the law of total expectation again}\\
		&\geq \frac{1}{5} \cdot \paren{\frac{1-8\beta}{1-2\beta}} \cdot \frac{\beta}{2} \cdot \mu(G) \cdot \paren{1-1/\poly{(n)}},
	\end{align*}
	where the final inequality holds by~\Cref{thm:random-order} for the choice of $y$ in~\Cref{alg:random-order} with the additional $1-1/\poly{(n)}$ accounting from the correctness probability  
	of~\Cref{alg:time-stamp} in~\Cref{lem:find-time}. 
	
	It thus remains to prove~\Cref{eq:last-step-of-the-proof}. 
	Define the following assignment $\ystar_e$ for each edge $e \in E$: 
	\begin{equation*}
		\ystar_e:= \begin{cases}
			y_e \qquad &\textnormal{if $e \in E_p$ and $y_e \geq 1/(200 \log n)$,} \\
			y_e/p_e \qquad &\textnormal{if $e \in E_p$ and $y_e < 1/(200\log n)$,} \\
			0 \qquad &\textnormal{otherwise.}
		\end{cases}
	\end{equation*}
	This way, $\expect{\ystar_e \mid R} = y_e$ for all $e \in E$: if $y_e \geq 1/(200\log{n})$, we have $p_e = 1$ and thus $h(e) \leq p_e$ always meaning $e \in E_p$ and $\ystar_e = y_e$ deterministically. 
	Otherwise, $e$ belongs to $E_p$ with probability $p_e < 1$ and thus $\expect{\ystar_e} = p_e \cdot y_e/p_e = p_e$. 
	
	We have 
	\[
		\expect{\sum_{e \in E} \ystar_e \mid R}  = \sum_{e \in E} y_e = \card{y},
	\]
	and at the same time $\ystar$ is only supported on the edges in $E_p$. We are going to prove that, with a proper scaling, $\ystar$ can actually become a fractional matching. This means $E_p$ contains a 
	fractional matching with expected size $\approx \card{y}$ which will be sufficient to prove~\Cref{eq:last-step-of-the-proof}. 
	
	Fix a vertex $v \in V$ and define,
\begin{align*}
	&\wfix_v := \sum_{e \ni v} y_e \cdot \II(y_e \geq 1/(200 \log n)) \quad \text{and} \quad 
	\wstar_v  := \sum_{e \ni v} \ystar_e \cdot \II(y_e < 1/(200\log n)).
\end{align*}
It is easy to see that $\wfix_v \leq 1$ as $y \in \IR^E$ is a fractional matching. Define $E_p(v)$ as the edges in $E_p$ that are incident on $v$. 
We can also observe that,
\begin{align}\label{eq:inter-dummy4}
\wstar_v = \sum_{e \ni v} \ystar_e \cdot \II(y_e < 1/(200\log n)) &\leq \frac1{200 \log n} \cdot \card{E_p(v)}. 
\end{align}
because, when $y_e < 1/(200\log n)$, we know that $y_e/p_e = 1/(200 \log n)$. We have, 
\[
	\expect{\card{E_p(v)} \mid R} = \sum_{e \ni v} p_e = \sum_{e \ni v} y_e \cdot (200\log{n}) \leq 200\log{n},
\]
again using the fact that $y$ is a fractional matching. Moreover, $\card{E_p(v)}$ is a sum of $\kappa$-wise independent random variables (by the choice of $h$) and thus, by~\Cref{prop:chernoff-hash}, 
\[
	\Pr\paren{\card{E_p(v)} \geq 400\log{n} \mid R} \leq \exp\paren{-\min\paren{\frac{\kappa}{2}, \frac{2}{3} \cdot 200\log{n}}} \leq n^{-100},
\]
by the choice of $\kappa = 200\log{n}$. Plugging in this in~\Cref{eq:inter-dummy4} implies that with high probability, 
\[
	\wstar_v \leq \frac1{200 \log n} \cdot 400\log{n} = 2. 
\]
All in all, we have that for every vertex $v \in V$, with high probability
\[
	\sum_{e \ni v} \ystar_e \leq 3.
\]
A union bound over all the vertices, implies that $\ystar/3$ is with high probability a fractional matching. We thus have, 
\[
	\expect{\card{\frac{\ystar}{3}} \mid \text{$\frac{\ystar}{3}$ is a fractional matching, $R$}} \geq \frac{\card{y}}{3} \cdot (1-1/\poly{(n)}).
\]
To conclude the proof of~\Cref{eq:last-step-of-the-proof}, we have, 
\begin{align*}
	\expect{\mu(E_p) \mid R} &\geq \frac{2}{3} \cdot \Pr\paren{\text{$\frac{\ystar}{3}$ is a fractional matching} \mid R} \cdot \expect{\card{\frac{\ystar}{3}} \mid \text{$\frac{\ystar}{3}$ is a fractional matching}, R} \\
	&\geq \frac{2\card{y}}{9} \cdot (1-1/\poly(n)) > \frac{\card{y}}{5};
\end{align*}
the first inequality holds by the integrality gap of fractional matchings in~\Cref{fact:fractional-to-integral-matching} for the first term, and the trivial lower bound of $\mu(E_p) \geq 0$ when $\ystar/3$ is not a fractional matching; 
the second inequality holds by the fact that $\ystar/3$ is with high probability a fractional matching and the expectation calculated above. 
\end{proof}
%First, we define a new vector $z \in [0,1]^m$, such that any edge $e

\subsubsection{The Final Dynamic Streaming Algorithm: Concluding the Proof of~\Cref{thm:upper}}

We are almost done with the proof of~\Cref{thm:upper}. Up until now, we have already established that for any given graph $G=(V,E)$, 
\begin{itemize}
\item We can run~\Cref{alg:time-stamp} in $O(n\log^3{n})$ space and recover the time stamps of all vertices with high probability (\Cref{lem:find-time});
\item We can then extract the vector $z \in \IR^E$ from the information collected above in an additional $O(n\log{n})$ space and $O(1)$ passes deterministically (\Cref{obs:deg-at-removal} and \Cref{obs:find-z});
\item We can further sample the edges $E_z \subseteq E$ in another $O(n\log^2{n})$ space and $O(1)$ passes with high probability (\Cref{clm:edges-storage});
\item Finally, we find a maximum matching in $E_z$; conditioned on the high probability events above, this gives us a matching $M$ of \emph{expected} size (\Cref{lem:good-matching}):
\[
\Exp\card{M} \geq \paren{\frac{1-8\beta}{1-2\beta}} \cdot \frac{\beta}{10} \cdot \mu(G). 
\]
\end{itemize}

The main missing pieces are to $(i)$ boost the probability of success of this algorithm to a high probability bound instead of an expectation-guarantee, and, subsequently, $(ii)$ boost the approximation ratio of the algorithm 
to a $(1+\eps)$-approximation (and extend it even to weighted graphs). We show that both these tasks can be achieved using the existing sketching and streaming results listed in~\Cref{sec:sketch-toolkit}. 
The proofs in the following are pretty standard. 

\paragraph{Boosting the probability of success.} Let $G=(V,E)$ be the input graph. We can assume without loss of generality that $\mu(G) \geq \log^2\!{(n)}$; the total number of edges in any graph is at most $2n \cdot \mu(G)$ 
and thus we can run a randomized sparse-recovery in~\Cref{prop:sparse-recovery} for recovering $2n\log^2\!{(n)}$ edges in $O(n\log^3{n})$ space with high probability; if $\mu(G) < \log^2\!{(n)}$, this algorithm succeeds in recovering the entire graph
and finding a maximum matching of $G$ exactly. Otherwise, we continue with the main algorithm. 

Let $\eps = 1/2$ and run the vertex-sampling approach of~\Cref{prop:vertex-sampling} for $O(\log{n})$ guesses of $\mu(G)$ as powers of $2$ between $\log^2\!{(n)}$ to $n$ and run our algorithm for each guess separately in parallel. 
We  specify this more details in the following. 

For a given guess $\tilde{\mu}$,
at the beginning of the stream, we partition the vertices randomly into groups $U_1,\ldots,U_t$ for $t = 8\tilde{\mu}/\eps = 16\tilde{\mu}$. This allows us to define a graph $H := H(\tilde{\mu})$ on new vertices $\set{u_1,\ldots,u_t}$ (representing the groups $U_1,\ldots,U_t$) and edges between $u_i$ and $u_j$ iff there exists at least one edge $(x,y) \in E$ with $x \in U_i$ and $y \in U_j$ (note that multiple edges in $E$ can be mapped to a single edge in $H$ but $H$ does \emph{not} have multi-edges). Moreover, 
an update to the dynamic stream defining $E$ can be also directly translated to an update to the graph $H$ and thus we can run our algorithms over $H$.\footnote{We should mention a minor point here. 
There is a slight difference between the dynamic stream defining $E(H)$ versus for $E$. In particular, we should treat the dynamic stream on $E(H)$ as a vector $\phi \in \IN^{{t}\choose{2}}$ where $\phi_{u_i,u_j} > 0$ (but not necessarily $\phi_{u_i,u_j} = 1$) is interpreted as the the existence of the edge $(u_i,u_j) \in E(H)$ (whereas for the dynamic stream defining $E$, the vector $\phi \in \set{0,1}^{{V}\choose{2}}$ will be the characteristic vector of $E$). Nevertheless, our algorithms work the same exact 
way over $H$. This is because the only access 
of these algorithms to the edges of the graph was via sparse-recovery algorithms and these algorithms return the support of the vector $\phi$ (or whichever subsets of it they are being run on).
Moreover, since the entries of $\phi$ are still $\poly{(n)}$ bounded, the space complexity of the algorithm is exactly as before (as a function of $n$ itself, i.e., we do not aim for a better dependence based on $t \leq n$).}

Now consider, the guess $\tilde{\mu}$ such that $\mu(G) \leq \tilde{\mu} \leq 2 \cdot \mu(G)$. Firstly, by~\Cref{prop:vertex-sampling}, we have that with high probability $\mu(H) \geq \mu(G)/2$.
As such, for this choice and by picking $\beta = 1/16$ in our algorithm, for the returned matching $M$, we have that 
\[
	\Exp\card{M} \geq \paren{\frac{1-8\beta}{1-2\beta}} \cdot \frac{\beta}{10} \cdot \mu(H) \geq \frac{1}{40} \cdot \mu(G). 
\]
On the other hand, in $H$, we deterministically have that $\card{M} \leq t/2$ since it is a matching and $H$ has at most $t$ vertices. This means that we also always have 
\[
	\card{M} \leq 16 \mu(G).
\]
This implies that 
\[
	\Pr\paren{\card{M} \geq \frac{1}{50} \cdot \mu(G)} \geq \frac{1}{3200}
\]
as otherwise 
\[
	\Exp\card{M} < \frac{1}{50} \cdot \mu(G) + \frac{1}{3200} \cdot 16\mu(G) < \paren{\frac{1}{50}+\frac{1}{200}} \cdot \mu(G) = \frac{1}{40} \cdot \mu(G), 
\]
a contradiction. Repeating this algorithm now $O(\log{n})$ times and returning the largest matching implies that with high probability, we will find a matching of size at least $\mu(G)/50$. 
Note that given a matching in $H$, we can spend $O(1)$ more passes and $O(n\log^2{n})$ space to find a matching in $G$ of the same size: for any edge $(u_i,u_j)$ in the matching of $H$, we need
to just find a single edge between the vertices of the group $U_i$ and $U_j$ which we know exists. This can be done using a standard sampling trick and sparse-recovery algorithms: basically, we first count the number $c_{i,j}$ of 
edges between $U_i$ and $U_j$ in a single pass and in the second pass, we sample $\simeq \log{(n)}/c_{i,j}$ fraction of vertex-pairs between $U_i$ and $U_j$ and run a sparse-recovery algorithm for $\simeq (\log{n})$-sparse
vectors on this subsampled pairs. With high probability, the number of sampled edges is indeed $\lesssim \log{n}$ and the recovery algorithm finds at least one of those edges. 

In conclusion, by running our main algorithm $O(\log^2{(n)})$ times in parallel for different guesses of $\mu(G)$ and different repetitions for each guess, we 
obtain an $O(1)$-approximation algorithm to the maximum matching of any given graph $G=(V,E)$ in a dynamic stream with high probability\footnote{To simplify the proof, we have been quite cavalier with the choice
of constants in the analysis of the algorithm and only prove a $50$-approximation guarantee. Given that we are going to boost the approximation ratio of this algorithm in a black-box way in the next step, 
any constant-approximation works for us and thus we did not optimize the approximation ratio of this base algorithm. A more careful analysis of the bounds can reduce this ratio dramatically, but we suspect the limit will be 
close to $6$-approximation (or possibly $4$ for bipartite graphs).}. Moreover, this algorithm
uses $O(n\log^5{n})$ space and $O(\log\log{n})$ passes. 

\paragraph{Boosting the approximation ratio.} Now that we have an $O(1)$-approximation algorithm that succeeds with high probability, this step becomes a black-box application of~\Cref{prop:boosting-approximation}. 

\medskip

This concludes the proof of~\Cref{thm:upper}.

% !TeX root = main.tex 
%!TEX root = main.tex

\section{The Lower Bound}

We provide our multi-pass lower bound for approximate matchings in the dynamic streaming model in this section, formalizing~\Cref{res:lower}. 
%Our lower bound is obtained using a recursively defined family of hard instances for dynamic streaming algorithms where a $p$-pass hard instance essentially hides `many' $(p-1)$-pass hard instances.

\begin{theorem}\label{thm:main-lb}
Any semi-streaming algorithm that given any $n$-vertex graph in a dynamic streams, outputs an $O(1)$-approximation to the maximum matching problem with any constant probability of success requires $\Omega(\log \log n)$ passes.
\end{theorem}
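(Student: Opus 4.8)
The plan is to establish the lower bound through a reduction from a recursively-defined two-player communication problem, following the hierarchical embedding framework of~\cite{AssadiKNS24}. First I would define a family of problems AHM$_r$ (Augmented Hidden Matrices), parameterized by a round budget $r$ and a vertex count $n_r$, so that $(i)$ every instance encodes a bipartite graph $\Eadd \setminus \Edel$ --- edges $\Eadd$ inserted, then a subset $\Edel \subseteq \Eadd$ deleted --- for which any $O(1)$-approximate matching yields a solution to the instance; and $(ii)$ an AHM$_r$ instance is built from a $t_r \times t_r$ grid of AHM$_{r-1}$ instances with $t_r := n_r/n_{r-1}$: matrix $A$ plugs in the ``$B$-halves'' of all grid instances, while matrix $B$, after applying two random permutations of the row- and column-groups, fixes inside a random $\istar \times \istar$ induced sub-grid (with $\istar = t_r - o(t_r)$) all off-diagonal sub-instances entirely --- their edges get inserted by $A$ and then deleted by $B$ --- and only the ``$A$-halves'' of the $\istar$ diagonal sub-instances. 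The upshot is that in the resulting graph $G$ the only genuine AHM$_{r-1}$ instances that survive are the $\istar$ special diagonal ones, and all remaining edges are covered by $o(n_r)$ vertices; hence any $O(1)$-approximate matching of $G$ must draw $\Omega(n_r)$ edges from the special instances and therefore must ``solve'' many of them. The inputs are partitioned so that Alice receives (recursively) all $(r-1)$-round Bob-inputs across the grid, while Bob receives all $(r-1)$-round Alice-inputs plus the full inputs of the off-diagonal induced instances; this ``public/private/shared'' bookkeeping, in the style of~\cite{AssadiKZ24}, is precisely what lets Alice insert all of $B$ and Bob perform only valid deletions. By~\Cref{prop:cc-stream}, an $r$-round communication lower bound of $n_r^{1+1/2^{\Omega(r)}}$ for AHM$_r$ translates into the claimed $\Omega(\log\log n)$-pass streaming lower bound.

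Next I would set the base case by letting AHM$_1$ be (a mild variant of) the single-pass hard instance of~\cite{DarkK20}, giving $s_1(n_1) = \Omega(n_1^2)$, where $s_r(\cdot)$ denotes the communication needed to solve AHM$_r$. For the recursion I need a stronger one-round statement than the one in~\cite{DarkK20}: Alice's single message must reveal only $o(1)$ bits about the diagonal entry of \emph{each} special sub-instance, marginally, even conditioned on Bob's entire input and on all other special entries. I would prove this by a chain-rule argument: fix the sets $\Srow, \Scol$ of rows and columns of $A$ that together cover all of $B$ except the $j$-th diagonal position; once these are fixed, the special entry is uniform among the $\gtrsim t_r^2$ still-unfixed entries of $A$, so a message of length $\ll t_r^2$ cannot localize it (these ``condition on a partial random permutation'' arguments are standard; cf.~\cite{BravermanRWY13,NelsonY19}).

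The technical heart is the inductive round-elimination step, which establishes
\begin{align*}
s_r(n_r) \gtrsim \min\Paren{t_r^2,\ \frac{1}{\poly(r)} \cdot t_r \cdot s_{r-1}(n_{r-1})},
\end{align*}
and which I would carry out in three sub-steps. Step~1 (direct sum via internal information): given an $r$-round protocol $\prot_r$ for AHM$_r$ of communication cost $s$, embed a single externally-given AHM$_{r-1}$ instance at a uniformly random diagonal position of the special sub-grid, sampling the off-diagonal induced instances and the public coordinates of the remaining instances with public randomness and the private coordinates with private randomness (so the players can jointly sample this non-product input distribution, in the spirit of~\cite[Lemma 3.1]{Weinstein15}); since $\prot_r$ must solve $\istar = t_r - o(t_r)$ independent AHM$_{r-1}$ instances, the obtained protocol has \emph{internal} information cost $\lesssim s/t_r$. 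Step~2 (compression): apply~\Cref{prop:msg-compress} (message compression of~\cite{JainPY16}) to convert this into an $r$-round protocol for AHM$_{r-1}$ with communication $\lesssim (s/t_r) \cdot \poly(r)$ and $o(1)$ additional error. Step~3 (genuine round elimination): discard Alice's first message; by the strengthened one-round statement, whenever $s \ll t_r^2$ this message carries only $o(s/t_r^2) = o(1)$ bits about the particular diagonal AHM$_{r-1}$ instance being solved, even conditioned on Bob's full AHM$_r$-input, so replacing it by the empty string perturbs the output by $o(1)$ in total variation distance; the result is an $(r-1)$-round protocol for AHM$_{r-1}$ of communication $\lesssim (s/t_r)\cdot\poly(r)$, whence $(s/t_r)\cdot\poly(r) \gtrsim s_{r-1}(n_{r-1})$, which rearranges to the displayed recursion.

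Finally, choosing $n_r = n_{r-1}^3$ (so $t_r = n_{r-1}^2$) and unrolling the recursion from $s_1(n_1) = \Omega(n_1^2)$ yields $s_r(n) \gtrsim n^{1+1/2^{\Omega(r)}}$; thus for $r = c\log\log n$ with $c > 0$ small enough, $s_r(n) = \omega(n \cdot \polylog{(n)})$, so any $\Ot(n)$-communication protocol for AHM$_r$ requires $r = \Omega(\log\log n)$ rounds, and by~\Cref{prop:cc-stream} every semi-streaming algorithm for $O(1)$-approximate matching in dynamic streams requires $\Omega(\log\log n)$ passes; the entire construction lives on bipartite graphs. The main obstacle I anticipate is Step~3 together with the non-product nature of the input distribution: because the deletions must target only already-inserted edges, Alice's and Bob's inputs overlap heavily, so keeping the identity of the special instances near-uniform conditioned on \emph{all} of Bob's input --- and composing the $o(1)$ error terms of Steps~1--3 across $\Theta(\log\log n)$ levels of recursion without blow-up --- demands very delicate control of the conditional-independence structure.
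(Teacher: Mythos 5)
Your proposal is correct and follows essentially the same route as the paper: the same recursive AHM$_r$ construction with deletions creating the induced special diagonal sub-instances, the same asymmetric input sharing in the style of~\cite{AssadiKZ24}, the same three-step round elimination (internal-information direct sum, compression via~\cite{JainPY16}, and elimination of the first message controlled by the chain-rule argument over the unfixed rows and columns), and the same recursion and parameter unrolling. The only cosmetic difference is that the paper bottoms out the induction at a trivial $0$-round problem AHM$_0$ and derives the one-round bound from its own round-elimination lemma rather than importing the~\cite{DarkK20} base case as a black box.
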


We prove~\Cref{thm:main-lb} using the connection between the streaming and communication models (\Cref{prop:cc-stream}). Our proof consists of two separate parts proven in the subsequent sections.
\begin{description}
	\item \textbf{Part 1.} In \Cref{subsec:ahm-def}, we define a new two-party communication problem for $r$-round protocols called Augmented Hidden Matrices ($\ahm_r$, see \Cref{def:ahm}) and prove the following lower bound on its communication complexity in \Cref{subsubsec:proof-ahm-lb}.

	%Main lower bound statement
	\begin{lemma}\label{lem:ahm-lb}
		For any sufficiently large $n_r \in \IN$, positive integer $r = O(\log\log\!{(n_r)}) $, positive number $\alpha \leq 1 - n_r^{-1/2r}$ and constant $\advconst \geq 1$, any $r$-round protocol that successfully solves $\ahm_r(n_r, \alpha)$ with
		probability of success at least \[
		\frac12 \cdot \paren{1 + \frac{r}{20 \cdot \advconst \cdot (r+1)}}
		\] must have a communication cost of at least 
		\[
		s_r = \expo{n_r}{1+\frac1{2^{r}-1}} \cdot \expo{1-\alpha}{\frac{r\cdot2^r}{2^r-1}} \cdot \frac{(\alpha/(1-\alpha))^2}{(\jpyconst \cdot \advconst^2 \cdot 80^2)^r \cdot ((r+1)!)^3} .
		% s_r = \expo{n_r \cdot (1- \alpha)^{r}}{1+\frac1{2^{r}-1}} \cdot \frac{(\alpha/(1- \alpha))^2}{(\jpyconst \cdot \advconst^2 \cdot 80^2)^r \cdot ((r+1)!)^3} \ .
		\]
	\end{lemma}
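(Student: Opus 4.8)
The plan is to prove Lemma~\ref{lem:ahm-lb} by induction on $r$, following the three-step round-elimination scheme sketched in the technical overview (\Cref{sec:lower-challenge-2}), and then derive~\Cref{thm:main-lb} from it via the streaming-to-communication reduction of~\Cref{prop:cc-stream}. For the base case $r=1$, I would establish that any one-round protocol for $\ahm_1(n_1,\alpha)$ succeeding with the claimed advantage requires $\Omega(\alpha^2 n_1^2)$ communication, which is essentially the single-pass dynamic-streaming matching lower bound of~\cite{DarkK20} but reproved via the new chain-rule argument outlined in the overview: condition on the rows $T_{row}$ and columns $T_{col}$ of the full matrix $A$ that cover all of Bob's submatrix $B$ except the $j$-th diagonal entry, observe that conditioned on $(T_{row},T_{col})$ the target entry $A[\sigma_r(\cdot)][\sigma_c(\cdot)]$ is uniform over the roughly $(n_1-|T_{row}|)(n_1-|T_{col}|) = \Omega(\alpha^2 n_1^2)$ unfixed entries, and apply the chain rule of mutual information to conclude that a message of length $\ll \alpha^2 n_1^2$ reveals $o(1)$ bits about a uniformly random target diagonal entry even given all of Bob's input. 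This gives that the protocol cannot beat $\tfrac12$ by more than $o(1)$, contradicting the required advantage.

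For the inductive step, assume the bound holds for $r-1$ with communication $s_{r-1}=s_{r-1}(n_{r-1},\alpha)$, and take an $r$-round protocol $\prot_r$ for $\ahm_r(n_r,\alpha)$ with communication cost $s$ and the stated success probability. I would run the three transformations: \textbf{Step 1} (internal-information direct sum) produces an $r$-round protocol $\protnum{1}_r$ for $\ahm_{r-1}$ with the same communication $s$ but internal information cost $\lesssim s/t_r$ where $t_r = n_r/n_{r-1}$; this uses public randomness to sample $j\in[\istar]$ and all the off-diagonal induced instances, embeds the given $\ahm_{r-1}$ instance at diagonal position $j$, and for the remaining instances samples one side publicly and the other privately so that the (non-product) input distribution is generated correctly, exactly mirroring the public-private sampling of~\cite{AssadiKZ24}. \textbf{Step 2} (message compression) applies~\Cref{prop:msg-compress} with error parameter a small constant to turn $\protnum{1}_r$ into an $r$-round protocol $\protnum{2}_r$ for $\ahm_{r-1}$ with communication $\lesssim \jpyconst\cdot(r/\eps)\cdot(s/t_r) + \jpyconst\cdot r^2/\eps$, at the cost of an additive constant in the error. \textbf{Step 3} (the genuine round elimination) deletes Alice's first message: using the one-round chain-rule argument from the base case applied inside the $\ahm_r$ instance, Alice's first message reveals only $o(s/t_r^2)$ bits about the randomly chosen special diagonal $\ahm_{r-1}$ instance even conditioned on Bob's entire $\ahm_r$ input, so as long as $s\ll t_r^2$ one can simulate $\protnum{2}_r$ from its second message onward with only an $o(1)$ additive loss in success probability — yielding an $(r-1)$-round protocol $\prot_{r-1}$ for $\ahm_{r-1}$ with communication $\lesssim \jpyconst\cdot(r/\eps)\cdot(s/t_r)+\jpyconst\cdot r^2/\eps$ and success probability still $\tfrac12(1+\tfrac{r-1}{20\advconst r})$ after adjusting constants.

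Combining the steps with the inductive hypothesis gives $\jpyconst\cdot(r/\eps)\cdot(s/t_r) + \jpyconst r^2/\eps \gtrsim s_{r-1}(n_{r-1},\alpha)$, hence either $s\gtrsim t_r^2$ or $s\gtrsim \frac{1}{\poly(r)}\,t_r\,s_{r-1}(n_{r-1},\alpha)$; choosing $n_{r-1}$ as an appropriate root of $n_r$ (so that $t_r = n_r/n_{r-1}$ balances the two terms, which forces the exponent recursion $1+\tfrac1{2^r-1}$) and carefully tracking the accumulated $(1-\alpha)$-powers, the $(r+1)!$-type factorial losses from compression and from the advantage bookkeeping, and the $(\jpyconst\advconst^2 80^2)^r$ constants, yields exactly the claimed closed form for $s_r$. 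The advantage recursion $\tfrac12(1+\tfrac{r}{20\advconst(r+1)})$ is designed precisely so that the additive $o(1)$ error losses in Steps 2 and 3 (each costing roughly $\tfrac{1}{20\advconst(r+1)}$ of advantage) can be absorbed while still leaving the $(r-1)$-level advantage $\tfrac12(1+\tfrac{r-1}{20\advconst r})$ intact.

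The main obstacle I expect is \textbf{Step 3}, the round-elimination step: unlike the product-distribution setting of~\cite{MiltersenNSW95} or the non-augmented generalization of~\cite{AssadiKNS24}, here the input distribution is not a product (Bob's deleted edges must be a subset of Alice's inserted edges), so one must simultaneously (i) maintain the conditional independence needed for the chain-rule bound on Alice's first message — which requires the careful input-partitioning where Bob never learns the part of $A$ outside his submatrix, (ii) argue that dropping Alice's first message and replacing it with a "typical" message sampled from its (barely) non-degenerate conditional distribution changes the output distribution of the underlying $\ahm_{r-1}$ instance by only $o(1)$ in total variation, and (iii) do all of this while the special instance is only \emph{marginally} uniform among the $t_r^2$ candidates (the conditioning on Bob's off-diagonal input skews the distribution), so the $o(1)$ bound is genuinely delicate and needs the $s\ll t_r^2$ slack. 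A secondary technical hurdle is ensuring the direct-sum embedding in Step 1 produces a \emph{legal} $\ahm_{r-1}$ sub-instance (correct insert-before-delete structure for odd $r$, cf.~\Cref{footnote:add-before-delete}) under the public-private sampling, which requires threading the role-switching of Alice and Bob correctly through the recursion.
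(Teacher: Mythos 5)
Your proposal is correct and follows essentially the same route as the paper: induction on $r$ driven by a round-elimination lemma proven via the three-step scheme (internal-information direct sum with public/private sampling, message compression via \cite{JainPY16}, and elimination of the first message using the combinatorial-rectangle chain-rule bound), with the same parameter balancing $t_r^2$ versus $t_r\cdot s_{r-1}$ forcing the exponent recursion $1+\tfrac{1}{2^r-1}$. The only (cosmetic) difference is the base case: you prove $r=1$ directly by the chain-rule argument, whereas the paper applies the round-elimination lemma once more to reduce $r=1$ to the trivial $0$-round case, the chain-rule argument then living inside that lemma rather than in a standalone base-case proof.
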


	\item \textbf{Part 2.} In \Cref{subsec:red-to-graph}, we construct graph instances corresponding to the $\ahm_r(n_r, \alpha)$ problem. We prove the following connection between any $p$-pass dynamic streaming algorithm for approximate maximum matching and $\ahm_r$ where $r = 2p -1$ in \Cref{subsec:proof-ahm-to-graph}.
	
	\begin{lemma}\label{lem:redToMatching}
		For any sufficiently large $n_r \in \IN$, integers $p,s \geq 1$, and number $\apx \geq 1$, let $\alg$ by any $p$-pass $s$-space
		dynamic streaming algorithm that computes a $\apx$-approximate maximum matching on a $(4n_r)$-vertex bipartite graphs with probability of success at least $1-1/\poly{(n)}$.
		Then, for $r = 2p-1$ and $\alpha = 1/(4 \cdot \apx \cdot r)$, there exists an $r$-round protocol $\prot$ for $\ahm_{r}(n_r, \alpha)$ with 
		probability of success 
		\[
		\suc{\prot} \geq \frac{1}{2} \cdot \left(1 + \frac{1}{3\apx}\right)
		\]
		and
		communication cost 
		\[
		\cc{\prot} \leq r \cdot s + O(r \cdot n_r \log{(n_r)}). 
		\]
	\end{lemma}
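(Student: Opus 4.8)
The plan is to convert the $p$-pass algorithm $\alg$ into an $r$-round protocol $\prot$ for $\ahm_r(n_r,\alpha)$ (with $r = 2p-1$) by simulating $\alg$, through the streaming-to-communication reduction of \Cref{prop:cc-stream}, on the dynamic stream that the construction of \Cref{subsec:red-to-graph} associates to an $\ahm_r$ instance, and then reading off the answer of $\ahm_r$ from the matching that $\alg$ produces. First I would recall from \Cref{subsec:ahm-def} and \Cref{subsec:red-to-graph} that, because $r$ is odd, the input partition is arranged so that \emph{all} insertions of the stream $\sigma$ on the $(4n_r)$-vertex bipartite graph $G$ are determined by the portion of the input held by Alice and \emph{all} deletions by the portion held by Bob, with every deleted edge having been inserted earlier; in particular $\sigma = \sigma_{\mathsf{ins}} \circ \sigma_{\mathsf{del}}$ already has the prefix/suffix form required by \Cref{prop:cc-stream}, and neither player needs to communicate in order to generate its portion, since all shared data (e.g.\ the permutations used in the construction) come from public randomness. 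Running $\alg$ on $\sigma$ via \Cref{prop:cc-stream} therefore yields an $r$-round protocol that exchanges the $s$-bit memory of $\alg$ once per round, for $r \cdot s$ bits of communication, at the end of which one player holds the matching $M := \alg(\sigma)$.

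Next I would decode the $\ahm_r$ answer from $M$. The construction guarantees that $G$ contains an \emph{induced} matching $M^{\star}$ of size at least $(1-O(r\alpha)) \cdot 2n_r$, supported entirely on the special (diagonal) sub-instances, while the rest of $G$ admits a vertex cover of size $O(r\alpha \cdot n_r)$. Hence $\mu(G) \geq (1-O(r\alpha)) \cdot 2n_r$, and since $M$ is a $\apx$-approximate matching while at most $O(r\alpha \cdot n_r)$ of its edges can lie outside $M^{\star}$, it must contain at least $\bigl(\tfrac1{\apx} - O(r\alpha)\bigr) \cdot 2n_r$ edges of $M^{\star}$; with $\alpha = 1/(4\apx r)$ this is $\Omega(n_r/\apx)$ edges of $M^{\star}$. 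Each such edge certifies the value of $A$ at one of the $\Theta(n_r)$ ``hidden'' positions that determine the answer of $\ahm_r$, so the protocol outputs the value at a position sampled uniformly at random (via public randomness) from among these: it is correct with probability at least $\Omega(1/\apx)$ because the sampled position is certified, plus $\tfrac12$ of the remaining probability from a fair guess. Folding in the $1 - 1/\poly(n)$ success probability of $\alg$, a short calculation gives $\suc{\prot} \geq \tfrac12\bigl(1 + \tfrac1{3\apx}\bigr)$. The protocol may additionally spend $O(r \cdot n_r \log n_r)$ bits to ship $M$ together with the $O(n_r\log n_r)$-bit pieces of the two matrices needed for this decoding (absorbed across the $r$ rounds), giving $\cc{\prot} \leq r \cdot s + O(r \cdot n_r \log n_r)$, as claimed.

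The main obstacle is the decoding step, or rather the combinatorial estimates it rests on: propagating, through the $r$ nested levels of the recursive $\ahm_r$ construction, both that the induced ``signal'' matching $M^{\star}$ has size at least $(1-O(r\alpha)) \cdot 2n_r$ and that every remaining edge is covered by an $O(r\alpha \cdot n_r)$-size vertex cover, while keeping the $O(r\alpha)$ slack tight enough that, for every $\apx \geq 1$, a $\apx$-approximate matching is still forced to recover an $\Omega(1/\apx)$ fraction of $M^{\star}$. This is precisely where the value $\alpha = 1/(4\apx r)$ is pinned down, and making all the constants close simultaneously at every recursion level is the fragile part. By contrast, the simulation step is essentially a bookkeeping application of \Cref{prop:cc-stream} once the insert/delete split of \Cref{subsec:ahm-def} is in hand, and the final probability and communication accounting is routine.
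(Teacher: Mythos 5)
Your overall route is the same as the paper's: simulate $\alg$ via \Cref{prop:cc-stream} on the stream built from the $\ahm_r$ instance, use the combinatorial structure of $G$ (perfect matching by \Cref{lem:perfmatch}, few vertices outside the special base vertices, inducedness via \Cref{cor:baseindmatch}) to force a $\apx$-approximate matching to identify $\geq n_r(1/\apx - 2\alpha r)$ of the special base bits, and answer the uniform target position from $M$ or flip a fair coin. The counting and the final probability accounting match \Cref{lem:apxhasMbase} and the paper's proof.

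There is, however, a genuine gap in your decoding step, rooted in the claim that ``all shared data (e.g.\ the permutations used in the construction) come from public randomness.'' This is false: by \Cref{def:ahm}, the permutations $\sigmar,\sigmac$ at the top level are part of Bob's \emph{private input} $\Bb{r}=(\sigmar,\sigmac,\Yy)$, and the lemma must produce a protocol for arbitrary inputs (Yao is only invoked later). Worse, because the roles of Alice and Bob swap at each recursion level, the permutations of the level-$(r-1)$ special sub-instances sit inside $\Bb{r-1}_{i,j}$, which is held by \emph{Alice}, the permutations one level further down are held by Bob again, and so on. Consequently the final player cannot, on his own, locate the four vertices $\Lbase_{s^*}\cup\Rbase_{s^*}$ corresponding to the search sequence $s^*$ he is handed at the end; without that he cannot tell which (if any) edge of $M$ certifies the target bit. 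This is exactly what the $O(r\cdot n_r\log n_r)$ term in the communication bound pays for in the paper: a parallel $r$-round sub-protocol in which, at round $t$, the appropriate player ships the $2\cdot k_r k_{r-1}\cdots k_{r-t+3}$ permutations of $[b_{r-t+2}]$ for the currently identified special instances. You allot the right budget but spend it on the wrong thing (``shipping $M$ and pieces of the matrices''), and under your public-randomness assumption the step would be vacuous. Relatedly, the output position is not ``sampled via public randomness from among these'': the search sequence is supplied externally to the last player and the protocol must answer that specific position; your probability calculation is the right one only because that externally given position is uniform and independent of the inputs, not because the protocol gets to choose it.
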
	
\end{description}

We now prove~\Cref{thm:main-lb} using the above two lemmas.

\begin{proof}[Proof of \Cref{thm:main-lb}]
	Without loss of generality, we can solely focus on proving the lower bound for semi-streaming algorithms that succeed with high probability instead of constant probability. 
	This is due to the following standard reduction.
	 
	Simply run a constant probability of success streaming algorithm in parallel $O(\log{n})$ times 
	and consider all the outputs; then, spend one more pass to filter out all the ones that output an edge that does not belong to the graph (because they failed), and among the rest, output the largest matching. 
	This way, we still obtain a semi-streaming algorithm with just one more pass (which is negligible in this context) and a high probability of success. As such, in the rest of the proof, we focus on high-probability-of-success algorithms. 
	
	For sufficiently large $n \in \IN$, integers $p,s \geq 1$ and any constant $\apx \geq 1$,
	let $\alg$ be any $p$-pass $s$-space dynamic streaming algorithm that computes a $\apx$-approximate maximum matching on any $(4n)$-vertex graph with probability of success at least $1-1/\poly{(n)}$. 
	
	By \Cref{lem:redToMatching} with $r = 2p-1$ and $\alpha = 1/(4 \cdot \apx \cdot r)$, there exists an $r$-round protocol $\prot$ for $\ahm_r(n, \alpha)$ where 
	\[
	\suc{\prot} \geq \frac{1}{2} \cdot \left(1 + \frac{1}{3\apx}\right) \qquad \text{and} \qquad \cc{\prot} \leq r \cdot s + O(r^2 \cdot n \log n).
	\]
	By \Cref{lem:ahm-lb} with $n_r = n$ and $\advconst = \apx$, we have that $\cc{\prot} \geq s_r$ since 
	\[
	\suc{\prot} \geq \frac12 \cdot \paren{1+\frac1{3\apx}} \geq \frac12 \cdot \paren{1 + \frac{r}{20 \cdot \apx \cdot (r+1)}}.
	\]
	Then, by combining the bounds on $\cc{\prot}$ and re-arranging the inequality, we obtain the following lower bound on the space of $\alg$ using the fact that $\jpyconst$ and $\beta$ are constants (and thus $\alpha = \Theta(1/r)$):
	\begin{align}
		s &\geq \frac{s_r}{r} - O( n \log n) \notag \\ 
		&= {\expo{n}{1+\frac1{2^{r}-1}} \cdot \expo{1-\alpha}{\frac{r\cdot2^r}{2^r-1}} \cdot \frac{(\alpha/(1-\alpha))^2}{(\jpyconst \cdot \apx^2 \cdot 80^2)^r \cdot ((r+1)!)^3}} \cdot \frac1r - O(n \log n) \notag \\
		&\geq {\expo{n}{1+\frac1{2^{r}-1}}} \cdot r^{-\Theta(r)} - O(n \log n). \label{eq:boundspacealg}
	\end{align}
	
	Overall, by considering $\alg$ with $p = o(\log \log n)$ and thus $r = 2p -1 = o(\log \log n)$, we have that the space used by the algorithm is $s \gg n \cdot \polylog (n)$ using \Cref{eq:boundspacealg}.
	Therefore, any dynamic streaming algorithm that uses $n \cdot {\polylog} (n)$ space, namely, a semi-streaming algorithm, to compute an $O(1)$-approximate maximum matching must use $\Omega(\log \log n)$ passes.
\end{proof}

\subsection{Augmented Hidden Matrices}\label{subsec:ahm-def}

For any $r \geq 1$, the $\ahm_r(n_r, \alpha)$ problem is defined recursively using $b_r^2$ many instances of the $\ahm_{r-1}(n_{r-1}, \alpha)$ problem where $n_r, b_r, n_{r-1} \in \mathbb{N}$ and $\alpha \in (0,1)$.
The output of any instance is a single bit, pointed to by a \emph{search sequence} (to be defined soon).

\begin{Definition}[Augmented Hidden Matrices]\label{def:ahm}
%~
The Augmented Hidden Matrices problem, denoted by $\ahm_r(n_r, \alpha)$, is defined as follows (see \Cref{fig:insdel_harddist} for an illustration):
\begin{itemize}
	\item
	\textbf{For $\mathbf{r = 0}$.} Alice does not receive any input, i.e., $\alicepart{0} = \emptyset$, and Bob receives a single 
	bit $\bobpart{0} \in \{0,1\}$. There is no communication and Alice has to output $\bobpart{0}$. % without any search sequence.
	\item
	\textbf{For $\mathbf{r \geq 1}$.} Start with $b_r^2$ instances of the players' inputs in $\ahm_{r-1}(n_{r-1}, \alpha)$, denoted by $(\alicepart{r-1}_{i,j}, \bobpart{r-1}_{i,j})$ for $i, j \in [b_r]$.
	Alice receives $\alicepart{r} = \Xx$ where $\Xx$ is a $b_r \times b_r$ matrix such that $\Xx[i,j] = \bobpart{r-1}_{i,j}$ for $i, j \in [b_r]$. 
	Bob receives the input $\bobpart{r} = (\sigmar, \sigmac, \Yy)$ where $\sigmar, \sigmac \in S_{b_r}$ are permutations of $[b_r]$ and $\Yy$ is a $b_r \times b_r$ matrix such that, for $i, j \in [b_r]$,
	 \begin{align*}
		\Yy[i,j] =
		\begin{cases}
			\left(\Aa{r-1}_{i,j}, \Bb{r-1}_{i,j}\right) & \text{if } b_r \cdot \alpha <\sigmar(i) \neq \sigmac(j) \leq b_r, \\
			\Aa{r-1}_{i,j} & \text{if } b_r \cdot \alpha < \sigmar(i) = \sigmac(j) \leq b_r, \\
			\emptyset & \text{otherwise.}
		\end{cases}
	\end{align*}
	% Alice starts the communication and the player who receives the last message must output the solution to a uniform random the $\ahm_{r-1}$ instances $(\alicepart{r-1}_{i,j}, \bobpart{r-1}_{i,j})$ where $b_r \cdot \alpha < \sigmar(i) = \sigmac(j) \leq b_r$.
	Alice starts the communication. After all the messages are sent, 
	the player receiving the last message is given a \textbf{search sequence}, which is a tuple of $r$ integers, $(\speckstar{r}, \speckstar{r-1}, \ldots, \speckstar{1})$ with $\speckstar{i} \in [b_i \cdot (1-\alpha)]$. This player must output the solution to $\ahm_{r-1}$ instance $(\alicepart{r-1}_{i,j}, \bobpart{r-1}_{i,j})$ where $\sigmar(i) = \sigmac(j) = b_r \cdot \alpha + \speckstar{r}$ using the search sequence $(\speckstar{r-1}, \speckstar{r-2} \ldots \speckstar{1})$ .

\end{itemize}
% \kheeran{$\alicepart{r-1}$'s are not needed in the off-diagonal instances}
% \vspace{0em}
\end{Definition}
We expound more on the defintion of $\ahm_r$ problem, and give some useful terminology next. 

For $r = 0$ case, we set $n_0 = 1$, and $\ahm_0(1, \alpha)$ is also denoted by  $\ahm_0(1)$. This is a base case that is trivially hard for $0$-round protocols. 

For each of the $b_r^2$ many $\ahm_{r-1}$ instances $(\alicepart{r-1},\bobpart{r-1})$ used in an instance of $\ahm_r$, the roles of Alice and Bob are swapped, i.e., Alice holds $\bobpart{r-1}$ and Bob holds $\alicepart{r-1}$.
Out of these instances, we call the set of the $\bm{k_r} := b_r \cdot (1- \alpha)$ many $\ahm_{r-1}$ instances $(\alicepart{r-1}_{i,j}, \bobpart{r-1}_{i,j})$ where $b_r \cdot \alpha < \sigmar(i) = \sigmac(j) \leq b_r$ as the \textbf{special sub-instances} in an instance of $\ahm_r$.
Since each special sub-instance identifies another set of $k_{r-1}$ many $\ahm_{r-2}$ special (sub-)sub-instances, recursively applying this for all special sub-instances of $\ahm_r$ identifies a set of $k_r \cdot k_{r-1} \cdot \ldots \cdot k_1$ many $\ahm_0$ \textbf{special base instances}, where each one corresponds to a single bit $\bobpart{0}$. 

It is useful to point to the number of special base instances explicitly for later. 
% Note that for $r = 0$, the special bits are the bits in $\bobpart{0}$ and, for $r \geq 1$, the special bits are the special bits of all the special sub-instances.
\begin{observation}\label{obs:numspecial}
	In any instance of $\ahm_r(n_r, \alpha)$, there are $k_r \cdot k_{r-1} \cdot \ldots \cdot k_1$ many special base instances where $k_i = b_i \cdot (1- \alpha)$ for each $i \in [r]$.
\end{observation}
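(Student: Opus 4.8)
The statement to prove is \Cref{obs:numspecial}: in any instance of $\ahm_r(n_r, \alpha)$ there are exactly $k_r \cdot k_{r-1} \cdots k_1$ special base instances, where $k_i = b_i \cdot (1-\alpha)$.

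\textbf{Plan.} The proof is a straightforward induction on $r$, essentially unwinding the recursive definition of $\ahm_r$ (\Cref{def:ahm}) and the definition of ``special sub-instances'' given immediately after it. The base case is $r = 0$: an instance of $\ahm_0(1)$ is a single bit $\bobpart{0}$, which by definition constitutes one special base instance, and the empty product $\prod_{i=1}^{0} k_i = 1$, so the claim holds. For the inductive step, assume the claim holds for $\ahm_{r-1}$, i.e., every instance of $\ahm_{r-1}(n_{r-1}, \alpha)$ contains exactly $k_{r-1} \cdot k_{r-2} \cdots k_1$ special base instances.

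\textbf{Inductive step.} Fix an instance of $\ahm_r(n_r, \alpha)$. By \Cref{def:ahm}, it is built from $b_r^2$ instances of $\ahm_{r-1}$, and by the terminology introduced after the definition, exactly those $\ahm_{r-1}$ instances $(\alicepart{r-1}_{i,j}, \bobpart{r-1}_{i,j})$ with $b_r \cdot \alpha < \sigmar(i) = \sigmac(j) \leq b_r$ are designated the special sub-instances. Since $\sigmar$ and $\sigmac$ are permutations of $[b_r]$, the number of pairs $(i,j)$ with $\sigmar(i) = \sigmac(j) = v$ for a fixed value $v$ is exactly one, so the number of special sub-instances equals the number of integers $v$ with $b_r \cdot \alpha < v \leq b_r$, which is $b_r - b_r \cdot \alpha = b_r(1-\alpha) = k_r$. (Here I would implicitly use that $b_r \cdot \alpha$ is an integer, consistent with how $k_r = b_r(1-\alpha)$ is treated as an integer throughout; if needed one takes floors/ceilings as in the rest of the paper, but the intended reading is that these quantities are integral by the choice of parameters.) Each of these $k_r$ special sub-instances is itself a full instance of $\ahm_{r-1}(n_{r-1}, \alpha)$, and the special base instances of the $\ahm_r$ instance are, by definition, precisely the union over all $k_r$ special sub-instances of their own special base instances. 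By the inductive hypothesis each special sub-instance contributes $k_{r-1} \cdots k_1$ special base instances, and these collections are disjoint (they live in disjoint sub-instances), so the total is $k_r \cdot (k_{r-1} \cdots k_1) = k_r \cdot k_{r-1} \cdots k_1$, completing the induction.

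\textbf{Main obstacle.} There is essentially no mathematical obstacle here — this is a definitional observation whose only subtlety is bookkeeping: one must be careful that ``special base instance'' is defined by recursively descending \emph{only} through special sub-instances (not all $b_r^2$ sub-instances), and that the count of diagonal-matched pairs $(i,j)$ in the window $(b_r\alpha, b_r]$ is exactly $k_r$ because $\sigmar, \sigmac$ are permutations. The only thing to state carefully is the integrality convention for $b_r \cdot \alpha$ and $k_i$, which the paper already adopts elsewhere.
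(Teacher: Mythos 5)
Your proof is correct and matches the paper's reasoning: the paper states this observation without a formal proof, treating it as an immediate consequence of the recursive definition of special sub-instances (each level contributing a factor of $k_i = b_i(1-\alpha)$ via the diagonal of the permuted $(b_i\alpha, b_i]$ window), and your induction on $r$ is just the careful formalization of that same counting.
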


Furthermore, we call the set of $k_r^2 - k_r$ many $\ahm_{r-1}$ instances $(\alicepart{r-1}_{i,j}, \bobpart{r-1}_{i,j})$ where $b_r \cdot \alpha < \sigmar(i) \neq \sigmac(j) \leq b_r$ as the \textbf{off-diagonal sub-instances} in an instance of $\ahm_r$. We define some more notation to point to the sub-instances. 

\paragraph*{Notation.}
In any instance of $\ahm_r$, denoted by $(\Aa{r}, \Bb{r})$, we use the following notation.
We denote each of the $k_r$ many special sub-instances as $(\Aspec_k, \Bspec_k) = (\alicepart{r-1}_{i,j}, \bobpart{r-1}_{i,j})$ where $\sigmar(i) = \sigmac(j) = b_r \cdot \alpha + k$ and $i,j \in [b_r]$ for $k \in [k_r]$.
We collectively denote these special sub-instances as $(\Aspec,\Bspec)$ where 
\[
\Aspec = (\Aspec_1, \Aspec_2, \ldots, \Aspec_{k_r}) \quad \text{and} \quad \Bspec = (\Bspec_1, \Bspec_2, \ldots, \Bspec_{k_r}).
\] 
% With a slight abuse of notation, we use $(\Aspec,\Bspec)$ to denote the collection of special sub-instances $((\Aspec_1, \Bspec_1), (\Aspec_2, \Bspec_2), \ldots, (\Aspec_{k_r}, \Bspec_{k_r}))$.
We denote the collection of the $k_r^2 - k_r$ many off-diagonal sub-instances as $(\Acommon, \Bcommon)$ where 
\[
\Acommon = (\alicepart{r-1}_{i,j} : b_r \cdot \alpha < \sigmar(i) \neq \sigmac(j) \leq b_r)\quad \text{and} \quad \Bcommon = (\bobpart{r-1}_{i,j} : b_r \cdot \alpha < \sigmar(i) \neq \sigmac(j) \leq b_r).
\]
We let $\Brest$ be the remaining $b_r^2 - k_r^2$ sub-instances  $\bobpart{r-1}_{i,j}$ where $\sigmar(i) \leq b_r \cdot \alpha$ or $\sigmac(j) \leq b_r \cdot \alpha$.

\paragraph{Search Sequences.}
After all the messages are sent, the search sequence given to the player who receives the last message is always \textbf{uniformly random} and \textbf{independent of the players' inputs}. That is, a search sequence $(\speckstar{r}, \speckstar{r-1}, \ldots, \speckstar{1})$ is chosen where for $i \in [r]$, $\speckstar{i} $ is chosen uniformly at random and independently from $[b_i \cdot (1-\alpha)]$. 
The solution to any instance of $\ahm_r$ given $(\speckstar{r}, \speckstar{r-1}, \dots, \speckstar{1})$ is the same as the solution to the uniformly chosen special sub-instance $(\Aspec_{\speckstar{r}}, \Bspec_{\speckstar{r}})$ on the search sequence $(\speckstar{r-1}, \dots, \speckstar{1})$.
Continuing this until the last index $\speckstar{1}$ in the search sequence ultimately identifies a uniform random special base instance, i.e., an instance of $\ahm_0$, whose solution is the solution to the $\ahm_r$ instance.

Our definition of the $\ahm_r(n_r, \alpha)$ communication problem is non-standard as the uniformly random search sequence, which is a part of the input, is given to the final player \emph{at the very end of a protocol}.
In particular, the search sequence is crucial to defining the output of the problem, but does not form part of the input held by the players throughout the protocol.

Although the search sequence defines the communication problem in a non-standard way, the easy direction of Yao's minmax theorem is still applicable.
Thus, 
we prove \Cref{lem:ahm-lb} by considering deterministic protocols for $\ahm_r(n_r, \alpha)$ where the players' inputs $(\Aa{r}, \Bb{r})$ are sampled from the hard distribution (which we define shortly). 

\begin{Remark}
	The definition of $\ahm_r(n_r, \alpha)$ (\Cref{def:ahm}) always includes the uniform random search sequence, which is independent of the players' inputs and how they are distributed.
	However, when referring to the $\ahm_r(n_r, \alpha)$ problem, we do not explicitly mention the given search sequence if this is clear from context.
\end{Remark}

% The output for $\ahm_r$ is defined to be the set of solutions to the $k_r \cdot k_{r-1} \cdot \ldots \cdot k_1$ many special base instances, i.e., their corresponding bits. We simplify the problem and allow any protocol to output the solution to a uniform random special base instances defined by the \textbf{search sequence} $(\speckstar{r}, \speckstar{r-1}, \ldots, \speckstar{1})$ such that, for each $i \in [r]$, $\speckstar{i}$ is chosen uniformly and independently at random from $[k_i]$. The search sequence is only given to the final player 

\bigskip

\begin{figure}[h!]
	\begin{center}
	\begin{tikzpicture}[scale=1.1, yscale=-0.75]
		% \foreach \i in {1, 2, ..., 7}{
			%     \foreach \j in {1, 2, ..., 7}{
				%         \ifthenelse{\i < 3 \OR \j < 3 }{
					%             \draw[draw=black, fill=red!50] (\i -1, \j -1) rectangle (\i,\j);
					%             \node[font=\small] () at (\i - 0.5, \j - 0.5) {$B$};
					%         }{
					%             \ifthenelse{\NOT \i = \j}{
						%                 \draw[draw=none, fill=blue!50] (\i -1, \j -1) rectangle (\i -0.5,\j);
						%                 \draw[draw=none, fill=red!50] (\i -0.5, \j -1) rectangle (\i,\j);
						%                 \draw[draw=black, fill=none] (\i -1, \j -1) rectangle (\i,\j);
						%                 \node[font=\small] () at (\i - 0.75, \j - 0.5) {$A$};
						%                 \node[font=\small] () at (\i - 0.25, \j - 0.5) {$\underline{\mathbf{B}}$};
						%             }{
						%                 \draw[draw=blue!50, fill=blue!50] (\i -1, \j -1) rectangle (\i -0.5,\j);
						%                 \draw[draw=none, fill=red!50] (\i -0.5, \j -1) rectangle (\i,\j);
						%                 \draw[draw=black, fill=none] (\i -1, \j -1) rectangle (\i,\j);
						%                 \node[font=\small] () at (\i - 0.75, \j - 0.5) {$A$};
						%                 \node[font=\small] () at (\i - 0.25, \j - 0.5) {$B$};
						%             }
					%         }
				%     }
			% }
		\foreach \i in {1, 2, ..., 7}{
			\foreach \j in {1, 2, ..., 7}{
				\ifthenelse{\i < 3 \OR \j < 3 }{
					\draw[draw=black] (\i -1, \j -1) rectangle (\i,\j);
					\node[font=\small] () at (\i - 0.5, \j - 0.5) {$B$};
				}{
					\ifthenelse{\NOT \i = \j}{
						\draw[draw=black, fill=none] (\i -1, \j -1) rectangle (\i,\j);
						\node[font=\small] () at (\i - 0.5, \j - 0.5) {$A, \underline{\bm{B}}$};
					}{
					\ifthenelse{\NOT \i = 4} {
						\draw[draw=black, fill=gray!30] (\i -1, \j -1) rectangle (\i,\j);
						\node[font=\small] () at (\i - 0.5, \j - 0.5) {$A, B$};
					}
				{
					\draw[draw=black, fill=green!80!black!70] (\i -1, \j -1) rectangle (\i,\j);
					\node[font=\small] () at (\i - 0.5, \j - 0.5) {$A, B$};
				}
					}
				}
			}
		}
		\draw[ultra thick] (2,2) rectangle (7,7);
		\draw[decorate, decoration={brace}] (7.25,0.05) -- (7.25,1.95);
		\node[font=\small, anchor=west] () at (7.5,1) {$b_r \cdot \alpha$}; 
		% \draw[decorate, decoration={brace}] (-0.25,2) -- (-0.25,0);
		\node[font=\small, anchor=east, white] () at (-0.5,1) {$b_r/\alpha$};  % to keep matrix centered
		
		\draw[decorate, decoration={brace}] (7.25,2.05) -- (7.25,6.95);
		\node[font=\small, anchor=west] () at (7.5,4.5) {$b_r \cdot (1 - \alpha)$}; 
		
		\draw[decorate, decoration={brace}] (1.95,7.25) -- (0.05,7.25);
		\node[font=\small] () at (1,7.75) {$b_r \cdot \alpha$};
		% \draw[decorate, decoration={brace}] (0,-0.25) -- (2,-0.25);
		% \node[font=\small] () at (1,-0.75) {$b_r/\alpha$};
		\draw[decorate, decoration={brace}] (6.95,7.25) -- (2.05,7.25);
		\node[font=\small] () at (4.5,7.75) {$b_r \cdot (1 - \alpha)$};
		
		\foreach \i in {1,2}{
			\node[font=\small] () at (-0.5,\i -0.5) {$\sigmar(\i)$};
			\node[font=\small] () at (\i -0.5, -0.5) {$\sigmac(\i)$};
		}
		\node[font=\small] () at (-0.5,7 -0.5) {$\sigmar(b_r)$};
		\node[font=\small] () at (7 -0.5, -0.5) {$\sigmac(b_r)$};
		
		\node[font=\small] () at (-0.5,4) {$\vdots$};
		\node[font=\small] () at (4, -0.5) {$\dots$};
		
		\foreach \i in {3,4, ..., 7}{
			\draw[ultra thick, draw=black, fill=none] (\i -1, \i -1) rectangle (\i,\i);
		}
	\end{tikzpicture}   
\end{center}
\caption{
	An illustration of $\ahm_r(n_r,\alpha)$ where $b_r = 7$, $\speckstar{r} = 2$, and $\alpha = 2/7$, i.e., $b_r \cdot \alpha = 2$. 
	In this figure, the $b_r \times b_r$ matrix is presented after applying the respective permutations of rows and columns using $\sigmar$ and $\sigmac$, which are held by Bob (alternatively, think of $\sigmar$ and $\sigmac$ as identity permutations here). 
	In each position of the matrix, the corresponding input is either a complete instance of $\ahm_{r-1}(n_{r-1}, \alpha)$ denoted by $(A,B) = (\Aa{r-1}, \Bb{r-1})$ or a partial instance of $\ahm_{r-1}(n_{r-1}, \alpha)$ denoted by $B = \Bb{r-1}$ due to the asymmetry of the construction. 
	% All of the $A$'s, which are highlighted in blue are held only by Bob.
	% The $B$'s highlighted in red and underlined belong are held only by Alice, and the ones in white and bolded belong are held by both Alice and Bob. 
	All of the $B$'s are held by Alice, all of the $A$'s are held by Bob, and Bob further holds the $B$'s corresponding to $\Bcommon$, which are bolded and underlined.
	The shaded instances are the special sub-instances $(\Aspec,\Bspec)$, i.e., the ones in the diagonal of the bolded box.
	The remaining $B$'s belong to $\Brest$, i.e., the ones not in the bolded box.
	Given the search sequence $(\speckstar{r}, \speckstar{r-1}, \ldots, \speckstar{1})$, the goal is to solve the the instance identified by ${\speckstar{r}} = 2$ (shaded in green, and denoted by $(\Aspec_2, \Bspec_2)$), using the search sequence $(\speckstar{r-1}, \ldots, \speckstar{1})$. 
	% \kheeran{add color to differentiate the off-diagonal, the special and the rest.}
}\label{fig:insdel_harddist}
\end{figure}

Next, we fix the parameters of the problem, and then define our hard distribution $\DD_r(n_r, \alpha)$. 

\paragraph{Parameter Choices.}
For all $r \geq 1$, it is sufficient for us to consider the 
$\ahm_r$ problem for parameters that satisfy the following conditions:
\begin{equation}\label{eq:params-nb}
	\begin{split}
	n_r &= n_{r-1} \cdot b_r \\
	b_r &= \expo{n_r}{\frac{2^{r-1}}{2^r-1}} \cdot \expo{1-\alpha}{\frac{r\cdot 2^{r-1}}{2^r-1}-1} \\
	n_{r-1} &= \expo{n_r}{\frac{2^{r-1}-1}{2^r-1}} / \expo{1-\alpha}{\frac{r\cdot2^{r-1}}{2^r-1} - 1}.
	\end{split}
\end{equation}
From now on, we only consider $\ahm_r(n_r, \alpha)$ with these parameter choices.

\begin{ourbox}
	\vspace{0.5em}
	\textbf{Hard Distribution $\DD_r(n_r, \alpha)$ for $\ahm_r(n_r, \alpha)$}:
	\begin{itemize}
		\item \textbf{For $\mathbf{r=0}$.} 
		Sample $\bobpart{0} \in \{0,1\}$ uniformly and independently. We also use $\DD_0(1)$ to denote this distribution.
		\item \textbf{For $\mathbf{r \geq 1}$.} 
		Sample the $b_r^2$ instances $(\alicepart{r-1}_{i,j}, \bobpart{r-1}_{i,j})$ from $\DD_{r-1}(n_{r-1}, \alpha)$ independently for each $i, j \in [b_r]$ where $b_r, n_{r-1}$ are as defined in \Cref{eq:params-nb}. Sample the permutations $\sigmar, \sigmac \in S_{b_r}$ uniformly and independently.
		% Additionally, sample the search sequence $(\speckstar{r}, \speckstar{r-1} \ldots \speckstar{1}) \in [k_r] \times [k_{r-1}] \times \ldots \times [k_1]$ uniformly and independently.
	\end{itemize}
\end{ourbox}

Recall that the distribution of the search sequence was already defined to be uniformly random, and independent of all other inputs. 

We have the following obvious observation. 

\begin{observation}\label{obs:round-0-trivial}
    Any $0$-round protocol (wherein Alice outputs the answer) for solving $\ahm_0(1)$ succeeds with probability at most $1/2$ when the input is sampled from $\DD_0(1)$.
\end{observation}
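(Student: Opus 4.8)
\textbf{Proof proposal for Observation~\ref{obs:round-0-trivial}.}

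The plan is to observe that in a $0$-round protocol no communication takes place, so Alice's output is a function of her input alone. By the definition of $\ahm_0(1)$, Alice receives the empty input $\alicepart{0} = \emptyset$, while Bob holds a single bit $\bobpart{0} \in \{0,1\}$ that Alice must output. Hence Alice's output is in fact a fixed value (independent of everything), say $a \in \{0,1\}$, determined entirely by the protocol.

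Under the hard distribution $\DD_0(1)$, the bit $\bobpart{0}$ is uniform on $\{0,1\}$ and independent of all other randomness. Therefore the probability that Alice's fixed output $a$ equals $\bobpart{0}$ is exactly $\Pr(\bobpart{0} = a) = 1/2$. Since the search sequence for $r=0$ is empty, it plays no role here, and this computation already gives the claimed bound. First I would state that for a deterministic protocol the output is a constant and conclude directly; then, if one wishes to handle randomized protocols as well, observe that a randomized $0$-round protocol is a distribution over deterministic ones, each of which succeeds with probability exactly $1/2$, so by averaging the randomized protocol also succeeds with probability at most (indeed exactly) $1/2$.

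There is essentially no obstacle here: this is the base case of the induction underlying Lemma~\ref{lem:ahm-lb}, and its only purpose is to anchor the recursion with a clean, tight bound. The one point worth stating carefully is that no information about $\bobpart{0}$ can reach Alice precisely because there are zero rounds of communication, which is where the ``$0$-round'' hypothesis is used; everything else is immediate from the uniformity of $\bobpart{0}$ under $\DD_0(1)$.
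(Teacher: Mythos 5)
Your proof is correct and matches the intended (and only natural) argument: the paper states this observation as obvious without giving an explicit proof, and your reasoning — that with zero communication and an empty input, Alice's output is independent of the uniform bit $\bobpart{0}$, so success probability is exactly $1/2$, with the randomized case handled by averaging over deterministic protocols — is precisely the reasoning being left implicit.
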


% \begin{proof}
% 	In a $0$-round protocol, there is no communication from Alice or Bob and thus Alice, who starts the (deterministic) protocol, must output the solution. Hence, given the search sequence $(\speckstar{0})$, Alice's trivially fixed output is a uniform random guess of the special search bit $\bobpart{0}[\speckstar{0}]$, which is a uniform random bit held by Bob.
	
% 	% The answer is trivially fixed, and the probability that any $\Yy[\speckstar{0}, \speckstar{0}]$ in $\DD_0(n)$ is equal to the fixed answer is at most $1/2$. 
% \end{proof}

The proof of \Cref{lem:ahm-lb} relies on a round-elimination argument that shows that if there is a good deterministic protocol for $\ahm_r(n_r, \alpha)$ under the input distribution $\DD_r(n_r, \alpha)$, then there must be a good deterministic protocol for $\ahm_{r-1}(n_{r-1}, \alpha)$ under the input distribution $\DD_{r-1}(n_{r-1}, \alpha)$. 
We prove the following key lemma in \Cref{subsec:analysis-ahm-lb}.

%Round Elimination lemma statement
\begin{lemma} \label{lem:round-elim}
	For any $r \geq 1$, sufficiently large $n_r \in \IN$, $\alpha,\delta \in (0,1)$, and integer $s \geq 1$,
	suppose there exists a deterministic $r$-round protocol $\prot_r$ for $\ahm_r(n_r, \alpha)$ with
	\[
	\cc{\prot_r} \leq s
	\]
	and probability of success at least $\delta$ when the input is sampled from $\DD_r(n_r, \alpha)$.  
	Then, for any $\eps \in (0,1)$, there exists a deterministic $(r-1)$-round protocol $\prot_{r-1}$ for $\ahm_{r-1}$ with 
	\[
	\cc{\prot_{r-1}} \leq\jpyconst \cdot (r/\eps) \cdot \paren{ \frac{s}{k_r} + r},
	\]
	 and probability of success 
	 \[
	 \suc{\prot_{r-1}} \geq \delta - \eps -\sqrt{\frac{s}{2 \cdot b_r^2 \cdot \alpha^2}}
	 \] 
	 when the players' input is sampled from $\DD_{r-1}(n_{r-1}, \alpha)$.
\end{lemma}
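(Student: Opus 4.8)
\textbf{Proof plan for \Cref{lem:round-elim}.}

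The plan is to follow the three-step recipe sketched in \Cref{sec:lower-challenge-2}: a direct-sum step to isolate one special sub-instance while charging only a $1/k_r$ fraction of the communication to its information cost, a message-compression step to convert that information cost back into communication, and a genuine round-elimination step that deletes Alice's first message at the cost of a small additive error term. Throughout, I would work with deterministic protocols on the distribution $\DD_r(n_r,\alpha)$, and only invoke the easy direction of Yao's minimax at the very end of the whole induction (in \Cref{lem:ahm-lb}), not here.

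\emph{Step 1 (direct sum, internal information).} Given the $r$-round protocol $\prot_r$ with $\cc{\prot_r}\le s$ and success $\ge\delta$, I would build an $r$-round protocol $\protnum{1}_r$ for $\ahm_{r-1}(n_{r-1},\alpha)$ as follows. Using public randomness, the players sample an index $k\in[k_r]$ uniformly, sample the two permutations $\sigmar,\sigmac$ consistent with placing the given $\ahm_{r-1}$ instance at the $k$-th diagonal special slot, sample all off-diagonal sub-instances $(\Acommon,\Bcommon)$, and sample the remaining entries $\Brest$ and the non-$k$ diagonal special instances. The subtlety — and this is where the construction must differ from the product-distribution case of \cite{AssadiKNS24} — is that for each instance other than the embedded one, one side of its input is sampled \emph{publicly} and the other side \emph{privately}, chosen according to exactly which of $\alicepart{}$/$\bobpart{}$ each of $r$-round Alice and $r$-round Bob is supposed to hold (recall the input-sharing from \Cref{fig:input-partition}: Alice gets all the $\bobpart{r-1}$'s, Bob gets all the $\alicepart{r-1}$'s plus the off-diagonal $\bobpart{r-1}$'s). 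Since the embedded instance's own distribution $\DD_{r-1}$ is a product over its two halves, this public/private split reproduces $\DD_r$ exactly; this mirrors the sampling in \cite{AssadiKZ24}. Then the players run $\prot_r$, feeding it the embedded instance at coordinate $k$, and output its answer on the appropriate truncated search sequence. A standard internal-information direct-sum argument (cf.\ \cite[Lemma~3.1]{Weinstein15}, \cite{BarakBCR10}) gives $\ic{\protnum{1}_r}{\DD_{r-1}} \le \cc{\prot_r}/k_r \le s/k_r$, while the success probability on $\DD_{r-1}$ is still $\ge\delta$ (the answer to a uniformly random special sub-instance is exactly the answer to the whole $\ahm_r$ instance on the matching random search sequence, by definition of search sequences in \Cref{def:ahm}).

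\emph{Step 2 (message compression).} Apply \Cref{prop:msg-compress} to $\protnum{1}_r$ with error parameter $\eps$: I obtain an $r$-round protocol $\protnum{2}_r$ for $\ahm_{r-1}$ with $\cc{\protnum{2}_r}\le \jpyconst\cdot(r/\eps\cdot \ic{\protnum{1}_r}{\DD_{r-1}} + r^2/\eps) \le \jpyconst\cdot(r/\eps)\cdot(s/k_r + r)$, and success $\ge \delta-\eps$ on $\DD_{r-1}$. This is verbatim the bound claimed in the lemma's communication cost, so no optimization is needed here.

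\emph{Step 3 (round elimination).} This is the main obstacle, and the part entirely disjoint from \cite{AssadiKNS24}. The $\ahm_{r-1}$ distribution $\DD_{r-1}$ is itself built from an $\ahm_r$-style layered construction one level down, so I would unfold one level: an instance of $\ahm_{r-1}(n_{r-1},\alpha)$ drawn from $\DD_{r-1}$ is a $b_{r-1}\times b_{r-1}$ matrix of $\ahm_{r-2}$ sub-instances with two hidden permutations. In $\protnum{2}_r$, Alice speaks first. I want to delete this first message. The key is the chain-rule observation from the one-round argument in \Cref{sec:lower-challenge-2}: conditioning on Bob's input together with a carefully chosen set $T_{row},T_{col}$ of rows/columns that pin down all of Bob's submatrix except the target special coordinate, the target $\ahm_{r-2}$ sub-instance is uniformly located among $\simeq b_{r-1}^2\cdot$(unfixed mass) candidate slots. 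Hence Alice's first message of length $\ell$ can reveal only $O(\ell/b_{r-1}^2\cdot\alpha^{-2})$-ish bits — more precisely $O(\ell/(b_r^2\alpha^2))$ after tracking the correct scaling — of information, in a marginal sense, about the bit the final player must output. Formally I would: (i) let Bob privately sample and fix a message $m_1$ for Alice's first round according to its (input-conditional, on $\DD_{r-1}$) distribution; (ii) have the players run $\protnum{2}_r$ from round $2$ onward with $m_1$ hardwired; (iii) bound the error introduced by this substitution via the information bound above combined with Pinsker's inequality, giving an additive loss of at most $\sqrt{\ic{\cdot}{\cdot}/2} \le \sqrt{s/(2b_r^2\alpha^2)}$ (this is where the hypothesis $s\ll b_r^2\alpha^2$ keeps the term $o(1)$ — or rather, the lemma simply carries it as an explicit additive penalty). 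The resulting protocol $\prot_{r-1}$ has $r-1$ rounds (Bob now effectively speaks first, which is fine, the roles alternate naturally with the recursion), communication $\le\jpyconst\cdot(r/\eps)(s/k_r+r)$, and success $\ge \delta-\eps-\sqrt{s/(2b_r^2\alpha^2)}$ on $\DD_{r-1}$, as claimed.

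\emph{Main obstacle.} The genuinely hard part is Step 3: making the chain-rule/uniformity argument rigorous at the level of the recursive distribution, in particular verifying that conditioned on Bob's entire $\ahm_r$-level input (not just the $\ahm_{r-1}$ part) the target special base-slot is still near-uniform among $\Theta(b_r^2)$ positions, which is exactly why the input must be shared so that Bob has \emph{no} information about the $\Brest$ region (the edges incident to the $o(n_r)$-vertex cover), and correctly propagating the conditioning through the public/private sampling introduced in Step 1 so that the information bound applies to the \emph{first message of $\protnum{2}_r$} and not merely of $\prot_r$. I would isolate this as a standalone information-theoretic claim (a ``one-message reveals little'' lemma for $\DD_{r-1}$, conditioned on the relevant part of the $\ahm_r$ input), prove it by the $T_{row}/T_{col}$ fixing plus chain rule exactly as in the one-round overview, and then plug it in; the rest is bookkeeping with \Cref{prop:ic-cc}, \Cref{prop:msg-compress}, and the parameter identities in \Cref{eq:params-nb}.
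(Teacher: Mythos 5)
Your three-step architecture (direct sum via input embedding with a public/private sampling split, JPY message compression, then guessing the first message and paying a Pinsker-type penalty) is exactly the paper's proof, and Steps 1 and 2 match it essentially verbatim, including the quantitative bounds. Two points in your Step 3, as written, would not survive formalization. First, the guessed first message cannot be ``privately sampled by Bob'': both players must know $m_1$ to continue the simulation from round $2$ onward, so it has to be drawn from \emph{public} randomness. Moreover, the order of sampling matters: the paper draws $\Prot_1$ first (jointly with $\kstar$) and then draws the remaining public variables \emph{conditioned on} $\Prot_1,\kstar$, so that the fake and real joint distributions $\DDfake$ and $\DDreal$ differ \emph{only} in whether the embedded pair $(\rAspec_{\rkstar},\rBspec_{\rkstar})$ is sampled conditioned on the public tuple $\rRa$ or not; the error is then $\Exp_{R\sim\rRa}\tvd{\distribution{\rAspec_{\rkstar},\rBspec_{\rkstar}\mid\rRa=R}}{\distribution{\rAspec_{\rkstar},\rBspec_{\rkstar}}}$, which Pinsker and the chain rule convert to $\sqrt{\mi{\rAspec_{\rkstar},\rBspec_{\rkstar}}{\rRa}/2}$. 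If you sample $m_1$ after (or conditioned on) one player's input, this clean factorization is lost. Second, the uniformity/chain-rule argument lives at the $\ahm_r$ level, not one level down: the first message of $\protnum{2}_r$ is a (compressed) message of the simulated $\prot_r$ on the $b_r\times b_r$ instance matrix, and what is hidden from its sender is the position $(\istar,\jstar)$ of the embedded $\ahm_{r-1}$ instance among the $(b_r\cdot\alpha+1)^2$ slots left unfixed by the partial permutations $\Srow,\Scol$; there is no need to unfold $\DD_{r-1}$ into its own $b_{r-1}\times b_{r-1}$ structure (your ``main obstacle'' paragraph states the correct claim, contradicting your earlier framing). A further bookkeeping slip: in the paper's $\protnum{2}_r$ it is Bob, playing $\player{X}$, who speaks first, so that after deleting that message Alice speaks first in $\prot_{r-1}$, as the definition of $\ahm_{r-1}$ requires; you have the roles reversed, which as stated would yield a protocol of the wrong form.
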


With that, we have the main tools required to prove of our lower bound for $\ahm_r(n_r, \alpha)$.
\subsubsection{The Lower Bound for Augmented Hidden Matrices: Proof of \Cref{lem:ahm-lb}}\label{subsubsec:proof-ahm-lb}
\begin{proof}[Proof of \Cref{lem:ahm-lb}]
	We prove this result by first assuming the existence of a deterministic $r$-round protocol $\prot_r$ for $\ahm_r$ that is `too good to be true' (w.r.t.\ the statement of the lemma) on inputs sampled from $\DD_r(n_r, \alpha)$. Then, by iteratively applying the round-elimination argument in \Cref{lem:round-elim}, we ultimately obtain a deterministic $0$-round protocol $\prot_0$ for $\ahm_0$ that is trivially `too good to be true' on inputs sampled from $\DD_0(1)$ (given~\Cref{obs:round-0-trivial}), a contradiction.
	Formally, we prove this by induction on $r$.
	
\textbf{Base case for $\mathbf{r=1}$.}
Suppose that there exists a deterministic $1$-round protocol $\prot_1$ for $\ahm_1(n_1, \alpha)$ with communication cost 
\[\cc{\prot_1} < s_1 = \frac{n_1^2 \cdot  \alpha^2}{\jpyconst \cdot \advconst^2 \cdot 80^2 \cdot 8}\]
and probability of success
\[\suc{\prot_1} \geq \frac{1}{2} \cdot \left(1 + \frac1{40 \cdot \advconst}\right) \]
when the input is sampled from $\DD_1(n_1, \alpha)$.
Then, using \Cref{lem:round-elim} with $\varepsilon = 1/(160 \cdot \advconst)$, we obtain a deterministic $0$-round protocol $\prot_0$ for $\ahm_0(1)$ which has no communication cost (since no messages are communicated) and succeeds with probability
\begin{align*}
	\suc{\prot_0} &\geq \frac{1}{2} \cdot \left(1 + \frac{1}{40 \cdot \advconst}\right) - \frac{1}{160 \cdot \advconst} - \sqrt{\frac{n_1^2 \cdot \alpha^2}{\jpyconst \cdot \advconst^2 \cdot 80^2 \cdot 16 \cdot b_1^2 \cdot  \alpha^2} } \\
	&\geq \frac{1}{2} \cdot \left(1 + \frac{1}{80 \cdot \advconst} - \sqrt{\frac{n_1^2}{\advconst^2 \cdot 160^2 \cdot b_1^2}}\right) \tag{since $\jpyconst \geq 1$} \\
	& = \frac{1}{2} \cdot \left(1 + \frac{1}{160 \cdot \advconst} \right) \tag{since $b_1 = n_1$ by \Cref{eq:params-nb}}
\end{align*}
when the input is sampled from $\DD_0(1)$.
This, however, contradicts \Cref{obs:round-0-trivial} and thus proves the result for the base case when $r = 1$.

\textbf{Inductive step for $\mathbf{r \geq 2}$.} Having proven the base case, we now prove the result for any $r \geq 2$ given the result holds for $r-1$, i.e., the inductive hypothesis. Suppose that there exists a deterministic protocol $\prot_r$ for $\ahm_r(n_r, \alpha) $ with communication cost 
\begin{equation*}
\cc{\prot_r} < s_r = \expo{n_r}{1+\frac1{2^{r}-1}} \cdot \expo{1-\alpha}{\frac{r\cdot2^r}{2^r-1}} \cdot \frac{(\alpha/(1-\alpha))^2}{(\jpyconst \cdot \advconst^2 \cdot 80^2)^r \cdot ((r+1)!)^3} 
\end{equation*}
and probability of success
\[
	\suc{\prot_r} \geq \frac12 \cdot \left(1+\frac{r}{20 \cdot \advconst \cdot (r+1)}\right)
\]
when the input is sampled from $\DD_r(n_r, \alpha)$.
Then, using \Cref{lem:round-elim} with
$\varepsilon = 1/{(40 \cdot \advconst \cdot (r+1)^2)}$,
we obtain a deterministic $(r-1)$-round protocol $\prot_{r-1}$ for $\ahm_{r-1}(n_{r-1}, \alpha)$ with cost
\begin{align}
	\cc{\prot_{r-1}} < \jpyconst \cdot (40 \cdot \advconst \cdot (r+1)^2 \cdot r) \cdot \left( s_r/k_r + r \right) \label{eq:ccprot_r-1}
\end{align}
and probability of success
\begin{align}
	\suc{\prot_{r-1}} \geq \frac12 \cdot \left(1+\frac{r}{20 \cdot \advconst \cdot (r+1)}\right) - \frac{1}{40 \cdot \advconst \cdot (r+1)^2} - \sqrt{\frac{s_r}{2 \cdot b_r^2 \cdot \alpha^2}}
	\label{eq:sucprot_r-1}
\end{align}
when the input is sampled from $\DD_{r-1}(n_{r-1}, \alpha)$.

We now argue that protocol $\prot_{r-1}$ contradicts the result for $r-1$, namely, the inductive hypothesis.
In particular, we show that $\cc{\prot_{r-1}} < s_{r-1}$ and 
\[
\suc{\prot_{r-1}} \geq \frac12 \cdot \paren{1 + \frac{(r-1)}{20 \cdot \advconst \cdot r}}.
\]
The rest of the proof is a careful but rather tedious calculation of the values we obtain using the above reduction for the communication cost and probability of success of the protocol. 

To bound the communication cost, we first reiterate the relation between $n_r$ and $n_{r-1}$ from \Cref{eq:params-nb} as follows:
\begin{align}
	\expo{n_{r}}{\frac{1}{2^{r}-1}} &= \expo{n_{r-1}}{\frac{1}{2^{r-1}-1}} \cdot  \expo{1-\alpha}{\paren{\frac{r\cdot2^{r-1}}{2^r-1} - 1} \cdot \frac1{2^{r-1}-1}} \notag \\
	&= \expo{n_{r-1}}{\frac{1}{2^{r-1} -1}} \cdot  \expo{1-\alpha}{\frac{(r-2)\cdot2^{r-1}+1}{(2^r-1) \cdot (2^{r-1}-1)}}. \label{eq:n-nr-1-inter-1}
\end{align}
Next, we use this to simplify the expression for the term $s_r/k_r$ in \Cref{eq:ccprot_r-1}.
\begin{align*}
	\frac{s_r}{k_r} &= \underbrace{\expo{n_r}{1+\frac1{2^{r}-1}} \cdot \expo{1-\alpha}{\frac{r\cdot2^r}{2^r-1}} \cdot \frac{(\alpha/(1-\alpha))^2}{(\jpyconst \cdot \advconst^2 \cdot 80^2)^r \cdot ((r+1)!)^3}}_{s_r} \cdot \frac1{\underbrace{b_r \cdot (1-\alpha)}_{1/k_r}} \\
	% &= \underbrace{\expo{n}{1+\frac1{2^{r}-1}} \cdot \frac{(\alpha/(1-\alpha))^2}{(\jpyconst \cdot \advconst^2 \cdot 80^2)^r \cdot ((r+1)!)^3} \cdot \expo{1-\alpha}{\frac{r\cdot2^r}{2^r-1}}}_{s_r(n)} \cdot \frac1{\underbrace{b_r \cdot (1-\alpha)}_{1/k_r}} \\
	&= \frac{n_r}{b_r} \cdot \expo{n_r}{\frac1{2^{r}-1}} \cdot \expo{1-\alpha}{\frac{r \cdot 2^r}{2^r - 1} - 1} \cdot \frac{(\alpha/(1- \alpha))^2 }{(\jpyconst \cdot \advconst^2 \cdot 80^2)^r \cdot ((r+1)!)^3} \\
	% &= \paren{\frac{n}{b_r}} \cdot \expo{n}{\frac{1}{2^{r}-1}} \cdot \frac{(\alpha/(1- \alpha))^2 }{(\jpyconst \cdot \advconst^2 \cdot 80^2)^r \cdot ((r+1)!)^3} \cdot \expo{1-\alpha}{\frac{r\cdot2^r}{2^r-1}-1} \\
	&= n_{r-1} \cdot \expo{n_r}{\frac{1}{2^{r}-1}} \cdot \expo{1-\alpha}{\frac{r\cdot2^r}{2^r-1}-1} \tag{as $n_r = n_{r-1} \cdot b_r$ by \Cref{eq:params-nb}} \cdot \frac{(\alpha/(1-\alpha))^2}{(\jpyconst \cdot \advconst^2 \cdot 80^2)^r \cdot ((r+1)!)^3} \\
	&= \expo{n_{r-1}}{1 + \frac{1}{2^{r-1}-1}} \cdot \expo{1-\alpha}{x} \cdot   \frac{(\alpha/(1-\alpha))^2}{(\jpyconst \cdot \advconst^2 \cdot 80^2)^r \cdot ((r+1)!)^3}. \tag{by \Cref{eq:n-nr-1-inter-1}}
	\end{align*}
	where the exponent of $(1-\alpha)$ in the above equation is
\begin{align*}
	x &= \frac{(r-2)\cdot2^{r-1}+1}{(2^r-1) \cdot (2^{r-1}-1)} + \frac{r\cdot2^r}{2^r-1} - 1 \\
	% &= \frac{(r-2)\cdot2^{r-1}+1}{(2^r-1) \cdot (2^{r-1}-1)} + \frac{(r-1)\cdot2^r+1}{2^r-1} \\	
	&=\frac{(r-2)\cdot(2^{r-1}-1)+1 + r-2}{(2^r-1) \cdot (2^{r-1}-1)} + \frac{(r-1)\cdot2^r+1}{2^r-1} \\
	&=\frac{1}{2^{r}-1} \cdot \paren{(r-1)2^{r} + 1 + (r-2) + \frac{r-1}{2^{r-1}-1}} \\
	&= \frac{(r-1)}{2^r-1} \cdot (2^r + 1 + \frac1{2^{r-1}-1}) \\
	&= \frac{(r-1)}{2^r-1} \cdot (\frac{2^{2r-1}-2^r + 2^{r-1}-1 + 1}{2^{r-1}-1}) \\
	&= \frac{(r-1)}{2^r-1} \cdot (\frac{2^{2r-1}-2^{r-1}}{2^{r-1}-1}) \\
	&= \frac{(r-1) \cdot 2^{r-1}}{2^{r-1}-1} \ .
\end{align*}
Therefore, we have that
\begin{align*}
	\frac{s_r}{k_r} &=  \expo{n_{r-1}}{1 + \frac{1}{2^{r-1}-1}} \cdot \expo{1-\alpha}{\frac{(r-1) \cdot 2^{r-1}}{2^{r-1}-1}} \cdot   \frac{(\alpha/(1-\alpha))^2}{(\jpyconst \cdot \advconst^2 \cdot 80^2)^r \cdot ((r+1)!)^3} \\
	% n_{r-1} \cdot n_{r-1}^{1/(2^{r-1}-1)} \cdot \expo{1-\alpha}{\frac{(r-1)2^{r-1}}{2^{r-1}-1}}  \cdot \frac{(\alpha/(1-\alpha))^2}{(\jpyconst \cdot \advconst^2 \cdot 80^2)^r \cdot ((r+1)!)^3} \\
	&= s_{r-1} \cdot \frac1{\jpyconst \cdot \advconst^2 \cdot 80^2 \cdot (r+1)^3}.
\end{align*}
By plugging this back into \Cref{eq:ccprot_r-1}, we obtain
\begin{align*}
\cc{\prot_{r-1}} 
% < \jpyconst \cdot (r/\eps) \cdot \paren{ \frac{s}{k_r} + r} 
% &< \jpyconst \cdot (r/\eps) \cdot \paren{s_{r-1} \cdot \frac1{\jpyconst \cdot 80 \cdot (r+1)^3} + r} \\
&< \jpyconst \cdot (40 \cdot \advconst \cdot (r+1)^2 \cdot r)  \cdot \paren{s_{r-1} \cdot \frac1{\jpyconst \cdot \advconst^2 \cdot 80^2 \cdot (r+1)^3} + r} \\
&\leq \frac1{\advconst^2 \cdot 160} \cdot s_{r-1} + 40 \cdot \jpyconst \cdot \advconst \cdot (r+1)^4 \tag{using $r \leq r +1$} \\
&\leq s_{r-1} \ , \tag{as $r^4 \ll s_{r-1}$ for $r = O(\log \log n)$ and $\alpha \leq 1 - n_r^{-1/2r}$}
\end{align*}
which is our desired bound on the communication cost of $\prot_{r-1}$.

To bound the probability of success, we first reiterate the following useful relation from \Cref{eq:params-nb}:
\begin{align}
	b_r^2 &= \expo{n_r}{\frac{2 \cdot 2^{r-1}}{2^r-1}} \cdot \expo{1-\alpha}{\frac{r\cdot 2 \cdot 2^{r-1}}{2^r-1}-2} \notag \\
	&= \expo{n_r}{1 + \frac{1}{2^r -1}} \cdot \expo{1-\alpha}{\frac{r\cdot 2^{r}}{2^r-1}} / (1- \alpha)^2 \ . \label{eq:params-br2}
\end{align}
Using this, we now simplify the term $s_r/(2\cdot b_r^2 \cdot \alpha^2)$ in \Cref{eq:sucprot_r-1}.
\begin{align*}
	\frac{s_r}{2 \cdot b_r^2 \cdot \alpha^2} &= \underbrace{\expo{n_r}{1+\frac1{2^{r}-1}} \cdot \expo{1-\alpha}{\frac{r\cdot2^r}{2^r-1}} \cdot \frac{(\alpha/(1-\alpha))^2}{(\jpyconst \cdot \advconst^2 \cdot 80^2)^r \cdot ((r+1)!)^3}}_{s_r} \cdot \frac{1}{2 \cdot b_r^2 \cdot \alpha^2} \\
	&= \frac{b_r^2 \cdot \alpha^2}{(\jpyconst \cdot \advconst^2 \cdot 80^2)^r \cdot ((r+1)!)^3} \cdot \frac{1}{2 \cdot b_r^2 \cdot \alpha^2} \tag{by \Cref{eq:params-br2}} \\
	&= \frac{1}{2 \cdot (\jpyconst \cdot \advconst^2 \cdot 80^2)^r \cdot ((r+1)!)^3} \  .
\end{align*}
By plugging this back into \Cref{eq:sucprot_r-1}, we have that
\begin{align*}
	\suc{\prot_{r-1}} 
	% &\geq \frac12 \cdot \left(1+\frac{r}{20 \cdot \advconst \cdot (r+1)}\right) - \frac{1}{40 \cdot \advconst \cdot (r+1)^2} - \sqrt{\frac{s_r}{2 \cdot b_r^2 \cdot \alpha^2}} \\
	&\geq \frac12 \cdot \left(1+\frac{r}{20 \cdot \advconst \cdot (r+1)}\right) - \frac{1}{40 \cdot \advconst \cdot (r+1)^2} - \sqrt{\frac{1}{2 \cdot (\jpyconst \cdot \advconst^2 \cdot 80^2)^r \cdot ((r+1)!)^3}} \\
	&\geq \frac12 \cdot \left(1+\frac{r}{20 \cdot \advconst \cdot (r+1)}\right) - \frac{1}{40 \cdot \advconst \cdot (r+1)^2} - \sqrt{\frac{1}{\advconst^2 \cdot 80^2 \cdot (r+1)^4}} \tag{since $\jpyconst \geq 1$ and $2 \cdot ((r + 1)!)^3 \geq (r + 1)^4$ for $r \geq 1$} \\
	&= \frac12 \cdot \left(1+\frac{r}{20 \cdot \advconst \cdot (r+1)}\right) - \frac{1}{\advconst \cdot 80 \cdot (r+1)^2} \\
	&= \frac12 \cdot \left(1+\frac{r}{20 \cdot \advconst \cdot (r+1)} - \frac{1}{40 \cdot \advconst \cdot (r+1)^2}\right) \\
	&\geq \frac12 \cdot \left(1+\frac{r-1}{20 \cdot \advconst \cdot r}\right) ,
\end{align*}
which is our desired bound.

Overall, we have that the deterministic protocol $\prot_{r-1}$ constructed from $\prot_r$ using \Cref{lem:round-elim} contradicts the lemma for $r-1$ (inductive hypothesis) and thus the lemma must hold for deterministic $r$-round protocols when the input is sampled from the distribution $\DD_{r}$ for every $r \geq 1$.
Finally, by the easy direction of Yao's minmax principle, the same result holds for any (even randomized) $r$-round protocol for any instance of $\ahm_r$ (given a uniform random search sequence).
\end{proof}

% !TeX root = main.tex 
%!TEX root = main.tex

\subsubsection{The Round Elimination Argument: Proof of~\Cref{lem:round-elim}}\label{subsec:analysis-ahm-lb}
%Parts of the construction
We now get to the main part of the argument, which is the proof of~\Cref{lem:round-elim}, restated below. 

\begin{restate}[\Cref{lem:round-elim}]
	For any $r \geq 1$, sufficiently large $n_r \in \IN$, $\alpha,\delta \in (0,1)$, and integer $s \geq 1$,
	suppose there exists a deterministic $r$-round protocol $\prot_r$ for $\ahm_r(n_r, \alpha)$ with
	\[
	\cc{\prot_r} \leq s
	\]
	and probability of success at least $\delta$ when the input is sampled from $\DD_r(n_r, \alpha)$.  
	Then, for any $\eps \in (0,1)$, there exists a deterministic $(r-1)$-round protocol $\prot_{r-1}$ for $\ahm_{r-1}$ with 
	\[
	\cc{\prot_{r-1}} \leq\jpyconst \cdot (r/\eps) \cdot \paren{ \frac{s}{k_r} + r},
	\]
	 and probability of success 
	 \[
	 \suc{\prot_{r-1}} \geq \delta - \eps -\sqrt{\frac{s}{2 \cdot b_r^2 \cdot \alpha^2}}
	 \] 
	 when the players' input is sampled from $\DD_{r-1}(n_{r-1}, \alpha)$.
\end{restate}

 Starting with a deterministic $r$-round protocol $\prot_r$ for $\ahm_r(n_r, \alpha)$ where $\cc{\prot_r} \leq s$ and $\suc{\prot_r} \geq \delta$ under $\DD_r$ as in the statement of \Cref{lem:round-elim}, we construct a deterministic $(r-1)$-round protocol $\prot_{r-1}$ for $\ahm_{r-1}(n_{r-1}, \alpha)$ that proves the lemma in three steps:
\begin{enumerate}[label=$(\roman*)$]
	\item \label{step:i}
	 \textbf{(Input Embedding)} Construct a randomized $r$-round protocol $\protnum{1}_r$ under $\DD_{r-1}$ with success probability the same as $\prot_r$ under $\DD_{r}$, i.e., $\suc{\protnum{1}_r} \geq \delta$, and internal information cost less than the communication cost of $\prot_r$ by a $k_r$ multiplicative factor, i.e., 
	\[
	\ic{\protnum{1}_r}{\DD_{r-1}} \leq \frac{s}{k_r}.
	\]
	\item\label{step:ii} \textbf{(Message Compression)} Construct a randomized $r$-round protocol $\protnum{2}_r$ under $\DD_{r-1}$ with success probability less than that of $\protnum{1}_r$ under $\DD_{r-1}$ by at most an additive $\varepsilon$ factor, i.e., $\suc{\protnum{2}_r} \geq \delta - \varepsilon$, and with communication cost similar to the internal information cost of $\protnum{1}_r$ under $\DD_{r-1}$, i.e., 
	\[
	\cc{\protnum{2}_r} \leq \jpyconst \cdot \frac{r}{\varepsilon} \cdot \paren{\ic{\protnum{1}_r}{\DD_{r-1}} + r} \leq  \jpyconst \cdot \frac{r}{\varepsilon} \cdot \paren{\frac{s}{k_r} + r}. 
	\]
	\item\label{step:iii} \textbf{(Guessing the First Message)} Construct a deterministic $(r-1)$-round protocol $\prot_{r-1}$ under $\DD_{r-1}$ with success probability less than that of $\protnum{2}_r$ under $\DD_{r-1}$ by at most an additive $\simeq \sqrt{s/(b_r^2 \cdot \alpha^2)}$ factor, i.e, $\suc{\prot_{r-1}} \geq \delta - \varepsilon - \sqrt{s/(2 \cdot b_r^2 \cdot \alpha^2)}$, and with communication cost no more than that of $\protnum{2}_r$, i.e., 
	\[
	\cc{\prot_{r-1}} \leq \cc{\protnum{2}_r} \leq \jpyconst \cdot \frac{r}{\varepsilon} \cdot \paren{\frac{s}{k_r} + r}.
	\] 
\end{enumerate}

\subsubsection*{Step \ref{step:i}: A Low Information Cost Protocol via Input Embedding}

We construct a low information cost $r$-round protocol $\protnum{1}_r$ for $\ahm_{r-1}$ by considering an input $(\alicepart{r-1}, \bobpart{r-1}) \sim \DD_{r-1}(n_{r-1}, \alpha)$ and  embedding it in a simulated input instance $(\alicepart{r}, \bobpart{r}) \sim \DD_{r}(n_r, \alpha)$ for $\ahm_r$ such that its output can be used to solve $(\alicepart{r-1}, \bobpart{r-1})$.
Then, the players simulate a run of the protocol $\prot_r$ in $r$ rounds using the simulated input. 

% \paragraph{Notation.}
To simplify the exposition, we disambiguate the players in the construction of $\protnum{1}_r$ (and $\prot_{r-1}$ later in step \ref{step:iii}).
We use Alice and Bob to denote the players in the input instance of $\ahm_{r-1}$ where Alice holds $\Aastar = \alicepart{r-1}$ and Bob holds $\Bbstar = \bobpart{r-1}$. 
On the other hand, we use $\player{X}$ and $\player{Y}$ to refer to the players in the simulated input instance of $\ahm_r$ where
player $\player{X}$ holds $\alicepart{r} = \Xx$ and player $\player{Y}$ holds $\bobpart{r} = (\sigmar, \sigmac, \Yy)$.

%Protocol 1 - low information
\vspace{0.25em}
\begin{ourbox}
	\vspace{0.25em}
	\textbf{An $r$-round protocol $\protnum{1}_r$ for $\ahm_{r-1}(n_{r-1},\alpha)$ on input $(\Aastar,\Bbstar) \sim \DD_{r-1}(n_{r-1}, \alpha)$ where $(\speckstar{r-1}, \speckstar{r-2}, \dots, \speckstar{1})$ is the given uniform random search sequence}: 
	\begin{enumerate}[label=$(\alph*)$]
		\item Using \underline{public randomness}, independently sample a uniform random search index $\kstar \in [k_r]$. Then, jointly sample the following from $\DD_r(n_r, \alpha)$: \label{step:a}
		\begin{itemize}
			\item The uniform random permutations $\sigmar,\sigmac$ of $[b_r]$;
			\item Bob's part of all the $k_r^2 - k_r$ many off-diagonal sub-instances 
			\[
			\Bcommon = (\bobpart{r-1}_{i,j} : b_r \cdot \alpha < \sigmar(i) \neq \sigmac(j) \leq b_r);
			\]
			\item Alice's part of $\kstar-1$ many special sub-instances $\Aspec_{< \kstar} = (\Aspec_k : 1 \leq k < \kstar)$;
			\item Bob's part of $k_r-\kstar$ many special sub-instances $\Bspec_{> \kstar} = (\Bspec_k : \kstar < k \leq k_r)$.
		\end{itemize}
		Let $\istar, \jstar \in [b_r]$ be such that $\sigmar(i) = \sigmac(j) = b_r \cdot \alpha + \kstar$, which both players can compute since $\sigmar,\sigmac, \kstar$ are sampled using public randomness.
		\item Bob takes on the role of player $\player{X}$ and sets $\Xx[\istar,\jstar] = \Bbstar$. 
		Bob \underline{privately samples} $\Xx$ from $\DD_r(n_r,\alpha)$ conditioned on all the random variables from step \ref{step:a} and $\Xx[\istar,\jstar]$, i.e., the remainder of $\player{X}$'s input $\alicepart{r}$.
		\item Alice takes on the role of player $\player{Y}$ and sets $\Yy[\istar,\jstar] = \Aastar$. 
		Alice \underline{privately samples} $\Yy$ from $\DD_r(n_r, \alpha)$ conditioned on all the variables from step \ref{step:a} and $\Yy[\istar,\jstar]$, i.e., the remainder of player $\player{Y}$'s input $\bobpart{r}$.
		\item The players $\player{X}$ ($= \text{Bob}$) and $\player{Y}$ ($= \text{Alice}$) simulate a run of protocol $\prot_r$ in $r$ rounds using their respective inputs $\alicepart{r} = \Xx$ and $\bobpart{r} = (\sigmar,\sigmac,\Yy)$ and starting with player $\player{X}$.
		\item At the end of the protocol, the player that receives the final message gets the uniform random search sequence $(\speckstar{r-1}, \speckstar{r-2}, \ldots, \speckstar{0})$ and then returns the answer of $\prot_r$ on the search sequence $(\kstar, \speckstar{r-1}, \speckstar{r-2}, \ldots, \speckstar{0})$.
	\end{enumerate}
\end{ourbox}
% \vspace{0em}

\begin{observation}\label{obs:input-prot-1}
	In protocol $\protnum{1}_r$, $(\alicepart{r}, \bobpart{r})$ in the simulation of $\prot_r$ is sampled from $\DD_r(n_r, \alpha)$ and $(\kstar, \speckstar{r-1}, \speckstar{r-2}, \ldots, \speckstar{0})$ is a uniform random search sequence. 
\end{observation}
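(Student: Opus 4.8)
\textbf{Proof proposal for Observation~\ref{obs:input-prot-1}.}

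The plan is to verify the two claims directly from the sampling procedure described in protocol $\protnum{1}_r$, checking that the joint distribution of all components of $(\alicepart{r},\bobpart{r})$ matches $\DD_r(n_r,\alpha)$, and that the search index $\kstar$ prepended to the given search sequence is uniform and independent of everything relevant. The key structural fact I will rely on is that $\DD_r(n_r,\alpha)$ samples the $b_r^2$ sub-instances $(\alicepart{r-1}_{i,j},\bobpart{r-1}_{i,j})$ \emph{mutually independently} from $\DD_{r-1}(n_{r-1},\alpha)$, and samples $\sigmar,\sigmac$ independently and uniformly; this is exactly what makes a ``sample-one-half-publicly, the-other-half-privately'' embedding produce the correct joint law.

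First I would observe that step~\ref{step:a} samples $\sigmar,\sigmac$ from their correct (uniform, independent) marginal via public randomness, so the permutation part of $\bobpart{r}$ is already correctly distributed; moreover conditioning on $\kstar$ (which is independent of $\sigmar,\sigmac$) only fixes the names $\istar,\jstar$ of the $\kstar$-th diagonal cell and does not bias $\sigmar,\sigmac$. Next, the input instance $(\Aastar,\Bbstar)$ is, by hypothesis, drawn from $\DD_{r-1}(n_{r-1},\alpha)$; since it is placed at the diagonal cell $(\istar,\jstar)$, which in the $\ahm_r$ construction carries a \emph{complete} $\ahm_{r-1}$ instance (it is a special sub-instance), it plays exactly the role that $\DD_r$ would have assigned to an independently-sampled $\DD_{r-1}$ instance there. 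Then I would note that steps~\ref{step:a}--(c) together specify: $\sigmar,\sigmac$; the off-diagonal Bob-parts $\Bcommon$; the special Alice-parts $\Aspec_{<\kstar}$; the special Bob-parts $\Bspec_{>\kstar}$; the diagonal cell $\Xx[\istar,\jstar]=\Bbstar$ and $\Yy[\istar,\jstar]=\Aastar$; and then, conditioned on all of these, player $\player X$ privately samples the rest of $\alicepart{r}$ (equivalently, the missing $\bobpart{r-1}_{i,j}$'s of $\Xx$) and player $\player Y$ privately samples the rest of $\bobpart{r}$ (the missing off-diagonal $\alicepart{r-1}_{i,j}$'s, the $\Aspec_{\ge\kstar}$ beyond the embedded one, the $\Bspec_{<\kstar}$, and $\Brest$) — each from $\DD_r$ conditioned on what has been fixed. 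Because all sub-instances are independent under $\DD_{r-1}$ and $\DD_r$ is a product over cells times the independent permutations, the chain ``publicly fix a subset of coordinates from their correct marginal, then privately complete the rest from the correct conditional'' reconstitutes precisely $\DD_r(n_r,\alpha)$; crucially the public/private split in steps~\ref{step:a}--(c) is an exact partition of the coordinates of $(\alicepart{r},\bobpart{r})$ with no coordinate sampled twice and none omitted, which I would spell out by listing the coordinates in each of the three buckets (off-diagonal, special/diagonal, and rest) and checking they tile the whole instance.

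For the second claim, $\kstar$ is sampled uniformly from $[k_r]$ using public randomness \emph{independently} of $(\Aastar,\Bbstar)$ and of the search sequence $(\speckstar{r-1},\dots,\speckstar{1})$, and by the definition of $\ahm_r$ (\Cref{def:ahm}) the full search sequence is required to have each coordinate uniform and independent over its range $[b_i(1-\alpha)]$; since $\speckstar{r}$ should range over $[k_r]=[b_r(1-\alpha)]$ and the lower coordinates $(\speckstar{r-1},\dots,\speckstar{1})$ are, by assumption, the given uniform random search sequence for $\ahm_{r-1}$, the concatenation $(\kstar,\speckstar{r-1},\dots,\speckstar{1})$ is a uniform random search sequence for $\ahm_r$, and it is independent of the simulated input because $\kstar$ is drawn independently and the lower part is independent of the $\ahm_{r-1}$ input by hypothesis (and the embedding does not use the lower search coordinates at all). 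I do not anticipate a genuine obstacle here; the only point requiring care — and the one I would present most carefully — is the bookkeeping that the public and private sampling steps form an exact, non-overlapping partition of the coordinates of $(\alicepart{r},\bobpart{r})$, so that the ``sample marginal publicly, then complete conditionally'' identity yields the joint law $\DD_r$ on the nose rather than something merely close to it.
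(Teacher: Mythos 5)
Your proposal is correct and follows essentially the same route as the paper's own proof: both arguments reduce to the facts that all coordinates of $(\alicepart{r},\bobpart{r})$ other than the embedded cell are sampled (publicly or privately) from the correct marginals/conditionals of $\DD_r$, that the embedded cell is independent of everything else under $\DD_r$ and carries exactly a $\DD_{r-1}$ sample, and that $\kstar$ is uniform over $[k_r]$ and independent of the rest. Your extra bookkeeping that the public/private sampling steps exactly tile the coordinates is a slightly more explicit rendering of what the paper states tersely, but it is the same argument.
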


\begin{proof}
	The input to the simulation of $\prot_r$ is $(\alicepart{r}, \bobpart{r})$ where $\alicepart{r} = \Xx$ and $\bobpart{r} = (\sigmar, \sigmac, \Yy)$.
	All variables of the input are jointly sampled from $\DD_r(n_r, \alpha)$ except for Alice and Bob's embedded instance $(\Aastar, \Bbstar) \sim \DD_{r-1}(n_{r-1}, \alpha)$ where $\Xx[\istar, \jstar] = \Bbstar$ and $\Yy[\istar,\jstar] = \Aastar$.
	% Since $(\Aastar, \Bbstar)$ is a valid instances of $\ahm_{r-1}(n_{r-1}, \alpha)$, $(\Aastar, \Bbstar)$ is also a valid instance of $\ahm_r(n_r, \alpha)$.
	Since the instance of $\ahm_{r-1}$ corresponding to $(\Xx[\istar, \jstar], \Yy[\istar,\jstar])$ is independent of all other variables in the distribution $\DD_{r}(n_r, \alpha)$, all other variables are jointly distributed according to $\DD_r(n_r, \alpha)$.
	It remains to show that $(\Xx[\istar, \jstar], \Yy[\istar,\jstar])$ distributed according to $\DD_{r-1}(n_{r-1},\alpha)$, which is required in $\DD_r(n_r, \alpha)$.
	This is immediate from the embedding as $(\Xx[\istar, \jstar], \Yy[\istar,\jstar]) = (\Aastar, \Bbstar) \sim \DD_{r-1}(n_{r-1}, \alpha)$

	Each search index in $(\speckstar{r-1}, \speckstar{r-2}, \ldots, \speckstar{0})$ is uniformly sampled. Then, since $\kstar \in [k_r]$ is also uniformly sampled, we have $(\kstar, \speckstar{r-1}, \speckstar{r-2}, \ldots, \speckstar{0})$ is a random search sequence as required.	
\end{proof}

With this observation, we have that Alice and Bob have successfully simulated $\prot_r$ for $\ahm_r(n_r, \alpha)$ on the correct distribution, as the search sequence $(\kstar, \speckstar{r-1}, \speckstar{r-2}, \ldots, \speckstar{0})$ is uniformly random and the players' inputs are distributed according to $\DD_r(n_r, \alpha)$.
Since the messages communicated by the players are exactly the messages of the simulated protocol $\prot_r$, the communication cost of $\protnum{1}_r$ is the same as $\prot_r$. Although the communication cost is large, we now show that its internal information cost (about input $(\Aastar, \Bbstar)$) is smaller by a multiplicative $k_r$ factor.
%To that end, we require the following notation.

\paragraph{Notation.}
We use $\rAstar, \rBstar$ to denote the random variables corresponding to the input instance of $\ahm_{r-1}$ given to Alice and Bob, which are distributed according to $\DD_{r-1}$. We use $\rProt$ to denote the random variable corresponding to the set of messages sent by both Alice and Bob. For $i \in [r]$, $\Prot_i$ denotes the message sent in round $i$ and $\rProt_i$ denotes the random variable corresponding to $\Prot_i$. 
We use $\rAcommon, \rBcommon, \rkstar, \rAspec, \rBspec$ to denote the random variables corresponding to $\Acommon, \Bcommon$, $\kstar, \Aspec = (\Aspec_1, \Aspec_2, \ldots, \Aspec_{k_r})$, $\Bspec = (\Bspec_1, \ldots, \Bspec_{k_r})$, respectively. 
% We use $\rRone$ to denote the random variable corresponding to the public randomness used by protocol $\prot$ inside protocol $\protnum{1}$.

The following claim proves that information cost of $\protnum{1}_r$ about its input sampled from $\DD_{r-1}$ is $1/k_r$ times smaller than the information cost of $\prot_r$ about its own input sampled from $\DD_r$. 

\begin{claim}\label{clm:ic-lower-1}%[Reduction in information cost]
	In protocol $\protnum{1}_r$, we have that
	\[
	\ic{\protnum{1}_r}{\DD_{r-1}(n_{r-1}, \alpha)} \leq \frac{1}{k_r} \cdot \ic{\prot_r}{\DD_r(n, \alpha)}.
	\]
\end{claim}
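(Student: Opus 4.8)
The plan is to run an \emph{internal}-information-cost direct-sum argument, using the uniformly random index $\rkstar \in [k_r]$ to spread the information evenly across the $k_r$ special sub-instances. Write $\rProt$ for the transcript of the simulated run of $\prot_r$ inside $\protnum{1}_r$ (which is exactly the transcript of $\protnum{1}_r$), let
\[
\rR = (\rkstar, \rsigmar, \rsigmac, \rBcommon, \rAspec_{<\rkstar}, \rBspec_{>\rkstar})
\]
be the public randomness of $\protnum{1}_r$, and abbreviate $\rv{W}_k := (\rsigmar, \rsigmac, \rBcommon, \rAspec_{<k}, \rBspec_{>k})$. By~\Cref{obs:input-prot-1}, conditioned on $\rkstar = k$ and $\rv{W}_k$, the simulated pair $(\alicepart{r}, \bobpart{r})$ is distributed exactly as $\DD_r(n_r,\alpha)$ with its $k$-th special sub-instance equal to the embedded $(\rBstar, \rAstar)$ (so $\rAspec_k = \rAstar$ sits inside $\player{Y}$'s input $\bobpart{r}$ and $\rBspec_k = \rBstar$ sits inside $\player{X}$'s input $\alicepart{r}$). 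Since $\prot_r$ is deterministic, $\ic{\prot_r}{\DD_r} = \mi{\alicepart{r}}{\rProt}[\bobpart{r}] + \mi{\bobpart{r}}{\rProt}[\alicepart{r}]$, and unfolding the public randomness of $\protnum{1}_r$ (using that $\rkstar$ is uniform and that conditional mutual information is an average over the conditioned variable) gives
\[
\mi{\rAstar}{\rProt}[\rR, \rBstar] = \frac{1}{k_r}\sum_{k=1}^{k_r}\mi{\rAspec_k}{\rProt}[\rv{W}_k, \rBspec_k], \qquad
\mi{\rBstar}{\rProt}[\rR, \rAstar] = \frac{1}{k_r}\sum_{k=1}^{k_r}\mi{\rBspec_k}{\rProt}[\rv{W}_k, \rAspec_k],
\]
where the right-hand sides are now evaluated under $\DD_r$. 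It therefore suffices to prove the two bounds $\sum_k \mi{\rAspec_k}{\rProt}[\rv{W}_k, \rBspec_k] \le \mi{\bobpart{r}}{\rProt}[\alicepart{r}]$ and $\sum_k \mi{\rBspec_k}{\rProt}[\rv{W}_k, \rAspec_k] \le \mi{\alicepart{r}}{\rProt}[\bobpart{r}]$, since adding them and dividing by $k_r$ gives the claim.

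For the first bound I would expand $\mi{\bobpart{r}}{\rProt}[\alicepart{r}]$ by the chain rule over the content of $\bobpart{r}$, namely $(\rsigmar, \rsigmac, \rBcommon, \rAcommon, \rAspec_1, \dots, \rAspec_{k_r})$, in that order. All terms are nonnegative, so the sum is at least $\sum_k \mi{\rAspec_k}{\rProt}[\alicepart{r}, \rsigmar, \rsigmac, \rAcommon, \rAspec_{<k}]$. Comparing the $k$-th summand with $\mi{\rAspec_k}{\rProt}[\rv{W}_k, \rBspec_k]$: the former conditions on a strict superset — it has $\rsigmar, \rsigmac, \rBcommon$ (the last inside $\alicepart{r}$), $\rAspec_{<k}$, and $\rBspec_{\ge k}$ (inside $\alicepart{r}$), exactly as $\rv{W}_k$ together with $\rBspec_k$, plus the extra variables $\rBspec_{<k}, \rBrest, \rAcommon$. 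The key point is that each of these extra variables is a deterministic function of sub-instances of $\DD_r$ other than the $k$-th special one, while $\rv{W}_k$ and $\rBspec_k$ touch the $k$-th special sub-instance only through $\rBspec_k$; since the $b_r^2$ sub-instances are drawn mutually independently in $\DD_r$, the extra variables are conditionally independent of $\rAspec_k$ given $(\rv{W}_k, \rBspec_k)$. Conditioning on a variable that is independent of the ``information source'' can only increase mutual information, so the $k$-th chain-rule summand dominates $\mi{\rAspec_k}{\rProt}[\rv{W}_k, \rBspec_k]$, and summing over $k$ yields the first bound.

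The second bound is proved the same way, with the adjustments forced by the asymmetric (``augmented'') shape of the construction, where the publicly revealed parts of the special sub-instances are Alice's ($A$-parts) for indices below $\rkstar$ and Bob's ($B$-parts) above. Here I would expand $\mi{\alicepart{r}}{\rProt}[\bobpart{r}]$ by the chain rule over $(\rBcommon, \rBrest, \rBspec_{k_r}, \rBspec_{k_r-1}, \dots, \rBspec_1)$ — the $\rBspec$-blocks taken in \emph{reverse} order, so that the conditioning in the $k$-th summand contains $\rBspec_{>k}$ and matches $\rv{W}_k$. The extra conditioned variables relative to $\mi{\rBspec_k}{\rProt}[\rv{W}_k, \rAspec_k]$ are then $\rAspec_{>k}, \rAcommon, \rBrest$, which are again functions of sub-instances other than the $k$-th special one, so the same independence-of-conditioning argument applies. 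Adding the two bounds and dividing by $k_r$ finishes the proof.

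I expect the only real obstacle to be the bookkeeping: lining up the public/private split used in the construction of $\protnum{1}_r$ with the precise chain-rule orderings above, and verifying the independence claim carefully — that, conditioned on $\rv{W}_k$ and one of $\{\rBspec_k,\rAspec_k\}$, the residual distribution of the inputs factors as (the $k$-th special sub-instance conditioned on the revealed half) times (everything else). Beyond this, only the chain rule of mutual information, nonnegativity of mutual information, and the fact that adding an independent conditioning variable cannot decrease mutual information are needed.
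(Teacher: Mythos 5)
Your proposal is correct and follows essentially the same route as the paper's proof: a direct-sum argument over the uniformly random embedding index $\rkstar$, reducing to the chain rule, non-negativity of mutual information, and the fact that conditioning on variables independent of the source (here, the other mutually independent sub-instances of $\DD_r$, via~\Cref{prop:info-increase}) cannot decrease mutual information. The only difference is organizational—you expand $\mi{\rBb{r}}{\rProt \mid \rAa{r}}$ and $\mi{\rAa{r}}{\rProt \mid \rBb{r}}$ top-down by the chain rule and dominate each summand, whereas the paper builds each per-$k$ term up to the same quantities—which amounts to applying the identical inequalities in the reverse order.
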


\begin{proof}
	%We know from \Cref{obs:input-prot-1} that the input to $\prot$ is from distribution $\DD_r$. 
	By the definition of internal information cost (\Cref{def:int-info}),
	\begin{align}\label{eq:int-info-prot-one}
		\begin{split}
		\ic{\protnum{1}_r}{\DD_{r-1}(n_{r-1}, \alpha)} &= \mi{\rAstar}{\rProt \mid  \underbrace{\rsigmar, \rsigmac, \rBcommon, \rkstar, \rBspec_{>\rkstar}, \rAspec_{<\rkstar}}_{\text{public randomness of $\protnum{1}_r$}}, \rBstar} \\
		&\hspace{5mm}+ \mi{\rBstar}{\rProt \mid  \underbrace{\rsigmar, \rsigmac, \rBcommon, \rkstar, \rBspec_{>\rkstar}, \rAspec_{<\rkstar}}_{\text{public randomness $\protnum{1}_r$}}, \rAstar}. 
		\end{split}
	\end{align}
Each mutual information term corresponds to the amount of information communicated by each player about their input, and we bound them separately.

For Alice's mutual information term in \Cref{eq:int-info-prot-one}, we have that
\begin{align*}
&\hspace{-20pt}\mi{\rAstar}{\rProt \mid  \rsigmar, \rsigmac, \rBcommon, \rkstar, \rBspec_{>\rkstar}, \rAspec_{<\rkstar}, \rBstar} \\
	&= \frac1{k_r} \cdot \sum_{k \in [k_r]} \mi{\rAstar}{\rProt \mid \rsigmar, \rsigmac, \rBcommon, \rkstar = k, \rBspec_{> k}, \rAspec_{<k}, \rBstar} \tag{by the definition of conditional mutual information and the uniform distribution of $\rkstar$} \\
		&= \frac1{k_r} \cdot \sum_{k \in [k_r]} \mi{\rAspec_k}{\rProt \mid \rsigmar, \rsigmac, \rBcommon, \rkstar = k, \rBspec_{\geq k}, \rAspec_{<k}}  \tag{as $\rAstar = \rAspec_{k}, \rBstar = \rBspec_{k}$ in protocol $\protnum{1}_r$ when $\rkstar = k$}\\
		&= \frac1{k_r} \cdot \sum_{k \in [k_r]} \mi{\rAspec_k}{\rProt \mid \rsigmar, \rsigmac, \rBcommon, \rBspec_{\geq k}, \rAspec_{<k}}; 
	\end{align*}
 the last part holds because the joint distribution of $(\rAspec_{\leq k},\rBspec_{\geq k}, \rsigmar, \rsigmac, \rBcommon, \Prot)$ is independent of the value of $\rkstar$. 
In particular, if we sample any input $(\Aa{r}, \Bb{r})$ from $\DD_r(n, \alpha)$ (which does not include $\kstar = \speckstar{r}$), the values of $\rsigmar, \rsigmac, \rBcommon, \rAspec_{\leq k}, \rBspec_{\geq k}$ are fixed and the value of $\rProt$ is also fixed as $\prot_r$ is deterministic. Hence, the distribution of $\rkstar$ remains uniform over $[k_r]$ irrespective of the instance $(\Aa{r}, \Bb{r})$.
Continuing the bounding of Alice's mutual information term, we have that
\begin{align*}
	&\frac1{k_r} \cdot \sum_{k \in [k_r]} \mi{\rAspec_k}{\rProt \mid  \rsigmar, \rsigmac, \rBcommon, \rBspec_{\geq k}, \rAspec_{<k}} \\
	&\hspace{20pt}\leq \frac1{k_r} \cdot \sum_{k \in [k_r]} \mi{\rAspec_k}{\rProt \mid  \rsigmar, \rsigmac, \rBcommon, \rBspec, \rAspec_{<k}} \tag{by \Cref{prop:info-increase} as $\rBspec_{<k} \perp \rAspec_k \mid  \rsigmar, \rsigmac, \rBcommon, \rAspec_{<k}$
	by the definition of distribution $\DD_r$}  \\
	&\hspace{20pt}= \frac1{k_r} \cdot \mi{\rAspec}{\rProt \mid  \rsigmar, \rsigmac, \rBcommon, \rBspec} \tag{by the chain rule of mutual information, \itfacts{chain-rule}} \\
	&\hspace{20pt}\leq \frac1{k_r} \cdot \mi{\rAspec}{\rProt \mid \rsigmar, \rsigmac, \rBcommon, \rBspec, \rBrest} \tag{by \Cref{prop:info-increase}, as $\rAspec \perp \rBrest \mid \rBspec, \rsigmar, \rsigmac, \rBcommon$ by
	the definition of distribution $\DD_r$} \\
	&\hspace{20pt}= \frac1{k_r} \cdot \mi{\rAspec, \rBcommon}{\rProt \mid \rsigmar, \rsigmac, \rBcommon, \rBspec, \rBrest} \tag{as $\rBcommon$ is fixed} \\
	&\hspace{20pt}\leq \frac1{k_r} \cdot \mi{\rAspec, \rAcommon, \rBcommon, \rsigmar, \rsigmac}{\rProt \mid \rBcommon, \rBspec, \rBrest} \tag{by the chain rule (\itfacts{chain-rule}) and non-negativity of mutual information (\itfacts{info-zero})}\\
	&\hspace{20pt}= \frac1{k_r} \cdot \mi{\rBb{r}}{\rProt \mid \rAa{r}} \tag{recall that input of $\player{Y}$ (resp. $\player{X}$) simulated by Alice (resp. Bob) in $\protnum{1}_r$ is $\Bb{r}$ (resp. $\Aa{r}$)}. 
\end{align*}

For Bob's mutual information term in \Cref{eq:int-info-prot-one}, following a similar argument, we have
\begin{align*}
	&\mi{\rBstar}{\rProt \mid  \rsigmar, \rsigmac, \rBcommon, \rkstar, \rBspec_{>\rkstar}, \rAspec_{<\rkstar}, \rAstar} \\
	&= \frac1{k_r} \cdot \sum_{k \in [k_r]} \mi{\rBspec_{k}}{\rProt \mid  \rsigmar, \rsigmac, \rBcommon, \rkstar = k, \rBspec_{>k}, \rAspec_{\leq k}} \tag{by definition of conditional mutual information and $\rAstar,\rBstar$ and the uniform distribution of $\rkstar$} \\
	&= \frac1{k_r}  \cdot \sum_{k \in [k_r]} \mi{\rBspec_{k}}{\rProt \mid  \rsigmar, \rsigmac, \rBcommon, \rBspec_{>k}, \rAspec_{\leq k}} \tag{as in the previous case for Alice, for any $k \in [k_r]$, $(\rkstar = k) \perp (\rBspec_{\geq k}, \rAspec_{\leq k}, \rsigmar, \rsigmac,\rBcommon)$} \\
	&\leq \frac1{k_r} \cdot \sum_{k \in [k_r]}  \mi{\rBspec_{k}}{\rProt \mid  \rsigmar, \rsigmac, \rBcommon, \rBspec_{>k}, \rAspec} \tag{by \Cref{prop:info-increase}, as $\rAspec_{>k} \perp \rBspec_k \mid \rsigmar,\rsigmac,\rBcommon,\rBspec_{> k}$ by the definition of distribution $\DD_r$} \\
	&= \frac1{k_r} \cdot \mi{\rBspec}{\rProt \mid  \rsigmar, \rsigmac, \rBcommon, \rAspec} \tag{by chain rule of mutual information, \itfacts{chain-rule}} \\
	&\leq \frac1{k_r} \cdot \mi{\rBspec}{\rProt \mid \rsigmar, \rsigmac, \rAcommon, \rBcommon, \rAspec} \tag{by \Cref{prop:info-increase}, as $\rAcommon \perp \rBspec \mid \rsigmar, \rsigmac, \rBcommon, \rAspec$ by the definition of distribution $\DD_r$} \\
	&\leq \frac1{k_r} \cdot \mi{\rBspec, \rBcommon, \rBrest}{\rProt \mid \rsigmar, \rsigmac, \rAcommon, \rBcommon, \rAspec} \tag{by the chain rule (\itfacts{chain-rule}) and non-negativity of mutual information (\itfacts{info-zero})} \\
	&= \frac1{k_r} \cdot \mi{\rAa{r}}{\rProt \mid \rBb{r}} \tag{recall that input of $\player{X}$ (resp. $\player{Y}$) simulated by Bob (resp. Alice) in $\protnum{1}_r$ is $\Aa{r}$ (resp. $\Bb{r}$)}. 
\end{align*}

With both these bounds, we can upper bound the LHS of \Cref{eq:int-info-prot-one} and obtain
\begin{align*}
	\ic{\protnum{1}_r}{\DD_{r-1}} &\leq \frac1{k_r} \cdot \left( \mi{\rBb{r}}{\rProt \mid \rAa{r}} + \mi{\rAa{r}}{\rProt \mid \rBb{r}} \right) \\
	&= \frac1{k_r} \cdot \ic{\prot_r}{\DD_r}, 
\end{align*}
where the last step follows since the messages communicated in $\protnum{1}_r$ are exactly the messages communicated in the simulation of $\prot_r$ and since $(\rAa{r}, \rBb{r}) \sim \DD_r$ in $\protnum{1}_r$ by \Cref{obs:input-prot-1}. 
\end{proof}

We now obtain the following main lemma of this step of the argument:
\begin{lemma}\label{lem:prot-1-details}
	Protocol $\protnum{1}_r$ is an $r$-round protocol for $\ahm_{r-1}(n_{r-1}, \alpha)$ where Bob speaks first with communication cost at most $s$, probability of success at least $\delta$, and internal information cost at most $s/k_r$ when the input is sampled from $\DD_{r-1}(n_{r-1}, \alpha)$.
\end{lemma}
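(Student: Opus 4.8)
\textbf{Proof plan for Lemma~\ref{lem:prot-1-details}.}
The statement to prove is essentially a bookkeeping summary of what we have already established about the protocol $\protnum{1}_r$, so the plan is to assemble the four claimed properties one at a time. First I would verify the \emph{round structure}: $\protnum{1}_r$ runs the simulated protocol $\prot_r$ in steps $(d)$--$(e)$, where the players $\player{X}$ and $\player{Y}$ exchange exactly the $r$ messages of $\prot_r$, starting with $\player{X}$. Since $\player{X}$ is played by Bob and $\player{Y}$ by Alice in $\protnum{1}_r$, this is an $r$-round protocol for $\ahm_{r-1}(n_{r-1},\alpha)$ in which Bob speaks first; steps $(a)$--$(c)$ involve no communication (only public and private sampling). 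This gives the round count and the identity of the first speaker.

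Next I would record the \emph{communication cost}: every bit communicated in $\protnum{1}_r$ is a bit of the simulated transcript of $\prot_r$, so $\cc{\protnum{1}_r} = \cc{\prot_r} \leq s$ by hypothesis. For the \emph{probability of success}, the key point is \Cref{obs:input-prot-1}: the simulated input $(\alicepart{r},\bobpart{r})$ is genuinely distributed as $\DD_r(n_r,\alpha)$, and $(\kstar,\speckstar{r-1},\ldots,\speckstar{0})$ is a genuine uniform random search sequence. Thus the answer returned by $\prot_r$ on this search sequence is correct with probability at least $\delta$ over the joint randomness; and by the embedding in steps $(b)$--$(c)$, namely $\Xx[\istar,\jstar] = \Bbstar$ and $\Yy[\istar,\jstar] = \Aastar$, the $\ahm_{r-1}$ sub-instance indexed by $\kstar$ in the search sequence is exactly the planted instance $(\Aastar,\Bbstar)$. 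Hence whenever $\prot_r$ succeeds on $(\kstar,\speckstar{r-1},\ldots,\speckstar{0})$, the output is the correct answer to $(\Aastar,\Bbstar)$ on the search sequence $(\speckstar{r-1},\ldots,\speckstar{0})$, so $\suc{\protnum{1}_r} \geq \delta$ on $\DD_{r-1}(n_{r-1},\alpha)$.

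Finally, the \emph{internal information cost bound} is exactly \Cref{clm:ic-lower-1} combined with \Cref{prop:ic-cc}: we have $\ic{\protnum{1}_r}{\DD_{r-1}(n_{r-1},\alpha)} \leq \frac{1}{k_r}\cdot\ic{\prot_r}{\DD_r(n_r,\alpha)} \leq \frac{1}{k_r}\cdot\cc{\prot_r} \leq \frac{s}{k_r}$. Putting these four observations together yields the lemma. There is essentially no obstacle here — the real work was already done in \Cref{obs:input-prot-1} and \Cref{clm:ic-lower-1}; the only point that requires a moment of care is making sure the public-private sampling in step $(a)$ is consistent with $\DD_r$ (so that \Cref{obs:input-prot-1} genuinely applies), but that was argued when establishing the observation, and I would simply cite it. So this lemma is a short wrap-up rather than a new argument.
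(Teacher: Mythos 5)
Your proposal is correct and follows essentially the same route as the paper's proof: the round count and first speaker follow from Bob playing $\player{X}$, the communication bound from the fact that only the simulated transcript of $\prot_r$ is communicated, the success probability from \Cref{obs:input-prot-1} together with the embedding $(\Xx[\istar,\jstar],\Yy[\istar,\jstar])=(\Bbstar,\Aastar)$, and the information cost from \Cref{clm:ic-lower-1} combined with $\ic{\prot_r}{\DD_r}\leq\cc{\prot_r}\leq s$. No gaps.
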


\begin{proof}
	In protocol $\protnum{1}_r$, the instance $(\Aa{r}, \Bb{r})$ used to simulate $\prot_r$ is constructed without any communication, so the total number of bits communicated in $\protnum{1}_r$ is at most the total number of bits communicated in $\prot_r$, which is $\cc{\prot_r} \leq s$ as in the statement of \Cref{lem:round-elim}. 
	Bob takes on the role of $\player{X}$ in the simulation and thus Bob speaks first.

	By construction, protocol $\protnum{1}_r$ outputs the same answer as $\prot_r$ since the solution to $(\Aa{r}, \Bb{r}) \sim \DD_r$ on search sequence $(\kstar, \speckstar{r-1}, \speckstar{r-2}, \ldots, \speckstar{0})$ is the same as the solution to the input $ (\Aspec_{\kstar}, \Bspec_{\kstar}) = (\Aastar, \Bbstar) \sim \DD_{r-1}$ on the search sequence $(\speckstar{r-1}, \speckstar{r-2}, \ldots, \speckstar{0})$. Therefore, $\protnum{1}_r$ succeeds with probability at least $\delta$. 
	Finally, the bound on the internal information cost of $\protnum{1}_r$ follows from \Cref{clm:ic-lower-1} and the fact that the communication cost of $\prot_r$, namely, $s$, is an upper bound of its internal information cost.
\end{proof}

\subsubsection*{Step \ref{step:ii}: A Low Communication Cost Protocol via Message Compression}

In this step, we compress the communication cost of $\protnum{1}_r$ using standard message compression techniques. As a direct corollary of \Cref{prop:msg-compress}, we get the following:

\begin{corollary}\label{cor:prot2-details}
	For any $0 < \eps<1$, there exists an $r$-round protocol $\protnum{2}_r$ for $\ahm_{r-1}(n_{r-1},\alpha)$ where Bob speaks first with 
	\[
		\cc{\protnum{2}_r} \leq \jpyconst  \cdot (r/\eps) \cdot\paren{   \frac{s}{k_r} +  r}. 
	\]
	and probability of success at least $\delta - \eps$ when the input is sampled from $\DD_{r-1}(n_{r-1},\alpha)$.
\end{corollary}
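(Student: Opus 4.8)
The plan is to obtain $\protnum{2}_r$ simply by applying the message-compression result of \Cref{prop:msg-compress} to the protocol $\protnum{1}_r$ produced in the previous step. Recall from \Cref{lem:prot-1-details} that $\protnum{1}_r$ is an $r$-round protocol for $\ahm_{r-1}(n_{r-1},\alpha)$ in which Bob speaks first, which succeeds with probability at least $\delta$ and has internal information cost
\[
\ic{\protnum{1}_r}{\DD_{r-1}(n_{r-1},\alpha)} \leq \frac{s}{k_r}
\]
when the input is drawn from $\DD_{r-1}(n_{r-1},\alpha)$; its (potentially large) communication-cost bound of $s$ will not be needed here. Since the communication model of \Cref{sec:cc} allows both public and private randomness, $\protnum{1}_r$ is a legitimate input to \Cref{prop:msg-compress}.

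First I would invoke \Cref{prop:msg-compress} with input distribution $\mu = \DD_{r-1}(n_{r-1},\alpha)$, protocol $\prot = \protnum{1}_r$, round count $r$, and error parameter $\eps$, and set $\protnum{2}_r$ to be the resulting protocol $\prot'$. Since \Cref{prop:msg-compress} yields an $r$-round simulation that preserves the round structure (in particular, which player sends the message in each round), $\protnum{2}_r$ is again an $r$-round protocol in which Bob speaks first. Plugging the information-cost bound above into the communication bound of \Cref{prop:msg-compress} gives
\[
\cc{\protnum{2}_r} \leq \jpyconst \cdot \Paren{\frac{r}{\eps} \cdot \ic{\protnum{1}_r}{\DD_{r-1}(n_{r-1},\alpha)} + \frac{r^2}{\eps}} \leq \jpyconst \cdot \Paren{\frac{r}{\eps} \cdot \frac{s}{k_r} + \frac{r^2}{\eps}} = \jpyconst \cdot \frac{r}{\eps} \cdot \Paren{\frac{s}{k_r} + r},
\]
which is exactly the claimed bound.

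For the success probability, I would use that \Cref{prop:msg-compress} guarantees $\protnum{2}_r$ simulates $\protnum{1}_r$ with error at most $\eps$, using a fresh source of public randomness independent of everything used by $\protnum{1}_r$; hence the joint distribution of the inputs, the transcript, and the last player's output under $\protnum{2}_r$ is within total variation distance $\eps$ of the corresponding distribution under $\protnum{1}_r$. Since the output-correctness event for $\ahm_{r-1}$ --- which, per \Cref{def:ahm}, is a function of the inputs, the transcript, and the independent uniform search sequence revealed only at the very end --- has probability at least $\delta$ under $\protnum{1}_r$, its probability under $\protnum{2}_r$ decreases by at most an additive $\eps$, giving $\suc{\protnum{2}_r} \geq \delta - \eps$. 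The only point needing any care --- and it is routine rather than a genuine obstacle --- is verifying that ``error at most $\eps$'' in \Cref{prop:msg-compress} really does translate into at most an additive $\eps$ loss in the probability of this particular output-correctness event, given that the event also depends on the externally supplied search sequence and on the public/private randomness that we absorb into $\protnum{1}_r$ before invoking the compression; once this is pinned down, the corollary follows by the direct substitution above.
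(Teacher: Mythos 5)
Your proposal is correct and matches the paper's proof essentially verbatim: both apply \Cref{prop:msg-compress} to $\protnum{1}_r$ with error parameter $\eps$, plug in the information-cost bound $s/k_r$ from \Cref{lem:prot-1-details} to get the communication bound, and charge the simulation error of $\eps$ against the success probability. The extra care you flag about the search sequence and randomness is handled implicitly (and just as briefly) in the paper.
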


\begin{proof}
	We use \Cref{prop:msg-compress} on protocol $\protnum{1}_r$ to get protocol $\protnum{2}_r$ such that the simulation fails with probability at most $\eps$. Then, we have that
	\begin{align*}
		\cc{\protnum{2}_r} &\leq \jpyconst \cdot \paren{r/\eps \cdot \ic{\protnum{1}_r}{\DD_{r-1}} + r^2/\eps} \leq \jpyconst  \cdot (r/\eps) \cdot\paren{   \frac{s}{k_r} +  r}. \tag{by \Cref{lem:prot-1-details}}
	\end{align*}
The simulation of $\protnum{1}_r$ using compressed messages fails with probability at most $\eps$, so $\protnum{2}_r$ succeeds with probability at least $\delta-\eps$, as desired. 
\end{proof}

Since this step only compresses the messages sent in $\protnum{1}_r$, protocol $\protnum{2}_r$ is constructed in the same way as $\protnum{1}_r$ with an additional message compression step that uses an independent source of public randomness. 
In particular, $\protnum{2}_r$ is a simulation of $\prot_r$ on an instance of $\ahm_r$ sampled from $\DD_r$, but with communication cost smaller by a $\simeq k_r$ multiplicative factor and with a slightly smaller probability of success.

\subsubsection*{Step \ref{step:iii}: An $(r-1)$-Round Protocol via Guessing the First Message}

In this final and the most important step, we take the $r$-round protocol $\protnum{2}_r$ for $\ahm_{r-1}$ where Bob communicates first and we eliminate the first message by making the players \emph{guess} it using public randomness, thus beginning the protocol from the second round. 
Therefore, we obtain an $(r-1)$-round protocol $\prot_{r-1}$ for $\ahm_{r-1}$ where Alice communicates first as required.
This alters the joint distribution of the input and thus affects the guarantees of the protocol (as in \Cref{cor:prot2-details}), but we will show that the effect is small.

\vspace{0.25em}
\begin{ourbox}
	\vspace{0.25em}
	\textbf{An $(r-1)$-round protocol $\prot_{r-1}$ for $\ahm_{r-1}$ on inputs $(\Aa{r-1},\Bb{r-1}) \sim \DD_{r-1}(n_{r-1}, \alpha)$ where $(\speckstar{r-1}, \speckstar{r-2}, \dots, \speckstar{1})$ is the given uniform random search sequence}: 
	\begin{enumerate}[label=$(\alph*)$]
		\item Using \underline{public randomness}, jointly sample the first message $\Prot_1$ and the search index $\kstar$ from the protocol $\protnum{2}_r$. Then, jointly sample the following from $\DD_{r}(n_r, \alpha)$ conditioned on $\Prot_1, \kstar$:\label{step:aa_pub}
		\begin{itemize}
			\item The uniform random permutations $\sigmar,\sigmac$ of $[b_r]$;
			\item Bob's part of all the $k_r^2 - k_r$ many off-diagonal sub-instances 
			\[
			\Bcommon = (\bobpart{r-1}_{i,j} : b_r \cdot \alpha < \sigmar(i) \neq \sigmac(j) \leq b_r);
			\]
			\item Alice's part of the $\kstar-1$ many special sub-instances $\Aspec_{< \kstar} = (\Aspec_k : 1 \leq k < \kstar)$;
			\item Bob's part of the $k_r-\kstar$ many special sub-instances $\Bspec_{> \kstar} = (\Bspec_k : \kstar < k \leq k_r)$.
		\end{itemize}
		Let $\istar, \jstar \in [b_r]$ be such that $\sigmar(i) = \sigmac(j) = b_r \cdot \alpha + \kstar$, which both players can compute since $\sigmar,\sigmac, \kstar$ are sampled using public randomness.

		\item Bob takes on the role of player $\player{X}$ and sets $\Xx[\istar,\jstar] = \Bbstar$. 
		Bob \underline{privately samples} $\Xx$ from $\DD_r(n_r,\alpha)$ conditioned on all the random variables from step \ref{step:aa_pub} and $\Xx[\istar,\jstar]$, i.e., the remainder of $\player{X}$'s input $\alicepart{r}$.
		\item Alice takes on the role of player $\player{Y}$ and sets $\Yy[\istar,\jstar] = \Aastar$. 
		Alice \underline{privately samples} $\Yy$ from $\DD_r(n_r, \alpha)$ conditioned on all the variables from step \ref{step:aa_pub} and $\Yy[\istar,\jstar]$, i.e., the remainder of player $\player{Y}$'s input $\bobpart{r}$.
		\item The players $\player{X}$ ($= \text{Bob}$) and $\player{Y}$ ($= \text{Alice}$) simulate protocol $\protnum{2}_r$ using their respective inputs $\alicepart{r} = \Xx, \bobpart{r} = (\sigmar,\sigmac,\Yy)$ and the search sequence $(\kstar, \speckstar{r-1}, \speckstar{r-2}, \dots, \speckstar{1})$.
		They use $\Prot_1$ as a guess of the first message sent from $\player{X}$ to $\player{Y}$ and thus only simulate $r-1$ rounds of $\protnum{2}_r$ starting with the second round, i.e., player $\player{Y}$ communicates first.
	\end{enumerate}
\end{ourbox}

Observe that, similar to the construction of $\protnum{2}_r$ (which is identical to $\protnum{1}_r$, except for the message compression), protocol $\prot_{r-1}$ embeds the input for $\ahm_{r-1}$ sampled from $\DD_{r-1}$ into a simulated instance $(\alicepart{r}, \bobpart{r})$ of $\ahm_{r}$.
Then, the players simulate a run of $\protnum{2}_r$ using $(\alicepart{r}, \bobpart{r})$ where, to remove the first round of communication, the first message is guessed using public randomness.
However, due to the random guessing of the first message, the simulated instance is no longer distributed according to $\DD_{r}(n_r, \alpha)$ (recall \Cref{obs:input-prot-1}), i.e., the guarantees of $\protnum{2}_r$ do not hold for $\prot_{r-1}$. 
Despite this, we show that the simulated instance and first message are statistically close  to being distributed as they are in $\protnum{2}_r$ and thus similar guarantees hold.

Let $\DDfake$ denote the joint distribution of $(\rAa{r}, \rBb{r}, \rProt_1, \rkstar)$ as it is in $\prot_{r-1}$ and let $\DDreal$ denote the joint distribution of $(\rAa{r}, \rBb{r}, \rProt_1, \rkstar)$ as it is in $\protnum{2}_r$.
% where the random variables are sampled according to $\DD_r(n_r, \alpha)$ (and we use the appropriate conditioning to deviate from this distribution).
Let 
\[
\rRa := (\rProt_1, \rsigmar, \rsigmac, \rBcommon, \rkstar, \rBspec_{>\rkstar}, \rAspec_{<\rkstar})
\]
denote the public randomness used in step \ref{step:aa_pub} of $\prot_{r-1}$. 
We show in \Cref{clm:DDrealfake-input} that the distributions $\DDfake$ and $\DDreal$ only differ in the way that the input $(\rAstar, \rBstar) = (\rAspec_{\rkstar}, \rBspec_{\rkstar})$ is sampled.
%  i.e., $\DDfake$ samples it according to $\DD_{r}$ ($= \DD_{r-1}$) whereas $\DDreal$ samples it according to $\DD_{r}$ conditioned on $\rRa$.
\begin{claim} \label{clm:DDrealfake-input}
	We have that
	\begin{align*}
		\DDfake 
		&= \distribution{\rRa} 
		\times ~~~ \distribution{\rAspec_{\rkstar}, \rBspec_{\rkstar}} ~~~ \!
		\times  \distribution{\rAcommon, \rAspec_{> \rkstar} \mid \rRa, \rAspec_{\rkstar}}
		\times  \distribution{\rBrest, \rBspec_{<\rkstar} \mid \rRa, \rBspec_{\rkstar}}, \\
		\DDreal 
		&= \distribution{\rRa} 
		\times \distribution{\rAspec_{\rkstar}, \rBspec_{\rkstar} \mid \rRa} 
		\times  \distribution{\rAcommon, \rAspec_{> \rkstar} \mid \rRa, \rAspec_{\rkstar}}
		\times  \distribution{\rBrest, \rBspec_{<\rkstar} \mid \rRa, \rBspec_{\rkstar}}.
	\end{align*}
\end{claim}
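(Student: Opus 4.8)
\textbf{Proof plan for \Cref{clm:DDrealfake-input}.}

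The plan is to unpack both distributions $\DDfake$ and $\DDreal$ by tracing through the sampling procedures of $\prot_{r-1}$ and $\protnum{2}_r$, respectively, and then verify that the two chain-rule factorizations above are exactly what each procedure produces. The key observation driving the whole argument is that the \emph{only} difference between the two protocols is \emph{how the embedded instance} $(\rAspec_{\rkstar}, \rBspec_{\rkstar}) = (\rAstar, \rBstar)$ gets its value: in $\protnum{2}_r$ it is sampled (jointly with everything else) from $\DD_r$ conditioned on the already-fixed public randomness $\rRa$, whereas in $\prot_{r-1}$ it is the \emph{externally given} input drawn from $\DD_{r-1}$ independently of $\rRa$. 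Everything else — the public randomness $\rRa$ itself, and the remaining privately-sampled coordinates — is sampled identically in both protocols.

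First I would write out $\DDreal$. In $\protnum{2}_r$ (which is $\protnum{1}_r$ with message compression, so the input-embedding structure is unchanged), the players first fix $\rRa = (\rProt_1, \rsigmar, \rsigmac, \rBcommon, \rkstar, \rBspec_{>\rkstar}, \rAspec_{<\rkstar})$ using public randomness — giving the factor $\distribution{\rRa}$. Conditioned on this, $\player{X} = $ Bob privately samples the remainder of $\rAa{r}$, and $\player{Y} = $ Alice privately samples the remainder of $\rBb{r}$, each from $\DD_r$ conditioned on $\rRa$ and (in the case of $\protnum{1}_r$) the embedded entry; but in the genuine run of $\prot_r$ under $\DD_r$ that $\protnum{2}_r$ simulates, the embedded entry $(\rAspec_{\rkstar},\rBspec_{\rkstar})$ is itself part of what gets sampled from $\DD_r$. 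By the independence structure of $\DD_r$ (the $b_r^2$ sub-instances are mutually independent, and independent of $\sigmar,\sigmac$), conditioned on $\rRa$ we may sample $(\rAspec_{\rkstar}, \rBspec_{\rkstar})$ first — this is the factor $\distribution{\rAspec_{\rkstar}, \rBspec_{\rkstar} \mid \rRa}$ — then Bob completes his input by sampling the $A$-coordinates he still lacks, namely $\rAcommon$ and $\rAspec_{>\rkstar}$, conditioned on $\rRa$ and $\rAspec_{\rkstar}$ (the other coordinates he holds are already in $\rRa$), giving $\distribution{\rAcommon, \rAspec_{>\rkstar} \mid \rRa, \rAspec_{\rkstar}}$, and Alice completes hers by sampling $\rBrest$ and $\rBspec_{<\rkstar}$ conditioned on $\rRa$ and $\rBspec_{\rkstar}$, giving $\distribution{\rBrest, \rBspec_{<\rkstar} \mid \rRa, \rBspec_{\rkstar}}$. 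I would double-check that these four groups of coordinates partition all of $(\rAa{r}, \rBb{r})$ — i.e. that $\rAa{r}$'s coordinates are $\rAcommon, \rAspec_{<\rkstar}, \rAspec_{\rkstar}, \rAspec_{>\rkstar}$ (and $\rAspec_{<\rkstar}\subseteq\rRa$), while $\rBb{r}$'s are $\rBcommon, \rBrest, \rBspec_{<\rkstar}, \rBspec_{\rkstar}, \rBspec_{>\rkstar}$ (with $\rBcommon, \rBspec_{>\rkstar}\subseteq\rRa$), which follows directly from the \textbf{Notation} paragraph and \Cref{def:ahm} — and that the conditional independences used to drop redundant conditioning (e.g. dropping $\rAspec_{\rkstar}$ from the conditioning of $\rBrest, \rBspec_{<\rkstar}$, and vice versa) hold because distinct sub-instances are independent under $\DD_r$.

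Next I would write out $\DDfake$ the same way, observing the sampling of $\prot_{r-1}$ is verbatim identical to that of $\protnum{2}_r$ in steps \ref{step:aa_pub}–(d) \emph{except} that step (b) sets $\Xx[\istar,\jstar] = \Bbstar$ and step (c) sets $\Yy[\istar,\jstar] = \Aastar$ using the externally supplied input $(\Aastar,\Bbstar)\sim\DD_{r-1}$, which is independent of $\rRa$. Hence the factor for the embedded entry becomes $\distribution{\rAspec_{\rkstar}, \rBspec_{\rkstar}}$ (unconditional on $\rRa$) instead of $\distribution{\rAspec_{\rkstar}, \rBspec_{\rkstar} \mid \rRa}$, and the three remaining factors — $\distribution{\rRa}$, $\distribution{\rAcommon, \rAspec_{>\rkstar}\mid\rRa,\rAspec_{\rkstar}}$, $\distribution{\rBrest, \rBspec_{<\rkstar}\mid\rRa,\rBspec_{\rkstar}}$ — are literally the same, since the private sampling steps condition on exactly $\rRa$ together with the fixed embedded entry, exactly as before. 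This yields the claimed expression for $\DDfake$.

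The main obstacle I anticipate is purely bookkeeping rather than conceptual: one must be careful that $\rRa$ as written really does contain all and only the "publicly sampled" coordinates, that $\rProt_1$ being part of $\rRa$ doesn't disturb the factorization (it is a deterministic function of $(\rAa{r},\rBb{r})$ in $\prot_r$, but under the compressed protocol $\protnum{2}_r$ and the public-randomness guessing it is sampled from its correct marginal, which is precisely the point of including it in $\rRa$ rather than deriving it), and that every "drop the conditioning" step is justified by an explicit conditional independence in $\DD_r$ of the form "distinct sub-instances are independent given $\sigmar,\sigmac$." I would present the argument as: (i) state the coordinate partition; (ii) observe that both protocols sample $\rRa$ identically, then the private completions identically given the embedded entry; (iii) note the only discrepancy is the embedded-entry factor; and (iv) collect terms. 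No nontrivial estimates are needed here — the quantitative work (bounding $\tvd{\DDfake}{\DDreal}$) comes in the next claim, for which this factorization is the setup.
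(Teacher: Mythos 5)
Your proposal is correct and follows essentially the same route as the paper: both trace the two sampling procedures, observe that the only discrepancy is whether the embedded entry $(\rAspec_{\rkstar},\rBspec_{\rkstar})$ is drawn conditioned on $\rRa$ or not, and justify equality of the remaining factors via conditional independence of distinct sub-instances under $\DD_r$ (the paper goes from $\DDfake$ to $\DDreal$ by first writing the fully-chained factorization and then dropping redundant conditioning; you go in the reverse direction, which is immaterial). The one point where your justification is slightly off-target is the handling of $\rProt_1$ inside $\rRa$: the reason conditioning on $\rProt_1$ does not break the independence used to drop $\rBspec_{\rkstar}$ (resp.\ $\rAcommon,\rAspec_{\geq\rkstar}$) from the conditioning is \emph{not} that $\rProt_1$ is sampled from its correct marginal, but that $\rProt_1$ is a deterministic function of player $\player{X}$'s input (the $B$-coordinates $\rBcommon,\rBrest,\rBspec$ together with the independent compression randomness $\rRcomp$) alone — this is exactly how the paper's mutual-information calculations discharge it, by moving $\rProt_1$ out of the conditioning via the chain rule and absorbing it into those $B$-variables.
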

	
\begin{proof}
	We first consider $(\rAa{r}, \rBb{r}, \rProt_1, \rkstar)$ directly as it is sampled in $\prot_{r-1}$.
	The random variable $\rRa$ in $\prot_{r-1}$ is jointly sampled in step \ref{step:aa_pub}.
	Then, Alice and Bob embed $(\rAstar, \rBstar) \sim \DD_{r-1}$ as $\rAspec_{\rkstar} = \rAstar$ and $\rBspec_{\rkstar} = \rBstar$, i.e., $(\rAspec_{\rkstar}, \rBspec_{\rkstar}) \sim \DD_{r-1}$, which is \emph{not} conditioned on $\rRa$.
	Finally, Bob and Alice privately sample the remainder of $\Xx$ and $\Yy$ (and thus the remainder of $\alicepart{r}$ and $\bobpart{r}$), respectively, which is only conditioned on $\rRa$ and their respective embedded inputs. This defines $\DDfake$.

	We then obtain $\DDreal$ from $\DDfake$ by including all previously sampled random variables in the conditioning of the subsequently sampled random variables. That is, we obtain
	\begin{align*}
		\DDreal &= \distribution{\rRa}  \\
		&\times \distribution{\rAspec_{\rkstar}, \rBspec_{\rkstar} \mid \rRa}  \\
		&\times  \distribution{\rAcommon, \rAspec_{> \rkstar} \mid \rRa, \rAspec_{\rkstar}, \rBspec_{\rkstar}} \\
		&\times  \distribution{\rBrest, \rBspec_{<\rkstar} \mid \rRa, \rBspec_{\rkstar}, \rAcommon, \rAspec_{\geq \rkstar}}.
	\end{align*}
	It is easy to verify that this is exactly the distribution of $(\rAa{r}, \rBb{r}, \rProt_1, \rkstar)$ as it is in $\protnum{2}_r$.

	Finally, we simplify the third and fourth terms in $\DDreal$ to show that they are identical to the corresponding terms in $\DDfake$, which follows directly from the claims that 
	\[
	(\rAcommon, \rAspec_{>\rkstar}) \perp \rBspec_{\rkstar} \mid \rRa, \rAspec_{\rkstar} \quad \text{and} \quad \rBspec_{<\rkstar}, \rBrest \perp \rAcommon, \rAspec_{\geq \rkstar} \mid \rRa, \rBspec_{\rkstar},
	\]
	respectively. 
	To prove these claims, it is sufficient to bound the corresponding mutual information terms as follows:
	\begin{align*}
		&\mi{\rAcommon, \rAspec_{>\rkstar}}{\rBspec_{\rkstar} \mid \rRa, \rAspec_{\rkstar}} \\ 
		&\leq \mi{\rAcommon, \rAspec_{>\rkstar}}{\rBspec_{\leq \rkstar}, \rBrest \mid \rRa, \rAspec_{\rkstar}}\tag{by the chain rule (\itfacts{chain-rule}) and non-negativity of mutual information (\itfacts{info-zero})}  \\
			&=  \mi{\rAcommon, \rAspec_{>\rkstar}}{\rBspec_{\leq \rkstar}, \rBrest \mid \rProt_1, \rsigmar, \rsigmac, \rBcommon, \rkstar, \rBspec_{>\rkstar}, \rAspec_{\leq \rkstar}} \tag{by the definition of $\rRa$}\\
			&\leq \mi{\rAcommon, \rAspec_{>\rkstar}}{\rBspec_{\leq \rkstar}, \rBrest, \rProt_1 \mid  \rsigmar, \rsigmac, \rBcommon, \rkstar, \rBspec_{>\rkstar}, \rAspec_{\leq \rkstar}} \tag{by the chain rule (\itfacts{chain-rule}) and non-negativity of mutual information (\itfacts{info-zero})}   \\
			&= \mi{\rAcommon, \rAspec_{>\rkstar}}{\rBspec_{\leq \rkstar}, \rBrest \mid  \rsigmar, \rsigmac, \rBcommon, \rkstar, \rBspec_{>\rkstar}, \rAspec_{\leq \rkstar}} \tag{as $\rProt_1$ is fixed by $\rBcommon, \rBrest, \rBspec$} \\
			&= 0, \tag{by~\itfacts{info-zero} and construction of distribution $\DD_r$}
	\end{align*}
	and
	\begin{align*}
		&\mi{\rBspec_{<\rkstar}, \rBrest}{\rAcommon, \rAspec_{\geq \rkstar} \mid \rRa, \rBspec_{\rkstar}} \\
		&\leq \mi{\rBspec_{<\rkstar}, \rBrest, \rProt_1}{\rAcommon, \rAspec_{\geq \rkstar} \mid \rsigmar, \rsigmac, \rBcommon, \rkstar, \rBspec_{\geq \rkstar}, \rAspec_{\leq \rkstar}}\tag{by the chain rule (\itfacts{chain-rule}) and non-negativity of mutual information (\itfacts{info-zero})}  \\
		&= \mi{\rBspec_{<\rkstar}, \rBrest}{\rAcommon, \rAspec_{\geq \rkstar} \mid \rsigmar, \rsigmac, \rBcommon, \rkstar, \rBspec_{\geq \rkstar}, \rAspec_{\leq \rkstar}} \tag{as $\rProt_1$ is fixed by $\rBrest, \rBcommon, \rBspec$} \\
		&= 0 \tag{by~\itfacts{info-zero} and construction of distribution $\DD_r$}.
	\end{align*}
	This completes the proof.
\end{proof}

We note here that the simulated instance $(\alicepart{r}, \bobpart{r})$ in $\DDreal$ is sampled according to $\DD_r(n_r, \alpha)$ (recall \Cref{obs:input-prot-1}). 
However, this is not true in $\DDfake$ since $(\rAstar, \rBstar) = (\rAspec_{\rkstar}, \rBspec_{\rkstar})$ is sampled according to $=\DD_{r-1}$ instead of $\DD_r \mid \rRa$.
Thus, to bound the total variation distance between $\DDreal$ and $\DDfake$, we show that the information revealed by the random variable 
% $\rRa = (\rProt_1, \rsigmar, \rsigmac, \rBcommon, \rkstar, \rBspec_{>\rkstar}, \rAspec_{<\rkstar})$
$\rRa$
about the input $(\rAstar, \rBstar)$ in $\protnum{2}_r$, i.e., under distribution $\DDreal$, is not too large.
We need an independence claim (\Cref{clm:tvd-bound-help-1}) and then a crucial bound on the information carried by the first message (\Cref{clm:dir-sum-B}).

\begin{claim}\label{clm:tvd-bound-help-1}
	\[
		\rAspec_{<\rkstar} \perp \rBspec_{\rkstar} \mid \rProt_1, \rsigmar, \rsigmac, \rBcommon, \rkstar, \rBspec_{>\rkstar}, \rBspec_{<\rkstar}.
	\]
\end{claim}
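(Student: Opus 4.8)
\textbf{Proof proposal for \Cref{clm:tvd-bound-help-1}.}
The plan is to prove the conditional independence by showing the relevant conditional mutual information vanishes, exactly in the style of the independence arguments inside \Cref{clm:DDrealfake-input}. Concretely, I would show
\[
	\mi{\rAspec_{<\rkstar}}{\rBspec_{\rkstar} \mid \rProt_1, \rsigmar, \rsigmac, \rBcommon, \rkstar, \rBspec_{>\rkstar}, \rBspec_{<\rkstar}} = 0,
\]
which by \itfacts{info-zero} is equivalent to the claimed independence. The key structural fact I would lean on is that in the distribution $\DD_r(n_r,\alpha)$, all the $b_r^2$ sub-instances are sampled mutually independently (conditioned on $\sigmar,\sigmac$), so in particular $\rAspec_{<\rkstar}$ and $\rBspec_{\rkstar}$ come from disjoint sets of sub-instances and are independent given $\rsigmar,\rsigmac$ (and hence also given any further conditioning on $\rBcommon, \rkstar, \rBspec_{>\rkstar}, \rBspec_{<\rkstar}$, none of which touch the $\rkstar$-th special instance's $A$-part or the $(<\rkstar)$ special instances' $B$-parts respectively).

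The only subtlety is the presence of $\rProt_1$ in the conditioning. Here I would use the same trick as in \Cref{clm:DDrealfake-input}: first apply \Cref{prop:info-increase} to move $\rProt_1$ out of the conditioning and into the second argument, i.e.\ bound
\[
	\mi{\rAspec_{<\rkstar}}{\rBspec_{\rkstar} \mid \rProt_1, \rsigmar, \rsigmac, \rBcommon, \rkstar, \rBspec_{>\rkstar}, \rBspec_{<\rkstar}} \leq \mi{\rAspec_{<\rkstar}}{\rBspec_{\rkstar}, \rProt_1 \mid \rsigmar, \rsigmac, \rBcommon, \rkstar, \rBspec_{>\rkstar}, \rBspec_{<\rkstar}},
\]
then observe that $\rProt_1$ is the first message, sent by player $\player{X}$ whose input $\alicepart{r}$ is a deterministic function of $(\rBcommon, \rBspec, \rBrest)$ — in particular it is fixed once we condition on $\rsigmar,\rsigmac,\rBcommon$ together with all of $\rBspec$, which here means $\rBspec_{<\rkstar}, \rBspec_{\rkstar}, \rBspec_{>\rkstar}$. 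So after further conditioning, $\rProt_1$ contributes nothing, and we are left with $\mi{\rAspec_{<\rkstar}}{\rBspec_{\rkstar} \mid \rsigmar, \rsigmac, \rBcommon, \rkstar, \rBspec_{\geq \rkstar}, \rBspec_{<\rkstar}}$, which is $0$ by the independence of distinct sub-instances in $\DD_r$ (and \itfacts{info-zero}).

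I do not expect a real obstacle here — the claim is one of the "bookkeeping" independence facts and the only care needed is to sequence the \Cref{prop:info-increase} step before invoking the fact that $\rProt_1$ is determined by Alice's (i.e.\ $\player{X}$'s) input, which in turn is a function of the $\rBb{r-1}$-matrices. The argument is essentially a carbon copy of the two displayed mutual-information computations in the proof of \Cref{clm:DDrealfake-input}, just with $\rAspec_{<\rkstar}$ in place of $(\rAcommon,\rAspec_{>\rkstar})$ on one side and $\rBspec_{\rkstar}$ alone on the other.
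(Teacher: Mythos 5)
Your overall route is the same as the paper's, and your first step (moving $\rProt_1$ from the conditioning into the joint second argument) is the correct move --- though the justification is the chain rule \itfacts{chain-rule} plus non-negativity \itfacts{info-zero}, not \Cref{prop:info-increase}, which goes in the opposite direction and requires an independence hypothesis.

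The genuine gap is in the next step. You assert that $\rProt_1$ ``is fixed once we condition on $\rsigmar,\rsigmac,\rBcommon$ together with all of $\rBspec$,'' but this is false, and you contradict it one clause earlier: $\player{X}$'s input is a function of $(\rBcommon,\rBspec,\rBrest)$, and $\rBrest$ appears nowhere in the conditioning of the claim. Moreover, $\rProt_1$ here is the first message of the \emph{compressed} protocol $\protnum{2}_r$, so it additionally depends on the compression randomness $\rRcomp$. Hence $\mi{\rAspec_{<\rkstar}}{\rProt_1 \mid \rBspec_{\rkstar},\rsigmar,\rsigmac,\rBcommon,\rkstar,\rBspec_{>\rkstar},\rBspec_{<\rkstar}}$ is not trivially zero: given the conditioning, $\rProt_1$ still carries the randomness of $(\rBrest,\rRcomp)$. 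The paper closes exactly this hole via the data processing inequality (\itfacts{data-processing}), upper bounding
\[
\mi{\rAspec_{<\rkstar}}{\rBspec_{\rkstar},\rProt_1 \mid \rsigmar,\rsigmac,\rBcommon,\rkstar,\rBspec_{>\rkstar},\rBspec_{<\rkstar}}
\leq \mi{\rAspec_{<\rkstar}}{\rBspec_{\rkstar},\rBrest,\rRcomp \mid \rsigmar,\rsigmac,\rBcommon,\rkstar,\rBspec_{>\rkstar},\rBspec_{<\rkstar}},
\]
then dropping $\rRcomp$ since it is an entirely independent source of randomness, and finally using that $\rAspec_{<\rkstar}$ is independent of $(\rBspec_{\rkstar},\rBrest)$ given the conditioning, by the construction of $\DD_r$ (disjoint sub-instances). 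Your conclusion is correct and the repair is small and in the spirit of what you already wrote, but the step as you stated it would fail.
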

\begin{proof}
	Again, we bound the mutual information between the terms. Let $\rRcomp$ represent the independent source of public randomness used in the message compression in step \ref{step:ii}. Then,
\begin{align*}
	&\mi{\rAspec_{<\rkstar}}{\rBspec_{\rkstar} \mid \rProt_1, \rsigmar, \rsigmac, \rBcommon, \rkstar, \rBspec_{>\rkstar}, \rBspec_{<\rkstar}} \\
	&\leq \mi{\rAspec_{<\rkstar}}{\rBspec_{\rkstar}, \rProt_1 \mid \rsigmar, \rsigmac, \rBcommon, \rkstar, \rBspec_{>\rkstar}, \rBspec_{<\rkstar}} \tag{moving $\rProt_1$ by chain rule in \itfacts{chain-rule}, and non-negativity of mutual information \itfacts{info-zero}} \\
	&\leq \mi{\rAspec_{<\rkstar}}{\rBspec_{\rkstar}, \rBrest, \rRcomp \mid \rsigmar, \rsigmac, \rBcommon, \rkstar, \rBspec_{>\rkstar}, \rBspec_{<\rkstar}} \tag{by data processing inequality in \itfacts{data-processing}, $\rProt_1$ is fixed by $\rBrest, \rBspec, \rBcommon, \rRcomp$} \\
	&= \mi{\rAspec_{<\rkstar}}{\rBspec_{\rkstar}, \rBrest \mid \rsigmar, \rsigmac, \rBcommon, \rkstar, \rBspec_{>\rkstar}, \rBspec_{<\rkstar}} \tag{since $\rRcomp$ is an entirely independent source of randomness} \\
	&\leq 0. \qedhere \tag{by~\itfacts{info-zero} and construction of $\DD_r$}
\end{align*}
\end{proof}

The following claim is the heart of the entire proof. 

\begin{claim}\label{clm:dir-sum-B}
	For any $k \in [k_r]$, 
	\[
		\mi{\rBspec_k}{\rProt_1 \mid \rsigmar, \rsigmac, \rBcommon, \rBspec_{<k}, \rBspec_{>k}}  \leq \frac{s}{b_r^2 \cdot \alpha^2}.
	\]
\end{claim}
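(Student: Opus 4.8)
\textbf{Proof proposal for \Cref{clm:dir-sum-B}.}

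The plan is to prove this bound via a direct-sum style argument: the first message $\rProt_1$ has total length at most $s$ (since $\cc{\protnum{2}_r} \geq \cc{\protnum{1}_r}$, but more to the point $\rProt_1$ is a message of $\prot_r$ so has length $\leq s$), yet it must "simultaneously serve" the roughly $b_r^2$ off-diagonal positions of the matrix, and a single special sub-instance $(\Aspec_k,\Bspec_k)$ is, from the point of view of $\rProt_1$, indistinguishable from a uniformly random one of the roughly $b_r^2 \cdot \alpha^2$ positions that could have played that role. First I would set up the sampling: recall that $\sigmar,\sigmac$ are uniform permutations of $[b_r]$, so conditioned on $\rBcommon, \rBspec_{<k}, \rBspec_{>k}$ — but crucially \emph{not} conditioned on $\sigmar,\sigmac$ themselves — the position $(\istar,\jstar)$ with $\sigmar(\istar)=\sigmac(\jstar)=b_r\cdot\alpha + k$ that houses $\rBspec_k$ is still "free" to land on any of the $\geq b_r^2 \cdot \alpha^2$ grid cells $(i,j)$ for which $\sigmar(i) \leq b_r \cdot \alpha$ or $\sigmac(j) \leq b_r\cdot\alpha$ (these are exactly the cells carrying $\rBrest$-type entries, whose $\bobpart{r-1}$ parts are held only by Alice, i.e.\ hidden from Bob). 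This is the same chain-rule-over-a-hidden-permutation idea used in the one-round lower bound in the technical overview (Figure~\ref{fig:tech-chainrule}).

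The key steps, in order: (i) fix $\sigmar,\sigmac,\rBcommon,\rBspec_{<k},\rBspec_{>k}$-\emph{type information} in a way that still leaves the landing cell of $\rBspec_k$ uniform over a set $S$ of size $\geq b_r^2\cdot\alpha^2$; more precisely, condition on the \emph{values} of $\rBb{r}$ restricted to all cells other than the one that will carry $\rBspec_k$, together with enough of $\sigmar,\sigmac$ to pin down everything except which cell in $S$ is the "hole." (ii) Observe that $\rBspec_k$ is sampled from $\DD_{r-1}$ independently of all of this, and it is placed in a cell chosen uniformly from $S$. (iii) Apply the chain rule of mutual information: writing $\rBb{r}$ as the collection of its $b_r^2$ cell-entries, the sum over cells $c\in S$ of $\mi{(\text{entry at }c)}{\rProt_1 \mid \text{everything else fixed appropriately}}$ is at most $\HH(\rProt_1) \leq \cc{\protnum{2}_r}$'s first-message length $\leq s$, by sub-additivity of mutual information / the chain rule together with $\mi{\rX}{\rProt_1}\le\HH(\rProt_1)\le|\rProt_1|$. (iv) Since the hole cell is uniform over $S$ and $|S| \geq b_r^2\cdot\alpha^2$, averaging gives that the information about the entry in the hole cell — which is exactly $\rBspec_k$ — is at most $s/(b_r^2\cdot\alpha^2)$. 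The bookkeeping lemma \Cref{clm:tvd-bound-help-1} and the conditional-independence facts baked into $\DD_r$ (as in \Cref{clm:DDrealfake-input}) are what let me slide the conditioning terms around so that the "everything else fixed" in step (iii) matches the conditioning $\rsigmar,\rsigmac,\rBcommon,\rBspec_{<k},\rBspec_{>k}$ demanded by the claim, using \Cref{prop:info-increase} (conditioning on independent variables doesn't decrease information) and the chain rule from the information-theory appendix.

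I expect the main obstacle to be step (i): carefully specifying the right "partial conditioning" on the permutations $\sigmar,\sigmac$ so that (a) the conditioning used is no more than $(\rsigmar,\rsigmac,\rBcommon,\rBspec_{<k},\rBspec_{>k})$ — or can be reduced to it via \Cref{prop:info-increase} and the independence structure of $\DD_r$ — and (b) the landing cell of $\rBspec_k$ is genuinely uniform over a set of size $\Omega(b_r^2\alpha^2)$ afterwards. The tension is that conditioning on $\rsigmar,\rsigmac$ fully would destroy the uniformity of the hole's location, while conditioning on too little leaves other cell-entries of $\rBb{r}$ correlated with $\rProt_1$ in a way that breaks the clean chain-rule count. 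The resolution is to condition on the multiset of entries but not their arrangement — i.e.\ exploit that $\rProt_1$ is a function of $\rBb{r}$ (all of Alice's input) and that Alice's view is symmetric under relabeling cells within the $\rBrest$ region — and then use the chain rule \emph{over a random injection} rather than over a fixed coordinate, which is precisely the maneuver highlighted in the overview as appearing in \cite{BravermanRWY13,NelsonY19,KunKoW20}. Once that reduction is in place, the remaining arithmetic ($|S|\geq b_r^2\alpha^2$ from the definition of $\Brest$, and $\HH(\rProt_1)\leq s$) is routine.
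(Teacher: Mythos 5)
Your proposal follows essentially the same route as the paper's proof: condition on the partial permutation information (the ordered sets of rows and columns covering $\Bcommon$ and $\Bspec_{<k},\Bspec_{>k}$) rather than on $\sigmar,\sigmac$ in full, observe that the cell housing $\rBspec_k$ then remains uniform over the unfixed cells and that its location is independent of the cell contents and of $\rProt_1$, and conclude with the chain rule over those cells together with $\HH(\rProt_1)\le s$. One small correction to your step (i): the set of candidate cells is the combinatorial rectangle $([b_r]\setminus \Srow)\times([b_r]\setminus\Scol)$ of unfixed rows times unfixed columns, of size $(b_r\alpha+1)^2\ge b_r^2\alpha^2$, and \emph{not} the union of cells with $\sigmar(i)\le b_r\alpha$ or $\sigmac(j)\le b_r\alpha$ --- the special cell has both coordinates mapped to $b_r\alpha+k$ and hence lies outside that union, whereas your own prescription of conditioning on ``everything except which unfixed cell is the hole'' does yield the correct rectangle.
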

\begin{proof}%[Proof of \Cref{clm:dir-sum-B}]
% \paragraph{Notation.}
	Let $\rhor, \rhoc$ be the inverse of $\sigmar, \sigmac$, and let $\rrhor, \rrhoc$ denote the random variables corresponding to $\rhor, \rhoc$ respectively.
	For ease of exposition, we represent any permutation $\rho$ on $[b_r]$ as an ordered set $S := \{\rho(\ell) : \ell \in [b_r] \}$ with the ordering that $\rho(\ell) < \rho(\ell')$ for $\ell < \ell'$.
	Using this representation, we allow for partial definitions of permutations, i.e., ordered subsets $T \subset S$.
	In particular, we  require the following partial definitions of $\rhor$ and $\rhoc$:
	\begin{itemize}
		\item 
		Let $\Srow$ be an ordered set $$\Srow = \set{ \rhor(\ell) \mid b_r \cdot \alpha  <\ell \leq b_r, \ell \neq b_r\cdot \alpha +k}$$
		of size $b_r \cdot (1-\alpha) -1$, which is the set of rows that corresponds to $\Bcommon$ except for the row corresponding to $\rBspec_k$. Let $\rSrow$ denote its random variable.
		\item 
		Let $\Scol$ be an ordered set 
		\[\Scol =  \set{ \rhoc(\ell) \mid b_r \cdot \alpha  <\ell \leq b_r, \ell \neq b_r\cdot \alpha +k}\]
		of size $b_r \cdot (1-\alpha) -1$, which is the set of columns that corresponds to $\Bcommon$ except for the column corresponding to $\rBspec_k$. Let $\rScol$ denote its random variable.  
	\end{itemize}
	We give an illustration of these definitions in \Cref{fig:TrowTcol}.

	\begin{figure}[ht]
		\centering
		\begin{tikzpicture}[scale=1.1, yscale=-0.75]

    % \foreach \i in {1,2}{
    %     \foreach \j in {1,2}{
    %         \draw[draw=none, fill=green!80!black] (\i -1, \j -1) rectangle (\i,\j);
    %         \draw[draw=none, fill=green!80!black] (\j -1, \i -1) rectangle (\j,\i);
    %     }
    % }

    \foreach \i in {3, 5, 6, 7}{
        \foreach \j in {1, ..., 7}{
            \draw[draw=none, fill=blue, opacity = 0.5] (\i -1, \j -1) rectangle (\i,\j);
        }
    }

    \foreach \i in {3, 5, 6, 7}{
        \foreach \j in {1, ..., 7}{
            \draw[draw=none, fill=red, opacity = 0.5] (\j -1, \i -1) rectangle (\j,\i);
        }
    }

    \foreach \i in {1, 2, ..., 7}{
        \foreach \j in {1, 2, ..., 7}{
            \draw[draw=black] (\i -1, \j -1) rectangle (\i,\j);
            \node[font=\small] () at (\i - 0.5, \j - 0.5) {$B$};
        }
    }

    \draw[ultra thick] (2,2) rectangle (7,7);
    \draw[decorate, decoration={brace}] (7.25,0.05) -- (7.25,1.95);
    \node[font=\small, anchor=west] () at (7.5,1) {$b_r \cdot \alpha$}; 
    % \draw[decorate, decoration={brace}] (-0.25,2) -- (-0.25,0);
    \node[font=\small, anchor=east, white] () at (-0.5,1) {$b_r/\alpha$};  % to keep matrix centered
    
    \draw[decorate, decoration={brace}] (7.25,2.05) -- (7.25,6.95);
    \node[font=\small, anchor=west] () at (7.5,4.5) {$b_r \cdot (1 - \alpha)$}; 
    
    \draw[decorate, decoration={brace}] (1.95,7.25) -- (0.05,7.25);
    \node[font=\small] () at (1,7.75) {$b_r \cdot \alpha$};
    % \draw[decorate, decoration={brace}] (0,-0.25) -- (2,-0.25);
    % \node[font=\small] () at (1,-0.75) {$b_r/\alpha$};
    \draw[decorate, decoration={brace}] (6.95,7.25) -- (2.05,7.25);
    \node[font=\small] () at (4.5,7.75) {$b_r \cdot (1 - \alpha)$};
    
    \foreach \i in {1}{
        \node[font=\small] () at (-0.5,\i -0.5) {$\sigmar(\i)$};
        \node[font=\small] () at (\i -0.5, -0.5) {$\sigmac(\i)$};
    }
    \node[font=\small] () at (-0.5,7 -0.5) {$\sigmar(b_r)$};
    \node[font=\small] () at (7 -0.5, -0.5) {$\sigmac(b_r)$};
    
    \node[font=\small] () at (-0.5,3.5) {$\vdots$};
    \node[font=\small] () at (3.5, -0.5) {$\dots$};
    
    % \foreach \i in {3,4, ..., 7}{
    %     \draw[ultra thick, draw=black, fill=none] (\i -1, \i -1) rectangle (\i,\i);
    % }

    \draw[ultra thick, draw=green!80!black, fill=none] (4 -1, 4 -1) rectangle (4,4);

\end{tikzpicture}   
		\caption{An illustration of the partial definitions $\Srow$ and $\Scol$ for an instance of $\ahm_r(n_r,\alpha)$ where $b_r = 7$, $\alpha = 2/7$, and $k = 2$. In this figure, the $b_r \times b_r$ matrix is after applying the respective permutations of rows and columns using $\sigmar$ and $\sigmac$ held by Bob.
		In each position of the matrix, the corresponding input is the $B's$ from each instance of $\ahm_{r-1}(n_{r-1}, \alpha)$ that is held by Alice.
		$\Srow$ and $\Scol$ are partial definitions of the permutations $\rhor$ and $\rhoc$, i.e., the inverses of $\sigmar$ and $\sigmac$, respectively. 
		$\Srow$ corresponds to fixing the rows filled red and $\Scol$ corresponds to fixing the columns filled blue, which implies fixing positions filled purple. The unfilled (white) positions remain unrestricted by $\Srow,\Scol$, i.e., the green outlined special position $\Bspec_k$ could still (uniformly) be in any of the unfilled positions when conditioned on $\Srow,\Scol$ (instead of $\sigmar, \sigmac$).
		} \label{fig:TrowTcol}
	\end{figure}
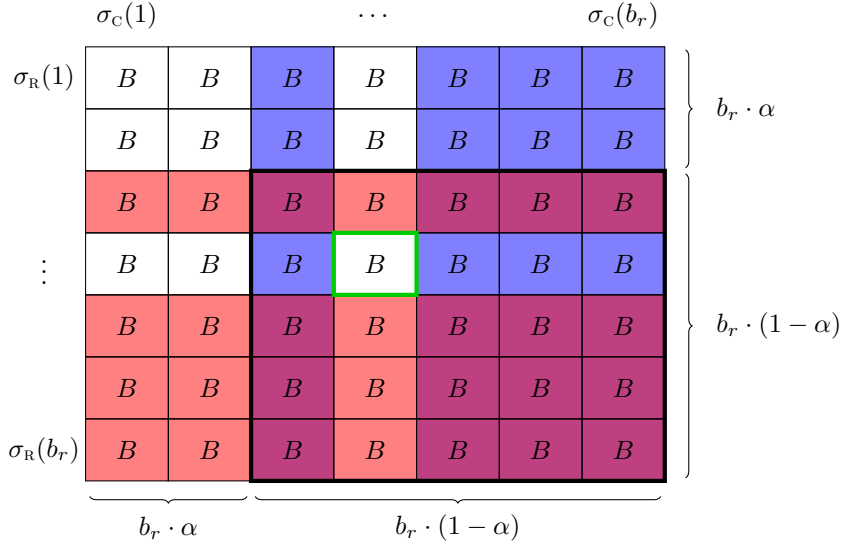

	This now allows us to write the indices $i,j \in [b_r]$ of the instances corresponding to $\rBspec$ and $\rBcommon$ using $\rhor, \rhoc, \Srow, \Scol$:
	\begin{enumerate}[label=$(\roman*)$]
		\item 	For any $\ell \in [k_r]$, $\rBspec_{\ell}$ is the same as the random variable $\rB_{\rhor(b_r\cdot \alpha + \ell), \rhoc(b_r \cdot \alpha + \ell)}$
		where we omit the superscript $r-1$ as it is clear from context; 
		\item $\rBcommon, \rBspec_{<k}, \rBspec_{>k}$ is the union of combinatorial rectangles formed by 
		\[
		(\rSrow \times \rScol), \quad (\rSrow \times \set{\rrhoc(b_r\cdot \alpha +k)}),  \quad (\set{\rhor(b_r \cdot \alpha + k)} \times \rScol).
		\]
	\end{enumerate}
	Finally, let $\rZ$ be defined as, 
	\[
		\rZ := (\rBcommon, \rBspec_{<k}, \rBspec_{>k}). 
	\]
	With these at hand, we can start bounding the mutual information term in the claim. Firstly, it is easy to see that,
\begin{align*}
		&\mi{\rBspec_k}{\rProt_1 \mid \rsigmar, \rsigmac, \rBcommon, \rBspec_{<k}, \rBspec_{>k}} = \mi{\rBspec_k}{\rProt_1 \mid \rsigmar, \rsigmac, \rZ},
\end{align*}
by definition of $\rZ$. 

We need the following intermediate claim for our proof. 
\begin{claim}\label{clm:ind-rest-of-sigma}\[
	\distribution{\rBspec_k, \rProt_1, \rZ \mid \rsigmar, \rsigmac} = \distribution{\rBspec_k, \rProt_1, \rZ \mid \rSrow, \rScol,  \rrhor(b_r \cdot \alpha + k), \rrhoc(b_r \cdot \alpha + k)}.
	\]
\end{claim}
\begin{proof}
	Let $\rrhor([b_r \cdot \alpha])$ and $\rrhoc([b_r \cdot \alpha])$ denote the random variables $\rrhor(i)$ and $\rrhoc(i)$ for all $i \in [b_r \cdot \alpha]$ respectively. 
We know that random variable $\rsigmar$ is fixed by $\rrhor$, as they are inverses of each other. Random variable $\rrhor$ is in turn fixed by all the values each element of $[b_r]$ is mapped to, i.e., $\rrhor(i)$ for $i \in [b_r]$. 

Hence, it is sufficient to argue that,
\[
(\rBspec_k, \rProt_1, \rZ) \perp \rrhor([b_r \cdot \alpha]), \rrhoc([b_r \cdot \alpha])  \mid  \rSrow, \rScol, \rrhor(b_r \cdot \alpha +k), \rrhoc(b_r \cdot \alpha + k),
\]
to prove the claim. 
We know that $\rBspec_k, \rZ$ is the random variable corresponding to the instances in $\Aa{r}$ inside the combinatorial rectangle 
\[
	(\rSrow \cup \{\rrhor(b_r \cdot \alpha +k)\}) \times (\rScol \cup \{\rrhoc(b_r \cdot \alpha +k)\}).
\]
The entirety of this rectangle is fixed by the conditioning. %We also know that, $\rProt_1$ (sent by player $\player{X}$) is independent of the permutations $\rsigmar, \rsigmac$ (held by player $\player{Y}$). 

Proving the claim amounts to proving that the joint distribution of $\Bb{r-1}_{i,j}$ for a fixed set of positions of $\Aa{r}$ (the ones inside the rectangle fixed by conditioning), and the transcript $\rProt_1$ is independent of the \emph{permutation} of the rest of the positions of $\Aa{r}$. This is true because $\player{X}$, who sends  $\rProt_1$, does not have access to either of the permutations $\rrhor$ or $\rrhoc$ (held by $\player{Y}$).\Qed{clm:ind-rest-of-sigma}

\end{proof}
We can continue to bound the mutual information now. 
\begin{align*}
&\mi{\rBspec_k}{\rProt_1 \mid \rsigmar, \rsigmac, \rZ} \\
&=  \mi{\rBspec_k}{\rProt_1 \mid \rZ, \rSrow, \rScol, \rrhor(b_r \cdot \alpha + k), \rrhoc(b_r\cdot \alpha + k)} \tag{by \Cref{clm:ind-rest-of-sigma}} \\
&= \Exp_{\substack{(\Srow,\Scol) \\ \sim (\rSrow , \rScol)}} \mi{\rBspec_k}{\rProt_1 \mid \rZ, \rSrow = \Srow, \rScol = \Scol, \rrhor(b_r \cdot \alpha + k), \rrhoc(b_r\cdot \alpha + k)} \tag{by  definition of conditional mutual information} \\
&\Leq{(1)} \frac1{b_r^2 \cdot \alpha^2} \Exp_{\substack{(\Srow,\Scol) \\ \sim (\rSrow , \rScol)}} \sum_{\substack{i \in [b_r] \setminus \rSrow, \\ j \in [b_r] \setminus \rScol }} \mi{\rB_{i,j}}{\rProt_1 \mid \rZ, \Srow, \Scol, \rrhor(b_r \cdot \alpha + k) = i, \rrhoc(b_r\cdot \alpha + k) = j} \\
&\Eq{(2)} \frac1{b_r^2 \cdot \alpha^2} \Exp_{\substack{(\Srow,\Scol) \\ \sim (\rSrow , \rScol)}} \sum_{i \in [b_r] \setminus \rSrow, j \in [b_r] \setminus \rScol }\mi{\rB_{i,j}}{\rProt_1 \mid \rZ, \rSrow = \Srow, \rScol = \Scol} \\
&\Eq{(3)} \frac1{b_r^2 \cdot \alpha^2}  \sum_{i \in [b_r] \setminus \rSrow, j \in [b_r] \setminus \rScol }\mi{\rB_{i,j}}{\rProt_1 \mid \rZ, \rSrow, \rScol}
\end{align*}
where 
$(1)$ holds by the definition of conditional mutual information since, conditioned on any $\rSrow = \Srow$ and $\rScol = \Scol$, the values of $\rrhor(b_r \cdot \alpha + k)$ and $\rrhoc(b_r \cdot \alpha + k)$ are uniformly chosen from $[b_r]\backslash \Srow$ and $[b_r]\backslash \Scol$ (each of size $b_r \cdot \alpha + 1$), respectively; 
$(2)$ holds because, conditioned on any choice of $\rSrow = \Srow$ and $\rScol = \Scol$,  the event $\rrhoc(b_r\cdot \alpha + k) = j, \rrhor(b_r \cdot \alpha + k) = i$ is independent of $(\rB_{i,j}, \rProt_1, \rZ)$; and 
$(3)$ holds by the linearity of expectation and the definition of conditional mutual information.

We re-index both the sets $[b_r] \setminus \rSrow$ and $[b_r] \setminus \rScol$ as $[b_r \cdot \alpha + 1]$ for ease of exposition just for this final part of the proof. 
We know that $\rB_{i,j}$ for any $i, j \in [b_r]$ are sampled independently of each other in distribution $\DD_r$.
Hence, using \Cref{prop:info-increase}, we have that 
\begin{align*}
 &\frac1{b_r^2 \cdot \alpha^2}  \sum_{i \in [b_r] \setminus \rSrow, j \in [b_r] \setminus \rScol }\mi{\rB_{i,j}}{\rProt_1 \mid \rZ, \rSrow, \rScol} \\
 &\leq  \frac1{b_r^2 \cdot \alpha^2}  \sum_{i \in [b_r \cdot \alpha + 1], j \in [b_r \cdot \alpha + 1] }\mi{\rB_{i,j}}{\rProt_1 \mid  \rB_{i, >j}, \rB_{>i},  \rZ, \rSrow, \rScol} \\
 &= \frac1{b_r^2 \cdot \alpha^2} \cdot \mi{\rB_{1,1}, \ldots, \rB_{1, b_r\cdot \alpha + 1},\rB_{2,1}, \ldots, \rB_{b_r\cdot\alpha+1, b_r\cdot\alpha+1}}{\rProt_1 \mid \rZ, \rSrow, \rScol} \tag{by chain rule of mutual information, \itfacts{chain-rule}} \\
 &\leq \frac{1}{b_r^2 \cdot \alpha^2} \cdot \mi{\rBb{r}}{\rProt_1 \mid \rZ, \rSrow, \rScol} \tag{by the chain rule (\itfacts{chain-rule}) and non-negativity of mutual information (\itfacts{info-zero})} \\
%  &= \frac1{b_r^2 \cdot \alpha^2} \cdot \mi{\rBb{r}}{\rProt_1} \tag{as $\rSrow, \rScol \perp (\rBb{r}, \rProt_1)$} \\
 &\leq \frac{s}{b_r^2 \cdot \alpha^2}. \tag{as $\mi{\rBb{r}}{\rProt_1 \mid \rZ, \rSrow, \rScol} \leq \en{\rProt_1} \leq s$ by \itfacts{uniform}}
\end{align*}
This proves the claim.
\end{proof}

We now use these claims to show that the information revealed by random variable $\rRa$ about the input $(\rAstar, \rBstar)$ is not too large in $\DDreal$.
\begin{claim}\label{clm:tvd-mi-bound}
\[
\mi{\rAspec_{\rkstar}, \rBspec_{\rkstar}}{\rRa} \leq \frac{s}{ b_r^2 \cdot \alpha^2}.
\]
\end{claim}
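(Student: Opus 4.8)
The plan is to expand $\rRa = (\rProt_1, \rsigmar, \rsigmac, \rBcommon, \rkstar, \rBspec_{>\rkstar}, \rAspec_{<\rkstar})$ via the chain rule of mutual information (\itfacts{chain-rule}) and bound the information that $(\rAspec_{\rkstar},\rBspec_{\rkstar})$ reveals about each component in turn, showing that only the first message $\rProt_1$ contributes, and it contributes at most $s/(b_r^2\alpha^2)$ by \Cref{clm:dir-sum-B}. Concretely, I would first observe that conditioned on $\rkstar = k$, we have $\rAspec_{\rkstar} = \rAspec_k$ and $\rBspec_{\rkstar} = \rBspec_k$, and $\rkstar$ itself is independent of $(\rAspec_k,\rBspec_k)$ and of all the other public-randomness components (this is the same kind of observation already used in the proof of \Cref{clm:ic-lower-1}: the distribution of $\rkstar$ is uniform and does not depend on the underlying $\ahm_r$ instance). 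So it suffices to bound, for each fixed $k \in [k_r]$,
\[
	\mi{\rAspec_k,\rBspec_k}{\rProt_1, \rsigmar, \rsigmac, \rBcommon, \rBspec_{>k}, \rAspec_{<k}},
\]
and then average over $k$ using the uniform distribution of $\rkstar$ (this only costs a factor that is already absorbed, since each term is bounded by the same quantity).

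Next I would peel off the components that carry no information. By the construction of $\DD_r$, the permutations $\rsigmar,\rsigmac$ are sampled independently of all the sub-instances, so they are independent of $(\rAspec_k,\rBspec_k)$; similarly $\rBspec_{>k}$ and $\rAspec_{<k}$ are sampled independently of $(\rAspec_k,\rBspec_k)$, and $\rBcommon$ is as well (all sub-instances are drawn independently). Using \Cref{prop:info-increase} and the chain rule, I would argue that after conditioning on $(\rsigmar,\rsigmac,\rBcommon,\rBspec_{>k},\rAspec_{<k})$ — call this conditioning event $\cE$ — these variables drop out of the mutual information, reducing the bound to
\[
	\mi{\rAspec_k,\rBspec_k}{\rProt_1 \mid \rsigmar,\rsigmac,\rBcommon,\rBspec_{>k},\rAspec_{<k}}.
\]
Here I would split $(\rAspec_k,\rBspec_k)$: since $\rProt_1$ is Alice's first message in the simulated $\ahm_r$ instance — wait, more precisely in $\protnum{2}_r$ (which is $\protnum{1}_r$ with compressed messages, and in $\protnum{1}_r$ the first message is sent by $\player{X} = \text{Bob}$, who holds $\Xx = \Aa{r}$) — the first message depends only on $\player{X}$'s input, i.e., on $\Aa{r}$ (and the compression randomness $\rRcomp$, which is independent). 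Since $\rBspec_k$ is part of $\player{Y}$'s input $\Bb{r}$, and not part of $\Aa{r}$, conditioning appropriately I can show the contribution of $\rBspec_k$ is subsumed into the bound via \Cref{clm:dir-sum-B}: specifically, $\mi{\rAspec_k,\rBspec_k}{\rProt_1\mid\cE} \le \mi{\rBspec_k}{\rProt_1\mid\rsigmar,\rsigmac,\rBcommon,\rBspec_{<k},\rBspec_{>k}} + (\text{a term for }\rAspec_k\text{ given }\rProt_1\text{ fixes nothing new})$; but in fact the cleanest route is to note that $\rProt_1$ is determined by $\rAa{r}$ and $\rRcomp$, and $\rAspec_k \subseteq \rAa{r}$, so $\rProt_1$ carries at most its entropy; combined with \Cref{clm:dir-sum-B} giving $\mi{\rBspec_k}{\rProt_1\mid\cdots}\le s/(b_r^2\alpha^2)$ and a parallel (trivial) bound for $\rAspec_k$, one concludes.

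The main obstacle I anticipate is handling the $\rAspec_{\rkstar}$ half carefully: \Cref{clm:dir-sum-B} is stated only for $\rBspec_k$, so I need to either show $\mi{\rAspec_k}{\rProt_1\mid\text{rest}} = 0$ (because $\rProt_1$ is a function of $\rAa{r}$ and hence conditioning on enough of $\rAa{r}$ — or rather, because $\rAspec_k$ is independent of $\rProt_1$ given the other pieces of $\rAa{r}$; but this is false since $\rProt_1$ depends on $\rAspec_k$), or — the correct fix — observe that the quantity we actually need downstream (in the total-variation bound between $\DDreal$ and $\DDfake$) only involves $\rBspec_{\rkstar}$, or bundle $\rAspec_k$ into the conditioning on the $\player{X}$ side so that $\rProt_1$ becomes deterministic and the $\rAspec_k$ contribution is literally its own entropy-free term. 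I would resolve this by writing $\mi{\rAspec_k, \rBspec_k}{\rProt_1 \mid \cE} = \mi{\rAspec_k}{\rProt_1 \mid \cE} + \mi{\rBspec_k}{\rProt_1 \mid \cE, \rAspec_k}$ and then noting that once $\rAspec_k$ is in the conditioning together with $\rAspec_{<k}, \rBcommon$, a symmetric version of the argument in \Cref{clm:dir-sum-B} (treating the row/column of $\rBspec_k$ as the "hidden" one among the unfixed positions of $\Aa{r}$, all of which are independent draws) gives the same $s/(b_r^2\alpha^2)$ bound for the second term, while the first term $\mi{\rAspec_k}{\rProt_1\mid\cE}$ is handled by the analogous chain-rule-over-a-combinatorial-rectangle argument, so the total is at most $s/(b_r^2\alpha^2)$ as claimed (possibly after re-examining the constants — but the statement only asks for $s/(b_r^2\alpha^2)$, matching \Cref{clm:dir-sum-B} exactly, which strongly suggests the intended proof collapses the $\rAspec_k$ term to zero; I would verify that $\rProt_1$ being sent before $\player{X}$ ever "uses" the diagonal entry, combined with the embedding structure, forces $\mi{\rAspec_k}{\rProt_1 \mid \cE} = 0$, and lean on that).
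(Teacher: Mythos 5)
Your proposal has the right ingredients (chain rule over the components of $\rRa$, the independence structure of $\DD_r$ killing most terms, and \Cref{clm:dir-sum-B} for the surviving $\rProt_1$ term), but the way you split off $\rAspec_{\rkstar}$ rests on a misreading of the embedding, and the step you ultimately commit to is false. You assert that $\rBspec_k$ is part of $\player{Y}$'s input and not of $\Aa{r}$; it is the other way around. In $\protnum{1}_r$ (and hence $\protnum{2}_r$), Bob plays $\player{X}$ and embeds his $\ahm_{r-1}$ input as $\Xx[\istar,\jstar]=\Bbstar=\Bspec_{\kstar}$, so $\Aa{r}=\Xx$ consists of \emph{all} the matrices $\Bb{r-1}_{i,j}$, including $\rBspec_k$ --- this is precisely why \Cref{clm:dir-sum-B} bounds $\mi{\rBspec_k}{\rProt_1\mid\cdots}$ in the first place. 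Consequently $\rProt_1$ depends on $\rBspec_k$ directly, and since $\rAspec_k$ and $\rBspec_k$ are the two halves of one $\ahm_{r-1}$ instance (which overlap heavily for $r\geq 2$), the term $\mi{\rAspec_k}{\rProt_1 \mid \rsigmar,\rsigmac,\rBcommon,\rBspec_{>k},\rAspec_{<k}}$ that you propose to ``collapse to zero'' need not vanish. Your fallback --- bounding both terms of your decomposition by a rectangle argument --- would give $2s/(b_r^2\alpha^2)$ at best, and the second term $\mi{\rBspec_k}{\rProt_1\mid\cdots,\rAspec_k}$ would require a variant of \Cref{clm:dir-sum-B} with $\rAspec_k$ in the conditioning, which breaks the exchangeability used there (the special position is no longer distributed like the other unfixed positions once its $A$-half is fixed).

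The resolution is to decompose in the opposite order: $\mi{\rAspec_{\rkstar},\rBspec_{\rkstar}}{\rRa} = \mi{\rBspec_{\rkstar}}{\rRa} + \mi{\rAspec_{\rkstar}}{\rRa\mid\rBspec_{\rkstar}}$. The second term is genuinely zero: $\rProt_1$ is determined by $(\rBcommon,\rBspec,\rBrest)$ together with the independent compression randomness $\rRcomp$, and conditioned on $\rBspec_{\rkstar}$ (plus the rest of $\rRa$), the variable $\rAspec_{\rkstar}$ is independent of $(\rBrest,\rBspec_{<\rkstar},\rRcomp)$ and of the permutations and index $\rkstar$ by construction of $\DD_r$, so data processing kills it. The first term then reduces --- after dropping $\rAspec_{<\rkstar}$ from the conditioning, which needs \Cref{clm:tvd-bound-help-1} together with \Cref{prop:info-decrease} (a step you gloss over), and after averaging over $\rkstar=k$ using its independence from the instance --- to exactly the quantity bounded in \Cref{clm:dir-sum-B}, yielding $s/(b_r^2\alpha^2)$ with no extra factor.
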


\begin{proof}
	First, by chain rule of mutual information (\itfacts{chain-rule}), we have that
	\begin{align*}
	\mi{\rAspec_{\rkstar}, \rBspec_{\rkstar}}{\rRa} &= \mi{\rBspec_{\rkstar}}{\rRa} + \mi{\rAspec_{\rkstar}}{\rRa \mid \rBspec_{\rkstar}}.
	\end{align*}
We can prove that the second term is zero as follows: Let $\rRcomp$ represent the independent source of public randomness used in the message compression in step \ref{step:ii}. Then,  
\begin{align*}
	&\mi{\rAspec_{\rkstar}}{\rRa \mid \rBspec_{\rkstar}} \\
	&= \mi{\rAspec_{\rkstar}}{\rProt_1, \rsigmar, \rsigmac, \rBcommon, \rkstar, \rBspec_{>\rkstar}, \rAspec_{<\rkstar} \mid \rBspec_{\rkstar}} \tag{by expanding $\rRa$}  \\
	&= \mi{\rAspec_{\rkstar}}{\rsigmar, \rsigmac, \rBcommon, \rkstar, \rBspec_{>\rkstar}, \rAspec_{<\rkstar} \mid \rBspec_{\rkstar}} + \mi{\rAspec_{\rkstar}}{\rProt_1 \mid \rBspec_{\rkstar}, \rsigmar, \rsigmac, \rBcommon, \rkstar, \rBspec_{>\rkstar}, \rAspec_{<\rkstar}} \tag{by  chain rule in \itfacts{chain-rule}}  \\
	&= 0 + \mi{\rAspec_{\rkstar}}{\rProt_1 \mid \rBspec_{\rkstar}, \rsigmar, \rsigmac, \rBcommon, \rkstar, \rBspec_{>\rkstar}, \rAspec_{<\rkstar}}. \tag{by~\itfacts{info-zero} and construction of $\DD_r$} \\
	&\leq \mi{\rAspec_{\rkstar}}{\rBrest, \rBspec_{<\rkstar}, \rRcomp \mid \rBspec_{\rkstar}, \rsigmar, \rsigmac, \rBcommon, \rkstar, \rBspec_{>\rkstar}, \rAspec_{<\rkstar}}. \tag{by data processing inequality \itfacts{data-processing} as $\rProt_1$ is fixed by $\rBspec, \rBcommon, \rBrest, \rRcomp$} \\
	&= \mi{\rAspec_{\rkstar}}{\rBrest, \rBspec_{<\rkstar} \mid \rBspec_{\rkstar}, \rsigmar, \rsigmac, \rBcommon, \rkstar, \rBspec_{>\rkstar}, \rAspec_{<\rkstar}}. \tag{since $\rRcomp$ is an entirely independent source of randomness} \\
	&= 0 \tag{by~\itfacts{info-zero} and construction of $\DD_r$}.
\end{align*}
Therefore, an upper bound on $\mi{\rBspec_{\rkstar}}{\rRa}$ is an upper bound on $\mi{\rAspec_{\rkstar}, \rBspec_{\rkstar}}{\rRa}$. We have, 
\begin{align*}
&\mi{\rBspec_{\rkstar}}{\rRa} \\
&\hspace{20pt}\leq \mi{\rBspec_{\rkstar}}{\rRa,\rBspec_{<\rkstar}} \tag{by the chain rule (\itfacts{chain-rule}) and non-negativity of mutual information (\itfacts{info-zero})} \\
&\hspace{20pt}= \mi{\rBspec_{\rkstar}}{\rProt_1, \rsigmar, \rsigmac, \rBcommon, \rkstar, \rBspec_{>\rkstar}, \rAspec_{<\rkstar},\rBspec_{<\rkstar}} \tag{by expanding $\rRa$} \\
&\hspace{20pt}= \mi{\rBspec_{\rkstar}}{\rsigmar, \rsigmac, \rBcommon, \rkstar, \rBspec_{>\rkstar}, \rAspec_{<\rkstar},\rBspec_{<\rkstar}} \\
&\hspace{100pt} + \mi{\rBspec_{\rkstar}}{\rProt_1 \mid \rsigmar, \rsigmac, \rBcommon, \rkstar, \rBspec_{>\rkstar}, \rAspec_{<\rkstar},\rBspec_{<\rkstar}}  \tag{by  chain rule in \itfacts{chain-rule}} \\
&\hspace{20pt}= 0 + \mi{\rBspec_{\rkstar}}{\rProt_1 \mid \rsigmar, \rsigmac, \rBcommon, \rkstar, \rBspec_{>\rkstar}, \rAspec_{<\rkstar},\rBspec_{<\rkstar}}  \tag{by~\itfacts{info-zero} and construction of $\DD_r$} \\
&\hspace{20pt}\leq \mi{\rBspec_{\rkstar}}{\rProt_1 \mid \rsigmar, \rsigmac, \rBcommon, \rkstar, \rBspec_{>\rkstar},\rBspec_{<\rkstar}} \tag{removing $\rAspec_{<\rkstar}$ in the conditioning by \Cref{prop:info-decrease} and \Cref{clm:tvd-bound-help-1}} \\
&\hspace{20pt}= \frac1{k_r}\cdot\sum_{k \in [k_r]}  \mi{\rBspec_{k}}{\rProt_1 \mid \rsigmar, \rsigmac, \rBcommon, \rkstar=k, \rBspec_{>k}, \rBspec_{<k}} \tag{by the definition of conditional mutual information} \\
&\hspace{20pt}= \frac1{k_r}\cdot\sum_{k \in [k_r]}  \mi{\rBspec_{k}}{\rProt_1 \mid \rsigmar, \rsigmac, \rBcommon, \rBspec_{>k}, \rBspec_{<k}}  \tag{as event $\rkstar = k$ is independent of $(\rBspec, \rBcommon, \rProt_1, \rsigmar, \rsigmac)$ (see proof of~\Cref{clm:ic-lower-1})} \\
&\hspace{20pt}\leq \frac{s}{b_r^2 \cdot \alpha^2}, 
\end{align*}
by applying \Cref{clm:dir-sum-B} to bound each term, completing the proof. 
\end{proof}

We are now ready to show that $\DDfake$ and $\DDreal$ are statistically close. 
\begin{lemma}\label{lem:tvd-fake-real}
	\[
	\tvd{\DDreal}{\DDfake} \leq \sqrt{\frac{s}{2 \cdot b_r^2 \cdot \alpha^2}}.
	\]
\end{lemma}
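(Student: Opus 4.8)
The plan is to reduce the total variation distance between the two joint distributions to the total variation distance between the single factor on which they differ, and then control that factor via Pinsker's inequality together with the mutual information bound already established in \Cref{clm:tvd-mi-bound}.

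First I would invoke \Cref{clm:DDrealfake-input}: the factorizations of $\DDreal$ and $\DDfake$ agree on the factor $\distribution{\rRa}$, on $\distribution{\rAcommon, \rAspec_{> \rkstar} \mid \rRa, \rAspec_{\rkstar}}$, and on $\distribution{\rBrest, \rBspec_{<\rkstar} \mid \rRa, \rBspec_{\rkstar}}$; the only difference is that $\DDreal$ draws $(\rAspec_{\rkstar}, \rBspec_{\rkstar})$ from $\distribution{\rAspec_{\rkstar}, \rBspec_{\rkstar} \mid \rRa}$ whereas $\DDfake$ draws it from the unconditional $\distribution{\rAspec_{\rkstar}, \rBspec_{\rkstar}}$. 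Writing $W := \rRa$, $U := (\rAspec_{\rkstar}, \rBspec_{\rkstar})$, and letting $V$ collect the remaining variables, both distributions have the shape (common law of $W$) $\times$ (differing law of $U$ given $W$) $\times$ (common law of $V$ given $(W,U)$). Summing out $V$ first contributes a factor of $1$ to the $\ell_1$ distance for each fixed $(W,U)$, so
\[
	\tvd{\DDreal}{\DDfake} = \Exp_{\rRa}\Bracket{\tvd{\distribution{\rAspec_{\rkstar}, \rBspec_{\rkstar} \mid \rRa}}{\distribution{\rAspec_{\rkstar}, \rBspec_{\rkstar}}}}.
\]

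Next I would apply Pinsker's inequality inside the expectation to bound each term by $\sqrt{\frac{1}{2} \cdot \kl{\distribution{\rAspec_{\rkstar}, \rBspec_{\rkstar} \mid \rRa}}{\distribution{\rAspec_{\rkstar}, \rBspec_{\rkstar}}}}$, then use concavity of the square root (Jensen's inequality) to pull the expectation inside, together with the identity $\mi{\rAspec_{\rkstar}, \rBspec_{\rkstar}}{\rRa} = \Exp_{\rRa}\kl{\distribution{\rAspec_{\rkstar}, \rBspec_{\rkstar} \mid \rRa}}{\distribution{\rAspec_{\rkstar}, \rBspec_{\rkstar}}}$. This gives
\[
	\tvd{\DDreal}{\DDfake} \leq \sqrt{\frac{1}{2} \cdot \mi{\rAspec_{\rkstar}, \rBspec_{\rkstar}}{\rRa}} \leq \sqrt{\frac{s}{2 \cdot b_r^2 \cdot \alpha^2}},
\]
where the final step is exactly \Cref{clm:tvd-mi-bound}.

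Since the substantive work — the bound on $\mi{\rAspec_{\rkstar}, \rBspec_{\rkstar}}{\rRa}$ — has already been done (through \Cref{clm:dir-sum-B} and \Cref{clm:tvd-mi-bound}), the only delicate point here is the first step: one must check that $\DDreal$ and $\DDfake$ genuinely share the \emph{same} conditional law for every variable other than $(\rAspec_{\rkstar}, \rBspec_{\rkstar})$, so that the joint total variation distance collapses to the expected total variation distance of that one factor. This is precisely the content of \Cref{clm:DDrealfake-input}, so what remains is a routine calculation and the two inequalities above.
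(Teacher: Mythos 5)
Your proposal is correct and follows essentially the same route as the paper: reduce to the single differing factor via \Cref{clm:DDrealfake-input}, then apply Pinsker, Jensen, and the KL--mutual-information identity before invoking \Cref{clm:tvd-mi-bound}. The only cosmetic difference is that you obtain the first step as an equality by summing out the common factors directly, whereas the paper cites the weak chain rule for total variation distance (\Cref{fact:tvd-chain-rule}) to get the (sufficient) inequality.
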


\begin{proof}
Firstly, we use weak chain rule over total variation distance (\Cref{fact:tvd-chain-rule}) using the definitions of distributions $\DDfake$ and $\DDreal$ established in \Cref{clm:DDrealfake-input} to obtain
\begin{align*}
\tvd{\DDreal}{\DDfake} &\leq \Exp_{R \sim \rRa} \tvd{\distribution{\rAspec_{\rkstar}, \rBspec_{\rkstar} \mid \rRa = R}}{\distribution{\rAspec_{\rkstar}, \rBspec_{\rkstar}}} \\
&\leq \Exp_{R \sim \rRa}\sqrt{\frac12 \cdot \kl{\distribution{\rAspec_{\rkstar}, \rBspec_{\rkstar} \mid \rRa=R}}{\distribution{\rAspec_{\rkstar}, \rBspec_{\rkstar}}}} \tag{by Pinsker's inequality in \Cref{fact:pinskers}} \\
&\leq \frac1{\sqrt{2}} \cdot \sqrt{\Exp_{R \sim \rRa}  \kl{\distribution{\rAspec_{\rkstar}, \rBspec_{\rkstar}\mid \rRa=R}}{\distribution{\rAspec_{\rkstar}, \rBspec_{\rkstar}}}} \tag{by Jensen's inequality as the function $\sqrt{\cdot}$ is concave}\\
&= \frac1{\sqrt{2}} \cdot \sqrt{\mi{\rAspec_{\rkstar}, \rBspec_{\rkstar}}{\rRa}} \tag{by the relation between mutual information and KL-divergence in \Cref{fact:kl-info}} \\
&\leq \sqrt{\frac{s}{2 \cdot b_r^2 \cdot \alpha^2}} \tag{by \Cref{clm:tvd-mi-bound}},
\end{align*}
proving the lemma.
\end{proof}

Finally, we complete the proof of the main round-elimination argument.

\begin{proof}[Proof of \Cref{lem:round-elim}]
	By construction, we know that the total communication of protocol $\prot_{r-1}$ is upper bounded by the communication of protocol $\protnum{2}_r$ (since the messages of the former are a subset of the ones for the latter). 
	Using \Cref{cor:prot2-details}, we get that the communication of $\prot_{r-1}$ is at most 
	\[
		\cc{\prot_{r-1}} \leq \jpyconst \cdot (r/\eps) \cdot \paren{\frac{s}{k_r} + r}.
	\]
	Furthermore, by construction, protocol $\prot_{r-1}$ is an $(r-1)$-round protocol where Alice communicates first.
	
	We know, again from \Cref{cor:prot2-details}, that the probability of success of $\protnum{2}_r$ is at least $\delta - \eps$ when the simulated input $(\alicepart{r}, \bobpart{r})$ is sampled from $\DD_r$, i.e., the random variables follow distribution $\DDreal$. 
	However, in $\prot_{r-1}$, due to the fact that we sample the first round message at random, the simulated input is only statistically close to being sampled from $\DD_r$, i.e., the random variables follow distribution $\DDfake$.
	Therefore, the probability of success of $\prot_{r-1}$, using~\Cref{fact:tvd-small}, is bounded as follows:
	\begin{align*}
	\suc{\prot_{r-1}} &\geq \Prob\paren{\protnum{2}_r \textnormal{~succeeds on input $\DDreal$}} - \tvd{\DDreal}{\DDfake} \\
	&\geq \delta-\eps -\sqrt{\frac{s}{2\cdot b_r^2 \cdot \alpha^2}}.
	\end{align*}

Lastly, to get a deterministic protocol out of $\prot_{r-1}$, we use an averaging argument and fix the random bits of $\prot_{r-1}$ so that we have the same performance guarantee over the input distribution $\DD_{r-1}(n_{r-1}, \alpha)$. This 
gives us the desired $(r-1)$-round deterministic protocol for $\ahm_{r-1}(n_r, \alpha)$ when the players' inputs are sampled from $\DD_{r-1}(n_{r-1}, \alpha)$, and the search sequence is uniformly random. 
\end{proof}

% !TeX root = main.tex 
%!TEX root = main.tex

\subsection{Reduction to Bipartite Matching in Dynamic Streams} \label{subsec:red-to-graph}
In this section, we prove the connection between maximum bipartite matching in the dynamic streaming model and the Augmented Hidden Matrices ($\ahm_r$) problem, which is presented as \Cref{lem:redToMatching}.
We repeat this lemma for the convenience of the reader here. 

\begin{restate}[\Cref{lem:redToMatching}]
		For any sufficiently large $n_r \in \IN$, integers $p,s \geq 1$, and number $\apx \geq 1$, let $\alg$ by any $p$-pass $s$-space
		dynamic streaming algorithm that computes a $\apx$-approximate maximum matching on a $(4n_r)$-vertex bipartite graphs with probability of success at least $1-1/\poly{(n)}$.
		Then, for $r = 2p-1$ and $\alpha = 1/(4 \cdot \apx \cdot r)$, there exists an $r$-round protocol $\prot$ for $\ahm_{r}(n_r, \alpha)$ with 
		probability of success 
		\[
		\suc{\prot} \geq \frac{1}{2} \cdot \left(1 + \frac{1}{3\apx}\right)
		\]
		and
		communication cost 
		\[
		\cc{\prot} \leq r \cdot s + O(r \cdot n_r \log{(n_r)}). 
		\]
\end{restate}

The key to proving~\Cref{lem:redToMatching} is to construct a bipartite graph $G = (L \cup R, \Eadd \backslash \Edel)$, which is defined by edge insertions $\Eadd$ and deletions $\Edel$, 
from any input instance for $\ahm_{r}$ with the following property. An $O(1)$-approximate maximum bipartite matching in $G$ can be used to solve $\ahm_{r}$ on the
search sequence with a non-trivial advantage over randomly guessing the answer.
%Then, using the well-known connection between the streaming and communication models (see \Cref{prop:cc-stream}), 
%any dynamic streaming algorithm $\alg$ for $O(1)$-approximate maximum bipartite matching can be appropriately simulated on $G$ to give the protocol $\prot$; the lower bound for $\ahm_r$ in~\Cref{lem:ahm-lb} then 
%allows us to also lower bound the space of the algorithm $\alg$. 

\subsubsection{The Bipartite Graph Construction using $\ahm_{r}$}

Let us first recall the important aspects of $\ahm_r$. 
It is defined recursively using $b_r^2$ many $\ahm_{r-1}$ sub-instances, which are denoted $(\Aa{r-1}_{i,j}, \Bb{r-1}_{i,j})$ for $i,j \in [b_r]$ (see \Cref{fig:insdel_harddist} in \Cref{subsec:ahm-def}).
An instance of $\ahm_r$ identifies $k_r$ many special sub-instances $(\Aspec,\Bspec)$. 
Each of these has its own $k_{r-1}$ many special (sub-)sub-instances and, continuing this, ultimately corresponds to the $k_r \cdot k_{r-1} \cdot \ldots \cdot k_1$ many special base instances, where each one corresponds to a single bit as they are instances of $\ahm_0$. 
Then, given the uniform random search sequence, which is provided at the end of a protocol, solving $\ahm_r$ requires solving a uniform random special base instance.

We begin by showing how the base instances (or bits) are represented in our graph.
We will use the following basic bipartite graph construction on four vertices, called a \textbf{bit graph}, to encode bits in our construction of $G$. 

\bigskip

\begin{Definition}[Bit Graph]\label{def:bit-graph}
	Given a bit $x \in \{0,1 \}$ and a graph $G = (L \cup R, E)$ with $L = \{\ell_1, \ell_2\}$ and $R = \{r_1, r_2\}$, graph $G$ is said to be the bit graph of $x$ if
	\begin{itemize}
		\item $ E = \{(\ell^1, r^1), (\ell^2, r^2)\}$ when $x = 0$, and $E = \{(\ell^1, r^2), (\ell^2, r^1)\}$ when $x = 1$.
	\end{itemize}
	The bit $x$ can be identified by either of the edges in $E$. See \Cref{fig:bitgraph} for an illustration.
\end{Definition}

\bigskip

\begin{figure}[h!]
	\begin{subfigure}[t]{0.49\textwidth}
		\centering
		\begin{tikzpicture}[yscale=1.5]
    \tikzset{vertexgroup/.style={circle, draw=black, minimum size=17pt,inner sep=0pt, font=\tiny}}
    \tikzstyle{edgeg} = [draw=black,-, line width=3]
    \tikzstyle{edge} = [draw=black,-]
    \tikzstyle{vertex}=[circle,fill=black,minimum size=4pt,inner sep=0pt, font=\tiny]
    \tikzstyle{vertexnone}=[circle,fill=none,minimum size=5pt,inner sep=0pt, font=\tiny]
    \tikzstyle{plaintext} = [fill=none, font=\small]

      %local vertices main
      \foreach \i in {1,2}{
        \node[vertex] (l_{\i}) at (1,\i) {};
        \node[vertex] (r_{\i}) at (4, \i) {};
      }

      % Labels for vertices
      \node[plaintext, anchor=east] () at (0.75,1) {$\ell^2$};
      \node[plaintext, anchor=east] () at (0.75,2) {$\ell^1$};

      \node[plaintext, anchor=west] () at (4.25,1) {$r^2$};
      \node[plaintext, anchor=west] () at (4.25,2) {$r^1$};   

    % Edges of bit graph = 0  
      \path[edge](l_{1}) -- (r_{1});
      \path[edge](l_{2}) -- (r_{2});

\end{tikzpicture}
\vspace{0.5em}
		\caption{When the bit $x = 0$.}
	\end{subfigure}
	\begin{subfigure}[t]{0.49\textwidth}
		\centering
		\begin{tikzpicture}[yscale=1.5]
    \tikzset{vertexgroup/.style={circle, draw=black, minimum size=17pt,inner sep=0pt, font=\tiny}}
    \tikzstyle{edgeg} = [draw=black,-, line width=3]
    \tikzstyle{edge} = [draw=black,-]
    \tikzstyle{vertex}=[circle,fill=black,minimum size=4pt,inner sep=0pt, font=\tiny]
    \tikzstyle{vertexnone}=[circle,fill=none,minimum size=5pt,inner sep=0pt, font=\tiny]
    \tikzstyle{plaintext} = [fill=none, font=\small]

      %local vertices main
      \foreach \i in {1,2}{
        \node[vertex] (l_{\i}) at (1,\i) {};
        \node[vertex] (r_{\i}) at (4, \i) {};
      }

      % Labels for vertices
      \node[plaintext, anchor=east] () at (0.75,1) {$\ell^2$};
      \node[plaintext, anchor=east] () at (0.75,2) {$\ell^1$};

      \node[plaintext, anchor=west] () at (4.25,1) {$r^2$};
      \node[plaintext, anchor=west] () at (4.25,2) {$r^1$};  

    % Edges of bit graph = 1  
      \path[edge](l_{1}) -- (r_{2});
      \path[edge](l_{2}) -- (r_{1});

\end{tikzpicture}
\vspace{0.5em}
		\caption{When the bit $x = 1$.}
	\end{subfigure}
	\caption{An illustration of the bit graph on vertices corresponding to a bit $x \in \{0,1\}$.} \label{fig:bitgraph}
\end{figure}
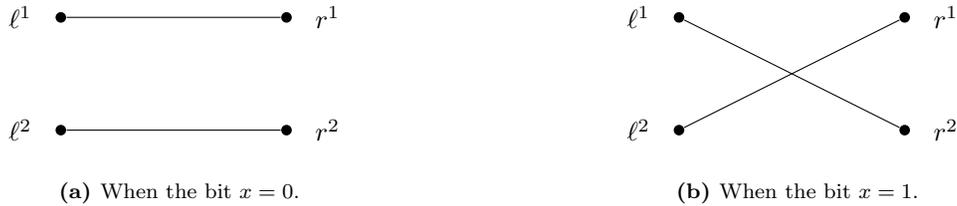

\bigskip

With that, given an instance $(\alicepart{r}, \bobpart{r})$ of $\ahm_r(n_r, \alpha)$, we give the following recursive procedure that separately encodes $\alicepart{r}$ and $\bobpart{r}$ as the edges of a bipartite graph on the vertex sets $L$ and $R$, where,
% we encode the instance as a bipartite graph $G = (L \cup R, E)$ where 
\begin{align*}
	L = \{\leftv{t}{1}, \leftv{t}{2} \mid t \in [n_r]\} \qquad \text{and} \qquad R = \{\rightv{t}{1}, \rightv{t}{2} \mid t \in [n_r]\}. \label{eq:LandR}
\end{align*}
For ease of notation, we let 
\[
L[i] = \set{\leftv{t}{1}, \leftv{t}{2} \mid (i-1) \cdot n_{r-1} < t \leq i \cdot n_{r-1} } \subseteq L
\]
 for $i \in [b_r]$, and we let $R[j] \subseteq R$ for $j \in [b_r]$ be defined similarly. 
The edges we add corresponding to the sub-instance $(\Aa{r-1}_{i,j}, \Bb{r-1}_{i,j})$ will be only across vertices in $L[i] $ and $R[j]$. 

We also use $\allone(t)$ and $\allzero(t)$ to denote the $t \times t$ matrix of all ones and all zeroes, respectively.
The construction of the graph is as follows.

\begin{ourbox}
	\vspace{0.5em}
	{$\edgeset(r, Z, L, R)$ where $Z \in \{\alicepart{r}, \bobpart{r}, \allone(n_r), \allzero(n_r) \}$ with $\card{L} = \card{R} = 2n_r$}:
	\begin{itemize}
		\item \textbf{For $\mathbf{r=0}$.} 
		\begin{itemize}
			\item When $Z \in \{0,1\}$, return the edges of the bit graph corresponding to $Z$.
			\item When $Z  = \emptyset$, return the empty set of edges.
		\end{itemize}
		\item \textbf{For $\mathbf{r \geq 1}$.} 
		\begin{itemize}
			\item When $Z \in \{\allone(n_r), \allzero(n_r)\}$, return the edges \[\bigcup_{i,j \in [b_r]} \edgeset(r-1, Z', L[i], R[j])\]
			where $Z'$ is the $n_{r-1} \times n_{r-1}$ matrix with all its entries the same as the entries in $Z$.
			\item When ${Z = \alicepart{r}}$, return the edges \[\bigcup_{i,j \in [b_r]} \edgeset(r-1, \Bb{r-1}_{i,j}, L[i], R[j]).\]
			\item When ${Z = \bobpart{r}}$, return the edges\footnote{This procedure ignores $\Acommon = (\alicepart{r-1}_{i,j} : b_r \cdot \alpha < \sigmar(i) \neq \sigmac(j) \leq b_r)$.}
			\begin{align*}
				&\Big(\bigcup_{\substack{i,j \in [b_r] \\ b_r \cdot \alpha <\sigmar(i) \neq \sigmac(j) \leq b_r}}	
				\edgeset(r-1, \Bb{r-1}_{i,j}, L[i], R[j])\Big) \\
				& \cup \Big(\bigcup_{\substack{i,j \in [b_r] \\ b_r \cdot \alpha <\sigmar(i) = \sigmac(j) \leq b_r}} \edgeset(r-1, \Aa{r-1}_{i,j}, L[i], R[j])\Big).
			\end{align*}
			If $r$ is even, in addition to the above, return the edges
			\begin{align*}
				\bigcup_{\substack{i,j \in [b_r] \\ \sigmar(i) \leq b_r\cdot \alpha \text{~or~} \sigmac(j) \leq b_r\cdot \alpha}} \big(\edgeset(0, \allone(n_{r-1}), L[i], R[j]) \cup  \edgeset(0, \allzero(n_{r-1}), L[i], R[j]) \big).
			\end{align*}
		\end{itemize}
	\end{itemize}
See \Cref{fig:edgesfromAHM} for an illustration.
\end{ourbox}
\vspace{0em}

Note that this procedure does not yet define the graph $G$. However, it allows Alice and Bob to construct edges (insertions or deletions) from their respective inputs, which we now show can be done without any communication.
%See \Cref{fig:edgesfromAHM} for an illustration of Alice and Bob's edges corresponding to $\Aa{r}$ and $\Bb{r}$, respectively.

\begin{claim}\label{clm:find-sets}
	For any $r \geq 0$, given an instance $(\alicepart{r}, \bobpart{r})$ of $\ahm_r(n_r, \alpha)$, Alice and Bob can compute $\edgeset(r, \Aa{r}, L, R)$ and $\edgeset(r, \Bb{r}, L, R)$, respectively, without any communication.
\end{claim}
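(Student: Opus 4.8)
\textbf{Proof proposal for \Cref{clm:find-sets}.}

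The plan is a straightforward structural induction on $r$, tracking which parts of each input the recursive procedure $\edgeset$ actually reads. The key observation is that $\edgeset(r, \Aa{r}, L, R)$ is, by unfolding the recursion, a disjoint union over $i,j \in [b_r]$ of calls $\edgeset(r-1, \Bb{r-1}_{i,j}, L[i], R[j])$, and these only depend on the matrices $\Bb{r-1}_{i,j}$, i.e., exactly the entries of $\Aa{r} = \Xx$ (recall $\Xx[i,j] = \bobpart{r-1}_{i,j}$). Since Alice holds $\Aa{r} = \Xx$ in $\ahm_r$, she holds all of $\Bb{r-1}_{i,j}$ for $i,j \in [b_r]$, and can therefore recursively invoke the (inductively available) ability of the $(r-1)$-round Bob to compute $\edgeset(r-1, \Bb{r-1}_{i,j}, L[i], R[j])$ from $\bobpart{r-1}_{i,j} = \Bb{r-1}_{i,j}$. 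The vertex subsets $L[i], R[j]$ are determined purely by the indices and the fixed block structure, so no communication is needed to agree on them. Taking the union over all $i,j$ yields $\edgeset(r, \Aa{r}, L, R)$ with no communication, which is the inductive step for Alice.

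For Bob, the symmetric point is that $\edgeset(r, \Bb{r}, L, R)$ unfolds into three groups of recursive calls: $\edgeset(r-1, \Bb{r-1}_{i,j}, L[i], R[j])$ over the off-diagonal induced positions (these need $\Bb{r-1}_{i,j}$, which Bob holds as part of $\Yy$, namely $\Bcommon$); $\edgeset(r-1, \Aa{r-1}_{i,j}, L[i], R[j])$ over the diagonal induced positions (these need $\Aa{r-1}_{i,j}$, which Bob holds as part of $\Yy$, namely $\Aspec$); and, when $r$ is even, some calls $\edgeset(0, \allone(n_{r-1}), \cdot, \cdot)$ and $\edgeset(0, \allzero(n_{r-1}), \cdot, \cdot)$ on the $\Brest$ positions, which depend only on $\sigmar,\sigmac$ and fixed constant matrices, all of which Bob knows. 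By the induction hypothesis, from $\Aa{r-1}_{i,j}$ the $(r-1)$-round Alice can compute $\edgeset(r-1, \Aa{r-1}_{i,j}, L[i], R[j])$ without communication, and from $\Bb{r-1}_{i,j}$ the $(r-1)$-round Bob can compute $\edgeset(r-1, \Bb{r-1}_{i,j}, L[i], R[j])$ without communication; the $r=0$ base cases are immediate since a bit graph (or the empty edge set, or the all-ones/all-zeros gadgets) is computed from the single bit (resp.\ nothing) locally. Taking the union over all the relevant blocks, Bob assembles $\edgeset(r, \Bb{r}, L, R)$ locally. The base case $r=0$ is trivial: Alice outputs the empty set ($\alicepart{0} = \emptyset$) and Bob outputs the bit graph of $\bobpart{0}$, both without communication.

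The only mild subtlety — and the step I would be most careful about — is the bookkeeping of \emph{which player in the $(r-1)$-level sub-instances is simulated by which player in the $r$-level instance}. Because the roles of Alice and Bob swap at each level of the recursion (Alice at level $r$ holds the $\Bb{r-1}$'s, which are Bob's inputs at level $r-1$), the induction hypothesis must be stated symmetrically: ``given an instance $(\alicepart{r}, \bobpart{r})$, the holder of $\alicepart{r}$ can locally compute $\edgeset(r, \Aa{r}, \cdot, \cdot)$ and the holder of $\bobpart{r}$ can locally compute $\edgeset(r, \Bb{r}, \cdot, \cdot)$,'' and then each application at level $r-1$ is invoked with the correct identification of $(\alicepart{r-1}_{i,j}, \bobpart{r-1}_{i,j})$ and the player holding each part. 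Once the statement is phrased this way, the induction goes through mechanically; there is no real obstacle, only care with indices and the block-to-vertex-subset map $i \mapsto L[i]$, $j \mapsto R[j]$.
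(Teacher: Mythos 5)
Your proof is correct and follows essentially the same structural induction as the paper: the base case $r=0$ is trivial, and for $r \geq 1$ one checks that every recursive call of $\edgeset$ on $\Aa{r}$ reads only the $\Bb{r-1}_{i,j}$'s (all held by the $\Aa{r}$-holder), while the calls on $\Bb{r}$ read only $\sigmar,\sigmac$, the off-diagonal $\Bb{r-1}_{i,j}$'s, and the diagonal $\Aa{r-1}_{i,j}$'s (all held by the $\Bb{r}$-holder), so the induction hypothesis applies blockwise with no communication. Your explicit remark about the role-swapping bookkeeping is a fair point of care but matches what the paper does implicitly.
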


\begin{proof}
	The proof of the claim is by induction on $r$ where the base case when $r = 0$ is trivial, i.e., both $\edgeset(0, \Aa{0}, L, R)$ and $\edgeset(0, \Bb{0}, L, R)$ can be constructed without communication.
	
	For any $r \geq 1$, the player holding $\Aa{r}$ has access to all $\Bb{r-1}_{i,j}$ for $i, j \in [b_r]$. Thus, the player can construct the required set of edges $\edgeset(r-1, \Bb{r-1}_{i,j}, L[i], R[j])$ for all $i,j \in [b_r]$ without any communication by the induction hypothesis. 
	The player holding $\Bb{r}$ has access to permutations $\sigmar, \sigmac$ and the following:
	\begin{itemize}
		\item $\Bb{r-1}_{i,j}$ for $b_r \cdot \alpha <\sigmar(i) \neq \sigmac(j) \leq b_r$ and thus the player can construct construct the required 
		\[
		\edgeset(r-1, \Bb{r-1}_{i,j}, L[i], R[j])
		\]
		without any communication by the induction hypothesis; and
		\item $\Aa{r-1}_{i,j}$ when $b_r \cdot \alpha <\sigmar(i) = \sigmac(j) \leq b_r$ and thus the player can construct 
		\[
		\edgeset(r-1, \Aa{r-1}_{i,j}, L[i], R[j])
		\]
		 without any communication, again, by the induction hypothesis.
	\end{itemize}
Additionally, when $r$ is even, this player constructs 
\[
\edgeset(r-1, \allone(n_{r-1}), L[i], R[j])\quad \text{and} \quad \edgeset(r-1, \allzero(n_{r-1}), L[i], R[j])
\]
 when $\sigmar(i) \leq b_r \cdot \alpha$ or $\sigmac(j) \leq b_r \cdot \alpha$, which is trivial and does not need any communication since it only requires the knowledge of $\sigmar, \sigmac$.
\end{proof}

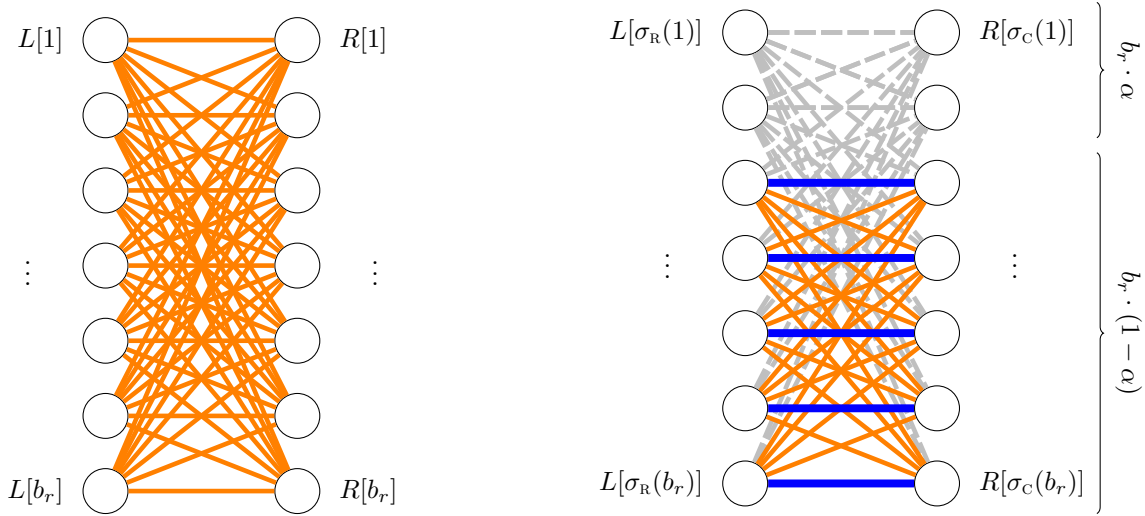
\begin{figure}[t!]
	\begin{subfigure}[t]{0.35\textwidth}
		\centering
		\begin{tikzpicture}[xscale=0.85]
  \tikzset{vertexgroup/.style={circle, draw=black, minimum size=17pt,inner sep=0pt, font=\tiny}}
  \tikzstyle{edgeg} = [draw=black,-, line width=1.75]
  % \tikzstyle{edgegs} = [draw=black,-]
  % \tikzstyle{vertex}=[circle,fill=black,minimum size=4pt,inner sep=0pt, font=\tiny]
  % \tikzstyle{vertexnone}=[circle,fill=none,minimum size=5pt,inner sep=0pt, font=\tiny]
  \tikzstyle{plaintext} = [fill=none, font=\small]

  % Box for alignment with other figure
  % \draw[draw=none] (0.5,5.5) rectangle (4.5,0.5);

    %local vertices main
    \foreach \i in {1, ..., 7}{
      \node[vertexgroup] (l_{i\i}) at (1,\i) {};
      \node[vertexgroup] (r_{i\i}) at (4, \i) {};
    }

    % Labels for vertices
    \node[plaintext, anchor=east] () at (0.5,7) {$L[1]$};
    % \node[plaintext, anchor=east] () at (0.5,5) {$L[\sigmar(b_r \cdot \alpha)]$};
    \node[plaintext, anchor=east] () at (0,4) {$\vdots$};
    \node[plaintext, anchor=east] () at (0.5,1) {$L[b_r]$};

    \node[plaintext, anchor=west] () at (4.5,7) {$R[1]$};
    % \node[plaintext, anchor=west] () at (4.5,5) {$R[\sigmac(b_r \cdot \alpha)]$};
    \node[plaintext, anchor=west] () at (5,4) {$\vdots$};
    \node[plaintext, anchor=west] () at (4.5,1) {$R[b_r]$};

    % \draw[decorate, decoration={brace}] (-1.5,0.6) -- (-1.5,5.4);
    % \node[plaintext, anchor=east] () at (-1.75, 3){$b_r \cdot (1-\alpha)$};
    % \node[plaintext, anchor=west, white] () at (6.75, 3){$b_r \cdot (1-\alpha)$}; % to center the figure
    % \draw[decorate, decoration={brace}] (-1.5,5.6) -- (-1.5,7.4);
    % \node[plaintext, anchor=east] () at (-1.75, 6.5){$b_r \cdot \alpha$};

    % Edges/sub-instances Brest
    \foreach \i in {1, ..., 7}{
      \foreach \j in {1, ..., 7}{
        \path[edgeg, orange](l_{i\i}) -- (r_{i\j});
      }
    }

    % % Edges/sub-instances Aspec,Bspec
    % \foreach \i in {1, ..., 5}{
    %   \path[edgeg, blue](l_{i\i}) -- (r_{i\i});
    % }

    % % Box for special vertices
    % \draw[very thick] (0.5,5.5) rectangle (4.5,0.5);

\end{tikzpicture}
\vspace{0.5em}
		\caption{An illustration of Alice's edges corresponding to $\Aa{r}$. An orange edge between $L[i]$ and $R[j]$ for each $i, j \in [b_r]$ represents the set of edges corresponding to the input $\Bb{r-1}_{i,j}$ held by Alice.} \label{fig:edgesfromAHM_A}
	\end{subfigure}
	\hfill
	\begin{subfigure}[t]{0.62\textwidth}
		\centering
		\begin{tikzpicture}[xscale=0.85]
  \tikzset{vertexgroup/.style={circle, draw=black, minimum size=17pt,inner sep=0pt, font=\tiny}}
  \tikzstyle{edgeg} = [draw=black,-, line width=1.75]
  % \tikzstyle{edgegs} = [draw=black,-]
  % \tikzstyle{vertex}=[circle,fill=black,minimum size=4pt,inner sep=0pt, font=\tiny]
  % \tikzstyle{vertexnone}=[circle,fill=none,minimum size=5pt,inner sep=0pt, font=\tiny]
  \tikzstyle{plaintext} = [fill=none, font=\small]

    %local vertices main
    \foreach \i in {1, ..., 7}{
      \node[vertexgroup] (l_{i\i}) at (1,\i) {};
      \node[vertexgroup] (r_{i\i}) at (4, \i) {};
    }

    % Labels for vertices
    \node[plaintext, anchor=east] () at (0.5,7) {$L[\sigmar(1)]$};
    % \node[plaintext, anchor=east] () at (0.5,5) {$L[\sigmar(b_r \cdot \alpha)]$};
    \node[plaintext, anchor=east] () at (0,4) {$\vdots$};
    \node[plaintext, anchor=east] () at (0.5,1) {$L[\sigmar(b_r)]$};

    \node[plaintext, anchor=west] () at (4.5,7) {$R[\sigmac(1)]$};
    % \node[plaintext, anchor=west] () at (4.5,5) {$R[\sigmac(b_r \cdot \alpha)]$};
    \node[plaintext, anchor=west] () at (5,4) {$\vdots$};
    \node[plaintext, anchor=west] () at (4.5,1) {$R[\sigmac(b_r)]$};

    \draw[decorate, decoration={brace}] (6.5,5.4) -- (6.5,0.6);
    % \node[plaintext, anchor=east] () at (-1.75, 3){$b_r \cdot (1-\alpha)$};
    \node[plaintext, align=center, rotate=-90] () at (7, 3){$b_r \cdot (1-\alpha)$};
    \draw[decorate, decoration={brace}] (6.5,7.4) -- (6.5,5.6);
    \node[plaintext, align=center, rotate=-90] () at (7, 6.5){$b_r \cdot \alpha$};
    \node[plaintext, align=center, rotate=-90, white] () at (-2, 6.5){$b_r \cdot \alpha$};

    % Edges/sub-instances Brest
    \foreach \i in {6, 7}{
      \foreach \j in {1, ..., 7}{
        \path[edgeg, gray!50, dash pattern = on 8pt off 2pt](l_{i\i}) -- (r_{i\j});
        \path[edgeg, gray!50, dash pattern = on 8pt off 2pt](l_{i\j}) -- (r_{i\i});
      }
    }

    % Edges of sub-instances Aoff, Boff
    \foreach \i in {1, ..., 5}{
      \foreach \j in {1, ..., 5}
      \path[edgeg, orange](l_{i\i}) -- (r_{i\j});
    }

    % Edges/sub-instances Aspec,Bspec
    \foreach \i in {1, ..., 5}{
      \path[edgeg, blue, line width=3](l_{i\i}) -- (r_{i\i});
    }

    % Box for special vertices
    % \draw[very thick] (0.5,5.5) rectangle (4.5,0.5);

\end{tikzpicture}
\vspace{0.5em}
		\caption{An illustration of Bob's edges corresponding to $\Bb{r}$. The vertex groups are represented after the permutations $\sigmar,\sigmac$ held by Bob. Each orange edge represents the edges corresponding to each off-diagonal sub-instances $\Bcommon$ held by Bob. Each blue edge represents the edges corresponding to the special sub-instances $\Aspec$ held by Bob. Each dashed gray edge represents the set of no edges when $r$ is odd and the set of all possible edges when $r$ is even.} \label{fig:edgesfromAHM_B}
	\end{subfigure}
	\caption{Illustrations of Alice's edges $\edgeset(r, \Aa{r}, L, R)$ (\Cref{fig:edgesfromAHM_A}) and Bob's edges $\edgeset(r, \Bb{r}, L, R)$ (\Cref{fig:edgesfromAHM_B}). In both illustrations, the nodes on the left and right corresponds to vertex groups $L[i]$ and $R[j]$ for each $i,j \in [b_r]$, respectively.
	An edge between vertex groups $L[i]$ and $R[j]$ represents the set of edges corresponding to the instance $(\Aa{r-1}_{i,j}, \Bb{r-1}_{i,j})$.
	} \label{fig:edgesfromAHM}
\end{figure}

Using the above recursive definition, we obtain the main procedure for constructing a bipartite graph $G$  from any instance $(\Aa{r}, \Bb{r})$ of $\ahm_r(n_r, \alpha)$.

One player will be solely responsible for adding edges to $G$, and the other player only deletes edges from $G$. The particular roles of Alice and Bob depends on the parity of $r$. 
%In particular, it appropriately assigns the edge insertions and deletions.
%We then show that the resulting graph $G$ is indeed a valid graph where only inserted edges may be deleted.

\begin{ourbox}
\vspace{0.5em}
$\graphed(\Aa{r}, \Bb{r})$ for instance $(\Aa{r}, \Bb{r})$ of $\ahm_r(n_r, \alpha)$:
\begin{itemize}
	\item When $r$ is \textbf{odd}, Alice inserts the edges $\Eadd = \edgeset(r, \Aa{r}, L, R)$ and Bob deletes the edges $\Edel = \edgeset(r, \Bb{r}, L, R)$;
	\item When $r$ is \textbf{even}, Bob inserts the edges $\Eadd = \edgeset(r, \Bb{r}, L, R)$ and Alice deletes the edges $\Edel = \edgeset(r, \Aa{r}, L, R)$. 
\end{itemize}
Return graph $G = (L \cup R, \Eadd \backslash \Edel)$. See \Cref{fig:hardgraph} for an illustration of $G$.
\end{ourbox}

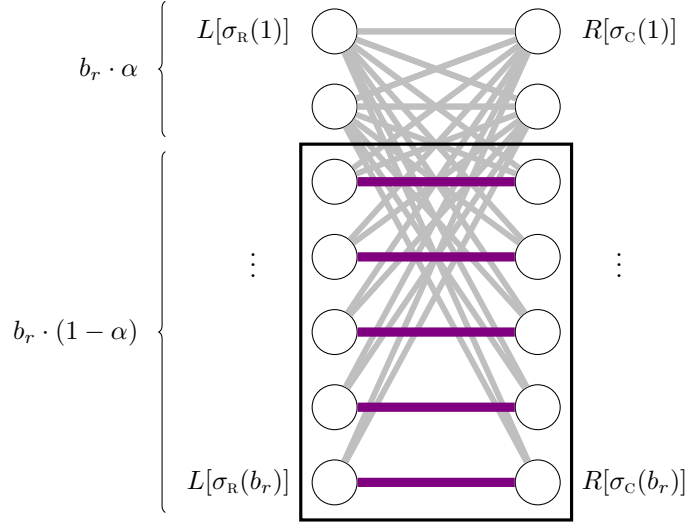
\begin{figure}[h!]
	\centering
	\begin{tikzpicture}[xscale=0.9]
  \tikzset{vertexgroup/.style={circle, draw=black, minimum size=17pt,inner sep=0pt, font=\tiny}}
  \tikzstyle{edgeg} = [draw=black,-, line width=2.25]
  % \tikzstyle{edgegs} = [draw=black,-]
  % \tikzstyle{vertex}=[circle,fill=black,minimum size=4pt,inner sep=0pt, font=\tiny]
  % \tikzstyle{vertexnone}=[circle,fill=none,minimum size=5pt,inner sep=0pt, font=\tiny]
  \tikzstyle{plaintext} = [fill=none, font=\small]

    %local vertices main
    \foreach \i in {1, ..., 7}{
      \node[vertexgroup] (l_{i\i}) at (1,\i) {};
      \node[vertexgroup] (r_{i\i}) at (4, \i) {};
    }

    % Labels for vertices
    \node[plaintext, anchor=east] () at (0.5,7) {$L[\sigmar(1)]$};
    % \node[plaintext, anchor=east] () at (0.5,5) {$L[\sigmar(b_r \cdot \alpha)]$};
    \node[plaintext, anchor=east] () at (0,4) {$\vdots$};
    \node[plaintext, anchor=east] () at (0.5,1) {$L[\sigmar(b_r)]$};

    \node[plaintext, anchor=west] () at (4.5,7) {$R[\sigmac(1)]$};
    % \node[plaintext, anchor=west] () at (4.5,5) {$R[\sigmac(b_r \cdot \alpha)]$};
    \node[plaintext, anchor=west] () at (5,4) {$\vdots$};
    \node[plaintext, anchor=west] () at (4.5,1) {$R[\sigmac(b_r)]$};

    \draw[decorate, decoration={brace}] (-1.5,0.6) -- (-1.5,5.4);
    \node[plaintext, anchor=east] () at (-1.75, 3){$b_r \cdot (1-\alpha)$};
    \node[plaintext, anchor=west, white] () at (6.75, 3){$b_r \cdot (1-\alpha)$}; % to center the figure
    \draw[decorate, decoration={brace}] (-1.5,5.6) -- (-1.5,7.4);
    \node[plaintext, anchor=east] () at (-1.75, 6.5){$b_r \cdot \alpha$};

    % Edges/sub-instances Brest
    \foreach \i in {6, 7}{
      \foreach \j in {1, ..., 7}{
        \path[edgeg, gray!50](l_{i\i}) -- (r_{i\j});
        \path[edgeg, gray!50](l_{i\j}) -- (r_{i\i});
      }
    }

    % Edges/sub-instances Aspec,Bspec
    \foreach \i in {1, ..., 5}{
      \path[edgeg, blue!50!red, line width=3.5](l_{i\i}) -- (r_{i\i});
    }

    % Box for special vertices
    \draw[very thick] (0.5,5.5) rectangle (4.5,0.5);

\end{tikzpicture}
\vspace{0.5em}
	\caption{An illustration of a graph $G = \graphed(\Aa{r}, \Bb{r})$ constructed from an instance $(\Aa{r}, \Bb{r})$ of $\ahm_r(n_r, \alpha)$. 
	The nodes on the left and right corresponds to vertex groups $L[i]$ and $R[j]$ for each $i,j \in [b_r]$ and are ordered after applying the permutations $\sigmar$ and $\sigmac$, respectively.
	An edge between vertex groups $L[i]$ and $R[j]$ represents the set of edges corresponding to the instance $(\Aa{r-1}_{i,j}, \Bb{r-1}_{i,j})$, where a lack of edge implies that all edges inserted have also been deleted, i.e., the off-diagonal sub-instances $(\Acommon, \Bcommon)$.
	Each purple edge represents the edges that correspond to each special sub-instance in $(\Aspec,\Bspec)$. The remaining thick gray edges represent the edges corresponding to sub-instances in $\Brest$. 
	The box represents the subgraph of $G$ induced by the vertices $\Lspec \cup \Rspec$. See \Cref{fig:insdel_harddist} for a comparison to $\ahm_r$.
	} \label{fig:hardgraph}
\end{figure}

Now, let us show that the instances we create are indeed valid graphs. 
\begin{claim}
	For any $r \geq 0$, the graph $G = (L \cup R, \Eadd \backslash \Edel) = \graphed(\Aa{r}, \Bb{r})$ is a valid graph, namely, $\Edel \subseteq \Eadd$.
\end{claim}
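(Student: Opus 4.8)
The statement is that $\Edel \subseteq \Eadd$ for $G = \graphed(\Aa{r}, \Bb{r})$. The plan is to prove this by induction on $r$, tracking at each level exactly which $\edgeset$ calls contribute to $\Eadd$ and which to $\Edel$, and checking the inclusion holds sub-instance by sub-instance using the combinatorial structure of the off-diagonal/special/rest partition of the $b_r \times b_r$ grid. I will split into two cases according to the parity of $r$, since the roles of Alice and Bob (and hence which of $\edgeset(r,\Aa{r},L,R)$, $\edgeset(r,\Bb{r},L,R)$ is $\Eadd$ versus $\Edel$) swap with parity.

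First I would handle the \textbf{base case} $r=0$: here $\graphed(\Aa{0},\Bb{0})$ has $\Eadd = \edgeset(0,\Aa{0},L,R)$ and $\Edel = \edgeset(0,\Bb{0},L,R)$ with $r=0$ odd; but $\Aa{0} = \emptyset$ so $\Eadd = \emptyset$, while $\Bb{0} \in \{0,1\}$ gives a bit graph with two edges --- so actually one must be careful, since this would force $\Edel \not\subseteq \Eadd$. The resolution is that $r=0$ never arises as a top-level call in the reduction (the reduction uses $r = 2p-1 \geq 1$, always odd and $\geq 1$), so the meaningful base case for the induction is $r=1$, and $r=0$ only appears \emph{inside} recursive calls where the parity/argument pattern is controlled by the parent. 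Concretely, for $r=1$ (odd): $\Eadd = \edgeset(1,\Aa{1},L,R) = \bigcup_{i,j} \edgeset(0,\Bb{0}_{i,j},L[i],R[j])$ inserts all $b_1^2$ bit graphs, while $\Edel = \edgeset(1,\Bb{1},L,R)$ consists of $\edgeset(0,\Bb{0}_{i,j},\cdot)$ on off-diagonal special-range cells, $\edgeset(0,\Aa{0}_{i,j},\cdot) = \emptyset$ on diagonal special-range cells, plus (since $r=1$ is odd) \emph{no} extra edges from the $\Brest$ region. Every deleted bit graph $\edgeset(0,\Bb{0}_{i,j},\cdot)$ is among the inserted ones, so $\Edel \subseteq \Eadd$.

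For the \textbf{inductive step}, suppose the claim holds for $r-1$. When $r$ is \textbf{odd}, $\Eadd = \edgeset(r,\Aa{r},L,R) = \bigcup_{i,j \in [b_r]} \edgeset(r-1,\Bb{r-1}_{i,j},L[i],R[j])$ and $\Edel = \edgeset(r,\Bb{r},L,R)$, which (as $r$ odd) is the union over off-diagonal special-range cells of $\edgeset(r-1,\Bb{r-1}_{i,j},\cdot)$ and over diagonal special-range cells of $\edgeset(r-1,\Aa{r-1}_{i,j},\cdot)$, with nothing from $\Brest$. For off-diagonal cells the deleted set is literally a term appearing in $\Eadd$, done. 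For a diagonal special cell, I need $\edgeset(r-1,\Aa{r-1}_{i,j},L[i],R[j]) \subseteq \edgeset(r-1,\Bb{r-1}_{i,j},L[i],R[j])$; this is exactly the statement $\Edel \subseteq \Eadd$ for the recursively embedded $\ahm_{r-1}$ instance $(\Aa{r-1}_{i,j},\Bb{r-1}_{i,j})$ \emph{with the roles of Alice and Bob swapped} --- i.e.\ it is $\graphed$ applied at level $r-1$ (which is even), where Bob inserts $\edgeset(r-1,\Bb{r-1},\cdot)$ and Alice deletes $\edgeset(r-1,\Aa{r-1},\cdot)$. That is precisely the inductive hypothesis. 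When $r$ is \textbf{even}, the roles flip: $\Eadd = \edgeset(r,\Bb{r},L,R)$, which now additionally includes, for every cell with $\sigmar(i)\le b_r\alpha$ or $\sigmac(j)\le b_r\alpha$, \emph{both} $\edgeset(0,\allone(n_{r-1}),L[i],R[j])$ and $\edgeset(0,\allzero(n_{r-1}),L[i],R[j])$ --- i.e.\ every possible edge between $L[i]$ and $R[j]$; and $\Edel = \edgeset(r,\Aa{r},L,R) = \bigcup_{i,j}\edgeset(r-1,\Bb{r-1}_{i,j},L[i],R[j])$. For cells in the special range (off-diagonal or diagonal), the reasoning is symmetric to the odd case, again invoking the inductive hypothesis for the embedded level-$(r-1)$ instances; and for cells outside the special range, $\Edel$ contributes $\edgeset(r-1,\Bb{r-1}_{i,j},\cdot)$, which uses only vertices in $L[i]\times R[j]$ and so is trivially contained in the all-edges set $\edgeset(0,\allone(\cdot),L[i],R[j])\cup\edgeset(0,\allzero(\cdot),L[i],R[j])$ that $\Eadd$ places there (one should note here that $\edgeset$ always produces only edges between the designated $L[i]$ and $R[j]$ blocks, a fact that itself follows by a trivial induction and is worth stating as a preliminary observation).

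The main obstacle --- and the only genuinely delicate point --- is bookkeeping the \emph{parity interplay}: each recursive call flips which input matrix plays the ``insert'' role, so the inductive hypothesis has to be stated for $\graphed$ symmetrically (or, equivalently, the claim as written must be read as ``$\Edel\subseteq\Eadd$ for both parities of $r$'') and one must verify that the $\edgeset(r,\Aa{r},\cdot)$ / $\edgeset(r,\Bb{r},\cdot)$ definitions line up with the swapped Alice/Bob roles at the child level. A secondary subtlety is the even-$r$ ``filler'' edges on the $\Brest$ region: I must confirm that in the even case $\Eadd$ (Bob's insertions) contains these all-edge fillers while $\Edel$ (Alice's deletions) on those cells only contains $\edgeset(r-1,\Bb{r-1}_{i,j},\cdot)$, so that everything deleted there was inserted. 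Once the parity/role correspondence is pinned down, each case reduces to either a literal set-containment ($\edgeset$ term appearing verbatim in the union for $\Eadd$) or an appeal to the inductive hypothesis, with no computation required.
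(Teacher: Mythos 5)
Your inductive step is essentially identical to the paper's proof: the same case split on the parity of $r$, the same observation that off-diagonal special cells are literal set containments, the same appeal to the (role-swapped) inductive hypothesis on the diagonal cells, and the same handling of the all-edges filler on the $\Brest$ region in the even case. That part is correct.

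The one genuine error is in your base case: you treat $r=0$ as \emph{odd}, assign $\Eadd = \edgeset(0,\Aa{0},L,R) = \emptyset$ and $\Edel = \edgeset(0,\Bb{0},L,R)$ (a two-edge bit graph), and conclude the claim would fail there. But $0$ is even, so by the definition of $\graphed$ the roles are reversed: Bob \emph{inserts} $\Eadd = \edgeset(0,\Bb{0},L,R)$ and Alice \emph{deletes} $\Edel = \edgeset(0,\Aa{0},L,R) = \emptyset$, so the claim holds vacuously at $r=0$ --- which is exactly the paper's base case. Your workaround (starting the induction at $r=1$, where the diagonal terms $\edgeset(0,\Aa{0}_{i,j},\cdot)$ vanish) is itself sound and proves the claim for all $r\geq 1$, but the claim as stated covers $r=0$, and your write-up leaves that case not merely unproven but asserted to be problematic. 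The fix is simply to read off the even-$r$ branch of the construction; no restriction to top-level odd calls is needed.
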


\begin{proof}
	% We prove the claim by induction on $r$. 
	By construction, the player that inserts the edges and the player that deletes the edges depends on the parity of $r$.
	When $r$ is odd, $\Eadd = \edgeset(r, \Aa{r}, L, R)$ and $\Edel = \edgeset(r, \Bb{r}, L, R)$, and when $r$ is even, the roles are reversed.
	Hence, we prove the claim by induction on $r$ where, when $r$ is odd, \[
	\edgeset(r, \Bb{r}, L, R) \subset \edgeset(r, \Aa{r}, L, R),
	\], and when $r$ is even, the  vice-versa holds. 
	
	When $r = 0$ (even case), there are no edge deletions since $\edgeset(0, \Aa{0}, L, R) = \emptyset$, so the claim vacuously holds. 
	When $r \geq 1$, we consider the cases where $r$ is odd and even separately.
	\subparagraph*{Odd $r$.}
	We consider the following groupings of the edges separately:
	\begin{itemize}
		\item  When $b_r \cdot \alpha <\sigmar(i) \neq \sigmac(j) \leq b_r$, the edge set $\edgeset(r-1, \Bb{r-1}_{i,j}, L[i], R[j])$ held by Bob is directly contained in the edges $\edgeset(r, \Aa{r}, L, R)$ held by Alice.
		\item When $b_r \cdot \alpha <\sigmar(i) = \sigmac(j) \leq b_r$, Bob holds the edge set $\edgeset(r-1, \Aa{r-1}_{i,j}, L[i], R[j])$ and Alice holds the edge set $\edgeset(r-1, \Bb{r-1}_{i,j}, L[i], R[j])$.
		Immediately by the induction hypothesis, since $r-1$ is even, we have that $\edgeset(r-1, \Aa{r-1}_{i,j}, L[i], R[j]) \subseteq \edgeset(r-1, \Bb{r-1}_{i,j}, L[i], R[j])$.
		% We argue that this set is a subset of $\edgeset(r-1, \Bb{r-1}_{i,j}, L[i], R[j])$, which is added by Alice. This holds by the induction hypothesis since, when $r-1$ is even, the edges corresponding to $\Aa{r-1}_{i,j}$, namely, $\edgeset(r-1, \Aa{r-1}_{i,j}, L[i], R[j])$, are a subset of the edges corresponding to $\Bb{r-1}_{i,j}$, namely, $\edgeset(r-1, \Bb{r-1}_{i,j}, L[i], R[j])$.
	\end{itemize}

	\subparagraph*{Even $r$.} 
Here, we have three different cases:
	\begin{itemize}
		\item When $b_r \cdot \alpha < \sigmar(i) \neq \sigmac(j) \leq b_r$, the edges $\edgeset(r-1, \Bb{r-1}_{i,j}, L[i], R[j])$ held by Alice are directly contained in the edges $\edgeset(r, \Bb{r}, L, R)$ by Bob. 
		\item When $b_r \cdot \alpha < \sigmar(i) = \sigmac(j) \leq b_r$, by the induction hypothesis as $r-1$ is odd, we have that $\edgeset(r-1, \Bb{r-1}_{i,j}, L[i], R[j]) \subseteq \edgeset(r-1, \Aa{r-1}_{i,j}, L[i], R[j])$ and are thus contained in $\edgeset(r, \Bb{r}, L, R)$.
		\item When $\sigmar(i) \leq b_r \cdot \alpha$ or $\sigmac(j) \leq b_r \cdot \alpha$, the edges $\edgeset(r-1, \Bb{r-1}_{i,j}, L[i], R[j])$ are trivially a subset of $\edgeset(r-1, \allone(n_{r-1}), L[i], R[j]) \cup \edgeset(r-1, \allzero(n_{r-1}), L[i], R[j])$, which are all included in $\edgeset(r, \Bb{r}, L, R)$ since $r$ is even.
	\end{itemize}
% This shows that $\edgeset(r, \Aa{r}, L, R)$ is a subset of $\edgeset(r, \Bb{r}, L, R)$. 

Overall, we have that $\Edel \subseteq \Eadd$ for any choice of $r \geq 0$ and thus $G = (L \cup R, \Eadd \backslash \Edel) = \graphed(\Aa{r}, \Bb{r})$ is a valid graph.
\end{proof}

With that, we have that the players can take an instance $(\Aa{r}, \Bb{r})$ of $\ahm_r$ and construct a valid graph $G = (L \cup R, \Eadd \backslash \Edel)$ without any communication. In the next section, we will show some structural properties of $G$ that relate it to $(\Aa{r}, \Bb{r})$.

\subsubsection{Properties of the Constructed Graph}
We need some notation before we begin proving the desirable properties of our construction. 
We define the following sets of special vertices: For $i, j \in [b_r]$,
\begin{align*}
	\Lspec_k := L[i] \text{~when $\sigmar(i) = b_r \cdot \alpha  + k$}\qquad \text{and} \qquad
	\Rspec_k := R[j] \text{~when $\sigmac(j) = b_r \cdot \alpha  + k$}
\end{align*}
for all $k \in [k_r]$, and
\begin{align*}
	\Lspec := \bigcup_{k \in [k_r]} \Lspec_k \qquad \text{and} \qquad	\Rspec := \bigcup_{k \in [k_r]} \Rspec_k.
\end{align*}
See \Cref{fig:hardgraph} for an illustration of the special vertices.
In the following claim, we show that the special vertices do, in fact, correspond to the special sub-instances $(\Aspec, \Bspec)$ of $(\alicepart{r}, \bobpart{r})$.

\begin{claim}\label{clm:specind}
	For $r \geq 1$, the subgraph of $G = \graphed( \alicepart{r}, \bobpart{r})$ induced by the vertices $\Lspec \cup \Rspec$ is exactly the vertex-disjoint union of graphs  $\graphed( \Aspec_k, \Bspec_k)$ for all $k \in [k_r]$.
\end{claim}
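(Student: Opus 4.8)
The plan is to prove \Cref{clm:specind} by induction on $r$, tracking exactly which edges survive the insertion--deletion process inside the special blocks. The key observation is that a special sub-instance $(\Aspec_k,\Bspec_k)$ lives in the block $L[i]\times R[j]$ where $\sigmar(i)=\sigmac(j)=b_r\cdot\alpha+k$, i.e.\ precisely a diagonal block of the permuted matrix, and by definition of $\Lspec_k,\Rspec_k$ these are exactly the special vertex groups. So the first step is to argue that the induced subgraph $G[\Lspec\cup\Rspec]$ decomposes as a vertex-disjoint union over $k\in[k_r]$ of the subgraphs on $\Lspec_k\cup\Rspec_k$: this is immediate because the $\edgeset$ procedure only ever places edges between $L[i]$ and $R[j]$ for a \emph{single} pair $(i,j)$ coming from one sub-instance, so no edge of $G$ crosses from block $(\istar_k,\jstar_k)$ to block $(\istar_{k'},\jstar_{k'})$ for $k\ne k'$; in particular the only edges with both endpoints in $\Lspec\cup\Rspec$ are those inside the diagonal special blocks.

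The second and main step is to show that for each fixed $k$, the induced subgraph on $\Lspec_k\cup\Rspec_k$ equals $\graphed(\Aspec_k,\Bspec_k)$. Here I would split on the parity of $r$, mirroring the case analysis already done in the validity claim. When $r$ is odd, Alice inserts $\edgeset(r,\alicepart{r},L,R)$, whose restriction to the block $L[i]\times R[j]$ with $b_r\cdot\alpha<\sigmar(i)=\sigmac(j)\le b_r$ is $\edgeset(r-1,\Bb{r-1}_{i,j},L[i],R[j]) = \edgeset(r-1,\Bspec_k,\Lspec_k,\Rspec_k)$; and Bob deletes $\edgeset(r,\bobpart{r},L,R)$, whose restriction to that same block is $\edgeset(r-1,\Aa{r-1}_{i,j},L[i],R[j]) = \edgeset(r-1,\Aspec_k,\Lspec_k,\Rspec_k)$. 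So on this block the surviving edges are $\edgeset(r-1,\Bspec_k,\cdot)\setminus\edgeset(r-1,\Aspec_k,\cdot)$, which is exactly $\graphed(\Aspec_k,\Bspec_k)$ for the $(r-1)$-instance $(\Aspec_k,\Bspec_k)$ since $r-1$ is even (so in $\graphed$ at level $r-1$ Bob inserts $\edgeset(r-1,\Bspec_k,\cdot)$ and Alice deletes $\edgeset(r-1,\Aspec_k,\cdot)$). When $r$ is even the roles swap symmetrically and the same identity holds, using that $r-1$ is odd. No recursion into $\graphed$ is actually needed beyond one level here, since the definition of $\graphed$ at level $r-1$ is a single line; but one should double-check that the ``extra'' edges the $\edgeset(r,\bobpart{r},\cdot)$ procedure adds when $r$ is even (the $\allone/\allzero$ blocks for $\sigmar(i)\le b_r\cdot\alpha$ or $\sigmac(j)\le b_r\cdot\alpha$) do not land in any special block, which is clear because special blocks have $\sigmar(i)=\sigmac(j)=b_r\cdot\alpha+k>b_r\cdot\alpha$.

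I expect the main obstacle to be purely bookkeeping: being careful that the block-restriction of $\edgeset(r,\cdot,L,R)$ to $L[i]\times R[j]$ is genuinely $\edgeset(r-1,\cdot,L[i],R[j])$ with the \emph{correct} orientation of $L$ and $R$ (the $L[i]/R[j]$ labelling is inherited consistently by construction, so the bit graphs at the bottom are oriented the same way), and that the permutations $\sigmar,\sigmac$ only relabel which block is which without altering the edge sets themselves. A minor subtlety worth a sentence: the definition of $(\Aspec_k,\Bspec_k)$ in the ``Notation'' paragraph uses $\sigmar(i)=\sigmac(j)=b_r\cdot\alpha+k$, which is exactly the index condition appearing in the $Z=\bobpart{r}$ branch of $\edgeset$, so the match is definitional. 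Once both steps are in place, combining them gives that $G[\Lspec\cup\Rspec]=\bigsqcup_{k\in[k_r]}\graphed(\Aspec_k,\Bspec_k)$, completing the proof.
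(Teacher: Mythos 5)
Your overall plan coincides with the paper's proof: restrict attention to the blocks $L[i]\times R[j]$ with $b_r\cdot\alpha<\sigmar(i),\sigmac(j)\le b_r$, split into diagonal versus off-diagonal, and use the parity flip ($r$ odd makes $r-1$ even, so the surviving edges on a diagonal block are exactly $\graphed(\Aspec_k,\Bspec_k)$). Your treatment of the diagonal blocks, and your observation that the $\allone/\allzero$ edges added when $r$ is even never land in a special block, are both correct.

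The gap is in your first step. You claim the vertex-disjoint decomposition of $G[\Lspec\cup\Rspec]$ is ``immediate because the $\edgeset$ procedure only ever places edges between $L[i]$ and $R[j]$ for a single pair $(i,j)$,'' and conclude that the only edges with both endpoints in $\Lspec\cup\Rspec$ lie in the diagonal blocks. That does not follow: \emph{every} pair $(i,j)\in[b_r]^2$ is the block of some sub-instance, and (say for $r$ odd) Alice's $\edgeset(r,\alicepart{r},L,R)$ inserts the edges of $\edgeset(r-1,\Bb{r-1}_{i,j},L[i],R[j])$ into every block, including the off-diagonal special blocks $L[\istar_k]\times R[\jstar_{k'}]$ with $k\ne k'$. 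Those inserted edges have both endpoints in $\Lspec\cup\Rspec$ and would ruin both the disjointness and the ``exactly'' in the claim. The reason they are absent from $G$ is not the block structure of $\edgeset$ but the fact that on each off-diagonal sub-instance Bob's deletion set is \emph{identical} to Alice's insertion set (both are $\edgeset(r-1,\Bb{r-1}_{i,j},L[i],R[j])$), so they cancel exactly. This cancellation is the first case of the paper's proof and is really the point of the construction (it is what makes the special sub-instances an induced collection); your proposal never supplies it, since your second step analyzes only the diagonal blocks. Adding that one off-diagonal case closes the gap.
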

\begin{proof}
	We only prove the claim when $r$ is odd, as the case when $r$ is even follows similarly. 
	
	Observe first that the only possible edges between $\Lspec$ and $\Rspec$ are the edges between $L[i] $ and $R[j]$ corresponding to $(\Aa{r-1}_{i,j}, \Bb{r-1}_{i,j})$ for any $i , j \in [b_r]$ with $b_r \cdot \alpha <\sigmar(i), \sigmac(j) \leq b_r$.
	Considering only these edges, we split our analysis into two cases:
	\begin{itemize}
		\item 
		When $b_r \cdot \alpha < \sigmar(i) \neq \sigmac(j) \leq b_r$, Alice adds the edges corresponding to $\Bb{r-1}_{i,j}$,
		% $\edgeset(r-1, \Bb{r-1}_{i,j}, L[i], R[j])$
		which is exactly the set of edges deleted by Bob. 
		Hence, none of these edges are present in $G$. 
		\item
		When $\sigmar(i) = \sigmac(j) = b_r \cdot \alpha + k$ for $k \in [k_r]$, Alice adds the edges corresponding to $\Bb{r-1}_{i,j}$,
		% $\edgeset(r-1, \Bb{r-1}_{i,j}, L[i], R[j])$, 
		whereas Bob deletes the edges corresponding to $\Aa{r-1}_{i,j}$.
		% $\edgeset(r-1, \Aa{r-1}_{i,j}, L[i], R[j])$.
		As $r-1$ is even, these edge insertions and deletions exactly correspond to $\graphed( \Aa{r-1}_{i,j}, \Bb{r-1}_{i,j})$.
		% As $r-1$ is even, in $\graphed( \Aa{r-1}, \Bb{r-1})$, we have,
		% \begin{align*}
		% 	\Eadd &= \edgeset(r-1, \Bb{r-1}, L, R) \\ %\tag{the edges Bob adds}
		% 	\Edel &= \edgeset(r-1, \Aa{r-1}, L, R). %\tag{the edges Alice deletes}
		% \end{align*}
	It then follows that the subgraph of $G$ induced by $L[i] \cup R[j] = \Lspec_k \cup \Rspec_k$ is an exact copy of $\graphed( \Aspec_k, \Bspec_k)$.
	\end{itemize}
	
	Finally, for any $k_1, k_2 \in [k_r]$, $\Lspec_{k_1} \cup \Rspec_{k_1} $ is disjoint from $\Lspec_{k_2} \cup \Rspec_{k_2}$. Hence, the subgraph induced by $\Lspec \cup \Rspec$ is a vertex-disjoint union of the graphs $\graphed(\Aspec_k, \Bspec_k)$ for $k \in [k_r]$, proving the claim.
\end{proof}

To highlight another key structural property in our construction of $G$, recall from \Cref{obs:numspecial} that any instance $(\Aa{r}, \Bb{r})$ of $\ahm_r$ defines $k_r \cdot k_{r-1} \cdot \ldots \cdot k_1$ many special base instances.
We identify these special base instances and their corresponding sets of special base vertices in $L$ and $R$ using the tuples in $ [k_r] \times [k_{r-1}] \times \ldots \times [k_1]$. Formally, we have the following notation:

\paragraph{Notation.}
Any tuple $s = (s_r, s_{r-1}, \ldots, s_1)$ identifies the special sub-instance $(\Aspec_{s_r}, \Bspec_{s_r})$ of $(\alicepart{r}, \bobpart{r})$, then the special sub-instance $(\Aspec_{s_{r-1}}, \Bspec_{s_{r-1}})$ inside $(\Aspec_{s_r}, \Bspec_{s_r})$, and so on until finally pointing to a single special base instance $(\Aspec_{s_{1}}, \Bspec_{s_{1}})$.
We denote this special base instance as $(\Abase_s, \Bbase_s)$. 
Furthermore, we use $(\Abase, \Bbase)$ to denote the collection of special base instances $(\Abase_s, \Bbase_s)$ for all $s = (s_r, s_{r-1}, \ldots, s_1) \in [k_r] \times [k_{r-1}] \times \ldots \times [k_1]$.

% We use similar notation to identify the special base vertices for the vertex sets $L$ and $R$. In particular, 
Any tuple $s = (s_r, s_{r-1}, \ldots, s_1)$ also identifies the sets of vertices $\Lspec_{s_r} \subseteq L$ and $\Rspec_{s_r} \subseteq R$, then the sets $\Lspec_{s_{r-1}} \subseteq \Lspec_{s_r}$ and $\Rspec_{s_{r-1}} \subseteq \Rspec_{s_r}$, and so on until identifying the sets of special base vertices $\Lspec_{s_1} \subseteq \Lspec_{s_2}$ in $L$ and $\Rspec_{s_1} \subseteq \Rspec_{s_2}$ in $R$.
We denote these two sets of special base vertices as $\Lbase_s$ and $\Rbase_s$. We further define
\[\Lbase := \bigcup_{\substack{s \in [k_r] \times [k_{r-1}] \times \ldots \times [k_1]}} \Lbase_s \qquad \text{and} \qquad	\Rbase := \bigcup_{\substack{s \in [k_r] \times [k_{r-1}] \times \ldots \times [k_1]}} \Rbase_s.\]

With this notation, we can now show that the special base vertices $\Lbase \cup \Rbase$ correspond to the special base instances $(\Abase, \Bbase)$ using \Cref{clm:specind}.

\begin{corollary}\label{cor:baseindmatch}
	For $r \geq 1$, the subgraph of $G = \graphed(\alicepart{r}, \bobpart{r})$ induced by the vertices $\Lbase \cup \Rbase$ is exactly the vertex-disjoint union of the bit graphs $\graphed( \Abase_s, \Bbase_s)$ for all $s \in [k_r] \times [k_{r-1}] \times \ldots \times [k_1]$. 	
\end{corollary}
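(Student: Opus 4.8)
\textbf{Proof proposal for \Cref{cor:baseindmatch}.}

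The plan is to prove this as a direct consequence of \Cref{clm:specind} by induction on $r$. The base case $r=1$ follows almost immediately: here a search sequence is just a single index $s = (s_1) \in [k_1]$, so the special base instances coincide with the special sub-instances, and $\Lbase_s = \Lspec_{s_1}$, $\Rbase_s = \Rspec_{s_1}$; thus \Cref{clm:specind} already gives that the subgraph of $G = \graphed(\alicepart{1},\bobpart{1})$ induced by $\Lspec \cup \Rspec = \Lbase \cup \Rbase$ is the vertex-disjoint union of the graphs $\graphed(\Aspec_{s_1}, \Bspec_{s_1}) = \graphed(\Abase_s, \Bbase_s)$, which are bit graphs since $\Aspec_{s_1}, \Bspec_{s_1}$ are instances of $\ahm_0$.

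For the inductive step $r \geq 2$, I would first apply \Cref{clm:specind} to conclude that the subgraph of $G = \graphed(\alicepart{r},\bobpart{r})$ induced by $\Lspec \cup \Rspec$ is the vertex-disjoint union, over $k \in [k_r]$, of the graphs $\graphed(\Aspec_k, \Bspec_k)$, where each $(\Aspec_k, \Bspec_k)$ is an instance of $\ahm_{r-1}(n_{r-1},\alpha)$. Note that $\Lbase \cup \Rbase \subseteq \Lspec \cup \Rspec$, and that for a tuple $s = (s_r, s_{r-1}, \ldots, s_1)$, the special base vertices $\Lbase_s, \Rbase_s$ lie inside $\Lspec_{s_r}, \Rspec_{s_r}$; moreover, within the copy of $\graphed(\Aspec_{s_r}, \Bspec_{s_r})$, these vertices are precisely the special base vertices of that $\ahm_{r-1}$-instance associated to the truncated search sequence $(s_{r-1}, \ldots, s_1)$, and $(\Abase_s, \Bbase_s)$ is precisely the special base instance of $(\Aspec_{s_r}, \Bspec_{s_r})$ pointed to by $(s_{r-1},\ldots,s_1)$. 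Applying the inductive hypothesis to each $\graphed(\Aspec_k, \Bspec_k)$ then shows that the subgraph of $\graphed(\Aspec_k, \Bspec_k)$ induced by its special base vertices is the vertex-disjoint union of the bit graphs $\graphed(\Abase_s, \Bbase_s)$ over all $s$ with $s_r = k$. Since the copies $\graphed(\Aspec_k, \Bspec_k)$ for different $k$ are already vertex-disjoint in $G$, taking the union over $k \in [k_r]$ gives the claim.

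The one point requiring a little care — and the main (though still mild) obstacle — is the bookkeeping that the induced subgraph operation commutes with the recursive structure: namely, that $G[\Lbase_s \cup \Rbase_s]$, computed inside the ambient graph $G$, equals the corresponding induced subgraph computed inside the sub-copy $\graphed(\Aspec_{s_r}, \Bspec_{s_r})$, with no extra edges leaking in from other parts of $G$. This is exactly the kind of ``inducedness'' guarantee that \Cref{clm:specind} is designed to provide: since $\graphed(\Aspec_{s_r},\Bspec_{s_r})$ already appears as an induced subgraph of $G$ on $\Lspec_{s_r} \cup \Rspec_{s_r}$, any further induced subgraph on a subset of those vertices agrees with the one taken inside $\graphed(\Aspec_{s_r},\Bspec_{s_r})$. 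So the argument just iterates \Cref{clm:specind} down the recursion tree, and no genuinely new idea beyond it is needed.
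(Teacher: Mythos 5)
Your proposal is correct and follows essentially the same route as the paper: induction on $r$, with the base case $r=1$ reduced to \Cref{clm:specind} via the identifications $\Lbase = \Lspec$, $\Rbase = \Rspec$, and the inductive step obtained by first decomposing $G[\Lspec\cup\Rspec]$ via \Cref{clm:specind} and then applying the induction hypothesis inside each vertex-disjoint copy $\graphed(\Aspec_{s_r},\Bspec_{s_r})$. The "inducedness commutes with restriction" point you flag is handled in the paper by the same observation, written as $G[\Lbase\cup\Rbase] = \Hspec[\Lbase\cup\Rbase]$ since $\Lbase\subseteq\Lspec$ and $\Rbase\subseteq\Rspec$.
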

\begin{proof}
	We prove this by induction on $r$.
	
	\textbf{Base case for $\mathbf{r=1}$.} 
	% Let $G = \graphed( \alicepart{1}, \bobpart{1})$ and let $H = G[\Lbase \cup \Rbase]$. 
	For every $s_1 \in [k_1]$, we have the following directly from the above definitions:
	\[
		\Lbase_s = \Lspec_{s_1}, \Rbase_s = \Rspec_{s_1}, \Abase_s = \Aspec_{s_1}, \Bbase_s = \Bspec_{s_1}.
	\] 
	Therefore, $\Lbase = \Lspec$ and $\Rbase = \Rspec$. Then, by \Cref{clm:specind}, we have that the subgraph induced by the vertices $\Lbase$ and $\Rbase$ is a vertex-disjoint union of graphs 
	\[\graphed(\Aspec_{s_1}, \Bspec_{s_1}) = \graphed(\Abase_{s_1}, \Bbase_{s_1}) \] 
	for all $s_1 \in [k_1]$. 
	Furthermore, for each $s_1 \in [k_1]$, $\graphed(\Abase_{s_1}, \Bbase_{s_1})$ is a bit graph since it corresponds to the special base instance $(\Abase_{s_1}, \Bbase_{s_1})$, which is an instance of $\ahm_0$.

	\textbf{Inductive step for $\mathbf{r \geq 2}$.}
	Let $\Hbase := G[\Lbase \cup \Rbase]$ be the subgraph in the statement.
	It will also be useful for us to consider the subgraph $\Hspec := G[\Lspec \cup \Rspec]$.
	Since $\Lbase \subseteq \Lspec$ and $\Rbase \subseteq \Rspec$, we have that,
	\[
		\Hbase = G[\Lbase \cup \Rbase] = \Hspec[\Lbase \cup \Rbase].
	\]

	Now, by \Cref{clm:specind}, $\Hspec$ is a vertex-disjoint union of subgraphs 
	$\Hspec_{s_r} := \graphed(\Aspec_{s_r}, \Bspec_{s_r})$ for ${s_r} \in [k_r]$.
	Therefore, we only need to show that, for each ${s_r} \in [k_r]$, the subgraph $\Hspec_{s_r}$, when restricted to the vertices in $\Lbase \cup \Rbase$ is exactly a vertex-disjoint union of the bit graphs $\graphed(\Abase_{s}, \Bbase_{s})$ for all $s \in \{{s_r}\} \times [k_{r-1}] \times \ldots \times [k_1]$.

	Since the only vertices of $\Lbase \cup \Rbase$ that are in $\Hspec_{s_r}$ are exactly its special base vertices, what we want immediately follows from the induction hypothesis applied to $\Hspec_{s_r}$ for each ${s_r} \in [k_r]$.
\end{proof}

We can also show that these graphs have perfect matchings. 

\begin{claim}\label{lem:perfmatch}
	For $r \geq 0$ and any $(\Aa{r}, \Bb{r})$, the graph $G = \graphed(\alicepart{r}, \bobpart{r})$ contains a perfect matching, i.e., of size $2n_r$.
\end{claim}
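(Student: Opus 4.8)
\textbf{Proof proposal for \Cref{lem:perfmatch}.}

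The plan is to prove this by induction on $r$, closely following the recursive structure of the $\graphed$ procedure. The key observation is that every bit graph is a perfect matching on its four vertices: by \Cref{def:bit-graph}, the edge set of the bit graph of $x$ is $\{(\ell^1,r^1),(\ell^2,r^2)\}$ or $\{(\ell^1,r^2),(\ell^2,r^1)\}$, and in either case these two edges form a perfect matching on $\{\ell^1,\ell^2,r^1,r^2\}$. This handles the base case $r=0$: the graph $\graphed(\Aa{0},\Bb{0})$ has $2n_0 = 2$ vertices on each side, and since $\Edel = \edgeset(0,\Aa{0},L,R) = \emptyset$, the graph equals the bit graph of $\Bb{0}$ (when $r=0$ is even, Bob inserts), which is a perfect matching of size $2$.

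For the inductive step, I would argue that for every $i,j \in [b_r]$ the edges of $G$ restricted to $L[i] \cup R[j]$ form exactly $\graphed(\Aa{r-1}_{i,j},\Bb{r-1}_{i,j})$ — or, in a couple of boundary cases (the $\Brest$ positions when $r$ is even, and the off-diagonal positions), a graph that still contains a perfect matching on those $4n_{r-1}$ vertices. Concretely: the special positions ($\sigmar(i)=\sigmac(j) > b_r\alpha$) give exactly $\graphed(\Aa{r-1}_{i,j},\Bb{r-1}_{i,j})$ by the same reasoning as in \Cref{clm:specind}; the off-diagonal positions ($\sigmar(i)\neq\sigmac(j)$, both $> b_r\alpha$) have all inserted edges deleted, so one must check those vertices are covered elsewhere; the $\Brest$ positions give either the empty graph (odd $r$) or a complete bipartite-type graph coming from $\allone$ and $\allzero$ (even $r$). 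Since the vertex groups $\{L[i]\}_{i\in[b_r]}$ partition $L$ and $\{R[j]\}_{j\in[b_r]}$ partition $R$, I would take a perfect matching within each ``diagonal'' block $L[\rhor(m)] \cup R[\rhoc(m)]$ for $m \in [b_r]$ and show their union is a perfect matching of $G$. For the diagonal blocks with $m > b_r\alpha$ (the special sub-instances), the induction hypothesis applied to $\graphed(\Aspec_k,\Bspec_k)$ gives a perfect matching of size $2n_{r-1}$; for the diagonal blocks with $m \le b_r\alpha$ (part of $\Brest$), when $r$ is odd there are no edges inserted at all in those blocks by $\Aa{r}$ — so here I need to be careful: these vertices must get matched using edges from $\allone/\allzero$ that Bob does \emph{not} delete. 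This requires re-examining which player inserts and which deletes for the parity in question, and confirming that the $\allone$-or-$\allzero$ edges survive to provide a matching on the $\Brest$ vertices.

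The main obstacle I anticipate is exactly this bookkeeping around the $\Brest$ block and the parity of $r$: the $\edgeset$ procedure only emits the $\allone/\allzero$ edges ``when $r$ is even'' inside the $\bobpart{r}$ branch, so depending on whether those are being inserted or deleted in $\graphed$, the $\Brest$ vertices are covered either by a leftover all-ones/all-zeros bipartite gadget or must be shown to be covered through the recursion. I would resolve this by noting that $\edgeset(0,\allone(n_{r-1}),L[i],R[j])$ together with $\edgeset(0,\allzero(n_{r-1}),L[i],R[j])$ over all the sub-blocks yields a graph where every one of the $4n_{r-1}$ vertices has positive degree and which contains a perfect matching (e.g. the all-zeros part alone is a perfect matching), and that whichever player's edges these are, they are not cancelled by the other player (this is essentially the content of the validity claim $\Edel \subseteq \Eadd$ just proven, combined with the fact that insertions strictly dominate). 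So for every $m \in [b_r]$, the block $G[L[\rhor(m)] \cup R[\rhoc(m)]]$ contains a perfect matching on its $4n_{r-1}$ vertices, and summing over $m \in [b_r]$ gives a perfect matching of $G$ of size $b_r \cdot 2 n_{r-1} = 2 n_r$, using $n_r = n_{r-1} b_r$ from \Cref{eq:params-nb}. The rest is routine verification that these block-matchings are vertex-disjoint, which is immediate since the blocks partition $L$ and $R$.
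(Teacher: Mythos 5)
Your overall decomposition is the same as the paper's: handle the special diagonal blocks via \Cref{clm:specind} and the induction hypothesis, then cover the remaining vertices by exhibiting a perfect matching inside each diagonal block $L[i]\cup R[j]$ with $\sigmar(i)=\sigmac(j)\le b_r\cdot\alpha$. The gap is in how you handle exactly those remaining blocks, which is the only non-routine part of the induction. For odd $r$ you assert that ``there are no edges inserted at all in those blocks by $\Aa{r}$'' and then try to cover them with $\allone/\allzero$ edges. Both halves of this are wrong: $\edgeset(r,\Aa{r},L,R)$ is the union of $\edgeset(r-1,\Bb{r-1}_{i,j},L[i],R[j])$ over \emph{all} $i,j\in[b_r]$ (Alice does not see $\sigmar,\sigmac$, so she cannot skip the $\Brest$ positions), so those blocks do receive edges; and the $\allone/\allzero$ edges are emitted only inside the $\bobpart{r}$ branch when $r$ is even, so for odd $r$ they do not exist at the top level at all. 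The correct argument is the opposite of what you wrote: for odd $r$ the block $G[L[i]\cup R[j]]$ equals all of $\edgeset(r-1,\Bb{r-1}_{i,j},L[i],R[j])$ with nothing deleted, which is a \emph{superset} of $\graphed(\Aa{r-1}_{i,j},\Bb{r-1}_{i,j})$ (the latter is obtained by further deleting $\edgeset(r-1,\Aa{r-1}_{i,j},\cdot,\cdot)$ since $r-1$ is even), and the induction hypothesis applied to that subgraph finishes the case.

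Your even-$r$ fix is also unsound as stated. You argue that the all-zeros edges form a perfect matching and ``are not cancelled by the other player,'' citing $\Edel\subseteq\Eadd$. That containment only says deletions are a subset of insertions; it does not say insertions survive. On the $\Brest$ blocks for even $r$, Bob inserts $\allone\cup\allzero$ and Alice deletes $\edgeset(r-1,\Bb{r-1}_{i,j},L[i],R[j])$, which will in general intersect the all-zeros matching, so that specific matching need not survive. Again the superset argument is what works: what remains is $(\allone\cup\allzero)\setminus\edgeset(r-1,\Bb{r-1}_{i,j})\supseteq\edgeset(r-1,\Aa{r-1}_{i,j})\setminus\edgeset(r-1,\Bb{r-1}_{i,j})=\graphed(\Aa{r-1}_{i,j},\Bb{r-1}_{i,j})$, which has a perfect matching by induction. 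With these two cases repaired, the rest of your argument (base case, disjointness of the diagonal blocks, and the count $b_r\cdot 2n_{r-1}=2n_r$) goes through.
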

\begin{proof}
	We prove this claim by induction on $r$ where the base case when $r= 0$ is trivial, i.e., we have that $\graphed(\alicepart{0}, \bobpart{0})$ is a single bit graph, which has a perfect matching by \Cref{def:bit-graph}.

	When $r \geq 1$, let $H = G[\Lspec \cup \Rspec]$ be its subgraph induced on the special vertices. By \Cref{clm:specind} and the induction hypothesis, we immediately have that $H$ contains a perfect matching. Hence, it remains to argue that there is a perfect matching in the subgraph 
	\[
	H' = G[L \backslash \Lspec \cup R \backslash \Rspec].
	\]

By definition of the special vertices, $L \backslash \Lspec$ corresponds to $L[i]$ where $\sigmar(i) \leq b_r \cdot \alpha$ and, similarly, $R \backslash \Rspec$ corresponds to $R[j]$ where $\sigmac(j) \leq b_r \cdot \alpha$ for $i,j \in [b_r]$.
	Then, to show that there is a perfect matching in $H'$, it is sufficient to argue that 
	\[G[L[i] \cup R[j]] \text{~for each~} i,j \in [b_r] \text{~where~} \sigmar(i) = \sigmac(j) \leq b_r \cdot \alpha\] 
	has a perfect matching since these are vertex disjoint graphs that cover all of $L \backslash \Lspec$ and $R \backslash \Rspec$.
We consider the case when $r$ is odd and even separately.

	\subparagraph*{Odd r.} 
	By construction, Alice adds edges corresponding to $\Bb{r-1}_{i,j}$. 
	In order to obtain $\graphed(\Aa{r-1}_{i,j}, \Bb{r-1}_{i,j})$, which has a perfect matching by the induction hypothesis, Bob would need to delete the edges corresponding to $\Aa{r-1}_{i,j}$ since $r-1$ is even.
	Instead, Bob deletes no edges so $G[L[i] \cup R[j]]$ is a superset of $\graphed(\Aa{r-1}_{i,j}, \Bb{r-1}_{i,j})$, which implies that it also has a perfect matching.

	\subparagraph*{Even r.} By construction, Alice deletes the edges corresponding to $\Bb{r-1}_{i,j}$. 
	In order to obtain $\graphed(\Aa{r-1}_{i,j}, \Bb{r-1}_{i,j})$, which has a perfect matching by the induction hypothesis, Bob would need to add the edges corresponding to $\Aa{r-1}_{i,j}$ since $r-1$ is odd.
	Instead, Bob adds a superset of these edges since he adds all possible edges 
	\[(\edgeset(r-1, \allone(n_{r-1}), L[i], R[j]) \cup \edgeset(r-1, \allzero(n_{r-1}), L[i], R[j])).\] 
	Therefore, $\graphed(\Aa{r-1}_{i,j}, \Bb{r-1}_{i,j})$  is a subgraph of $G[L[i] \cup R[j]]$ and has a perfect matching. 
\end{proof}

Finally, using these key structural properties, we show that any graph $G$ has a perfect matching and that any $\apx$-approximate maximum matching in $G$ includes many edges that correspond to the bit graphs of the special base instances.

\begin{lemma}\label{lem:apxhasMbase}
	For $r \geq 1$, any $\apx$-approximate maximum matching $M$ in $G = \graphed(\alicepart{r}, \bobpart{r})$ identifies the bits of at least 
	\[n_r \cdot (1/\apx - 2\alpha r)\] 
	many special base instances in $(\alicepart{r}, \bobpart{r})$.
\end{lemma}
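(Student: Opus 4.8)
\textbf{Proof proposal for Lemma \ref{lem:apxhasMbase}.}

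The plan is to combine three facts established earlier: (i) $G$ has a perfect matching of size $2n_r$ (Claim \ref{lem:perfmatch}), so any $\apx$-approximate matching $M$ has $\card{M} \ge 2n_r/\apx$; (ii) the subgraph of $G$ induced on the special base vertices $\Lbase \cup \Rbase$ is a vertex-disjoint union of bit graphs, one per special base instance (Corollary \ref{cor:baseindmatch}); and (iii) essentially all edges of $M$ must actually lie inside $G[\Lbase \cup \Rbase]$, because the rest of the graph has only a small vertex cover. The core of the argument is to quantify (iii): I will bound the number of vertices of $G$ that are \emph{not} special base vertices that can be incident to edges inserted but not deleted, or more precisely bound the size of a vertex cover of $G$ minus the edges of the special base bit graphs.

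First I would set up the vertex-cover bound by unrolling the recursion. At the top level, the vertices $L \setminus \Lspec$ and $R \setminus \Rspec$ together have size $2 \cdot 2 n_r \cdot \alpha \cdot b_r / b_r$—more carefully, $L \setminus \Lspec$ consists of the groups $L[i]$ with $\sigmar(i) \le b_r \cdot \alpha$, of which there are $b_r \cdot \alpha$, each of size $2 n_{r-1}$, so $\card{L \setminus \Lspec} = 2 n_{r-1} \cdot b_r \cdot \alpha = 2 n_r \alpha$, and likewise for $R \setminus \Rspec$. Taking $L \setminus \Lspec$ together with $R \setminus \Rspec$ as a partial vertex cover kills every edge of $G$ except those living inside $G[\Lspec \cup \Rspec]$, which by Claim \ref{clm:specind} is a disjoint union of the graphs $\graphed(\Aspec_k, \Bspec_k)$ for $k \in [k_r]$. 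Recursing into each of these $k_r$ copies—each on $2 n_{r-1}$ vertices, contributing $2 n_{r-2} \cdot b_{r-1} \cdot \alpha = 2 n_{r-1} \alpha$ more cover vertices—and iterating down to level $1$, the total number of cover vertices accumulated is at most $\sum_{i=1}^{r} 2 n_r \alpha = 2 n_r \alpha r$ (at each level $i$, the special instances collectively span $\le 2 n_r$ vertices and we take an $\alpha$-fraction of them into the cover). After $r$ levels of recursion what remains is exactly $G[\Lbase \cup \Rbase]$, the disjoint union of bit graphs. So $G$ minus the edges of these bit graphs admits a vertex cover of size at most $2 n_r \alpha r$.

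Next I would conclude. Any matching $M$ in $G$ can be split into edges inside $G[\Lbase \cup \Rbase]$ and the rest; the rest is a matching in $G$ minus the bit-graph edges, hence has size at most the vertex cover bound $2 n_r \alpha r$. Therefore the number of edges of $M$ lying inside $G[\Lbase \cup \Rbase]$ is at least $\card{M} - 2 n_r \alpha r \ge 2 n_r / \apx - 2 n_r \alpha r$. Since $G[\Lbase \cup \Rbase]$ is a vertex-disjoint union of bit graphs, each on $4$ vertices and hence contributing at most $2$ edges to a matching, the number of \emph{distinct} bit graphs touched by $M$ is at least half of that, i.e., at least $n_r/\apx - n_r \alpha r$. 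Hmm—this gives $n_r(1/\apx - \alpha r)$, slightly stronger than the claimed $n_r(1/\apx - 2\alpha r)$, so the factor of $2$ in the statement gives slack; I expect the intended accounting is marginally looser (e.g.\ counting a bit graph as ``identified'' only when $M$ contains an edge of it, and absorbing the loss from the perfect-matching-to-$\apx$-approximation step or from a coarser cover bound), and the weaker bound in the statement certainly follows. Finally, by Definition \ref{def:bit-graph}, either edge of the bit graph of $\Abase_s$ identifies the bit, so each such touched bit graph genuinely reveals the answer of its special base instance, completing the proof.

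The main obstacle I anticipate is getting the recursive vertex-cover bookkeeping exactly right—in particular, making sure that at each recursion level the ``discarded'' groups $L[i], R[j]$ with $\sigmar(i)$ or $\sigmac(j) \le b_r \cdot \alpha$ really do cover \emph{all} non-special edges (this needs the construction's property that, when $r$ is odd, off-diagonal special pairs are fully deleted and the $\Brest$ blocks are left as-is, versus the even case where the all-ones/all-zeros padding appears), and that the sizes $2 n_r \alpha$ telescope correctly using $n_r = n_{r-1} b_r$ from \Cref{eq:params-nb}. The rest is a short, clean counting argument once that structural bound is in hand.
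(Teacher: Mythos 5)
Your approach is the same as the paper's, just phrased differently: the paper bounds the number of $M$-edges outside $H := G[\Lbase \cup \Rbase]$ by $\card{(L \cup R)\setminus(\Lbase\cup\Rbase)}$ directly (each such edge has an endpoint outside $\Lbase\cup\Rbase$, and $M$ is a matching), whereas you package the same set as a recursively built vertex cover. Note that your anticipated ``main obstacle'' is a non-issue: the set you build is exactly $(L\cup R)\setminus(\Lbase\cup\Rbase)$, and \emph{any} edge not lying inside an induced subgraph $G[S]$ has an endpoint in the complement of $S$, so no structural property of the construction (odd/even cases, all-ones padding, etc.) is needed for the covering claim — the construction only enters via \Cref{cor:baseindmatch}, to know that $H$ is a disjoint union of bit graphs.

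There is, however, a factor-of-$2$ arithmetic slip that propagates to your final bound. You correctly compute $\card{L\setminus\Lspec} = 2n_r\alpha$ and note ``likewise for $R\setminus\Rspec$,'' but then charge only $2n_r\alpha$ per recursion level instead of $2n_r\alpha + 2n_r\alpha = 4n_r\alpha$ (your parenthetical says the special instances span $\le 2n_r$ vertices per level; they span $\le 4n_r$, counting both sides). The correct telescoping gives a cover of size at most $4n_r\alpha r$ (more precisely $4n_r(1-(1-\alpha)^r)$, which is what the paper computes and then bounds via Bernoulli). With this correction, the number of $M$-edges in $H$ is at least $2n_r/\apx - 4n_r\alpha r$, and halving yields exactly $n_r(1/\apx - 2\alpha r)$. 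So the lemma's bound is tight for this argument; your conclusion that you obtained a strictly stronger bound $n_r(1/\apx - \alpha r)$ and that the statement has ``slack'' is an artifact of the miscount, not a real improvement. Everything else — the use of \Cref{lem:perfmatch} for $\card{M}\ge 2n_r/\apx$, the two-edges-per-bit-graph accounting, and the appeal to \Cref{def:bit-graph} — is correct.
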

\begin{proof}
	By \Cref{cor:baseindmatch}, we know that the edges in $H = G[\Lbase \cup \Rbase]$ are able to identify the bits of the special base instance.
	Therefore, we first obtain a bound on the number of edges in $M$ that must have both endpoints in $\Lbase \cup \Rbase$, i.e., the number of its edges in $H$.

	Observe that the number of edges in $M$ that are not in $H$ have at least one endpoint \emph{not} in $\Lbase \cup \Rbase$, i.e., there are at most $|(L \cup R) \backslash (\Lbase \cup \Rbase)|$ many such edges.
	Therefore, $M$ has at least 
	\[|M| - |(L \cup R) \backslash (\Lbase \cup \Rbase)|\]
	edges in $H$.
	Then, by definition of the special base vertices $\Lbase \cup \Rbase$ and the choice of parameters in \Cref{eq:params-nb} and \Cref{obs:numspecial}, we have that 
	\[
	|\Lbase| = |\Rbase| = 2 \cdot k_r \cdot k_{r-1} \cdot \ldots \cdot k_1 = 2 n_r \cdot (1-\alpha)^r
	\]
	and 
	\[
	|(L \cup R) \backslash (\Lbase \cup \Rbase)| = 4n_r - 4 n_r \cdot (1-\alpha)^r = 4n_r (1 - (1- \alpha)^r).
	\]
	Now, by \Cref{lem:perfmatch} and since $M$ is a $\apx$-approximation, we have that $M$ is of size at least $2n_r/\apx$, which gives us the following bound:
	\begin{align*}
	|M| - |(L \cup R) \backslash (\Lbase \cup \Rbase)| &\geq 2n_r/\apx -  4n_r (1 - (1- \alpha)^r) \\
	&= 4n_r(1/2\apx - 1 + (1 - \alpha)^r).
	\end{align*}

	Finally, by \Cref{cor:baseindmatch}, the only edges in $H$ are the edges of the vertex-disjoint union of the bit graphs $\graphed(\Abase_s, \Bbase_s)$ for $s \in [k_r] \times [k_{r-1}] \times \ldots \times [k_1]$. 
	By \Cref{def:bit-graph}, each bit graph is a matching of exactly two edges, each of which can be used to identify the underlying bit of the special base instance. Therefore, in the worst case, every two edges of $M$ in $H$ identifies a single bit of a special base instance and thus at least 
	\begin{align*}
		2n_r(1/2\apx - 1 + (1 - \alpha)^r) &\geq 2n_r (1/2\apx - 1 + 1 - \alpha \cdot r) \tag{by Bernoulli's inequality} \\
		& = n_r \cdot (1/\apx - 2\alpha \cdot r)
	\end{align*}
	many such bits are identified by $M$.
\end{proof}

\subsubsection{A Protocol for $\ahm_{r}(n_r, \alpha)$ using a Dynamic Streaming Algorithm}\label{subsec:proof-ahm-to-graph}

Using the above graph construction and its key properties, we can now construct a protocol $\prot$ for any instance $(\Aa{r}, \Bb{r})$ of $\ahm_r(n_r, \alpha)$ that simulates a run of any $p$-pass $s$-space dynamic streaming algorithm $\alg$ for $\apx$-approximate maximum matching on the bipartite graph $G = \graphed(\alicepart{r}, \bobpart{r})$.
% Note that the players are individually not aware of the labels of the base vertices that correspond to each base instance, which is required to use the output matching of $\alg$ to solve $\ahm_r$.
% However, in parallel to the simulation, the players can send sufficiently small messages to exchange the information required to do this.
We then use this protocol to prove \Cref{lem:redToMatching}.

\begin{ourbox}
	\vspace{0.5em}
	\textbf{A protocol $\prot$ for $\ahm_r(n_r, \alpha)$ on any input $(\Aa{r},\Bb{r})$ using the $p$-pass $s$-space dynamic streaming algorithm $\alg$ where $r = 2p-1$}:
	\begin{enumerate}[label=$(\alph*)$]
		\item Alice computes the edge insertions $E_A = \edgeset(r, \Aa{r}, L, R)$ and then constructs an arbitrary ordering $\sigma_A$ of the edges $E_A$.
		Similarly, Bob computes the edge deletions $E_B = \edgeset(r, \Bb{r}, L, R)$ and then constructs an arbitrary ordering $\sigma_B$ of the edges $E_B$.
		\item The players simulate each pass of $\alg$ on the stream of insertions $\sigma_A$ followed by deletions $\sigma_B$ in the usual manner: To simulate a single pass, Alice runs $\alg$ on the insertions in $\sigma_A$ then sends the memory state to Bob, who continues $\alg$ on the deletions  in $\sigma_B$ and sends the memory state back to Alice. In only the final pass, Bob computes the output matching $M$ of $\alg$ instead of sending its memory state.

		\item In parallel to the simulation, Alice and Bob also exchange messages to identify the labels of all the special base instances in $(\Abase, \Bbase$), each of which is identified by a tuple $(s_r, s_{r-1}, \ldots, s_1) \in [k_r] \times [k_{r-1}] \times \ldots \times [k_1]$ as follows. 
		\begin{itemize}
			\item In the first round, Alice does not send anything. In the second round, Bob sends the two input permutations $\sigmar, \sigmac$ of $[b_r]$ along with the message to Alice, using which Alice can identify $k_r$ many special sub-instances of $\ahm_r$. 
			\item For any round $2 < t \leq r$, the player who receives the message of round $t-1$ identifies the $k_r \cdot k_{r-1} \cdot \ldots \cdot k_{r-t+3}$ many special instances of $\ahm_{r-t+2}$ (sub-instances of $\ahm_{r-t+3}$). In round $t$, this player sends $2 \cdot k_r \cdot k_{r-1} \cdot \ldots \cdot k_{r-t+3}$ many permutations of $[b_{r-t+2}]$, corresponding to the two input permutations of each of these instances of $\ahm_{r-t+2}$. 
			\item After receiving message of round $r$, Bob is able to identify the $ k_r \cdot k_{r-1} \cdot \ldots \cdot k_{2}$ many special instances of $\ahm_{1}$. Bob already knows the two input permutations of $[b_1]$ associated with all these special instances of $\ahm_1$, and thus knows the labels of all the special base instances.
		\end{itemize}
%		In the first round, Alice does not send anything. 
%		In the second round, Bob communicates the $k_r$ pairs of indices $(i,j) \in [b_r]$ that correspond to the special sub-instances\footnote{Each special sub-instance corresponds to a sub-instance $(\Aa{r-1}_{i,j}, \Bb{r-1}_{i,j})$ for some label $(i,j) \in [b_r] \times [b_r]$, and the goal is to identify each and every label, i.e., all the $(i,j)$ pairs where $b_r \cdot \alpha < \sigmar(i) = \sigmac(j) \leq b_r$.} $(\Aspec,\Lspec)$ and the special vertices $(\Lspec, \Rspec)$. 
%		The players continue this until all the special base instances have been identified. Formally:
%		\begin{itemize}
%			\item
%			The current player receives the message sent in round $t \in [r]$ that contains the labels identifying $k_r \cdot k_{r-1} \cdot k_{r - t + 2}$ many special instances of $\ahm_{{r-t +1}}$. For $t = 1$, no message is received, i.e., this corresponds to the input instance of $\ahm_r$ being the only such special instance.
%			\item
%			For each identified instance, the current player further identifies the pairs of indices $(i,j) \in [b_{r - t + 1}]$ that correspond to its $k_{r-t + 1}$ many special sub-instances, respectively.
%			\item Unless $t = r$, the current player communicates the labels of all $k_r \cdot k_{r-1} \cdot k_{r - t + 1}$ many identified instances of $\ahm_{{r-t}}$ as the subsequent message sent in round $t + 1$. 
%		\end{itemize}
		\item At the end of the protocol, the final player receives a uniform random search sequence $s^* = (\speckstar{r}, \speckstar{r-1}, \ldots, \speckstar{1})$. Then, using the information obtained in the previous step, the player identifies the special base instance $(\Abase_{s^*}, \Bbase_{s^*})$ and corresponding special base vertices $\Lbase_{s^*} = \{\ell_1, \ell_2\}$ and $\Rbase_{s^*} = \{ r_1, r_2 \}$. 
		The player returns $x_{s^*}$ as the output of the protocol, which is determined as follows:
		\begin{itemize}
			\item If $M$ contains the edge $(\ell_1, r_1)$ or $(\ell_2, r_2)$, $x_{s^*} = 0$;
			\item If $M$ contains the edge $(\ell_1, r_2)$ or $(\ell_2, r_1)$, $x_{s^*} = 1$;
			\item Otherwise, $x_{s^*}$ is a uniform random bit.
		\end{itemize}
	\end{enumerate}
	\vspace{0em}
\end{ourbox}

%\subsubsection*{Proof of \Cref{lem:redToMatching}}
We now use this constructed protocol to prove the desired connection between maximum bipartite matching in the dynamic streaming model and the $\ahm_r$ problem, 
thus proving~\Cref{lem:redToMatching}. 

\begin{proof}[Proof of \Cref{lem:redToMatching}]
	To prove this lemma, we show that $\prot$ is an $r$-round protocol with 
	\[
	\cc{\prot} \leq r \cdot s + O(r^2 \cdot n_r \log n_r) \quad \text{and}\quad  \suc{\prot} \geq 1/2 + 1/6\apx
	\]
	where, as in the statement of the lemma, $r = 2p-1$, $\alpha = 1/(4 \apx r)$, and $\apx \geq 1$ is the approximation guarantee of the maximum matching returned by $\alg$.

	First, we argue the number of rounds of communication required by each step separately: 
	\begin{itemize}
		\item 
		By \Cref{clm:find-sets}, Alice and Bob can compute their edges $E_A$ and $E_B$ without any communication. 
		\item 
		In the simulation of $\alg$, each pass of the algorithm is simulated using two rounds of communication, one message from Alice and one message from Bob, except for the final pass, which only requires one round, i.e., one message from Alice. Since $\alg$ has $p$ passes, simulating it requires $r = 2p -1$ rounds of communication. 
		\item
		Identifying the labels of the special base instances (which is only required at the end to output the solution) also requires $r$ rounds of communication since, at the end of each round $t \in [r]$, $k_r \cdot k_{r-1} \cdot \ldots \cdot k_{r-t+1}$ many special instances of $\ahm_{r-t}$ are identifiable by the player that receives the message.
		\item
		Returning the output of protocol $\prot$ is solely computed from the output of $\alg$ and the identified labels, which can be done without any further communication.
	\end{itemize}
	Since the simulation of $\alg$ and identifying the labels is done in parallel, protocol $\prot$ is an $r$-round protocol.

	Next, we argue the communication cost of $\prot$ by considering separately the only two steps that require communication.
	To simulate $\alg$, the players only exchange the memory state of $\alg$. Since this requires at most $s$ bits for each of the $r$ messages, the overall communication required in the simulation is at most $r \cdot s$ bits.
	
		Next, we give an upper bound on the communication required to communicate the labels. In round $t$ for $2 \leq t \leq r$, the player sends $2 \cdot k_r \cdot k_{r-1} \cdot \ldots \cdot k_{r-t+3}$ many permutations of $[b_{r-t+2}]$. 
	Using \Cref{eq:params-nb}, we can say, in total,
		\begin{align*}
			2 \cdot k_r \cdot k_{r-1} \cdot \ldots \cdot k_{r-t+3} \cdot b_{r-t+2} \cdot \log (b_{r-t+2}) &\leq 2\cdot b_r \cdot b_{r-1} \cdot \ldots \cdot b_{r-t+3} \cdot b_{r-t+2} \cdot \log(b_{r-t+2}) \\
			&=  O(n_r \cdot \log (n_r))
		\end{align*}
		bits are sent. In $r$ rounds, the total number of bits to find the labels of all special instances is $O(r \cdot n_r \cdot \log (n_r))$.
	Therefore, $\cc{\prot} \leq r \cdot s + O(r \cdot n_r \log n_r)$.

	Finally, we argue the probability of success of $\prot$. The goal for solving the instance $(\Aa{r}, \Bb{r})$ of $\ahm_r(n_r, \alpha)$ is to output the bit of the special base instance $(\Abase_{s^*}, \Bbase_{s^*})$ where $s^* = (\speckstar{r}, \speckstar{r-1}, \ldots, \speckstar{1})$ is the uniform random search sequence given to the final player, i.e., Bob since $r = 2p-1$ is odd, at the end of the protocol. 
	In protocol $\prot$, the predictor $x_{s^*}$ of the solution is obtained from the matching $M$ returned by $\alg$ or is a uniform random guess.

	When $\alg$ succeeds, which occurs with $1-1/\poly{(n)}$ probability, it is guaranteed to output a $\apx$-approximate maximum matching $M$ in the bipartite graph $\graphed(\alicepart{r}, \bobpart{r})$.
	By \Cref{lem:apxhasMbase}, the edges of $M$ identify the bits of at least $n_r \cdot (1/\apx - 2\alpha r)$ many special base instances, which are $n_r \cdot (1- \alpha)^r$ many in total by \Cref{obs:numspecial}.
	Since $s^*$ uniformly selects one of the special base instances to correspond to the solution bit, $M$ identifies it with probability at least 
	\begin{align*}
		\frac{1/\apx - 2\alpha r}{(1- \alpha)^r} \geq 1/\apx - 2\alpha r = 1/\apx - 1/2\apx = 1/2\apx
	\end{align*}
	since $\alpha = 1/(4 \apx r) \in (0,1)$.

	Thus, we have, 
	\begin{itemize}
		\item When $\alg$ succeeds and $M$ correctly identifies the solution bit, the predictor $x_{s^*}$ returned by $\prot$ is correct (with probability $1$);
		\item When $\alg$ succeeds but $M$ does not identify the solution bit, the predictor $x_{s^*}$ returned by $\prot$ is a random guess and thus is correct with probability $1/2$;
		\item When $\alg$ fails, there is no guarantee on the predictor bit, and we might as well assume the answer is wrong (with probability $1$); 
	\end{itemize}
	As such, we have that the protocol succeeds with probability at least 
	\[
		\paren{1-1/\poly{(n)}} \cdot \paren{\frac{1}{2\apx} \cdot 1 + \paren{1-\frac{1}{2\apx}} \cdot \frac{1}{2}} + 1/\poly{(n)} \cdot 0 \geq \frac{1}{2} \cdot \paren{1+\frac{1}{3\apx}},
	\]
	since $\beta$ is a constant. This concludes the proof. 
\end{proof}

\begin{Remark}\label{rem:add-before-delete}
	In the construction of $\graphed(\Aa{r}, \Bb{r})$, when $r$ is even, Alice deletes the edges before Bob adds them: for the corresponding streaming problem, this corresponds to deleting edges that have not been inserted (although the final stream
	still ensures that any edge that is deleted will be inserted); this is \underline{not} consistent with the definition of dynamic graph streams. 
	Nevertheless, in our construction, we \emph{only} use the $\ahm_r$ lower bound when $r = 2p-1$ and thus is odd. Here, Alice adds edges to the graph, and Bob deletes the edges after all the edges are added. As a result, in the corresponding
	streaming problem, no edge is deleted before it is inserted, thus adhering to the restriction of the dynamic graph streams. 
\end{Remark}

\bibliographystyle{halpha-abbrv}
\bibliography{general}

\clearpage
\appendix

\part*{Appendix}

\section{Background on Information Theory}\label{app:info}

We now briefly introduce some definitions and facts from information theory that are needed in this thesis. We refer the interested reader to the text by Cover and Thomas~\cite{CoverT06} for an excellent introduction to this field, 
and the proofs of the statements used in this Appendix. 

For a random variable $\rA$, we use $\supp{\rA}$ to denote the support of $\rA$ and $\distribution{\rA}$ to denote its distribution. 
When it is clear from the context, we may abuse the notation and use $\rA$ directly instead of $\distribution{\rA}$, for example, write 
$A \sim \rA$ to mean $A \sim \distribution{\rA}$, i.e., $A$ is sampled from the distribution of random variable $\rA$. 

\begin{itemize}
\item We denote the \emph{Shannon Entropy} of a random variable $\rA$ by
$\en{\rA}$, which is defined as: 
\begin{align}
	\en{\rA} := \sum_{A \in \supp{\rA}} \Pr\paren{\rA = A} \cdot \log{\paren{1/\Pr\paren{\rA = A}}} \label{eq:entropy}
\end{align} 

\item The \emph{conditional entropy} of $\rA$ conditioned on $\rB$ is denoted by $\en{\rA \mid \rB}$ and defined as:
\begin{align}
\en{\rA \mid \rB} := \Ex_{B \sim \rB} \bracket{\en{\rA \mid \rB = B}}, \label{eq:cond-entropy}
\end{align}
where 
$\en{\rA \mid \rB = B}$ is defined in a standard way by using the distribution of $\rA$ conditioned on the event $\rB = B$ in Eq~(\ref{eq:entropy}).

\item The \emph{mutual information} of two random variables $\rA$ and $\rB$ is denoted by
$\mi{\rA}{\rB}$ and is defined:
\begin{align}
\mi{\rA}{\rB} := \en{A} - \en{A \mid  B} = \en{B} - \en{B \mid  A}. \label{eq:mi}
\end{align}

\item The \emph{conditional mutual information} is defined as $\mi{\rA}{\rB \mid \rC}:= \en{\rA \mid \rC} - \en{\rA \mid \rB,\rC}$.
\end{itemize}

\subsection*{Useful Properties of Entropy and Mutual Information}\label{sec:prop-en-mi}

We shall use the following basic properties of entropy and mutual information throughout. 
Proofs of these properties mostly follow from convexity of the entropy function
and Jensen's inequality and can be found in~\cite[Chapter~2]{CoverT06}. 

\begin{fact}\label{fact:it-facts}
  Let $\rA$, $\rB$, $\rC$, and $\rD$ be four (possibly correlated) random variables.
   \begin{enumerate}
  \item \label{part:uniform} $0 \leq \en{\rA} \leq \log{\card{\supp{\rA}}}$. The right equality holds
    iff $\distribution{\rA}$ is uniform.
  \item \label{part:info-zero} $\mi{\rA}{\rB \mid \rC} \geq 0$. The equality holds iff $\rA$ and
    $\rB$ are \emph{independent} conditioned on $\rC$.
  \item \label{part:cond-reduce} \emph{Conditioning on a random variable reduces entropy}:
    $\en{\rA \mid \rB,\rC} \leq \en{\rA \mid  \rB}$.  The equality holds iff $\rA \perp \rC \mid \rB$.
%%    \item \label{part:sub-additivity} \emph{Subadditivity of entropy}: $\en{\rA,\rB \mid \rC}
%%    \leq \en{\rA \mid C} + \en{\rB \mid  \rC}$.
%%   \item \label{part:ent-chain-rule} \emph{Chain rule for entropy}: $\en{\rA,\rB \mid \rC} = \en{\rA \mid \rC} + \en{\rB \mid \rC,\rA}$.
  \item \label{part:chain-rule} \emph{Chain rule for mutual information}: $\mi{\rA,\rB}{\rC \mid \rD} = \mi{\rA}{\rC \mid \rD} + \mi{\rB}{\rC \mid  \rA,\rD}$.
  \item \label{part:data-processing} \emph{Data processing inequality}: for a function $f(\rA)$ of $\rA$, $\mi{f(\rA)}{\rB \mid \rC} \leq \mi{\rA}{\rB \mid \rC}$. 
   \end{enumerate}
\end{fact}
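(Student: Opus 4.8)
The plan is to derive all five items from two elementary analytic inputs: the non‑negativity of $p\log(1/p)$ on $[0,1]$, and \emph{Jensen's inequality} for the concave function $\log$ (equivalently, Gibbs' inequality / non‑negativity of KL divergence). Everything past that is pure bookkeeping with the definitions in \eqref{eq:entropy}--\eqref{eq:mi}. Since these are textbook facts, in the final write‑up I would simply cite~\cite[Chapter~2]{CoverT06}; what follows is the structure I would use if spelling it out.

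For part~(\ref{part:uniform}), non‑negativity of $\en{\rA}$ is immediate because every summand $\Pr(\rA=A)\log(1/\Pr(\rA=A))$ in \eqref{eq:entropy} is non‑negative. For the upper bound I would rewrite $\en{\rA}=\Exp_{A\sim\rA}\bracket{\log(1/\Pr(\rA=A))}$ and apply Jensen's inequality to $x\mapsto\log x$, obtaining $\en{\rA}\le\log\Exp_{A\sim\rA}\bracket{1/\Pr(\rA=A)}=\log\card{\supp{\rA}}$; equality in Jensen holds precisely when $1/\Pr(\rA=A)$ is constant over $\supp{\rA}$, i.e. when $\distribution{\rA}$ is uniform. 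For part~(\ref{part:info-zero}) I would first establish the unconditional statement $\mi{\rA}{\rB}\ge0$: writing $\mi{\rA}{\rB}$ as the expectation of $\log\!\big(\Pr(\rA=A,\rB=B)/(\Pr(\rA=A)\Pr(\rB=B))\big)$ and applying Jensen to $-\log$ bounds it below by $-\log\Exp\bracket{\Pr(\rA=A)\Pr(\rB=B)/\Pr(\rA=A,\rB=B)}=-\log 1=0$, with equality iff $\Pr(\rA=A,\rB=B)=\Pr(\rA=A)\Pr(\rB=B)$ almost surely, i.e. independence. The conditional version follows by averaging over $\rC=C$ via \eqref{eq:cond-entropy}, and its equality condition is conditional independence of $\rA$ and $\rB$ given $\rC$.

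Part~(\ref{part:cond-reduce}) then falls out of part~(\ref{part:info-zero}): directly from the definition of conditional mutual information, $\en{\rA\mid\rB}-\en{\rA\mid\rB,\rC}=\mi{\rA}{\rC\mid\rB}\ge0$, with equality iff $\rA\perp\rC\mid\rB$. For the chain rule in part~(\ref{part:chain-rule}) I would start from the chain rule for conditional entropy, $\en{\rA,\rB\mid\rD}=\en{\rA\mid\rD}+\en{\rB\mid\rA,\rD}$ (a direct consequence of factoring the joint distribution inside \eqref{eq:entropy}), write the analogous identity with $\rD$ replaced by $\rC,\rD$, and subtract: $\mi{\rA,\rB}{\rC\mid\rD}=\en{\rA,\rB\mid\rD}-\en{\rA,\rB\mid\rC,\rD}$ regroups into $\big(\en{\rA\mid\rD}-\en{\rA\mid\rC,\rD}\big)+\big(\en{\rB\mid\rA,\rD}-\en{\rB\mid\rA,\rC,\rD}\big)=\mi{\rA}{\rC\mid\rD}+\mi{\rB}{\rC\mid\rA,\rD}$. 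Finally, for the data processing inequality in part~(\ref{part:data-processing}), I would expand $\mi{\rA,f(\rA)}{\rB\mid\rC}$ two ways using part~(\ref{part:chain-rule}): as $\mi{\rA}{\rB\mid\rC}+\mi{f(\rA)}{\rB\mid\rA,\rC}$, whose second term is $0$ since $f(\rA)$ is a constant given $\rA$; and as $\mi{f(\rA)}{\rB\mid\rC}+\mi{\rA}{\rB\mid f(\rA),\rC}\ge\mi{f(\rA)}{\rB\mid\rC}$ by part~(\ref{part:info-zero}). Equating the two gives $\mi{f(\rA)}{\rB\mid\rC}\le\mi{\rA}{\rB\mid\rC}$.

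There is no genuine obstacle here: the only analytic content is Jensen's/Gibbs' inequality used in parts~(\ref{part:uniform}) and~(\ref{part:info-zero}), and once those are available the remaining three parts are purely algebraic manipulations of the definitions and the entropy chain rule. Accordingly, my intention is to state the fact with a one‑line pointer to~\cite[Chapter~2]{CoverT06} rather than reproduce the standard proofs.
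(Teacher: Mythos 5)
Your proposal is correct and matches the paper's treatment: the paper gives no proof of this fact and simply cites~\cite[Chapter~2]{CoverT06}, noting the arguments follow from convexity and Jensen's inequality, which is exactly the route you sketch. (One negligible quibble: in part~(\ref{part:info-zero}) the expectation $\Exp\bracket{\Pr(\rA=A)\Pr(\rB=B)/\Pr(\rA=A,\rB=B)}$ is in general only $\leq 1$ rather than $=1$, since the sum runs over the support of the joint distribution; the bound $-\log(\cdot)\geq 0$ and the equality analysis go through unchanged.)
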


%%\noindent
%%The following Fano's inequality states that if a random variable $\rA$ can be used to estimate the value
%%of another random variable $\rB$, then $\rA$ should ``consume'' most of entropy of
%%$\rB$.
%%
%%\begin{fact}[Fano's inequality] \label{fact:fano}
%%	Let $\rA,\rB$ be random variables and $f$ be a function that given $\rA$ predicts a value for $\rB$. If $\Pr\paren{f(\rA) \neq \rB} \leq \delta$, then $\HH(\rB \mid \rA) \leq H_2(\delta) + \delta \cdot (\log{\card{\rB}}-1)$. \\
%%	If $\rB$ is binary, then the bound improves to $\HH(\rB \mid \rA) \leq H_2(\delta)$.
%%\end{fact}

\noindent
We also use the following two standard propositions on effect of conditioning on mutual information.

\begin{proposition}\label{prop:info-increase}
  For random variables $\rA, \rB, \rC, \rD$, if $\rA \perp \rD \mid \rC$, then, 
  \[\mi{\rA}{\rB \mid \rC} \leq \mi{\rA}{\rB \mid  \rC,  \rD}.\]
\end{proposition}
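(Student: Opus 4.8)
\textbf{Proof proposal for \Cref{prop:info-increase}.}

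The plan is to prove this by expanding mutual information in terms of conditional entropies and then using the hypothesis $\rA \perp \rD \mid \rC$ to move $\rD$ into the conditioning for free. First I would write
\[
\mi{\rA}{\rB \mid \rC, \rD} = \en{\rA \mid \rC, \rD} - \en{\rA \mid \rB, \rC, \rD}
\]
and similarly $\mi{\rA}{\rB \mid \rC} = \en{\rA \mid \rC} - \en{\rA \mid \rB, \rC}$, using the chain-rule-style expansion of conditional mutual information directly from its definition. The goal then reduces to comparing these two differences.

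Next I would handle the two entropy terms separately. For the first term, the hypothesis $\rA \perp \rD \mid \rC$ gives exactly $\en{\rA \mid \rC, \rD} = \en{\rA \mid \rC}$, by the equality case of \itfacts{cond-reduce} (conditioning on a random variable reduces entropy, with equality precisely when the conditioned variable is independent). For the second term, I only need the inequality direction: $\en{\rA \mid \rB, \rC, \rD} \leq \en{\rA \mid \rB, \rC}$, again by \itfacts{cond-reduce}, since conditioning on the additional variable $\rD$ can only decrease entropy. Subtracting, the first (equal) terms cancel and the inequality on the second terms flips sign, yielding
\[
\mi{\rA}{\rB \mid \rC, \rD} = \en{\rA \mid \rC} - \en{\rA \mid \rB, \rC, \rD} \geq \en{\rA \mid \rC} - \en{\rA \mid \rB, \rC} = \mi{\rA}{\rB \mid \rC},
\]
which is the claim.

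Alternatively, and perhaps even more cleanly, I would use the chain rule for mutual information (\itfacts{chain-rule}): expand $\mi{\rA}{\rB, \rD \mid \rC}$ in two ways, namely $\mi{\rA}{\rD \mid \rC} + \mi{\rA}{\rB \mid \rC, \rD}$ and $\mi{\rA}{\rB \mid \rC} + \mi{\rA}{\rD \mid \rB, \rC}$. Setting these equal and using $\mi{\rA}{\rD \mid \rC} = 0$ (which is the hypothesis $\rA \perp \rD \mid \rC$, via \itfacts{info-zero}), we get $\mi{\rA}{\rB \mid \rC, \rD} = \mi{\rA}{\rB \mid \rC} + \mi{\rA}{\rD \mid \rB, \rC} \geq \mi{\rA}{\rB \mid \rC}$, where the last step is nonnegativity of conditional mutual information (\itfacts{info-zero}). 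There is no real obstacle here; the only thing to be careful about is invoking the correct equality condition in \itfacts{cond-reduce} (or equivalently the correct characterization of when conditional mutual information vanishes), so I would state that dependence explicitly. I would present the chain-rule version as the main argument since it is shortest and uses only facts already catalogued in \Cref{fact:it-facts}.
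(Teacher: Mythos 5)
Your proposal is correct, and your first argument (expanding both sides as differences of conditional entropies, using $\en{\rA \mid \rC, \rD} = \en{\rA \mid \rC}$ from the independence hypothesis and $\en{\rA \mid \rB, \rC, \rD} \leq \en{\rA \mid \rB, \rC}$ from \itfacts{cond-reduce}) is exactly the paper's proof. Your alternative via the chain rule---expanding $\mi{\rA}{\rB, \rD \mid \rC}$ in two orders, killing $\mi{\rA}{\rD \mid \rC}$ by the hypothesis, and dropping the nonnegative term $\mi{\rA}{\rD \mid \rB, \rC}$---is also valid (noting that \itfacts{chain-rule} is stated for the first argument, so you implicitly use symmetry of mutual information to apply it in the second), but the paper takes the entropy route.
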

 \begin{proof}
  Since $\rA$ and $\rD$ are independent conditioned on $\rC$, by
  \itfacts{cond-reduce}, $\HH(\rA \mid  \rC) = \HH(\rA \mid \rC, \rD)$ and $\HH(\rA \mid  \rC, \rB) \ge \HH(\rA \mid  \rC, \rB, \rD)$.  We have,
	 \begin{align*}
	  \mi{\rA}{\rB \mid  \rC} &= \HH(\rA \mid \rC) - \HH(\rA \mid \rC, \rB) = \HH(\rA \mid  \rC, \rD) - \HH(\rA \mid \rC, \rB) \\
	  &\leq \HH(\rA \mid \rC, \rD) - \HH(\rA \mid \rC, \rB, \rD) = \mi{\rA}{\rB \mid \rC, \rD}. \qed
	\end{align*}
	
\end{proof}

\begin{proposition}\label{prop:info-decrease}
  For random variables $\rA, \rB, \rC,\rD$, if $ \rA \perp \rD \mid \rB,\rC$, then, 
  \[\mi{\rA}{\rB \mid \rC} \geq \mi{\rA}{\rB \mid \rC, \rD}.\]
\end{proposition}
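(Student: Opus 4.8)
The final statement to prove is Proposition~\ref{prop:info-decrease}: for random variables $\rA, \rB, \rC, \rD$ with $\rA \perp \rD \mid \rB, \rC$, we have $\mi{\rA}{\rB \mid \rC} \geq \mi{\rA}{\rB \mid \rC, \rD}$.

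The plan is to mirror exactly the short entropy-manipulation proof of the companion Proposition~\ref{prop:info-increase}, but using the conditional independence hypothesis in the opposite place. First I would expand both conditional mutual information terms via the identity $\mi{\rA}{\rB \mid \rC} = \HH(\rA \mid \rC) - \HH(\rA \mid \rB, \rC)$ and $\mi{\rA}{\rB \mid \rC, \rD} = \HH(\rA \mid \rC, \rD) - \HH(\rA \mid \rB, \rC, \rD)$. The hypothesis $\rA \perp \rD \mid \rB, \rC$ gives, by \itfacts{cond-reduce} (the equality case), that $\HH(\rA \mid \rB, \rC) = \HH(\rA \mid \rB, \rC, \rD)$. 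For the remaining two terms, I would use the inequality direction of \itfacts{cond-reduce} (conditioning reduces entropy) to get $\HH(\rA \mid \rC) \geq \HH(\rA \mid \rC, \rD)$. Combining: $\mi{\rA}{\rB \mid \rC} = \HH(\rA \mid \rC) - \HH(\rA \mid \rB, \rC) \geq \HH(\rA \mid \rC, \rD) - \HH(\rA \mid \rB, \rC, \rD) = \mi{\rA}{\rB \mid \rC, \rD}$, which is the claim.

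There is no real obstacle here — the argument is a two-line symmetric variant of the preceding proof. The only point requiring a moment's care is making sure the conditional independence assumption is applied to the correct pair of entropy terms (the ones conditioned on $\rB$) and that the plain "conditioning reduces entropy" bound is applied to the other pair (the ones not conditioned on $\rB$); getting these swapped would break the inequality. Concretely, the proof I would write is:

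\begin{proof}
  Since $\rA$ and $\rD$ are independent conditioned on $\rB, \rC$, by
  \itfacts{cond-reduce}, $\HH(\rA \mid  \rB, \rC) = \HH(\rA \mid \rB, \rC, \rD)$. Also, by \itfacts{cond-reduce}, $\HH(\rA \mid  \rC) \ge \HH(\rA \mid  \rC, \rD)$. We have,
	 \begin{align*}
	  \mi{\rA}{\rB \mid  \rC} &= \HH(\rA \mid \rC) - \HH(\rA \mid \rB, \rC) \geq \HH(\rA \mid  \rC, \rD) - \HH(\rA \mid \rB, \rC) \\
	  &= \HH(\rA \mid \rC, \rD) - \HH(\rA \mid \rB, \rC, \rD) = \mi{\rA}{\rB \mid \rC, \rD}. \qedhere
	\end{align*}
\end{proof}
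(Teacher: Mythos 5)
Your proof is correct and follows exactly the same route as the paper's: use the equality case of \itfacts{cond-reduce} with the hypothesis $\rA \perp \rD \mid \rB,\rC$ to get $\HH(\rA \mid \rB,\rC) = \HH(\rA \mid \rB,\rC,\rD)$, and the inequality case to get $\HH(\rA \mid \rC) \geq \HH(\rA \mid \rC,\rD)$. Nothing to add.
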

 \begin{proof}
 Since $\rA \perp \rD \mid \rB,\rC$, by \itfacts{cond-reduce}, $\HH(\rA \mid \rB,\rC) = \HH(\rA \mid \rB,\rC,\rD)$. Moreover, since conditioning can only reduce the entropy (again by \itfacts{cond-reduce}), 
  \begin{align*}
 	\mi{\rA}{\rB \mid  \rC} &= \HH(\rA \mid \rC) - \HH(\rA \mid \rB,\rC) \geq \HH(\rA \mid \rD,\rC) - \HH(\rA \mid \rB,\rC) \\
	&= \HH(\rA \mid \rD,\rC) - \HH(\rA \mid \rB,\rC,\rD) = \mi{\rA}{\rB \mid \rC,\rD}. \qed
 \end{align*}

\end{proof}
%%\noindent
%%Finally, we also use the following simple inequality that states that conditioning on a random variable can only increase the mutual information
%%by the entropy of the conditioned variable. 
%%
%%\begin{proposition}\label{prop:bar-hopping}
%%  For random variables $\rA, \rB$ and $\rC$, 
%%$
%%  \mi{\rA}{\rB \mid \rC} \leq \mi{\rA}{\rB} + \en{\rC}.
%%  $ 
%%\end{proposition}
%%\begin{proof}
%%	By chain rule for mutual information (\itfacts{chain-rule}), we can write:
%%	\begin{align*}
%%		\mi{\rA}{\rB \mid \rC} &= \mi{\rA}{\rB,\rC} - \mi{\rA}{\rC} = \mi{\rA}{\rB} + \mi{\rA}{\rC \mid \rB} - \mi{\rA}{\rC} \\
%%		&\leq \mi{\rA}{\rB} + \en{\rC \mid \rB} \leq \mi{\rA}{\rB} + \en{\rC},
%%	\end{align*}
%%	where the first two equalities are by chain rule (\itfacts{chain-rule}), the second inequality is by definition of mutual information and its positivity (\itfacts{info-zero}), and the last one is because conditioning 
%%	can only reduce the entropy (\itfacts{cond-reduce}).
%%\end{proof}

\subsection*{Measures of Distance Between Distributions}\label{sec:prob-distance}

We use two main measures of distance (or divergence) between distributions, namely the \emph{Kullback-Leibler divergence} (KL-divergence) and the \emph{total variation distance}. 

\paragraph{KL-divergence.} For two distributions $\mu$ and $\nu$ over the same probability space, the \textbf{Kullback-Leibler (KL) divergence} between $\mu$ and $\nu$ is denoted by $\kl{\mu}{\nu}$ and defined as: 
\begin{align}
\kl{\mu}{\nu}:= \Ex_{a \sim \mu}\Bracket{\log\frac{\mu(a)}{{\nu}(a)}}. \label{eq:kl}
\end{align}
We also have the following relation between mutual information and KL-divergence. 
\begin{fact}\label{fact:kl-info}
	For random variables $\rA,\rB,\rC$, 
	\[\mi{\rA}{\rB \mid \rC} = \Ex_{(B,C) \sim {(\rB,\rC)}}\Bracket{ \kl{\distribution{\rA \mid \rB=B,\rC=C}}{\distribution{\rA \mid \rC=C}}}.\] 
\end{fact}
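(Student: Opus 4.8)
The plan is to verify this identity by simply unfolding the definitions of KL-divergence, entropy, and conditional mutual information; it is a routine computation and I do not anticipate any real obstacle. I would start from the right-hand side and work toward $\mi{\rA}{\rB\mid\rC}$.

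First I would fix notation: write $p$ for the joint distribution of $(\rA,\rB,\rC)$ and use $p(a\mid b,c)$, $p(a\mid c)$ for the relevant conditional pmfs. By the definition of KL-divergence in \Cref{eq:kl}, for each fixed pair $(b,c)$ in the support of $(\rB,\rC)$ we have
\[
\kl{\distribution{\rA\mid\rB=b,\rC=c}}{\distribution{\rA\mid\rC=c}} = \sum_{a} p(a\mid b,c)\,\log\frac{p(a\mid b,c)}{p(a\mid c)}.
\]
Taking the expectation over $(b,c)\sim(\rB,\rC)$ and using $p(b,c)\,p(a\mid b,c)=p(a,b,c)$, the right-hand side of the Fact becomes
\[
\sum_{a,b,c} p(a,b,c)\,\log\frac{p(a\mid b,c)}{p(a\mid c)}
= \sum_{a,b,c} p(a,b,c)\,\log p(a\mid b,c) \;-\; \sum_{a,b,c} p(a,b,c)\,\log p(a\mid c),
\]
where splitting the logarithm is justified since every conditional probability appearing is strictly positive on the relevant support.

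Then I would identify the two sums. By \Cref{eq:entropy} and \Cref{eq:cond-entropy}, the first sum equals $-\en{\rA\mid\rB,\rC}$; and marginalizing out $b$ in the second sum (using $\sum_b p(a,b,c)=p(a,c)$) shows it equals $-\en{\rA\mid\rC}$. Hence the right-hand side of the Fact equals $\en{\rA\mid\rC}-\en{\rA\mid\rB,\rC}$, which is exactly $\mi{\rA}{\rB\mid\rC}$ by the definition of conditional mutual information (the conditional analogue of \Cref{eq:mi}, as recorded just after it). This completes the argument.

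An alternative, essentially equivalent route would be to first establish the unconditioned identity $\mi{\rA}{\rB}=\Ex_{b\sim\rB}\bigl[\kl{\distribution{\rA\mid\rB=b}}{\distribution{\rA}}\bigr]$ via the same expansion, then apply it to the conditional distributions $\distribution{\rA,\rB\mid\rC=c}$ for each value $c$ and average over $c\sim\rC$; I would present whichever is shorter given the conventions already fixed in the excerpt. Either way, the only point requiring a little care is the positivity bookkeeping when splitting the logarithm, which is standard.
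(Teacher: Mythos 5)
Your computation is correct and is the standard textbook derivation: expanding the KL-divergence, absorbing $p(b,c)\,p(a\mid b,c)=p(a,b,c)$, splitting the logarithm, and recognizing the two sums as $-\en{\rA\mid\rB,\rC}$ and $-\en{\rA\mid\rC}$ indeed yields $\mi{\rA}{\rB\mid\rC}$, and your remark that $p(a\mid c)>0$ whenever $p(a,b,c)>0$ handles the only delicate point. The paper itself gives no proof of this fact---it is deferred to Cover and Thomas---so there is nothing to compare against; your argument fills that gap correctly.
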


%%We use the following standard facts about KL-divergence. 
%%\begin{fact}[Chain rule of KL-divergence]\label{fact:kl-chain-rule}
%%  Let $\mu(\rX,\rY)$ and $\nu(\rX,\rY)$ be two distributions for random variables $\rX,\rY$. Then, 
%%\[
%%	\kl{\mu(\rX,\rY)}{\nu(\rX,\rY)} = \kl{\mu(\rX)}{\nu(\rX)} + \Exp_{x \sim \mu(\rX)} \kl{\mu(\rY \mid \rX=x)}{\nu(\rY \mid \rX=x)}. 
%%\]
%%Moreover, if $\rX \perp \rY$ in $\nu$ (the second argument of the KL-divergence), then, 
%%\[
%%	\kl{\mu(\rX,\rY)}{\nu(\rX,\rY)} \geq \kl{\mu(\rX)}{\nu(\rX)} + \kl{\mu(\rY)}{\nu(\rY)}. 
%%\]
%%\end{fact}
%%\begin{fact}[Conditioning in KL-divergence]\label{fact:kl-event}
%%  For any random variable $\rX$ and any event $\event$, 
%%\[
%%	\kl{\rX \mid \event}{\rX} \leq \log{\paren{\frac{1}{\Pr(\event)}}}. 
%%\]
%%Moreover, if $\event$ is a deterministic function of $\rX$, then this equation holds with equality. 
%%\end{fact}

\paragraph{Total variation distance.} We denote the \textbf{total variation distance} between two distributions $\mu$ and $\nu$ on the same 
support $\Omega$ by $\tvd{\mu}{\nu}$, defined as: 
\begin{align}
\tvd{\mu}{\nu}:= \max_{\Omega' \subseteq \Omega} \paren{\mu(\Omega')-\nu(\Omega')} = \frac{1}{2} \cdot \sum_{x \in \Omega} \card{\mu(x) - \nu(x)}.  \label{eq:tvd}
\end{align}
\noindent
We use the following basic properties of total variation distance. 
\begin{fact}\label{fact:tvd-small}
	Suppose $\mu$ and $\nu$ are two distributions for $\event$, then, 
	$
	{\mu}(\event) \leq {\nu}(\event) + \tvd{\mu}{\nu}.
$
\end{fact}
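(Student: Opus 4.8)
The plan is to obtain the inequality directly from the variational form of the total variation distance recorded in \Cref{eq:tvd}. Recall that for two distributions $\mu,\nu$ on a common sample space $\Omega$ we defined $\tvd{\mu}{\nu} = \max_{\Omega' \subseteq \Omega}\left(\mu(\Omega') - \nu(\Omega')\right)$, and that an event $\event$ is by definition a (measurable) subset of $\Omega$. So the whole argument is a one-line instantiation of this maximum.

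First I would simply take the admissible test set $\Omega' = \event$ in the maximization. This yields $\mu(\event) - \nu(\event) \leq \max_{\Omega' \subseteq \Omega}\left(\mu(\Omega') - \nu(\Omega')\right) = \tvd{\mu}{\nu}$, and rearranging gives $\mu(\event) \leq \nu(\event) + \tvd{\mu}{\nu}$, which is exactly the claim. If one instead prefers to start from the equivalent summation form $\tvd{\mu}{\nu} = \tfrac12 \sum_{x \in \Omega} \lvert \mu(x) - \nu(x) \rvert$, the same conclusion follows by writing $\mu(\event) - \nu(\event) = \sum_{x \in \event}\left(\mu(x) - \nu(x)\right) \leq \sum_{x:\, \mu(x) > \nu(x)}\left(\mu(x) - \nu(x)\right)$ and noting that, since $\sum_{x}\left(\mu(x) - \nu(x)\right) = 0$, the positive part of this sum equals $\tfrac12 \sum_{x} \lvert \mu(x) - \nu(x) \rvert$.

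I do not anticipate any genuine obstacle: the statement is an immediate consequence of the definition, and the only thing to observe is that $\event$ is one of the sets $\Omega'$ over which the maximum in \Cref{eq:tvd} ranges. The single point deserving a clause of justification is the equivalence of the two displayed expressions for $\tvd{\mu}{\nu}$, but this identity is already asserted in \Cref{eq:tvd} and is entirely standard, so it may be cited rather than reproved.
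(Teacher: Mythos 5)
Your argument is correct: taking $\Omega' = \event$ in the variational form of \Cref{eq:tvd} immediately gives $\mu(\event) - \nu(\event) \leq \tvd{\mu}{\nu}$, which is the claim. The paper states this fact without proof as a standard property, and your one-line instantiation (together with the optional check via the summation form) is exactly the canonical justification one would supply.
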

%%
%%\begin{fact}\label{fact:tvd-sample}
%%	Suppose $\mu$ and $\nu$ are two distributions with same support $\Omega$; then, given a single sample from either $\mu$ or $\nu$, the best probability of successfully deciding whether $s$ came from $\mu$ or $\nu$ (achieved by the maximum likelihood estimator) is 
%%	\[
%%	\frac12 + \frac12\cdot\tvd{\mu}{\nu}.
%%	\]
%%\end{fact}

%%
We also have the following (chain-rule) bound on the total variation distance of joint variables.

\begin{fact}\label{fact:tvd-chain-rule}
	For any distributions $\mu$ and $\nu$ on $n$-tuples $(X_1,\ldots,X_n)$, 
	\[
		\tvd{\mu}{\nu} \leq \sum_{i=1}^{n} \Exp_{X_{<i} \sim \mu} \tvd{\mu(X_i \mid X_{<i})}{\nu(X_i \mid X_{<i})}. 
	\]
\end{fact}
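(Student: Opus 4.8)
\textbf{Proof plan for Fact~\ref{fact:tvd-chain-rule}.}
The plan is to prove this hybrid/telescoping bound by interpolating between $\mu$ and $\nu$ one coordinate at a time. First I would set up the hybrid distributions: for $i \in \{0,1,\ldots,n\}$, let $\lambda_i$ be the distribution on $n$-tuples $(X_1,\ldots,X_n)$ that agrees with $\nu$ on the first $i$ coordinates and with $\mu$ on the remaining $n-i$ coordinates, \emph{conditioned} on the prefix; more precisely, $\lambda_i(X_1,\ldots,X_n) := \nu(X_{\le i}) \cdot \mu(X_{>i} \mid X_{\le i})$, where here $\mu(X_{>i}\mid X_{\le i})$ denotes the conditional distribution of the last $n-i$ coordinates under $\mu$ given the first $i$. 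Then $\lambda_0 = \mu$ and $\lambda_n = \nu$, so by the triangle inequality for total variation distance, $\tvd{\mu}{\nu} \le \sum_{i=1}^n \tvd{\lambda_{i-1}}{\lambda_i}$.

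The key step is then to show that $\tvd{\lambda_{i-1}}{\lambda_i} \le \Exp_{X_{<i}\sim\mu}\,\tvd{\mu(X_i\mid X_{<i})}{\nu(X_i\mid X_{<i})}$. The two consecutive hybrids $\lambda_{i-1}$ and $\lambda_i$ differ only in how the $i$-th coordinate is drawn: both sample $X_{<i}$ from $\nu$ (up to index $i-1$), but then $\lambda_{i-1}$ draws $X_i$ from $\mu(\cdot\mid X_{<i})$ while $\lambda_i$ draws $X_i$ from $\nu(\cdot\mid X_{<i})$, and both then complete the tuple by drawing $X_{>i}$ from $\mu(\cdot\mid X_{\le i})$. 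Since the marginal on $X_{<i}$ is identical and the conditional on $X_{>i}$ given $X_{\le i}$ is identical, the total variation distance between $\lambda_{i-1}$ and $\lambda_i$ equals the expectation over $X_{<i}$ (drawn from the common marginal) of $\tvd{\mu(X_i\mid X_{<i})}{\nu(X_i\mid X_{<i})}$. This is a standard ``coupling along a common prefix and common suffix'' computation: expand $\tvd{\lambda_{i-1}}{\lambda_i}$ via the $\ell_1$ formula, factor out the common prefix probability and sum out the common suffix (which contributes a factor of $1$), leaving exactly $\sum_{X_{<i}} (\text{prefix prob}) \cdot \frac12\sum_{X_i}|\mu(X_i\mid X_{<i}) - \nu(X_i\mid X_{<i})|$.

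One subtlety I would need to address carefully: in the statement of the fact, the expectation in the $i$-th term is taken over $X_{<i}\sim\mu$, whereas in my hybrid construction the natural marginal on $X_{<i}$ is the $\nu$-marginal. I would handle this by choosing the hybrids in the opposite order — i.e., let $\lambda_i$ agree with $\mu$ on the first $i$ coordinates and with $\nu$ on the rest (conditioned on the prefix), so $\lambda_0 = \nu$, $\lambda_n = \mu$, and now $\tvd{\lambda_{i-1}}{\lambda_i}$ naturally carries the $\mu$-marginal on the prefix $X_{<i}$. Since total variation distance is symmetric, $\tvd{\mu}{\nu} = \tvd{\nu}{\mu} \le \sum_i \tvd{\lambda_{i-1}}{\lambda_i}$ still holds, and each term now matches the claimed expression exactly. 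The main obstacle, such as it is, is just getting this bookkeeping of ``which distribution supplies the prefix marginal'' right; the underlying computation is routine once the hybrids are set up correctly.
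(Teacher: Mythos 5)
Your proposal is correct. Note that the paper itself states Fact~\ref{fact:tvd-chain-rule} without proof (it appears in the appendix among standard information-theoretic facts), so there is no in-paper argument to compare against; the hybrid/telescoping argument you give is the standard route. Your setup is right: with $\lambda_i(X) := \mu(X_{\le i})\cdot \nu(X_{>i}\mid X_{\le i})$ one has $\lambda_0=\nu$, $\lambda_n=\mu$, consecutive hybrids share both the prefix marginal $\mu(X_{<i})$ and the suffix conditional $\nu(X_{>i}\mid X_{\le i})$, so the suffix sums out to $1$ and $\tvd{\lambda_{i-1}}{\lambda_i}$ collapses to exactly $\Exp_{X_{<i}\sim\mu}\tvd{\mu(X_i\mid X_{<i})}{\nu(X_i\mid X_{<i})}$; the triangle inequality and symmetry of total variation distance finish the proof. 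You also correctly caught the one place this argument is easy to get wrong, namely which distribution supplies the prefix marginal. The only loose end, which is cosmetic rather than substantive, is that $\nu(\cdot\mid X_{\le i})$ may be undefined on prefixes with $\nu$-probability zero but positive $\mu$-probability; fixing any convention for the conditional there makes the hybrids well-defined and the computation unchanged (the statement of the fact itself already implicitly requires such a convention).
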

%%
%%A simple consequence of this fact gives us the following ``over conditioning'' property as well. 
%%
%%\begin{fact}\label{fact:tvd-over-conditioning}
%%	For any random variables $\rX,\rY,\rZ$, 
%%	\[
%%		\tvd{\rX}{\rY} \leq \tvd{\rX\rZ}{\rY\rZ} = \Exp_{Z} \tvd{(\rX \mid \rZ=Z)}{(\rY \mid \rZ=Z)}.  
%%	\]
%%\end{fact}
%%\begin{proof}
%%	By the non-negativity of TVD and~\Cref{fact:tvd-chain-rule}, 
%%	\[
%%		 \tvd{\rX}{\rY} \leq \tvd{\rX}{\rY} + \Exp_{X} \tvd{(\rZ \mid \rX=X)}{(\rZ \mid \rY=X)} = \tvd{\rX\rZ}{\rY\rZ} 
%%	\]
%%	Applying~\Cref{fact:tvd-chain-rule} again gives us
%%	\[
%%		\tvd{\rX\rZ}{\rY\rZ}  = \tvd{\rZ}{\rZ} + \Exp_{Z} \tvd{(\rX \mid \rZ=Z)}{(\rY \mid \rZ=Z)} = \Exp_{Z} \tvd{(\rX \mid \rZ=Z)}{(\rY \mid \rZ=Z)},
%%	\]
%%	which concludes the proof. 
%%\end{proof}
%%
%%Similarly, we have the following data processing inequality for total variation distance as a consequence of the above. 
%%
%%\begin{fact}\label{fact:tvd-data-processing}
%%	Suppose $\rX$ and $\rY$ are two random variables with the same support $\Omega$ and $f: \Omega \rightarrow \Omega$ is a fixed function.  Then,
%%	\[
%%		\tvd{f(\rX)}{f(\rY)} \leq \tvd{\rX}{\rY}. 
%%	\]
%%\end{fact}
%%
%%\subsubsection*{Connections Between KL-Divergence and Total Variation Distance} 

The following Pinsker's inequality bounds the total variation distance between two distributions based on their KL-divergence, 

\begin{fact}[Pinsker's inequality]\label{fact:pinskers}
	For any distributions $\mu$ and $\nu$, 
	$
	\tvd{\mu}{\nu} \leq \sqrt{\frac{1}{2} \cdot \kl{\mu}{\nu}}.
	$ 
\end{fact}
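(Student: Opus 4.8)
The plan is to reduce the general statement to a one-dimensional calculus inequality about two-point distributions, which is the standard route to Pinsker's inequality.

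First I would fix a subset $\Omega' \subseteq \Omega$ attaining the maximum in the definition of total variation distance (Eq.~\eqref{eq:tvd}), so that $\tvd{\mu}{\nu} = \mu(\Omega') - \nu(\Omega') = p - q$, where $p := \mu(\Omega')$, $q := \nu(\Omega')$, and $p \ge q$ (the maximizing set always satisfies $\mu(\Omega') \ge \nu(\Omega')$). I would then coarsen both $\mu$ and $\nu$ along the two-cell partition $\{\Omega',\ \Omega \setminus \Omega'\}$, obtaining Bernoulli distributions with parameters $p$ and $q$, and show $\kl{\mu}{\nu} \ge p\log\frac{p}{q} + (1-p)\log\frac{1-p}{1-q}$. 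This data-processing step for KL divergence follows from the log-sum inequality $\sum_i a_i \log(a_i/b_i) \ge \bigl(\sum_i a_i\bigr)\log\bigl(\sum_i a_i / \sum_i b_i\bigr)$, applied separately to the masses inside $\Omega'$ and inside $\Omega \setminus \Omega'$ and then summed; the log-sum inequality itself is just convexity of $t \mapsto t\log t$ plus Jensen's inequality.

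Next I would prove the scalar inequality: for all $0 \le q \le p \le 1$, $\ p\log\frac{p}{q} + (1-p)\log\frac{1-p}{1-q} \ge 2(p-q)^2$. Fixing $q$ and letting $g(p)$ be the difference of the two sides, one checks $g(q) = 0$ and $g'(q) = 0$, while $g''(p) = \frac{c}{p(1-p)} - 4$, where $c = 1$ for natural logarithm and $c = 1/\ln 2$ for base-$2$ logarithm; since $p(1-p) \le 1/4$ this gives $g''(p) \ge 4c - 4 \ge 0$, so $g$ is convex with a global minimum at $p = q$ of value $0$, hence $g \ge 0$ on $[0,1]$. For the case $p < q$, one instead applies this with $\Omega'$ replaced by its complement, under which both $\tvd{\mu}{\nu}$ and the binary KL term are unchanged, reducing again to the case $p \ge q$.

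Combining the two steps yields $\kl{\mu}{\nu} \ge 2(p-q)^2 = 2\,\tvd{\mu}{\nu}^2$, and rearranging gives $\tvd{\mu}{\nu} \le \sqrt{\tfrac12 \kl{\mu}{\nu}}$, as claimed. I do not expect any genuine obstacle here; the only things requiring care are bookkeeping the logarithm base so the constant in the scalar inequality matches the paper's convention for $\kl{\cdot}{\cdot}$ (and noting that if $\kl{\cdot}{\cdot}$ is measured in bits the stated bound is still valid, merely looser by a factor $\sqrt{\ln 2}$ than the tight version), and dispatching the boundary cases $q = 0$ or $p = 1$, where the convention $0\log 0 = 0$ applies and the divergence may be $+\infty$, making the inequality trivial.
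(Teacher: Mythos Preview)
Your argument is correct and is the standard textbook proof of Pinsker's inequality. The paper itself does not give a proof of this fact; it is stated without proof in the appendix, with a blanket reference to Cover and Thomas~\cite{CoverT06} for proofs of all the information-theoretic facts listed there. Your two-step route (reduce to a two-point partition via the log-sum/data-processing inequality, then verify the scalar inequality by a convexity/second-derivative computation) is exactly the classical approach found in that reference, so there is nothing to compare.
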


%%We shall also use the following extension of Pinsker's inequality for larger values of KL-divergence that do not imply any meaningful bounds using only Pinsker's inequality. 
%%
%%\begin{proposition}[c.f.{~\cite[pages 88-89]{nonParaEstimation}}]\label{prop:pinskers} 
%%For any distributions $\mu$ and $\nu$, 
%%\begin{align*}
%%\tvd{\mu}{\nu} \le 1 - \dfrac{1}{2} \exp\paren{-\kl{\mu}{\nu}}. 
%%\end{align*}
%%\end{proposition} 
%%
%%
%%We shall also use the following strengthening of Pinsker's inequality due to~\cite{ChakrabartyK18} that allows to lower bound KL-divergence between two distributions by a combination of $\ell_1$- and $\ell_2$-distance of the two distributions (instead of purely $\ell_1$-distance in the original Pinsker's inequality). 
%%
%%\begin{proposition}[Strengthened Pinsker's Inequality{~\cite[KL vs $\ell_1/\ell_2$-inequality]{ChakrabartyK18}}]\label{prop:pinsker++}
%%	Given any pair of distributions $\mu$ and $\nu$ over the same finite domain $\Omega$, define 
%%	\[
%%	A := \set{x \in \Omega \mid \mu(x) > 2 \cdot \nu(x)} \quad \text{and} \qquad B := \Omega \setminus A. 
%%	\]
%%	Then, 
%%	\[
%%	\kl{\mu}{\nu} \geq (1-\ln{2}) \cdot \paren{\,\sum_{x \in A} \card{\mu(x)-\nu(x)} + \sum_{x \in B} \frac{(\mu(x)-\nu(x))^2}{\mu(x)}}. 
%%	\]
%%\end{proposition}

%%\input{appendix}

\end{document}